\DeclareMathAlphabet{\pazocal}{OMS}{zplm}{m}{n}
\newtheorem{theorem}{Theorem}[section]
\newtheorem{definition}[theorem]{Definition}
\newtheorem{lemma}[theorem]{Lemma}
\newtheorem{proposition}[theorem]{Proposition}
\newtheorem{assumption}[theorem]{Assumption}
\newtheorem{remark}[theorem]{Remark}
\newtheorem{example}[theorem]{Example}
\numberwithin{equation}{section}
\numberwithin{theorem}{section}
\newcommand{\qed}{\hfill$\Box$}
\newenvironment{proof}{\begin{trivlist}\item[]{\em Proof:}\/}{\qed\end{trivlist}}
\newenvironment{proofof}[1][Proof]{\noindent \textit{#1.} }{\ \qed} 
\newcommand{\R}{{\mathbb R}}
\newcommand{\N}{{\mathbb N}}
\newcommand{\T}{{\mathcal T}}
\newcommand{\PP}{\mathbb{P}}
\newcommand{\EE}{\mathbb{E}}
\newcommand{\ep}{\varepsilon}
\newcommand{\B}{\mathcal{B}}
\newcommand{\mbf}{\mathbf}
\DeclareMathOperator*{\supp}{supp}
\DeclarePairedDelimiter\ceil{\lceil}{\rceil}
\newcounter{jlisti}
\title{On the growth of a 
particle coalescing in a Poisson distribution of obstacles}
\author{Alessia Nota \thanks{\emailalessia} , Juan J. L. Vel\'azquez \thanks{\emailjuan} \\[1em]
$\,^\ddag$\UBaddress}
\date{\today}
\newcommand{\email}[1]{E-mail: \tt #1}
\newcommand{\emailjuan}{\email{velazquez@iam.uni-bonn.de}}
\newcommand{\emailalessia}{\email{nota@iam.uni-bonn.de}}
\newcommand{\UBaddress}{\em University of Bonn, Institute for Applied Mathematics\\
\em Endenicher Allee 60, D-53115 Bonn, Germany}
\date{\today}
\begin{document} \selectlanguage{english}
\maketitle

\begin{abstract}
In this paper we consider the \textit{coalescence} dynamics of a \textit{tagged particle} moving in a random distribution of particles with volumes independently distributed according to a probability distribution (\textit{CTP model}). 
We provide a rigorous derivation of a kinetic equation for the probability density for the size and position of the tagged particle in the kinetic limit where the volume fraction $\phi$ filled by the background of particles tends to zero. Moreover, we prove that the particle system, i.e. CTP model, is well posed for a small but positive volume fraction with probability one as long as the the distribution of the particle sizes is compactly supported. 
\end{abstract}

\tableofcontents

\section{Introduction}\label{sec:1}
This paper is concerned with the growth of a tagged particle which moves in
three dimensions in a background of particles randomly distributed. It will be
assumed that the moving particle coalesces with the particles colliding with it
and increases its size correspondingly. In particular we will derive a kinetic
equation which gives the probability density for the size and position of the
tagged particle if the volume fraction $\phi$ filled by the background of
particles tends to zero.

\bigskip

The motivation to study this problem is the analysis of coagulation processes
in shear flows. This problem was already considered by Smoluchowski in his
seminal analysis of coagulation of particles moving in fluids, see \cite{Sm}.  
 The result of this analysis is the derivation of a kinetic equation for
the particle sizes having the form: 
\begin{equation} \label{SmolKin}
\partial_{t}f\left(  v,t\right)  =\frac{1}{2}\int_{0}^{v}K\left(  v-w,w\right)
f\left(  v-w,t\right)  f\left(  w,t\right)  dw- \int_{0}^{\infty}K\left(  v,w\right)
f\left(  v,t\right)  f\left(  w,t\right)  dw %
\end{equation}
where $f$ is the particle distribution in the space of volumes, and where the
coagulation kernel is given by $K\left(v,w\right)= \frac{4}{3} \tilde S (v^\frac{1}{3}+w^\frac{1}{3})^3$. Cf. \cite{Sm}, \cite{F}. Actually the equation derived in those papers corresponds to the case in which the particle sizes are discrete and equation \eqref{SmolKin} is the standard generalization to the continuum case. 

The derivation of (\ref{SmolKin}) is based on the following assumptions. It is
assumed that a set of spherical particles move in the three-dimensional space
with a horizontal velocity along the direction of the coordinate axis which is
proportional to the vertical component $x_{3},$ i.e. the speed $v=v\left(
x\right)  $ of a given particle at the point $x=\left(  x_{1},x_{2}%
,x_{3}\right)  $ is
\begin{equation}
v\left(  x\right)  =\left(  v_{1}\left(  x\right)  ,v_{2}\left(  x\right)
,v_{3}\left(  x\right)  \right)  =\left( \tilde{S} x_{3},0,0\right) \label{VelShear}%
\end{equation}
where $\tilde{S}= \frac{\partial v_1}{\partial x_3}$ is the coefficient associated to the laminar shear. 

In the derivation of (\ref{SmolKin}) it is assumed that the
particles do not affect the motion of the fluid flow. If the particles are initially
randomly distributed there would be particle collisions between pairs of
particles with different values of $x_{3}$ given that they approach to
each other since their relative velocity is nonzero in the velocity field
(\ref{VelShear}). It is assumed that no correlations develop between the
particles, something that might be expected in the kinetic regime in which the
volume fraction of the particles $\phi$ tends to zero. In order to obtain a non trivial dynamics for the particle sizes in times of order one we rescale the value of the shear $\tilde{S}$ and the particle density and the volume fraction $\phi$ in such a way that in average there is one collision for unit of time, namely $\tilde{S}\phi \sim 1$. This rescaling is the analogous of the Boltzmann-Grad limit in gas dynamics. The kinetic equation \eqref{SmolKin} will then be obtained in the limit $\phi\to 0$.

\bigskip
In this paper we will consider a simpler setting, namely a tagged particle with volume $\tilde{V}$
moving among a random distribution of particles at positions $\left\{
x_{j}\right\}  _{j\in N}$ and with
volumes $\left\{  \tilde{v}_{j}\right\}  _{j\in N}$, where $N$ is a countable set of indexes. Every time that the
tagged particle collides with any of the obstacles they merge. The resulting
new particle has a volume which is the sum of $\tilde{V}$ with the volumes of
the colliding particles, and the position of the new particle becomes the
center of mass of the group of particles involved in the merging. Between
collisions the tagged particle moves freely with a speed $\tilde{U}$ along
the direction of the vector $e_1=\left(  1,0,0\right) .$ 
This is similar to the
assumption made in the analysis of coagulation in shear flows (cf.
\cite{Sm}).  We will assume that the average number of particles for
unit of volume is $1$ and that the volume fraction filled by the background
particles is order $\phi>0.$ Concerning the tagged particle, we will assume
that its volume is also of order $\phi.$ This does not suppose a large loss of
generality because smaller particles would become of order $\phi$ after the
first collision. If the tagged particle has a volume much larger than $\phi$
we should describe its evolution by means of a limit case of the equation
derived in this paper. It is then natural to introduce the rescaled variables
$V=\frac{\tilde{V}}{\phi}$ and $v_{j}=\frac{\tilde{v}_{j}}{\phi}.$ Notice that
in the collisions the position of the tagged particle is shifted. Assume that
the tagged particle is initially at the origin of coordinates $x=0.$ We will
denote its position at a given time $t$ as $x=X\left(  t\right)  =\tilde
{U}te_1+\phi^{\frac{1}{3}}Y,$ where $Y=Y\left(  t\right)  .$ The scaling factor
$\phi^{\frac{1}{3}}$ is of the order of the characteristic radius of the particles. From now on we will denote this system as the \textit{coalescing tagged particle model} or by shortness CTP model.
The relation between the Smoluchowski coagulation dynamics in shear flows described above and the reduced model for a single tagged particle we are considering here, is the same relation as the one between a system of interacting particles used to derive the non linear Boltzmann equation and the random Lorentz Gas (see \cite{Lo}).   The Lorentz Gas is a Newtonian dynamical system in which a
tagged particle  interacts with a fixed background of particles through a suitable pair interaction potential. Note that the only randomness in this system is in the distribution of the scatterers.   

The main result that we will obtain in this paper is the following. Suppose
that the probability of finding the tagged particle at the position $\left(
Y,Y+dY\right)  $ and with volume $\left(  V,V+dV\right)  $  at time $t$ is
$f_{\phi}\left(  Y,V,t\right)  dYdV.$ We will assume also that the speed of
the tagged particle between collisions scales as $\tilde{U}=U\phi^{-\frac
{2}{3}}$ where $U$ is of order one and that the volumes $v_j$ of the particles in the background
are independently distributed according to the probability distribution $G\left(  v\right)
.$ Then, if we assume that%
\begin{equation}
\int_{0}^{\infty}G\left(  v\right)  v^{\gamma}dv<\infty\ \ \text{with\ }%
\gamma>2\label{Ggamma}%
\end{equation}
we obtain that with probability one $f_{\phi}\rightarrow f$ where $f$ solves%
\begin{equation}%
\begin{split}
\partial_{t}f(Y,V,t) &  =\left(\frac{3}{4\pi}\right)
^{\frac{2}{3}}U\int_{0}^{\frac{\pi}{2}}d\theta\int_{0}%
^{2\pi}d\varphi \,\sin\theta\cos\theta\,\\
& \quad \Big[ \int_{0}^{V}dv \, G(v) ((V-v)^{\frac{1}{3}}+v^{\frac{1}{3}})^{2}f(Y-\frac{v}{V-v}R{n}%
(\theta,\varphi),V-v,t)\\&\quad -\int_{0}^{\infty} dv \, G(v)(V^{\frac{1}{3}}+v^{\frac{1}{3}})^{2}f(Y,V,t)\Big]\equiv \mathcal{Q}[f](Y,V,t)
\end{split}
\label{eq:FKScaled}%
\end{equation}
where $R=\left(  \frac{3V}{4\pi}\right)  ^{\frac{1}{3}}$ and %
$
n\left(  \theta,\varphi\right)  =\left(  \cos\theta,\sin\theta\cos\varphi
,\sin\theta\sin\varphi\right).
$

The scaling $\tilde{U}=U\phi^{-\frac
{2}{3}}$ ensures that a tagged
particle with volume of order $\phi$ has an average number of collisions for
unit of time of order one. It is interesting to compare this scaling with the scaling for the shear velocity discussed for the Smoluchowski coagulation model. Given that the particle sizes are of order $\phi^{\frac 1 3 }$ it follows that the relative velocity for colliding particles, placed at different heights, is of order $\phi^{-\frac {2 }{3}}$ that is of the same order of magnitude we are considering for this reduced picture. 

 We remark that in
spite of the fact that the problem is linear, it captures some relevant
information taking place in the nonlinear situation as for instance the scaling of
the coagulation kernel with the size of the particle. Indeed, largest particles are
more effective capturing surrounding particles. The difference between the linear and non linear model is that they provide a different  scaling for the particle sizes. This is because the
growth is due to attachment with particles with a given fixed size in the case
considered in this paper, and with particles having a comparable size in the
nonlinear coagulation case. 

Condition (\ref{Ggamma}) guarantees that the number of big particles in
the background is not too large. The point where this assumption is used
suggest that the condition $\gamma>2$ might probably be relaxed to $\gamma>1,$
but to obtain the result under this assumption seemingly needs to use
cumbersome technical arguments which we have preferred to avoid. 

It is relevant to mention that the convergence or divergence of the integrals
with the form $\int_{0}^{\infty}G\left(  v\right)  v^{\gamma}dv$ plays a
crucial role in the theory of Continuum Percolation (cf. \cite{G},
\cite{MR}). For instance, it was proved in \cite{H} (cf. also
\cite{MR}) that the assumption $\int_{0}^{\infty}G\left(  v\right)
v^{\frac{5}{3}}dv$ implies the existence of a critical volume fraction
$\phi_{\ast}>0$ such that for $0<\phi<\phi_{\ast}$ all the clusters of
particles distributed according to the Poisson distribution described above
and with distribution of sizes $G\left(  v\right)  $ are finite with
probability one. 

Actually, the results obtained in this paper can be thought as some kind of
dynamic percolation theory. Indeed, one of the main difficulties that we need
to address proving the results of this paper is the fact that coalescing
particles could trigger sequences of coagulation events, because after one
merging, the tagged particle becomes bigger and its position is shifted and it
could coalesce with additional particles, eventually creating an infinite
cluster. A similar difficulty that arises studying the dynamics of coalescing
particles is that the free flights between coagulation events become shorter
due to the increase of the volume of the tagged particle and this could yield
eventually to a runaway growth of the tagged particle in finite time. We will
prove in this paper that such finite time blow-up of particle sizes does not
take place with probability one under suitable assumptions on the distribution
$G\left(  v\right)  .$  Actually the main source of technical difficulties in this paper is precisely this change of size of the tagged particle in time. 
A rather important feature of the CTP model is that after coalescence the new center of the tagged particle changes to the
center of mass of the particles involved in the merging. This implies that the displacement of the position of the center of the tagged particle is not too large as the size of the tagged particle increases. The method used in this paper works also for coalescing models different from CTP for which the distance of the new centre of the particle from the previous one is not very large (cf. Remark \ref{rk:postagpc}) if the size of the tagged particle is large. Analogously, under this assumption, it seems possible to provide a rigorous derivation of a kinetic equation of the form \eqref{eq:FKScaled} where the collision operator must be modified according to the dynamics we are considering. On the contrary if the jumps of the centers are comparable with the size of the tagged particle, blow up in finite or zero time, with probability one, can  be expected because the tagged particle in its new position will meet obstacles from the background with probability close to one.

\bigskip

To the best of our knowledge the rigorous derivation of Smoluchowski equations
taking as starting point mechanical particle systems has been only considered 
in the papers \cite{HR1, HR2}, \cite{LN}, \cite{N}, \cite{YRH}.  
In all these papers they consider the case
of Brownian coagulation (i.e. the coalescing particles are moving according to Brownian motion with a suitable diffusion coefficient), and the size of the particles does 
not increase in
the coagulation events, although in the problem considered in \cite{HR1, HR2} the
particles might have different masses which aggregate in the coagulation
events and influence the diffusion coefficients of the particles.   As we indicated above, the increasing of the sizes of the particles is the source of many of the 
technical difficulties that we have to address in this paper due to the fact that we have to rule out possible finite time blow ups in the particle sizes.

\bigskip
On the contrary, there are several results in which the Smoluchowski equation has been derived starting from a stochastic process for a system of many particles in the 
same spirit of Kac-models for the derivation of Boltzmann equation. See \cite{DF, EP, FG}. 
The main difference with the approach followed in this paper is that the stochasticity is not only in the initial distribution of particles. Here the evolution of the system is 
purely deterministic while, on the contrary, in the case of the derivation starting from a stochastic process the evolution is probabilistic. 
This stochastic process for coagulating particles is usually referred to as Markov-Lushnikov process, see \cite{Lu1, Lu2, M}.

\bigskip

\bigskip

We remark that there are many analogies between the
derivation of the Smoluchowski coagulation equation for Brownian particles in \cite{LN}  and the derivation of
the classical Boltzmann equation for hard spheres (cf. \cite{La}) and short
range potentials (cf.  \cite{GS-RT, PSS}). In both cases we consider 
the derivation of a kinetic equation starting from a mechanical evolution for a many particle system. Note, however
that the dynamics of the particle system is time-reversible in the case of
hard spheres evolving by means of Hamilton's equations of motion and is an irreversible
evolution in the case of coalescing particles described above. In both cases, in a low density regime, a kinetic equation for the
one-particle distribution function (first marginal) is derived.
As indicated above, in the description of the particle system, the linear Smoluchowski coagulation
equation (\ref{eq:FKScaled})  has the same relation with the Smoluchowski coagulation
equation in a linear shear flow \eqref{SmolKin}
as the linear Boltzmann equation, which gives the mesoscopic description of the Lorenz model in the kinetic regime, with the classical nonlinear
Boltzmann equation.  

The derivation of suitable kinetic equations and diffusive equations starting from the Lorentz model has been extensively studied.
We refer to \cite{BNP, BNPP, BBS, DP, DR, G, No, Sp}. We stress that in the derivation of the kinetic equation \eqref{eq:FKScaled} we resort to the key idea introduced by Gallavotti, in \cite{G}, in order to study the kinetic limit of the Lorentz Gas. However, some modifications are required mostly to avoid the explosive growth of the particle sizes. 

\bigskip

The plan of the paper is the following. In Section \ref{sec:PM} we define precisely the particle model under consideration and we describe the main results obtained in this paper. In Section 3 we provide the rigorous derivation of the equation \eqref{eq:FKScaled} in the limit when the volume fraction $\phi$ tends to zero. In order to do this we define the set of good configurations for the particle system which are those for which the only relevant collisions are the binary collisions. Indeed, we prove that the probability of this set of good configuration tends to one when  $\phi$ tends to zero. 
In Section 4 we analyze the limit kinetic equation \eqref{eq:FKScaled}. In particular, we prove well-posedness and look at the long time asymptotics of the solution  when $G\left(
v\right) $ decreases fast enough. This allows to show that the volume of the tagged particle increases like $t^3$ as $t$ goes to infinity.
In Section 5, which is the most technical section of the paper, we prove that the CTP model  is well posed for a small but positive volume fraction with probability one as long as the distribution of the particle sizes $G(v)$ is compactly supported. Note that the kinetic equation \eqref{eq:FKScaled} has been obtained for arbitrary large times in the limit as $\phi$ tends to zero. This does not rule out the possibility of blow up in finite time for small but positive volume fraction. Actually is not difficult to construct examples of obstacles configurations exhibiting blow up in finite time, cf. Example \ref{ex:blowup}. At the end of the paper we collected in the Appendixes some technical estimates which in particular control the displacement of the center of mass for the coalescing particle as well as a probability lemma using the proof of global well-posedness for positive volume fraction $\phi$ in the particular case in which all the obstacles are identical (i.e. $G(v)=\delta(v-1)$). In this case many of the technicalities used in Section 5 can be avoided and the main 
ideas can be more easily grasped. 
\bigskip

\section{The particle model and main results}\label{sec:PM} 

\subsection{The CTP model
}\label{ssec:PM} 
We consider the configuration space $\Omega$ which is defined as follows 
\begin{equation}
\Omega:=\{\omega=\{x_k,v_k\}_{ k\in I}, \; I \subset \N \;\; \text{s.t.} \,\{x_k\}_{ k\in I} \;\, \text{is locally finite and } \;v_k>0, \; x_k\in\R^3  \}
\end{equation}
and we identify two elements in $\Omega$ if one is obtained from the other by means of a permutation of the indexes. 
We can endow the space $\Omega$ with a structure of measure space in a suitable sigma algebra $\Sigma$ and define a probability $\mu_\phi$ such that
the centers $\{x_k\}$ and the volumes $\{v_k\}$ are independently distributed variables, with the positions given by the Poisson distribution in $\R^3$ of intensity one and the volumes distributed according to $\frac{1}{\phi}G(\frac{v}{\phi})$. Here $G$ is a probability distribution in $[0,\infty)$ and $\phi>0$ is the volume fraction occupied by the particles.  Note  that we drop the dependence of the measure $\mu_\phi$ on $G$. This probability measure is the one associated to the Boolean model in the theory of continuum percolation, the detailed construction can be found in \cite{MR}, Section 1.4. 

 In what follows we construct the stochastic process for the tagged particle with unscaled variables $\tilde{Y},\tilde{V}$. From now on we will use capital letters to denote the dynamical variables for the tagged particle while small letters will be used for the obstacles. Our purpose is to show that the particle system obtained is well defined. We consider the  initial configuration for the tagged particle  $(\tilde{Y_0},\tilde{V_0})$ and we are interested in defining the dynamical system for the volume and the displacement of the position of the tagged particle with respect to the constant speed motion, namely $\tilde{Y}\left(  t\right)  =\tilde{X}(t)-\tilde
{U}t e_1$. Note that this allows to consider a Galilean reference frame in which the background of obstacles moves against the original position of the tagged particle. Thus we define the evolution flow  
$\tilde{T}^t(\tilde{Y_0},\tilde{V_0};\tilde \omega_0)=(\tilde{Y}(t;\tilde{Y_0},\tilde{V_0},\tilde \omega_0),\tilde{V}(t;\tilde{Y_0},\tilde{V_0},\tilde \omega_0);\tilde\omega(t;\tilde{Y_0},\tilde{V_0},\tilde \omega_0) )$  $t\geq 0$ where $\tilde\omega_0$ and $\tilde\omega(t)$ denote the initial configuration and the evolved configuration of obstacles respectively.  Note that from now on we will skip the $\tilde{Y_0},\tilde{V_0},\tilde \omega_0$ dependence and consider $\tilde{T}^t(\tilde{Y_0},\tilde{V_0};\tilde\omega_0)=(\tilde{Y}(t),\tilde{V}(t);\tilde\omega(t))$ for notational simplicity. Moreover,  
we rescale variables according to the following scaling limit of Boltzmann-Grad type 
\begin{align}\label{scaling}
& \tilde{R}= \phi^{\frac{1}{3}} R,\,\tilde{r}= \phi^{\frac{1}{3}}r,\, \tilde{Y}= \phi^{\frac{1}{3}}Y, \\\nonumber
&  \tilde{V}= \phi V,\,  \tilde{v}= \phi v,\\\nonumber
&  \tilde{U}=\phi^{-\frac{2}{3}} U. 
\end{align}
Note that the rescaling for the velocity of the tagged particle  
gives a mean free flight time of order one, while the mean free path is order $\phi^{-\frac{2}{3}}$.  

The rescaled evolution is given by
\begin{equation}
T_\phi^t(Y_0,V_0;\omega_0)=\Big(\frac{1}{\phi^{\frac 1 3}}\tilde{Y}^{t}(\phi^{\frac 1 3}\tilde{Y_0},\phi \tilde{V_0};\tilde\omega_0),\frac{1}{\phi}\tilde{V}^{t}(\phi^{\frac 1 3} \tilde{Y_0},\phi\tilde{V_0};\tilde\omega_0); \omega(t)\Big)
=(Y(t),V(t);\omega(t))
\end{equation}
where $\omega $ is the configuration of obstacles obtained by $\tilde \omega$ rescaling the volumes. Moreover, we define the projector operator $\Pi [T_\phi^t(Y_0,V_0;\omega_0)]=(Y(t),V(t))$.

Let be $t^{*}\geq 0$, we introduce the set $\mathcal{F}(Y,V,t^{*})\in\Sigma$ which consists of obstacles configurations moving according to the free flow during the time interval $[0,t^{*}]$, i.e.
 \begin{equation}\label{eq:dom_free_mot}
\mathcal{F}(Y,V,t^{*}):=\{\omega\in \Omega\,:\, \inf_{k:\{(x_k,r_k)\}=\omega}|Y-(x_k-Ut^{*}e_1)|>\sigma(V^{\frac 1 3}+v_k^{\frac 1 3})\}
\end{equation}
where $\sigma=\left(\frac{3}{4\pi}\right)^{\frac 1 3}$ and $\mathcal{C}(Y,V,t^{*})\in \Sigma$ is the set of obstacles configurations suffering collisions at time $t^{*}$, i.e.
 \begin{equation}\label{eq:dom_coll}
 \mathcal{C}(Y,V,t^{*})=\{\omega\in \Omega\,:\, \inf_{k:\{(x_k,r_k)\}=\omega}|Y-(x_k-Ut^{*}e_1)|\leq \sigma(V^{\frac 1 3}+v_k^{\frac 1 3})\}.
\end{equation}
If $\omega\in \mathcal{C}(Y,V,t^{*})$ we define the merging operator 
\begin{equation}\label{def:merging_op}
\mathcal{A}(Y,V;\omega)=\left(\frac{VY+\sum_{k\in J}x_k v_k}{V}+\sum_{k\in J}v_k,V+\sum_{k\in J} v_k ;\omega\setminus J\right) 
\end{equation} 
where $J=J(Y,V,t^{*};\omega)$ is the set of indexes of obstacles in $\Omega$ satisfying $|Y-(x_k-Ut^{*}e_1)|< \sigma(V^{\frac 1 3}+v_k^{\frac 1 3})$. Note that \eqref{def:merging_op} is well defined since the configuration is locally finite. We will denote as $\omega\setminus J$ the set of obstacles in $\Omega$ removing those with indexes in $J$. 

If $\omega\notin \mathcal{C}(Y,V,t^{*})$ there exists a $\delta=\delta(\omega)>0$ such that $\omega \in\mathcal{F}(Y,V,t^{*}+\delta)$.

  We can now define the flow in the following way. 
\begin{itemize}
\item[1)] If $\omega(\bar t) \in \mathcal{F}(Y(\bar{t}),V(\bar{t}),t^{*})$ for some $t^{*}>\bar{t}$ we define
\begin{equation}\label{def:freeflow}
(Y(t),V(t);\omega(t))=(Y(\bar{t}),V(\bar{t});\omega(\bar{t})) \quad \forall t\in [\bar{t},t^{*}].
\end{equation}
\item[2)] We define a sequence inductively as follows. We set $(Y^0,V^0; \omega^0)=(Y(t^{-}),V(t^{-});\omega(t^{-}))$, 
and define $(Y^1,V^1,\omega^1)=\mathcal{A}(Y^0,V^0;\omega^0)$. 
Then we have two possible situations: 
\begin{itemize}
\item[i)]  if $\omega^1 \in \mathcal{F}(Y_1,V_1,t)$ we are in case 1) and we have free flow.
\item[ii)] if $\omega^1 \in \mathcal{C}(Y_1,V_1,t)$ then we can define inductively $(Y^n,V^n,\omega^n)=\mathcal{A}(Y^{n-1},V^{n-1};\omega^{n-1})$ until reach a value of $n$
 such that $\omega^1 \in \mathcal{F}(Y^{n},V^{n};t)$. Then we return to step 1). 
\end{itemize}
\end{itemize}
Note that if the case 
2.ii) holds for arbitrary large values of $n$ the dynamics stops at time $t$ with an infinite sequence of coalescences. Therefore, the previous dynamics is not necessarly defined for arbitrary long times. We will provide later an example of configuration $\omega\in \Omega$ for which this happens (see Example \ref{ex:blowup}). 
Nevertheless, we will prove that the coalescence dynamics stops after a finite number of steps which are then followed by a free motion of the tagged particle with probability one and the dynamics can be defined globally in time if $G(v)$ is compactly supported and $\phi$ positive but sufficiently small (cf. Proposition \ref{prop:wellpos1}). We observe that the fact that global well posedness can be proved only with probability one is typical in systems with infinitely many particles (see \cite{La}). On the other hand the dynamics above can be defined with probability close to one if $\phi$ is sufficiently small and if $G(v)$ satisfies \eqref{Ggamma}. 
We note that the evolution operator $T^t_{\phi}$ described by points 1) and 2) will be called from now on the CTP model, as we already anticipated in the Introduction. Furthermore, we remark that there is a family of CTP models depending on the choice of $(Y_0,V_0)\in \R^3\times [0,\infty)$ and $\phi>0$ but from now on we will skip this dependence. 

\subsection{Kinematic of binary collisions}\label{ssec:bincoll}
In the kinetic limit we are considering the only relevant coalescing processes are those involving the tagged particle with just a single particle in the background. Here we describe in detail the geometry of this type of coalescence events. 

We denote by $V=\frac{4}{3}\pi R^3$  the volume of the tagged particle and by $v=\frac{4}{3}\pi r^3 $ the volume of one generic obstacle. We are interested in describing how the volume of the tagged particle changes after one collision. Indeed, once the tagged particle merges with one single obstacle, we have that $(V,v)\to V+v$.  We denote  by $R'=(r^3+R^3)^{\frac 1 3}$ the radius of the tagged particle after the coalescence.
To have an exhaustive description of this dynamics we need informations on the position of the center of mass of the new particle after the coalescence. We introduce the variable $\theta\in[0,\frac{\pi}{2}]$ which is the angle between the vector $e_1=(1,0,0)$ and the vector connecting the center of the tagged particle with the center of the coalescing obstacle. 
Moreover we introduce the vector $\nu=\nu(\varphi)=(0,\cos \varphi,\sin\varphi)$ with $\varphi\in [0,2\pi]$ in such a way that the position of the centre of the obstacle is given by
$$x=X+\left(\cos\theta\,e_1+\sin\theta\,\nu\right)(R+r).$$ 
Then the new position and volume of the tagged particle defined by the operator $\mathcal{A}$ in \eqref{def:merging_op} are given by
\begin{align}\label{def:merg_dyn}
&X'=X+\frac{v}{V+v}(\cos\theta e_1+\sin\theta \,\nu)(R+r)\\\nonumber
&V'=V+v.
\end{align}
This operator plays the role of the scattering map for the Newtonian dynamics.

We observe that from now on we describe the motion of the tagged particle in terms of $(Y,V)$ where $Y:=X-Ute_1$, namely we are considering a coordinate system in which the original position of the tagged particle moves with constant speed $Ue_1$. As a consequence, the new position of the tagged particle according to \eqref{def:merg_dyn} becomes $Y'=Y+\frac{v}{V+v}(\cos\theta e_1+\sin\theta \,\nu)(R+r)$.


\subsection{Main results}\label{sec:3}
We denote by $\mathcal{P}(\R^3\times [0,+\infty))$ the set of probability measures on $\R^3\times [0,+\infty)$ and by $\mathcal{M}_{+}(\R^3\times [0,+\infty))$ the set of Radon measure on $\R^3\times [0,+\infty)$ and define the solution of the microscopic coalescence process as follows. 
\begin{definition}\label{def:f_phi}
Let be $f_0\in \mathcal{P}(\R^3\times [0,+\infty))$, for any Borel set $A$ of $\R^3\times [0,+\infty)$ we define $f_\phi \in L^{\infty}( [0,T); \mathcal{M}_{+}(\R^3\times [0,+\infty)))$ as
\begin{equation}\label{defeq:f_phi}
\int_{A} f_\phi(Y,V,t) dY\,dV =\int_{\R^{3}\times [0,\infty]} \mu_\phi(\{\omega\,:\; \Pi[T_\phi^{t}(Y_0,V_0;\omega)]\in A\})f_0(Y_0,V_0) dY_0\,dV_0. 
\end{equation}
\end{definition}
Our goal is to prove that $ f_{\phi}(Y,V,t)\to f(Y,V,t)$ as $\phi\to 0$ in a suitable sense, where $f$ is a solution of the kinetic equation \eqref{eq:FKScaled} with initial datum $f_0$. We now introduce the meaning of solutions for the equation \eqref{eq:FKScaled} in the sense of measures. 
\begin{definition}\label{def:f_soleq}
Let be $T>0$ and $f\in C([0,T]; \mathcal{M}_{+}(\R^3\times [0,\infty))$. We say that $f$ is a weak solution of \eqref{eq:FKScaled} if $f_0\in \mathcal{P}(\R^3\times [0,\infty))$ and for any $\Psi \in C_{c}^1([0,T)\times \R^3 \times [0,\infty))$ we have
\begin{equation}\label{eq:f_soleq}
\begin{split}
&-\int_0^T \int_{\R^3} \int_{ [0,\infty)} f(Y,V,t)\{\partial_t\Psi(Y,V,t)+ \mathcal{C}[\Psi](Y,V,t)\} dYdVdt \\&
=\int_{\R^3} \int_{ [0,\infty)} f_0(Y,V) \Psi_0(Y,V)dYdV
\end{split}
\end{equation}
where the collision operator $\mathcal{C}$ is defined by 
\end{definition}

\begin{equation} 
\begin{split}
(\mathcal{C}[\Psi])(Y,V,t)=&\,U\!\int_0^{\infty}\hspace{-2.5mm}dv\int_0^{\frac{\pi}{2}}\hspace{-1.5mm}d\theta\int_0^{2\pi}\hspace{-2mm}d\varphi\, G(v) (R+r)^2 \sin\theta\cos\theta
 \\
 &\,\times \left[\Psi(Y+\frac{v}{V+v}(\cos\theta e_1+\sin\theta \,\nu)(R+r),V+v,t)-\Psi(Y,V,t)\right]. \\
\end{split}
\end{equation}

\begin{remark}
We observe that to define solutions of equation \eqref{eq:FKScaled} it is enough to assume that 
$$\int_{0}^{\infty}dv\, G(v)(1+v^{\frac 2 3}) < +\infty.$$
Under this conditions the operator $\mathcal{C}$ transforms continuous functions in continuous functions.
\end{remark}

We now state the first result of this paper.
\begin{theorem}\label{th:theorem1}
Let $G(v)\in \mathcal{M}_{+}([0,\infty))$ be such that 
\begin{equation}\label{ass:G}
\int_{0}^{\infty} v^{\gamma}G(v)\,dv<\infty, \qquad \gamma>2.
\end{equation}
Let be $f_0 \in \mathcal{P}(\R^3\times [0,+\infty))$, $f_\phi \in L^{\infty}( [0,T); \mathcal{M}_{+}(\R^3\times [0,+\infty)))$ defined by \eqref{defeq:f_phi} and $T>0$. Then, for any Borel set $A\in \R^3\times [0,\infty)$, 
$$\int_{A}f_\phi(t) \to \int_{A }f (t)\quad \text{as}\quad \phi\to 0,$$  
where $f$ is the unique solution of equation \eqref{eq:FKScaled} in the sense of Definition \ref{def:f_soleq}. The convergence is uniform in $[0,T]$. 
\end{theorem}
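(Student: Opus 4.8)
The plan is to prove Theorem~\ref{th:theorem1} by the classical strategy of Gallavotti's approach to the Lorentz gas, adapted to control the growth of the tagged particle. The starting point is to write $f_\phi(t)$ as the push-forward of $f_0$ under the flow $T_\phi^t$, and to decompose the measure $\mu_\phi$ according to the sequence of collision times experienced by the tagged particle. Concretely, I would first identify, for a given obstacle configuration $\omega$ and initial datum $(Y_0,V_0)$, the (a.s. finite for small $\phi$ by Proposition~\ref{prop:wellpos1}, and for general $G$ satisfying \eqref{ass:G} finite with probability close to one) sequence of free-flight legs interrupted by binary collisions described by the scattering map \eqref{def:merg_dyn}. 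Summing over the number $N$ of collisions in $[0,t]$ and using the explicit product structure of the Poisson measure $\mu_\phi$ (positions Poisson of intensity one, volumes i.i.d.\ according to $\tfrac1\phi G(\cdot/\phi)$), one obtains a Dyson-type series expansion for $\int_A f_\phi(t)$. One then shows that this series converges term by term as $\phi\to0$ to the corresponding Dyson series for the solution $f$ of \eqref{eq:FKScaled}, and that the tails are uniformly small. The limit identity for each term comes from the fact that in the kinetic scaling \eqref{scaling} a single obstacle is captured in a cylinder of cross-section $\sigma^2(V^{1/3}+v^{1/3})^2$ swept at speed $U$, which reproduces precisely the gain and loss kernels in $\mathcal{Q}[f]$ after the change of variables to the angular variables $(\theta,\varphi)$; the factor $\sin\theta\cos\theta$ arises as the Jacobian of the projection of the collision sphere onto the cross-section orthogonal to $e_1$.

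The key technical device, exactly as in the paper's plan, is the set of \emph{good configurations}: those $\omega$ for which, up to time $T$, the only collisions suffered by the tagged particle are binary (no obstacle overlaps the tagged particle simultaneously with another, no recollision with an already-absorbed obstacle, no chain of near-simultaneous mergings), and for which the total number of collisions stays bounded. I would prove that $\mu_\phi$ of the complement of this good set tends to $0$ as $\phi\to0$, uniformly for the relevant range of $(Y,V)$ weighted against $f_0$. Here is where assumption \eqref{ass:G} with $\gamma>2$ enters: to bound the probability that a \emph{large} obstacle sits in the swept cylinder (whose cross-section grows like $v^{2/3}$), one needs $\int v^{2/3}G(v)\,dv<\infty$ to control the expected number of such collisions, and controlling the size increments $V\mapsto V+v$ accumulated over $[0,T]$ — and hence the a priori bound on $V(t)$ that keeps the swept volume finite — requires the slightly stronger moment. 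On the good set, the microscopic dynamics is a genuine Markov jump process whose forward Kolmogorov equation is exactly \eqref{eq:f_soleq}, so $\int_A f_\phi(t)$ restricted to good configurations equals the semigroup solution plus an error that vanishes with $\phi$.

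For uniqueness of the weak solution $f$ of \eqref{eq:FKScaled} and uniformity of the convergence on $[0,T]$, I would work in the Banach space of finite signed measures on $\mathbb{R}^3\times[0,\infty)$ with a weight $(1+V)^{\gamma}$ (or a suitable polynomial weight), note that the collision operator $\mathcal{C}$ is bounded on this space under \eqref{ass:G} — the loss term contributes $U\sigma^2\int (V^{1/3}+v^{1/3})^2 G(v)\,dv\lesssim (1+V)^{2/3}$, absorbed by the weight — and apply a standard contraction / Duhamel argument; the monotone dependence on $(Y,V)$ and the fact that $f$ is a probability measure for all $t$ (total mass conservation, since gain and loss balance) give the a priori moment bounds needed to close the estimate. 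Uniformity of the convergence then follows because all the series estimates above are uniform in $t\in[0,T]$. I expect the main obstacle to be the good-configuration estimate — specifically, ruling out, with probability $\to1$, the chains of coalescences in case 2.ii) and the runaway growth of $V(t)$ in finite time, which is precisely the ``dynamic percolation'' difficulty flagged in the introduction; the moment hypothesis $\gamma>2$ is tailored to make a crude first-moment/Markov bound on the number and size of absorbed obstacles suffice, avoiding finer percolation arguments.
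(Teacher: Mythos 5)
Your proposal is correct and follows essentially the same route as the paper: a Gallavotti-type expansion of the Poisson measure over collision sequences, a good/bad configuration splitting whose bad part has vanishing probability (with $\gamma>2$ used exactly where you place it, to control large obstacles in the swept tube and the accumulated volume increments), the change of variables $x\mapsto(t,\theta,\varphi)$ producing the $\sin\theta\cos\theta$ kernel, and term-by-term convergence of the resulting Dyson series. The only presentational difference is that the paper, because the coalescing dynamics is not time-reversible, runs the whole expansion on the observable $\Psi_\phi(Y,V,t)=\EE_{\mu_\phi}[\Psi_0(\Pi[T^t(Y,V;\omega)])]$ (the backward Kolmogorov equation) and recovers $f$ and its uniqueness by duality with test functions in $X_M$, which is exactly what your push-forward formulation amounts to once written out.
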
	
													
\begin{remark}
The fact that there exists a unique solution of equation \eqref{eq:FKScaled} in the  sense of Definition \ref{def:f_soleq} will be proved in Section  \ref{ssec:WPKE}.			 									
\end{remark}	
															
The second main result of the paper concerns the global well posedness for the CTP model introduced in Section \ref{ssec:PM}.
\begin{theorem}\label{th:theorem2}
Let $G(v)\in \mathcal{M}_{+}([0,\infty))$ be compactly supported with $\supp G(v)\in [0,v_{*}]$. Then there exists a $\phi_{*}=\phi_{*}(v_{*})>0$ such that for any $\phi\leq \phi_{*}$ and any $(Y_0,V_0)\in \R^3\times [0,\infty)$ there exists $\tilde\Omega\subset \Omega, \; \tilde\Omega\in \Sigma$ where $\Sigma$ is the $\sigma-$algebra defined in Section \ref{ssec:PM}, such that $\PP(\tilde\Omega)=1$ and the dynamics of the CTP model is well defined for any $\omega\in \tilde\Omega$, for arbitrary long times.
\begin{equation}
\end{equation}
\end{theorem}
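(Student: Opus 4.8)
\textit{Plan and reduction.} The plan is to show that the CTP flow can fail to be defined in finite time \emph{only} through a blow-up of the tagged volume $V$, and then to exclude this, for $\phi$ small depending only on $v_{*}$, by combining a purely geometric confinement estimate for the trajectory with the Poisson statistics of the obstacles. Since a countable intersection of almost sure events is again almost sure, it suffices to fix $T\in\N$, prove $\PP(\Omega_T)=1$ for $\Omega_T:=\{\omega:\ T_\phi^t(Y_0,V_0;\omega)\text{ is defined on }[0,T]\}$, and set $\tilde\Omega:=\bigcap_T\Omega_T$. Inspecting the inductive construction in Section \ref{ssec:PM}, the flow fails to extend past some $t_*\le T$ only if case 2.ii) is iterated infinitely often at a coalescence instant, or the coalescence instants accumulate at $t_*$; in either situation infinitely many obstacles are absorbed in $[0,t_*]$. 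Since $\PP$-a.e. configuration is locally finite and an obstacle is absorbed at most once, were $V$ bounded on $[0,t_*)$ the particle would remain in a fixed bounded ball (the free flight contributing at most $Ut_*$ and, by the merging rule, the coalescence jumps summing to a finite amount) and could meet only finitely many obstacles, a contradiction. Hence, off a null set, failure to extend past $T$ forces the existence of times $t_k<T$ at which the flow is defined with $V(t_k)\to\infty$.

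\textit{Confinement estimate.} Here I use the characteristic feature of the CTP model that at a merging the centre moves to the centre of mass: by \eqref{def:merg_dyn}, absorbing at one instant a batch of total volume $w$ while the volume is $V$ displaces the centre by at most $\tfrac{w}{V+w}\,\sigma\big((V+w)^{1/3}+v_*^{1/3}\big)$. Ordering the mergings along the trajectory and writing $V_j=V_{j-1}+w_j$ for the running volume, the total displacement accumulated by the time a total volume $w$ has been absorbed is, on comparing $\sum_j w_j V_j^{-2/3}$ and $\sum_j w_j V_j^{-1}$ with $\int_{V_0}^{V_0+w}s^{-2/3}\,ds$ and $\int_{V_0}^{V_0+w}s^{-1}\,ds$, at most $3\sigma\big((V_0+w)^{1/3}-V_0^{1/3}\big)+\sigma v_*^{1/3}\log(1+w)$. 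Including the free flight at speed $U$ in the obstacle frame, the tagged particle (a ball of radius $\sigma(V_0+w)^{1/3}$) is then contained in $B\big(X_0,\ UT+C\sigma(V_0+w)^{1/3}+C\log(1+w)\big)$, with $X_0$ the initial position and $C$ absolute; this is the statement of the appendix controlling the displacement of the centre of mass. In particular there is an absolute constant $C_0$ such that, once $w$ is large (how large depending on $T$), this ball is contained in $B(X_0,C_0(V_0+w)^{1/3})$.

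\textit{From a failure to a Poisson deviation, and conclusion.} Assume the flow fails to extend past $T$, so there are $t_k<T$ with $V(t_k)\to\infty$. Because each obstacle has volume $\le v_*$, attaining volume $V(t_k)$ requires absorbing $m_k:=\lceil(V(t_k)-V_0)/v_*\rceil\to\infty$ obstacles, and while the first $m_k$ of them are absorbed the total absorbed volume is $\le m_k v_*$; by the confinement estimate the centres of these $m_k$ obstacles all lie in $B(X_0,\rho_{m_k})$ with $\rho_m\le C_0(V_0+mv_*)^{1/3}+\sigma v_*^{1/3}$. Let $N(\rho)$ be the number of obstacles with centre in $B(X_0,\rho)$; this is a Poisson random variable whose mean scales like $\phi\rho^3$, so that $\E N(\rho_m)\le c_1\phi v_*\,m$ for $m$ large, with $c_1$ an absolute constant. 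Choosing $\phi_*=\phi_*(v_*)$ so small that $e\,c_1 v_*\phi_*<1$, the Chernoff bound for the Poisson law gives, for all $\phi\le\phi_*$ and all $m$ large (the threshold depending on $T$ only), $\PP(N(\rho_m)\ge m)\le(e\,c_1 v_*\phi)^{m}$, which is summable in $m$. Since the failure event entails $N(\rho_{m_k})\ge m_k$ for infinitely many $k$, Borel--Cantelli makes it a null event; hence $\PP(\Omega_T)=1$ for every $T$, $\PP(\tilde\Omega)=1$, and on $\tilde\Omega$ the flow is globally defined. Specialising the telescoping to all $w_j=1$ recovers the cleaner proof for $G=\delta(v-1)$ alluded to in the introduction.

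\textit{Main obstacle.} The crux is the confinement estimate together with the fact that it must be used \emph{without} any a priori bound on $V$, so that the statement ``the first $m$ absorbed obstacles lie in a ball of radius $O(m^{1/3})$'' becomes genuinely self-improving; it is precisely the centre-of-mass update that produces the exponent $1/3$ here, whereas a jump of the centre comparable to the radius at each merging would give a confinement radius growing like $m^{4/3}$, for which the Poisson deviation above is no longer rare — in line with the finite-time blow-up of Example \ref{ex:blowup} and the discussion around Remark \ref{rk:postagpc}. A secondary technical point, absent when $G=\delta(v-1)$, is that several obstacles of different (but bounded) sizes may be absorbed at a single instant, so the telescoping must be carried out obstacle by obstacle while the flow itself is defined batch by batch.
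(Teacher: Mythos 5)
Your argument is correct and reaches the conclusion of Theorem \ref{th:theorem2} by a genuinely different and substantially more economical route than the paper. The paper splits the proof into two structural statements, namely Proposition \ref{prop:wellpos1} (every coalescence cascade terminates after finitely many steps a.s., via the nested events $\Omega_k$, the geometric Lemma \ref{lem:geometriclemB_k} and the calculus Lemma \ref{lem:boundprob}) and Proposition \ref{prop:totlenflights} (the total free-flight length diverges a.s., via the generating-function tail bound of Lemma \ref{lem:WPnoblowup1} and the short-flight estimate of Lemma \ref{lem:WPnoblowup2}), and then combines them. You instead observe that, for fixed $T$, every failure mode (a non-terminating cascade, or accumulation of collision times) forces the absorption of infinitely many distinct obstacles by time $T$; that the first $m$ absorbed obstacles must have their centres in a \emph{deterministic} ball of radius $\rho_m$ with $\rho_m^3\le c\,\phi v_* m$ for $m$ large, by the centre-of-mass displacement bound of Lemma \ref{app:claim} (i.e. \eqref{eq:bd-displY} applied batch by batch); and that the Poisson count of that ball therefore satisfies $\PP(N(\rho_m)\ge m)\le (e\,c\,\phi v_*)^m\to 0$ once $\phi<(e c v_*)^{-1}$. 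Since failure implies $N(\rho_m)\ge m$ for every $m$, monotone continuity of the measure (Borel--Cantelli is not even needed here) gives probability zero, and intersecting over $T\in\N$ finishes. What the paper's route buys beyond the bare statement is structural information, termination of every cascade, exponential tails for the number of obstacles per coalescence event, and the divergence $\sum_j l_j=\infty$, which underpins the sequence representation \eqref{def:dyn_sequence} used throughout Section \ref{sec:WP}; your route buys brevity and makes transparent both why $\phi_*$ depends only on $v_*$ and why Example \ref{ex:blowup} (volume fraction of order one, so a ball of volume $O(m)$ typically \emph{does} contain $m$ obstacles) is not a counterexample.

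One repair is needed in the confinement step: the intermediate bound $3\sigma\bigl((V_0+w)^{1/3}-V_0^{1/3}\bigr)+\sigma v_*^{1/3}\log(1+w)$ rests on comparing $\sum_j w_jV_j^{-1}$ with $\int_{V_0}^{V_0+w}s^{-1}\,ds$, which is infinite when $V_0=0$ (a case the theorem allows), so the logarithm as written is not justified. The fix is exactly the one in Lemma \ref{app:claim}: inside the $j$-th batch bound the obstacle radius by $\sigma w_j^{1/3}\le\sigma V_j^{1/3}$ rather than by $\sigma v_*^{1/3}$, so that the second sum becomes $\sum_j w_jV_j^{-2/3}\le 3\bigl((V_0+w)^{1/3}-V_0^{1/3}\bigr)$ and the total displacement is $O\bigl((V_0+w)^{1/3}\bigr)$ with no logarithmic term. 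This does not affect your conclusion, since only the $O(w^{1/3})$ growth of $\rho_m$ enters the Poisson estimate.
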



\section{Proof of the kinetic limit (Theorem \ref{th:theorem1})}\label{sec:proofTh1}

\subsection{Definition of the set of good configurations ${\Omega}^1$}\label{ssec:goodconf}

We consider a decomposition of the configuration space  $\Omega$ defined in Section \ref{ssec:PM} as follows
$$\Omega=\Omega^1 \cup \Omega^2$$
where $\Omega^1$ denotes the set of good configurations, while $\Omega^2$ is the set of bad configurations. 
More precisely $\Omega^1$ denotes the set of well separated configurations of obstacles so that we can deal with a dilute regime which is necessary to obtain a kinetic picture. 
We recall that in all this section $G(v)$ satisfies Assumption \ref{ass:G}.
 
\bigskip

Let $(Y_{t}(\omega),V_t(\omega))$ the evolved configuration of the tagged particle at time $t$. 
Given $T>0$, $\ep_0>0$ and $\phi>0$ we need to define the following quantities. We   
define 
\begin{equation}\label{ass:V_*2}
\lambda=\ep_0^{-\delta_1}\qquad \text{and} \quad 
V_{*}=\ep_0^{-\frac{2}{3(\gamma-1)}-\frac{\delta_2}{\gamma-1}}
\end{equation}
with $\delta_2>\delta_1> 0$. 
We will assume that $\ep_0\leq 1$ so that $\lambda \geq 1$ and $V_{*}\geq 1$. Note that this implies 
\begin{equation}\label{ass:V_*1}
\frac{V_{*}^{\gamma-1}\ep_0^{\frac{2}{3}}}{\lambda}=\ep_0^{\delta_1-\delta_2} \to \infty \quad \text{as} \quad \ep_0\to 0. 
\end{equation}
We introduce the maximal radius $R_{max}$ which is defined by 
 \begin{equation}\label{def:R_max}
 R_{max}:=\lambda V_{*}^{\frac{1}{3}}
 \end{equation}
and we consider the domain of length $T\phi^{-\frac 2 3}$ and size $R_{max}$ defined by
\begin{equation}\label{def:collcyl}
\mathcal{D}_{*}=({B}^{2}_{\phi^{\frac{1}{3}}R_{max}}(0)\times [0,T\,\phi^{-\frac 2 3}))\cup (S_{C_t}\cup S_{C_b}),
\end{equation}
where $S_{C_t}$ and $ S_{C_b}$ denote the spherical caps at the top and the bottom of the cylinder, ${B}^{2}_{r}(0)$ is the 2-dimensional ball of centre $0$ and radius $r$ and $\mathcal{D}_{*}=\mathcal{D}_{*}(\ep_0,\phi,T)$ but from now on we will drop this dependence. See Figure \ref{fig:tube}.

\begin{figure}[ht]
\centering
\includegraphics[scale= 0.17]{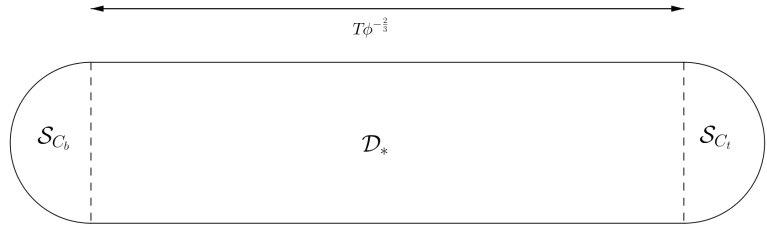}
\caption{The maximal collision tube $\mathcal{D}_{*}$.}
\label{fig:tube}
\end{figure}


We first introduce an auxiliary set which will be used in the construction of the set of the good configurations $\Omega^1$, namely
\begin{equation}\label{def:goodconf_G}
\begin{split}
\mathcal{G}:=&\big\{ \omega\in \Omega \; : \; n(\mathcal{D}_{*})\leq \lambda \pi R^{2}_{\max}(T+1),\; n(S_{C_b}\cup S_{C_t})=0\quad \text{and}\quad \forall x_{j}(\omega)\; \;  
\\& 
\quad \text{s.t.}\;\; B_{\phi^{\frac{1}{3}}R_j(\omega)}(x_j(\omega))\cap \mathcal{D}_{*} \neq\emptyset \quad \text{we have}\; R_j(\omega)\leq \frac{3}{4\pi}(V_{*})^{\frac{1}{3}} \big\} 
\end{split}
\end{equation}
where 
$n(\mathcal{D}_{*})=\# \{x_{k}(\omega)\; \; \text{s.t.}\;\; B_{\phi^{\frac{1}{3}} R_k(\omega)}(x_k(\omega))\cap \mathcal{D}_{*} \neq\emptyset\}$.
Observe that $\mathcal{G}=\mathcal{G}(\ep_0,\phi,T,0,V_0)$. 
Notice that we can define the auxiliary set $\mathcal{G}$ for a different initial center of the tagged particle by means of a suitable translation of the set $\mathcal{D}_{*}$.

In order to define $\Omega^1$ we construct inductively a sequence of domains $\{ \mathcal{U}_{k} \}$  with $\mathcal{U}_{k} \subset \mathcal{G}$ in the following way. 
We start from $\mathcal{U}_{0}$. We set $\rho_0:=R_0+\frac{3}{4\pi}V^{\frac{1}{3}}_{*}$ which is the unrescaled enlarged radius at the first step and 
\begin{equation}\label{eq:defD0}
\mathcal{D}_0:=B^2_{\rho_0\phi^{\frac{1}{3}}}(0)\times [0,\ep_0 \phi^{-\frac{2}{3}}),
\end{equation}
the cylinder of size $\rho_0$, so that we define
\begin{equation}\label{def:U0}
\mathcal{U}_{0}:=\{\omega\in\mathcal{G}: \; \#(\{x_k(\omega)\}\cap \mathcal{D}_0)\leq 1 \;\;\text{and if }\;\; v_{k}(\omega)\in \mathcal{D}_0,\; v_{k}(\omega)\leq 1\}
\end{equation}
then the complementary set of $\mathcal{U}_{0}$ in $\mathcal{G}$ is given by
\begin{equation}\label{def:U0compl}
\mathcal{\bar{U}}_{0}:=\{\omega\in\mathcal{G}: \; \#(\{x_k(\omega)\}\cap \mathcal{D}_0)> 1 \;\;\text{or}\;\; [\#(\{x_k(\omega)\}\cap \mathcal{D}_0)=1 \;\;\text{and}\;\; v_{k}(\omega)\in \mathcal{D}_0,\; v_{k}(\omega)> 1]  \}.
\end{equation} 

To construct $\mathcal{U}_{1}$ we note that, given $\omega\in \mathcal{U}_{0}$, we have two possibilities:   
\begin{itemize}
\item[(1)] $ \#(\{x_k(\omega)\}\cap \mathcal{D}_0)=0$,
\item[(2)] $ \#(\{x_k(\omega)\}\cap \mathcal{D}_0)=1$.
\end{itemize}

If (1) holds it means that the tagged particle suffered no collisions in $\mathcal{D}_0$. It is then natural to define $\rho_1=\rho_0$, $X_1=X_0$, $V_1=V_0$ 
and 
\begin{equation}\label{eq:defD1}
\mathcal{D}_1=\mathcal{D}_1(\omega_{|_{\mathcal{D}_0}}):=B^2_{\rho_1\phi^{\frac{1}{3}}}(X_1)\times [\ep_0 \phi^{-\frac{2}{3}},2 \ep_0 \phi^{-\frac{2}{3}}),
\end{equation}
where we used the notation $\mathcal{D}_1(\omega_{|_{\mathcal{D}_0}})$ to indicate the dependence on the particles of the configuration $\omega$ restricted to the domain $\mathcal{D}_0$.

In the case (2) instead, we suppose that $(x_1,v_1)\in \omega$ is the obstacle belonging to the intersection, i.e. $x_1\in \{x_k(\omega)\}\cap \mathcal{D}_0$. Since a collision occurred it is then natural to define 
$\rho_1=\frac{3}{4\pi}(V_1^{\frac{1}{3}}+V_{*}^{\frac{1}{3}})$, $X_1=\frac{v_1}{V_0+v_1}x_1$, $V_1=V_0+v_1$ and
$\mathcal{D}_1=\mathcal{D}_1(\omega_{|_{\mathcal{D}_0}}):=B^2_{\rho_1\phi^{\frac{1}{3}}}(X_1)\times [\ep_0 \phi^{-\frac{2}{3}},2 \ep_0 \phi^{-\frac{2}{3}})$. 
Therefore we define 
\begin{equation}\label{def:U1}
\begin{split}
\mathcal{U}_{1}:=&\{\omega\in\mathcal{U}_{0}: \; [\#(\{x_j(\omega)\}\cap \mathcal{D}_{1}(\omega_{|_{\mathcal{D}_{0}}}) )=0] \quad \text{or}\\&
\;\; [\#(\{x_j(\omega)\}\cap \mathcal{D}_{1}(\omega_{|_{\mathcal{D}_{0}}}) )=\#\{x_j\}=1 \;\; \text{and}\;\; \#(\{x_j(\omega)\}\cap \mathcal{D}_{0})=0 \\&
\;\;\text{and}\; \; v_j(\omega)\leq (1+\theta(1,\omega))^{\frac{1}{\gamma}(1+\delta_3)} 
] \}.
\end{split}
\end{equation}
Here we introduced the function 
\begin{equation}\label{def:f-teta}
\theta(k,\omega):=\sum_{j=1}^{k}\{\#\{x_l(\omega)\}\cap \mathcal D_{j}(\omega)\}
\end{equation} 
for the level $k=1$, i.e. $\theta(1,\omega)=1$. Note that, being $\theta(k,\omega)$ a function of the configuration $\omega$, it depends on all the previous history, i.e. the particles of the configuration $\omega$ contained in $\bigcup_{l=0}^{k-1}\mathcal{D}_{l}$.
We remark that the complement of $\mathcal{{U}}_{1}$ in $\mathcal{{U}}_{0}$ is 
\begin{equation}\label{def:U1compl}
\begin{split}
\mathcal{\bar{U}}_{1}:=&\{ \omega\in\mathcal{U}_{0}: \; \#(\{x_j(\omega)\}\cap\mathcal{D}_{1}(\omega_{|_{\mathcal{D}_{0}}}) )> 1\quad \\&
\quad \text{or}\quad [ \#(\{x_j(\omega)\}\cap \mathcal{D}_{1}(\omega_{|_{\mathcal{D}_{0}}}))=1 \; \text{and}\;  \#(\{x_j(\omega)\}\cap \mathcal{D}_{0})=1]
 \\&
\quad \text{or}\quad [ \#(\{x_j(\omega)\}\cap \mathcal{D}_{1}(\omega_{|_{\mathcal{D}_{0}}}))=1 \; \text{and}\;  \#(\{x_j(\omega)\}\cap \mathcal{D}_{0})=0
\\&
\quad \text{and} \,\, v_j(\omega)> (1+\theta(1,\omega))^{\frac{1}{\gamma}(1+\delta_3)}] \}.
\end{split}
\end{equation}
We now define $\mathcal{U}_{k+1}$ iteratively.  Given $\omega\in \mathcal{U}_{k}$ we consider the two different cases:
\begin{itemize}
\item[(1)] $ \#(\{x_j(\omega)\}\cap \mathcal{D}_k(\omega))=0$,
\item[(2)] $ \#(\{x_j(\omega)\}\cap \mathcal{D}_k(\omega))=1$.
\end{itemize}
As we observed before, if (1) holds it means that the tagged particle suffered no collisions in $\mathcal{D}_k(\omega)$. We define $\rho_{k+1}=\rho_{k}$, $X_{k+1}=X_{k}$, $V_{k+1}=V_{k}$ and 
\begin{equation}\label{eq:defDk}
\mathcal{D}_{k+1}=\mathcal{D}_{k+1}(\omega_{|_{\bigcup_{l=0}^{k}\mathcal{D}_{l}}}):=B^2_{\rho_{k+1}\phi^{\frac{1}{3}}}(X_{k+1})\times [k\ep_0 \phi^{-\frac{2}{3}},(k+1) \ep_0 \phi^{-\frac{2}{3}}).
\end{equation}
 If (2) is satisfied
we assume that $(x_{k+1},v_{k+1})$ is the obstacle such that $x_{k+1}\in \{x_j(\omega)\}\cap \mathcal{D}_k(\omega)$. Thus, we set $\rho_{k+1}=\frac{3}{4\pi}(V_{k+1}^{\frac{1}{3}}+V_{*}^{\frac{1}{3}})$, $X_{k+1}=X_{k}+\frac{v_{k+1}}{V_{k}+v_{k+1}}(x_{k+1}-X_{k})$, $V_{k+1}=V_{k}+v_{k+1}$ and
$\mathcal{D}_{k+1}=\mathcal{D}_{k+1}(\omega_{|_{\bigcup_{l=0}^{k}\mathcal{D}_{l}}}):=B^2_{\rho_{k+1}\phi^{\frac{1}{3}}}(X_{k+1})\times [k\ep_0 \phi^{-\frac{2}{3}},(k+1) \ep_0 \phi^{-\frac{2}{3}})$.
Hence, we define 
\begin{equation}\label{def:Uk+1}
\begin{split}
\mathcal{U}_{k+1}:=&\{\omega\in\mathcal{U}_{k}: \; [\#(\{x_j(\omega)\}\cap \mathcal{D}_{k+1}(\omega_{|_{\bigcup_{l=0}^{k}\mathcal{D}_{l}}}) )=0] \quad \text{or}\\&
\;\; [\#(\{x_j(\omega)\}\cap \mathcal{D}_{k+1}(\omega_{|_{\bigcup_{l=0}^{k}\mathcal{D}_{l}}}) )=\#\{x_j\}=1 \;\; \text{and}\;\; \#(\{x_j(\omega)\}\cap \mathcal{D}_{k}(\omega_{|_{\bigcup_{l=0}^{k-1}\mathcal{D}_{l}}}))=0 \\&
\;\;\text{and}\; \; v_j(\omega)\leq (1+\theta(k,\omega))^{\frac{1}{\gamma}(1+\delta_3)} 
] \},
\end{split}
\end{equation}
where $\delta_3>0$ is sufficiently small. 
Then, the complementary set of $\mathcal{U}_{k+1}$ in $\mathcal{U}_{k}$ is
\begin{equation}\label{def:Uk+1compl}
\begin{split}
\mathcal{\bar{U}}_{k+1}:=&\{\omega\in\mathcal{U}_{k}: \; \#(\{x_j(\omega)\}\cap \mathcal{D}_{k+1}(\omega_{|_{\bigcup_{l=0}^{k}\mathcal{D}_{l}}}))> 1\quad \\&
\quad \text{or}\quad [ \#(\{x_j(\omega)\}\cap \mathcal{D}_{k+1}(\omega_{|_{\bigcup_{l=0}^{k}\mathcal{D}_{l}}}))=1 \; \text{and}\;  \#(\{x_j(\omega)\}\cap \mathcal{D}_{k}(\omega_{|_{\bigcup_{l=0}^{k-1}\mathcal{D}_{l}}}))=1]
 \\&
\quad \text{or}\quad [ \#(\{x_j(\omega)\}\cap \mathcal{D}_{k+1}(\omega_{|_{\bigcup_{l=0}^{k}\mathcal{D}_{l}}}))=1 \; \text{and}\;  \#(\{x_j(\omega)\}\cap \mathcal{D}_{k}(\omega_{|_{\bigcup_{l=0}^{k-1}\mathcal{D}_{l}}}))=0
\\&
\quad \text{and} \,\, v_j(\omega)> (1+\theta(k,\omega))^{\frac{1}{\gamma}(1+\delta_3)} ]\}.
\end{split}
\end{equation}
Note that, by construction, the sets $\{\mathcal{\bar{U}}_{k}\}$ are mutually disjoint, namely $\mathcal{\bar{U}}_{j}\cap \mathcal{\bar{U}}_{m}= \emptyset$ for $j\neq m$.
Moreover, we observe that $\mathcal{\bar{U}}_{k+1}$is the union of disjoint sets
\begin{equation}
\begin{split}
&\mathcal{\bar{U}}_{k+1}^{(1)}:=\{\omega\in\mathcal{U}_{k}: \; \#(\{x_j(\omega)\}\cap \mathcal{D}_{k+1}(\omega_{|_{\bigcup_{l=0}^{k}\mathcal{D}_{l}}}))> 1\},\\&
\mathcal{\bar{U}}_{k+1}^{(2)}:=\{\omega\in\mathcal{U}_{k}: \; \#(\{x_j(\omega)\}\cap \mathcal{D}_{k+1}(\omega_{|_{\bigcup_{l=0}^{k}\mathcal{D}_{l}}}))=1 \; \text{and}\;  \#(\{x_j(\omega)\}\cap \mathcal{D}_{k}(\omega_{|_{\bigcup_{l=0}^{k-1}\mathcal{D}_{l}}}))=1\},\\&
\mathcal{\bar{U}}_{k+1}^{(3)}:=\{\omega\in\mathcal{U}_{k}: \; \#(\{x_j(\omega)\}\cap \mathcal{D}_{k+1}(\omega_{|_{\bigcup_{l=0}^{k}\mathcal{D}_{l}}}))=1 \; \text{and}\;  \#(\{x_j(\omega)\}\cap \mathcal{D}_{k}(\omega_{|_{\bigcup_{l=0}^{k-1}\mathcal{D}_{l}}}))=0\\&
 \hspace{1.5cm} \text{and} \,\, v_j(\omega)> (1+\theta(k,\omega))^{\frac{1}{\gamma}(1+\delta_3)} \}.
\end{split}
\end{equation}

We then define the set of good configurations $\Omega^1={\Omega}^1(\ep_0,\phi,T,0,V_0)=\mathcal{U}_{[\frac{T}{\ep_0}]}$. 
We observe that $\mathcal{U}_{[\frac{T}{\ep_0}]}= \bigcap _{k=0}^{[\frac{T}{\ep_0}] }  \mathcal{U}_{k}$  since $ \mathcal{U}_{k+1}\subset  \mathcal{U}_{k}$ $\forall k$. 
We can define $\Omega^1={\Omega}^1(\ep_0,\phi,T,Y_0,V_0)$ for different initial center $Y_0$ of the tagged particle by means of a suitable translation of the sets $\mathcal{D}_{*}$ and $\mathcal{D}_{k}$. The main result we prove in this section is the following
\begin{theorem}\label{th:goodconf}
Given $M>1$, let $T>0$. For any $\delta>0$ there exists $\ep_{*}=\ep_{*}(\delta, T)>0$ such that if $0<\ep_0<\ep_{*}$ and $\phi<\ep_0^{\beta}$ 
with $\beta=\frac{2}{3(\gamma-1)}+4\delta_1+\frac{\delta_2}{(\gamma-1)}$, then
$$\PP(\Omega^1)\geq 1 -\delta,$$
for any $V_0\in [0,M]$.
\end{theorem}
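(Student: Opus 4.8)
The plan is to bound the probability of the complement $\bar{\Omega}^1 = \Omega \setminus \Omega^1$ by summing, over all scales $k$, the probabilities of the "bad events" $\mathcal{\bar U}_k$ plus the probability of the complement of the auxiliary set $\mathcal{G}$. Since $\Omega^1 = \mathcal{U}_{[T/\ep_0]} = \bigcap_k \mathcal{U}_k$ and the $\mathcal{\bar U}_k$ are mutually disjoint subsets of $\mathcal{G}$, we have
\begin{equation}
\PP(\Omega \setminus \Omega^1) \leq \PP(\Omega \setminus \mathcal{G}) + \sum_{k=0}^{[T/\ep_0]} \PP(\mathcal{\bar U}_k),
\end{equation}
so it suffices to show each term on the right is $o(1)$ as $\ep_0 \to 0$ under the stated coupling $\phi < \ep_0^\beta$, uniformly for $V_0 \in [0,M]$.

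\textbf{Step 1: the auxiliary set $\mathcal{G}$.} I would estimate $\PP(\Omega \setminus \mathcal{G})$ using that under $\mu_\phi$ the centers form a Poisson process of intensity one and the volumes are i.i.d. with law $\frac{1}{\phi}G(\cdot/\phi)$, so the expected number of obstacle balls meeting $\mathcal{D}_*$ is essentially $\phi$ times the volume of the $R_{max}\phi^{1/3}$-enlargement of $\mathcal{D}_*$, which is of order $\phi \cdot (\phi^{1/3}R_{max})^3 \cdot \phi^{-2/3}T = \phi \cdot R_{max}^3 T$ up to the lower-order cap and enlargement corrections; a Poisson tail (Chernoff) bound gives that exceeding $\lambda \pi R_{max}^2(T+1)$ has probability going to zero once $\lambda$ and $R_{max}$ are polynomially large in $\ep_0^{-1}$ and $\phi$ is polynomially small. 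The condition $R_j \leq \frac{3}{4\pi}V_*^{1/3}$ for obstacles meeting $\mathcal{D}_*$ is where Assumption \ref{ass:G} enters: by Markov, $\PP(\exists j : \text{ball meets } \mathcal{D}_*, v_j > V_*) \lesssim (\text{expected number of centers near } \mathcal{D}_*) \cdot \PP_{G}(v > V_*) \lesssim (\phi^{-2/3}R_{max}^2 T) \cdot V_*^{-\gamma}\int v^\gamma G$, and plugging in the definitions \eqref{ass:V_*2}, \eqref{def:R_max} and the bound $\phi < \ep_0^\beta$ with $\beta = \frac{2}{3(\gamma-1)} + 4\delta_1 + \frac{\delta_2}{\gamma-1}$ makes this a positive power of $\ep_0$. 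The role of $\gamma > 2$ rather than $\gamma > 1$ will surface here in making the exponents close.

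\textbf{Step 2: the single-scale bad events.} For fixed $k$, I would condition on $\omega$ restricted to $\bigcup_{l \leq k-1}\mathcal{D}_l$ (which determines $X_k, V_k, \rho_k$ and hence the cylinder $\mathcal{D}_k$), and then use that, by the independence property of the Poisson process, the obstacles in $\mathcal{D}_k$ (disjoint from the earlier cylinders up to negligible overlap) are still a fresh Poisson process. The cylinder $\mathcal{D}_k$ has volume $\sim (\rho_k \phi^{1/3})^2 \cdot \ep_0 \phi^{-2/3} = \rho_k^2 \ep_0$, and on $\mathcal{G}$ one has $\rho_k \leq R_0 + \lambda V_*^{1/3} \lesssim R_{max}$, so the expected number of obstacle centers in $\mathcal{D}_k$ is $\lesssim R_{max}^2 \ep_0 = \lambda^2 V_*^{2/3}\ep_0$. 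Then:
\begin{itemize}
\item $\PP(\mathcal{\bar U}_{k+1}^{(1)} \mid \text{past})$, i.e. two or more centers in $\mathcal{D}_{k+1}$, is $\lesssim (R_{max}^2 \ep_0)^2$ by the Poisson two-point bound;
\item $\PP(\mathcal{\bar U}_{k+1}^{(2)} \mid \text{past})$, one center in $\mathcal{D}_{k+1}$ and one in $\mathcal{D}_k$, factors as $\lesssim (R_{max}^2\ep_0)^2$ as well (the $\mathcal{D}_k$ event is measurable w.r.t. the past but its probability was itself $\lesssim R_{max}^2\ep_0$, which I would track by a second layer of conditioning or by noting $\mathcal{\bar U}_{k+1}^{(2)} \subset \{$center in $\mathcal{D}_k\} \cap \{$center in $\mathcal{D}_{k+1}\}$ and using independence across the two disjoint slabs);
\item $\PP(\mathcal{\bar U}_{k+1}^{(3)} \mid \text{past})$, a single obstacle in $\mathcal{D}_{k+1}$ with anomalously large volume $v_j > (1 + \theta(k,\omega))^{(1+\delta_3)/\gamma}$, is $\lesssim R_{max}^2\ep_0 \cdot \PP_G\big(v > (1+\theta(k,\omega))^{(1+\delta_3)/\gamma}\big) \lesssim R_{max}^2\ep_0 \cdot (1+\theta(k,\omega))^{-(1+\delta_3)}\int v^\gamma G$, using Markov and $\gamma$-th moment finiteness.
\end{itemize}

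\textbf{Step 3: summation over $k$.} Summing the first two types over $k \leq T/\ep_0$ gives $\lesssim \frac{T}{\ep_0}(R_{max}^2\ep_0)^2 = T R_{max}^4 \ep_0 = T\lambda^4 V_*^{4/3}\ep_0$; substituting $\lambda = \ep_0^{-\delta_1}$, $V_* = \ep_0^{-\frac{2}{3(\gamma-1)} - \frac{\delta_2}{\gamma-1}}$ gives $T\ep_0^{1 - 4\delta_1 - \frac{8}{9(\gamma-1)} - \frac{4\delta_2}{3(\gamma-1)}}$, which tends to zero provided $\delta_1, \delta_2$ are chosen small enough relative to $\gamma - 1$ (here $\gamma > 2$ gives a comfortable margin). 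For the third type, the key point is that on the good sets $\mathcal{U}_k$ the counter $\theta(k,\omega)$ is exactly the number of collisions suffered so far, and since the volume of the tagged particle after $\theta$ collisions is controlled by the construction (each admitted obstacle has volume $\leq (1+\theta)^{(1+\delta_3)/\gamma}$, which is summable-compatible with the growth), I would show $\sum_k \PP(\mathcal{\bar U}_{k+1}^{(3)}) \lesssim R_{max}^2 \ep_0 \sum_{\theta \geq 1} (1+\theta)^{-(1+\delta_3)} \cdot (\text{multiplicity of } k \text{ with } \theta(k) = \theta)$; the multiplicity is controlled because between collisions $\theta$ is constant and there are at most $T/\ep_0$ steps total, but more carefully one bounds $\sum_k (1+\theta(k,\omega))^{-(1+\delta_3)}$ along a trajectory by a convergent series in $\theta$ times the number of free-flight slabs per collision, yielding a bound $\lesssim R_{max}^2 \ep_0 \cdot C(\delta_3) \cdot (T/\ep_0) = C(\delta_3) T R_{max}^2$, which is unfortunately NOT small — so in fact the correct estimate must use that the expected number of collisions in time $T$ is $O(T)$ (not $T/\ep_0$), i.e. one should sum the single-obstacle probability $R_{max}^2\ep_0$ only over the $O(T/\ep_0)$ slabs but weight by the probability $\sim R_{max}^2 \ep_0$ that a collision actually occurs in each, giving $\lesssim (T/\ep_0)(R_{max}^2\ep_0)^2 \sum_\theta (1+\theta)^{-(1+\delta_3)} \lesssim T R_{max}^4 \ep_0$, absorbed as before.

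\textbf{The main obstacle} will be Step 3: making the probabilities of the type-(3) events summable uniformly in the random past. The subtlety is that $\theta(k,\omega)$ depends on the entire history and the cylinders $\mathcal{D}_k$ are themselves random (their centers $X_k$ and radii $\rho_k$ depend on previous collisions), so one must carefully set up the conditioning — conditioning on $\omega$ restricted to $\bigcup_{l<k}\mathcal{D}_l$ and using the Poisson independence on the fresh region $\mathcal{D}_k \setminus \bigcup_{l<k}\mathcal{D}_l$ — and then control the overlaps between consecutive cylinders (which are small because $\rho_k$ changes only at collision times and the enlargement $V_*^{1/3}$ dominates the center shift $\frac{v}{V+v}|x-X| \leq \sigma(V^{1/3} + v^{1/3})$, cf. the appendix estimates controlling the displacement of the center of mass). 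I would also need $V_0 \leq M$ to bound $\rho_0 = R_0 + \frac{3}{4\pi}V_*^{1/3}$ uniformly, which enters only through a constant. Once the conditioning and overlap bookkeeping is in place, everything reduces to Poisson tail bounds and Markov's inequality with the $\gamma$-th moment of $G$, and the choice $\beta = \frac{2}{3(\gamma-1)} + 4\delta_1 + \frac{\delta_2}{\gamma-1}$ is precisely what balances the powers of $\ep_0$ coming from $R_{max}^4\ep_0$, from $V_*^{-\gamma}\phi^{-2/3}$ in Step 1, and from the coupling $\phi < \ep_0^\beta$.
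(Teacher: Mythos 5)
Your overall architecture is the paper's: split off the auxiliary event $\mathcal{G}$ (Proposition \ref{prop:1}, proved via a dyadic decomposition in the obstacle volume plus Poisson tail bounds), and control $\sum_k\PP(\mathcal{\bar U}_k)$ by separating the three types of bad events exactly as in \eqref{def:Uk+1compl}. Your treatment of types (1) and (2) --- two obstacles in one slab, or one in each of two consecutive slabs --- matches the paper's bound $\sum_k \rho_k^4\ep_0^2\lesssim T\lambda^4V_*^{4/3}\ep_0$. The genuine gap is in the type-(3) estimate. You correctly observe that the naive bound $\sum_k\rho_k^2\ep_0\,M_\gamma\,(1+\theta(k))^{-(1+\delta_3)}\lesssim TR_{\max}^2$ is not small, but your proposed repair --- inserting an additional factor $R_{\max}^2\ep_0$ as ``the probability that a collision actually occurs in each slab'' --- is not a legitimate step: the event $\mathcal{\bar U}_k^{(3)}$ already consists of exactly one obstacle landing in $\mathcal D_k$ with an oversized volume, so its probability per slab carries exactly one factor of $|\mathcal D_k|\approx\pi\rho_k^2\ep_0$ times the Markov tail $M_\gamma(1+\theta)^{-(1+\delta_3)}$; there is no second, independent collision event to weight by, and multiplying by $R_{\max}^2\ep_0$ again double-counts the same indicator. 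The mechanism the paper uses instead (see \eqref{eq:psumbarU_k^(3)}) is a pointwise, per-configuration bound: since the counter $\theta$ increases by one at every slab containing an obstacle, the sum over slabs satisfies $\sum_{k}\chi(n_k=1)\,(1+\sum_{l<k}n_l)^{-(1+\delta_3)}\leq\sum_{m\geq1}(1+m)^{-(1+\delta_3)}=C_{\delta_3}$ deterministically, so the sum over the $T/\ep_0$ slabs collapses to the constant $C_{\delta_3}$ multiplying a single extracted factor $\pi\ep_0\lambda^2V_*^{2/3}M_\gamma/4$. Without this observation (or a correct substitute for it) your Step 3 does not close.

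A second, smaller omission: you assert $\rho_k\lesssim R_{\max}$ ``on $\mathcal G$'', but $\rho_k=\frac{3}{4\pi}(V_k^{1/3}+V_*^{1/3})$ depends on the accumulated volume $V_k$, and bounding it uniformly (together with the inclusion $\bigcup_j\mathcal D_j\subset\mathcal D_*$, which is what lets you treat the slabs as a fresh Poisson process inside the maximal tube) is the content of Lemma \ref{lem:unionDj}: one needs $V_k\leq V_0+(N_{coll}+2)^{\frac{1}{\gamma}(1+\delta_3)+1}$ with $N_{coll}\leq\pi\lambda^3V_*^{2/3}(T+1)$, the center-of-mass displacement estimate of Lemma \ref{app:claim}, and Assumption \eqref{ass:delta3} --- this is precisely where the constraint $\frac{2}{9}\big(\frac{1}{\gamma}(1+\delta_3)+1\big)<\frac13$ and the smallness of $\delta_2$ enter. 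You gesture at the displacement control in your closing paragraph, but the volume-growth bound, which is what actually keeps $\rho_k\leq\frac{\lambda}{2}V_*^{1/3}$, is absent from your argument.
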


In order to prove Theorem \ref{th:goodconf} we will use some auxiliary results listed below.

\begin{proposition}\label{prop:1}
Let $\mathcal{G}=\mathcal{G}(\ep_0,\phi,T,0,V_0)$ be the set defined by \eqref{def:goodconf_G}. Let $T>0$. For any $\delta>0$ there exists $\ep_{*}=\ep_{*}(\delta, T,M)>0$ 
such that if $0<\ep_0<\ep_{*}$ and $\phi<\ep_0^{\beta}$, with $\beta=\frac{2}{3(\gamma-1)}+4\delta_1+\frac{\delta_2}{(\gamma-1)}$, then 
$$\PP(\mathcal{G})\geq 1-\frac{\delta}{2} .$$
\end{proposition}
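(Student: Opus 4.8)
The plan is to estimate separately the probabilities of the three "bad events" whose complement defines $\mathcal{G}$, namely: (i) $n(\mathcal{D}_*) > \lambda\pi R_{\max}^2(T+1)$; (ii) $n(S_{C_b}\cup S_{C_t}) \geq 1$; and (iii) the existence of an obstacle whose enlarged ball meets $\mathcal{D}_*$ but whose (rescaled) radius exceeds $(V_*)^{1/3}$ up to the constant $\tfrac{3}{4\pi}$, i.e. whose volume exceeds a multiple of $V_*$. First I would fix the key scaling relation: since the positions are a Poisson process of intensity one in $\R^3$ and the volumes are i.i.d. with law $\tfrac1\phi G(\tfrac v\phi)$, the set of obstacles whose enlarged ball $B_{\phi^{1/3}R_k}(x_k)$ meets $\mathcal{D}_*$ is, by the marking/restriction property of Poisson processes, itself Poisson; its intensity is controlled by the volume of a $\phi^{1/3}R_k$-neighbourhood of $\mathcal{D}_*$, and I would bound that neighbourhood's volume using $|\mathcal{D}_*| \sim \phi^{2/3}R_{\max}^2 \cdot T\phi^{-2/3} = R_{\max}^2 T$ together with the a priori bound on allowed radii. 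The constraint $\phi < \ep_0^\beta$ with $\beta = \tfrac{2}{3(\gamma-1)} + 4\delta_1 + \tfrac{\delta_2}{\gamma-1}$ is precisely what is needed to make the enlargement $\phi^{1/3}R_k \lesssim \phi^{1/3}V_*^{1/3}\lambda$ negligible compared with $\phi^{1/3}R_{\max}$, so the relevant volume is $\sim R_{\max}^2 T = \lambda^2 V_*^{2/3}T$.

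For event (i), the expected number of such obstacles is $O(\lambda^2 V_*^{2/3}T)$, while the threshold is $\lambda\pi R_{\max}^2(T+1) = \lambda^3 V_*^{2/3}\pi(T+1)$, which exceeds the mean by a factor $\sim\lambda$; a Chernoff/exponential Markov bound for the Poisson distribution then gives $\PP(n(\mathcal{D}_*) > \text{threshold}) \leq e^{-c\lambda^3 V_*^{2/3}T}$, which by \eqref{ass:V_*2} tends to $0$ as $\ep_0\to 0$ (since $\lambda = \ep_0^{-\delta_1}\to\infty$). For event (ii), the spherical caps $S_{C_b},S_{C_t}$ have volume $O(\phi(R_{\max})^3) = O(\phi\lambda^3 V_*)$, and with $\phi < \ep_0^\beta$ one checks $\phi\lambda^3 V_* = \ep_0^{\beta}\ep_0^{-3\delta_1}\ep_0^{-\frac{2}{3(\gamma-1)} - \frac{\delta_2}{\gamma-1}} = \ep_0^{\delta_1 + \text{(stuff)}}\cdot\ldots$, in any case a positive power of $\ep_0$, so the Poisson probability of finding any obstacle centre there is $O(\phi\lambda^3 V_*)\to 0$. (One must be slightly careful: the definition of $S_{C_b},S_{C_t}$ and whether "$n(S_{C_b}\cup S_{C_t})=0$" refers to centres in the caps or enlarged balls meeting them; either way the volume bound is of the same order.)

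Event (iii) is where Assumption \ref{ass:G}, $\int_0^\infty v^\gamma G(v)\,dv < \infty$ with $\gamma > 2$, enters and is, I expect, the main obstacle. The event is: some Poisson point $x_k$ lies within $O(\phi^{1/3}R_{\max})$ of $\mathcal{D}_*$ \emph{and} has rescaled volume $v_k > c\,V_*$, equivalently unscaled volume $\tilde v_k > c\phi V_*$. By the Mecke/Campbell formula the expected number of such points is $\big(\text{vol. of $O(\phi^{1/3}R_{\max})$-neighbourhood of }\mathcal{D}_*\big)\cdot \int_{cV_*}^\infty G(v)\,dv$, and the tail bound $\int_{cV_*}^\infty G(v)\,dv \leq (cV_*)^{-\gamma}\int_0^\infty v^\gamma G(v)\,dv = O(V_*^{-\gamma})$ comes from Markov's inequality applied to the $\gamma$-th moment. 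Combining with the neighbourhood volume $O(\lambda^2 V_*^{2/3}T)$ gives expected count $O(\lambda^2 T\, V_*^{2/3 - \gamma})$; since $\tfrac23 - \gamma < -1 < 0$ and, by \eqref{ass:V_*2}, $V_* = \ep_0^{-\frac{2}{3(\gamma-1)} - \frac{\delta_2}{\gamma-1}}$ grows faster than any fixed negative power of $\lambda = \ep_0^{-\delta_1}$ can compensate (this is exactly what the constraint $\delta_2 > \delta_1$ and relation \eqref{ass:V_*1} encode: $V_*^{\gamma-1}\ep_0^{2/3}/\lambda\to\infty$), the expected count is $o(1)$, hence so is its probability. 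Collecting the three estimates, each is $\leq \delta/6$ for $\ep_0$ small enough depending on $\delta, T, M$ (the $M$-dependence entering only through $R_0 = R_0(V_0)$ with $V_0\in[0,M]$, which is a harmless additive constant in $\rho_0$ and thus in the neighbourhood volumes), so $\PP(\mathcal{G}^c) \leq \delta/2$ and $\PP(\mathcal{G})\geq 1 - \delta/2$. I would present the argument by isolating the Poisson tail lemma (Chernoff bound) and the Markov moment bound as the two analytic inputs, and then do the three bookkeeping computations of neighbourhood volumes, taking care that all constants are uniform for $V_0\in[0,M]$.
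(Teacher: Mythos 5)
Your decomposition into the three bad events is exactly the paper's, and your treatment of events (i) and (ii) matches the paper's Lemma \ref{claim:2} and the estimate \eqref{eq:0part_caps}: for (i) the paper applies the explicit Poisson tail bound of Lemma \ref{lem:Poisson} with $N=\ceil*{\lambda\pi R_{\max}^2(T+1)}$ and $\zeta=|\mathcal{D}_*|\leq 3\pi R_{\max}^2(T+1)$, so that $\zeta\leq\xi_*N$ with $\xi_*=4/\lambda$ (your Chernoff bound is the same idea), and for (ii) it uses $1-e^{-x}\leq x$ together with $\phi\lambda^3V_*\leq C\ep_0^{\delta_1}$, exactly as you do.

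For event (iii), however, there is a gap in your expected-count formula. You write the expectation as $\bigl(\text{vol. of an }O(\phi^{1/3}R_{\max})\text{-neighbourhood of }\mathcal{D}_*\bigr)\cdot\int_{cV_*}^\infty G(v)\,dv$, but the neighbourhood relevant to an obstacle of volume $v$ has width $\sim\phi^{1/3}v^{1/3}$, which is \emph{unbounded} over the range $v>cV_*$ --- and on this event there is by definition no a priori cap on the radii. The correct expectation is an integral $\int_{cV_*}^\infty G(v)\,|\mathcal{N}_v(\mathcal{D}_*)|\,dv$ with a $v$-dependent enlargement, and one must check that the growth of $|\mathcal{N}_v|$ in $v$ is beaten by the decay of $G$. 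This is precisely what the paper's Lemma \ref{claim:1} does via the dyadic decomposition $v\in[2^lV_*,2^{l+1}V_*)$: the enlarged cylinder there has volume $\leq C(T+1)V_*\lambda^3 2^l$ (note the exponent $1$ on $V_*$, not $2/3$, because the obstacle's own radius enters the cross-section), the tail mass on that block is $\leq M_\gamma(2^lV_*)^{-\gamma}$, and the sum $\sum_l 2^{-(\gamma-1)l}$ converges because $\gamma>1$. Your uniform-volume version would apparently work for any $\gamma>2/3$, which is a sign that the cost of the large obstacles has been underestimated; the resulting bound is $C(T+1)M_\gamma\lambda^3V_*^{1-\gamma}$ rather than your $O(\lambda^2TV_*^{2/3-\gamma})$, and it still tends to zero by \eqref{ass:V_*2}--\eqref{ass:V_*1}, so the conclusion survives, but the step as you wrote it is not justified. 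The fix is either the paper's dyadic splitting or a direct Campbell-formula integration of the $v$-dependent volume against $G$ using the moment hypothesis.
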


\begin{proposition}\label{prop:goodcfg}
Let $\{ \mathcal{U}_{k} \}$, $\{ \mathcal{\bar{U}}_{k} \}$ be the sequence of domains defined by \eqref{def:Uk+1} and by \eqref{def:Uk+1compl}. Given $M>1$, let $T>0$ and assume $\delta_3>0$ such that 
\begin{equation}\label{ass:delta3}
\frac{2}{9}\big(\frac{1}{\gamma}(1+\delta_3)+1\big)<\frac{1}{3}
\end{equation} 
with $\delta_2>0$ sufficiently small (depending on $\delta_3$ and $\gamma$). For any $\delta>0$ there exist $\ep_{*}=\ep_{*}(M,T)>0$ and $\phi_{*}>0$ such that if $\ep_0<\ep_{*}$ and $\phi<\phi_{*}$
$$\PP\big(\bigcup_{k=0}^{[\frac{T}{\ep_0}] } \mathcal{\bar{U}}_{k}\big)\leq \frac{\delta}{2},$$ 
for any $V_0\in [0,M]$. 
\end{proposition}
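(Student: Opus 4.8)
The plan is to estimate $\PP(\bar{\mathcal U}_k)$ level by level and then bound $\PP\big(\bigcup_{k=0}^{[T/\ep_0]}\bar{\mathcal U}_k\big)\le\sum_{k=0}^{[T/\ep_0]}\PP(\bar{\mathcal U}_k)$, splitting each $\bar{\mathcal U}_{k+1}$ into its three disjoint pieces $\bar{\mathcal U}_{k+1}^{(1)},\bar{\mathcal U}_{k+1}^{(2)},\bar{\mathcal U}_{k+1}^{(3)}$ (and treating the base level $k=0$ the same way, $\bar{\mathcal U}_0$ being the union of a ``$\ge 2$ centers in $\mathcal D_0$'' piece and an ``over-large obstacle in $\mathcal D_0$'' piece), and showing that each of the three resulting series over $k$ can be made $\le\delta/2$ once $\ep_0$ and $\phi$ are small. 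Since $\bar{\mathcal U}_k\subset\mathcal U_{k-1}\subset\cdots\subset\mathcal U_0\subset\mathcal G$, we may use throughout the a priori bounds encoded in $\mathcal G$; and since $\mathcal U_{[T/\ep_0]}=\mathcal G\setminus\bigcup_k\bar{\mathcal U}_k$ (a disjoint decomposition), combining this proposition with Proposition \ref{prop:1} also yields Theorem \ref{th:goodconf}. Concretely I would fix $\delta_3>0$ small enough that \eqref{ass:delta3} holds (possible because $\gamma>2$), then $\delta_2$ small depending on $\delta_3,\gamma$, and choose $\ep_*,\phi_*$ only at the very end.

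The key structural tool is a Markov property of the construction. The cylinders $\mathcal D_0,\mathcal D_1,\dots$ are stacked along the $e_1$-axis, hence pairwise disjoint, and $\mathcal D_{k+1}$ is a deterministic function of $\omega$ restricted to $\bigcup_{l\le k}\mathcal D_l$. Writing $\mathcal F_k$ for the $\sigma$-algebra generated by this restriction, the set $\mathcal U_k$ and the cylinder $\mathcal D_{k+1}$ are $\mathcal F_k$-measurable, while the Poisson points inside $\mathcal D_{k+1}$ are, conditionally on $\mathcal F_k$, a fresh Poisson process of intensity one with independent volumes of law $G$; in particular $\#(\{x_j\}\cap\mathcal D_{k+1})$ is conditionally $\mathrm{Poisson}$ with parameter $|\mathcal D_{k+1}|=\pi\rho_{k+1}^2\ep_0$. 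On $\mathcal U_k\subset\mathcal G$ every obstacle meeting $\mathcal D_*$ has rescaled radius at most $\tfrac{3}{4\pi}V_*^{1/3}$; and — this is exactly where \eqref{ass:delta3} enters — after $\theta(k,\omega)$ coalescences with obstacles of volume $\le(1+\theta)^{\frac{1}{\gamma}(1+\delta_3)}$ the tagged particle's volume grows strictly slower than $V_*$, so that $\rho_k\le C\,V_*^{1/3}$ for all $k\le[T/\ep_0]$; consequently $\mathcal D_k\subset\mathcal D_*$ (the construction is consistent) and $|\mathcal D_k|\le C\,\ep_0 V_*^{2/3}$.

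With these preliminaries the two ``double-collision'' families are routine. For $\bar{\mathcal U}_{k+1}^{(1)}$ (at least two centers in $\mathcal D_{k+1}$), conditioning on $\mathcal F_k$ gives $\PP(\bar{\mathcal U}_{k+1}^{(1)}\mid\mathcal F_k)\le\tfrac12|\mathcal D_{k+1}|^2\le C(\ep_0 V_*^{2/3})^2$ on $\mathcal U_k$; for $\bar{\mathcal U}_{k+1}^{(2)}$ (one center in $\mathcal D_{k+1}$ and one in $\mathcal D_k$) a successive conditioning on $\mathcal F_{k-1}$ and then on $\mathcal F_k$ gives the same bound. Summing over the $[T/\ep_0]+1$ levels produces $C\,T\,\ep_0 V_*^{4/3}=C\,T\,\ep_0^{\,1-\frac{8}{9(\gamma-1)}-\frac{4\delta_2}{3(\gamma-1)}}$, whose exponent is positive because $\gamma>2$ makes $\tfrac{8}{9(\gamma-1)}<1$ and $\delta_2$ is small, so these two series tend to $0$.

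The main obstacle is $\bar{\mathcal U}_{k+1}^{(3)}$: exactly one center in $\mathcal D_{k+1}$, none in $\mathcal D_k$, and that obstacle of volume $v_j>(1+\theta(k,\omega))^{\frac{1}{\gamma}(1+\delta_3)}$. Conditioning on $\mathcal F_k$, the conditional volume law is $G$, so by the Chebyshev bound $\int_W^\infty G(v)\,dv\le W^{-\gamma}\int_0^\infty v^\gamma G(v)\,dv$ — legitimate precisely by \eqref{ass:G} — one gets
\[
\PP\big(\bar{\mathcal U}_{k+1}^{(3)}\mid\mathcal F_k\big)\ \le\ C\,|\mathcal D_{k+1}|\,(1+\theta(k,\omega))^{-(1+\delta_3)}\ \le\ C\,\ep_0 V_*^{2/3}\,(1+\theta(k,\omega))^{-(1+\delta_3)} .
\]
The remaining work — and the heart of the proposition — is to sum this over $k$. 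Here I would exploit that $\theta(k,\omega)$ (the number of coalescences already suffered) is non-decreasing and increases only at the rare collision steps, whose conditional probability is itself $\sim\ep_0 V_*^{2/3}$: one reorganizes the level sum according to the value $m$ of $\theta$, controls the conditional expected number of steps spent with $\theta=m$, and uses the summability of $(1+m)^{-(1+\delta_3)}$ in $m$ together with the fact that on $\mathcal G$ only finitely many obstacles meet $\mathcal D_*$ (so $m$ ranges over an a priori finite set); the precise balance in \eqref{ass:delta3} between the growth of the collision cross-section $\rho_k$ and the decay gained from the size restriction $v_j\le(1+\theta)^{\frac{1}{\gamma}(1+\delta_3)}$ is what makes this last sum controllable and arbitrarily small. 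This is the step that forces $\gamma>2$ (through \eqref{ass:delta3}) and the one that would need to be reworked to relax the hypothesis towards $\gamma>1$.
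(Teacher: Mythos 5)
Your overall route coincides with the paper's: the same union bound over the $[T/\ep_0]+1$ levels, the same three-way split of $\bar{\mathcal U}_{k+1}$, conditional-Poisson estimates giving $C(\ep_0 V_*^{2/3})^2$ per level for the families $(1)$ and $(2)$ (the paper gets $\tfrac{5\pi^2}{2}\rho_k^4\ep_0^2$ and then uses $\rho_k\le\tfrac{\lambda}{2}V_*^{1/3}$ from Lemma \ref{lem:unionDj} — note the factor $\lambda=\ep_0^{-\delta_1}$ which you drop; harmless since $\delta_1<\delta_2$ is small), and the Chebyshev bound $\int_W^\infty G\le M_\gamma W^{-\gamma}$ producing the factor $(1+\theta(k,\omega))^{-(1+\delta_3)}$ for the family $(3)$. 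Your exponent computation for the $(1)$–$(2)$ series is correct and matches the paper's conclusion that this contribution is $O(T\ep_0\lambda^4V_*^{4/3})\to0$.

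The one step where your sketch is not yet a proof — and where, as written, it would fail — is the summation over $k$ of the $\bar{\mathcal U}^{(3)}$ terms. Once you have extracted the factor $|\mathcal D_{k+1}|\le C\ep_0\lambda^2V_*^{2/3}$ from the conditional probability, what remains to be summed is $\E\big[(1+\theta(k,\omega))^{-(1+\delta_3)}\big]$ over $\sim T/\ep_0$ levels; since $\theta(k,\omega)=0$ on the (typical) event that no obstacle has yet been met, most terms are of order one, the sum is of order $T/\ep_0$, and the product is of order $TV_*^{2/3}$, which diverges. Equivalently, the ``conditional expected number of steps spent with $\theta=m$'' that you propose to control is of order $1/(\ep_0\lambda^2V_*^{2/3})$, which exactly cancels your prefactor and leaves $M_\gamma\sum_m(1+m)^{-(1+\delta_3)}$: bounded, but not small. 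The paper's device (see the estimate leading to \eqref{eq:psumbarU_k^(3)-1}) is to keep the collision indicator $\chi(n_k=1)$ inside the sum over $k$ and to use the deterministic, configuration-wise bound $\sum_{k:\,n_k=1}(1+\theta(k))^{-(1+\delta_3)}\le\sum_{m\ge1}(1+m)^{-(1+\delta_3)}=C_{\delta_3}$, valid because $\theta$ strictly increases at every collision level, so that the single prefactor $\tfrac{\pi}{4}\ep_0\lambda^2V_*^{2/3}$ supplies the smallness. You need to carry that indicator through your reorganization explicitly — and, in doing so, check carefully that the prefactor $|\mathcal D_{k+1}|$ and the retained indicator $\chi(n_{k+1}=1)$ are not accounting for the same Poisson event twice, since that is the delicate point on which the whole $(3)$-estimate hinges.
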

\begin{remark}
Observe that it is possible to find $\delta_3$ such that Assumption \ref{ass:delta3} holds since we assumed $\gamma>2$. 
\end{remark}

\subsubsection{Proof of Proposition \ref{prop:1}}
We will prove Proposition \ref{prop:1} with the help of the following Lemmata.
\begin{lemma}\label{claim:1}
Let $T>0$. For any $\delta>0$ there exists $\ep_{*}=\ep_{*}(\delta,T)>0$ sufficiently small such that if $\ep_0<\ep_{*}$ and $\phi<1$ then
\begin{equation}\label{eq:PW}
\PP(\mathcal{W} )\leq\frac{\delta}{10} 
\end{equation}
where $\mathcal{W}:=\{ \, \omega\in \Omega\; : \;  \exists (x_{j}(\omega),v_j(\omega)) \; \text{s.t.}\;\; v_j(\omega)> V_{*} 
\; \text{and} \; B_{\phi^{\frac{1}{3}}R_j(\omega)}(x_j(\omega))\cap \mathcal{D}_{*} \neq\emptyset  \}.$
\end{lemma}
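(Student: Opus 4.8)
The plan is to bound $\PP(\mathcal{W})$ directly using the Poisson structure of the underlying point process, exploiting that the event $\mathcal{W}$ requires the existence of an unusually large obstacle whose (rescaled) ball reaches into the thin collision tube $\mathcal{D}_{*}$. First I would recall that in the measure $\mu_\phi$ the centers $\{x_k\}$ form a Poisson point process of intensity one on $\R^3$ and the unscaled volumes are i.i.d.\ with law $\frac{1}{\phi}G(v/\phi)$, equivalently the scaled volumes $v_k$ are i.i.d.\ with law $G$. A large obstacle with scaled volume $v_j > V_{*}$ has unscaled radius $\tilde R_j = \phi^{1/3} R_j$ with $R_j = \sigma v_j^{1/3}$, so the condition $B_{\phi^{1/3}R_j}(x_j)\cap \mathcal{D}_{*}\neq\emptyset$ forces $x_j$ to lie in a $\phi^{1/3}R_j$-neighbourhood of $\mathcal{D}_{*}$. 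Since $\mathcal{D}_{*}$ is (essentially) a cylinder of length $T\phi^{-2/3}$ and cross-sectional radius $\phi^{1/3}R_{max}$, its $\phi^{1/3}\sigma v^{1/3}$-enlargement has volume bounded by $C\,T\phi^{-2/3}\big(\phi^{1/3}R_{max}+\phi^{1/3}\sigma v^{1/3}\big)^2 \le C' T \big(R_{max}^2 + v^{2/3}\big)$ for a universal constant (after using $(a+b)^2\le 2a^2+2b^2$), with the contribution of the two spherical caps of comparable or smaller order.

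Next I would apply the standard first-moment (union) bound for Poisson processes: the expected number of obstacles with scaled volume in $[v,v+dv)$ whose enlarged ball meets $\mathcal{D}_{*}$ is, by the marking/thinning property, at most $G(v)\,\mathrm{Vol}\big((\mathcal{D}_{*})_{\phi^{1/3}\sigma v^{1/3}}\big)\,dv$. Hence
\begin{equation}
\PP(\mathcal{W})\le \EE\big[\#\{j: v_j>V_{*},\ B_{\phi^{1/3}R_j}(x_j)\cap\mathcal{D}_{*}\neq\emptyset\}\big]\le C'T\int_{V_{*}}^{\infty} G(v)\,(R_{max}^2+v^{2/3})\,dv.
\end{equation}
For the $v^{2/3}$ term I would use the moment assumption \eqref{ass:G}: since $\gamma>2$, we have $v^{2/3}\le v^{\gamma-1}\cdot v^{2/3-(\gamma-1)}$ and on $\{v>V_{*}\}$ the factor $v^{2/3-(\gamma-1)}\le V_{*}^{2/3-(\gamma-1)}$ is small (this is exactly the point where $\gamma>2$, equivalently $\gamma-1>1$, is used, and it dovetails with the choice $V_{*}=\ep_0^{-2/(3(\gamma-1))-\delta_2/(\gamma-1)}$ from \eqref{ass:V_*2}), so $\int_{V_{*}}^\infty G(v)v^{2/3}\,dv \le V_{*}^{2/3-(\gamma-1)}\int_0^\infty G(v)v^{\gamma-1}\,dv$. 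Wait—I should be slightly careful: \eqref{ass:G} gives finiteness of the $\gamma$-th moment, so I would instead write $\int_{V_{*}}^\infty G(v)v^{2/3}\,dv\le V_{*}^{2/3-\gamma}\int_0^\infty G(v)v^{\gamma}\,dv$, which tends to $0$ as $\ep_0\to 0$ since the exponent $2/3-\gamma<0$. For the $R_{max}^2$ term I would use $R_{max}^2\int_{V_{*}}^\infty G(v)\,dv \le R_{max}^2 V_{*}^{-\gamma}\int_0^\infty G(v)v^\gamma\,dv$, and then check via the explicit definitions $R_{max}=\lambda V_{*}^{1/3}$, $\lambda=\ep_0^{-\delta_1}$ that $R_{max}^2 V_{*}^{-\gamma}=\ep_0^{-2\delta_1}V_{*}^{2/3-\gamma}\to 0$ for $\delta_1$ small relative to $\delta_2$ and $\gamma$; this is the quantitative bookkeeping that makes the bound go through.

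Putting these together gives $\PP(\mathcal{W})\le C''T\,(\text{something}\to 0)$ as $\ep_0\to0$, uniformly while $\phi<1$ (indeed the final bound does not depend on $\phi$ at all, only on $\ep_0$ and $T$), so for $\ep_0<\ep_{*}(\delta,T)$ sufficiently small we get $\PP(\mathcal{W})\le\delta/10$. The main obstacle—really the only non-routine part—is getting the geometric volume estimate on the enlarged tube correct (tracking the rescalings $\tilde R = \phi^{1/3}R$, the length $T\phi^{-2/3}$, the cross-section $\phi^{1/3}R_{max}$, and the spherical caps) and then verifying that the exponents coming from the definitions of $\lambda$, $V_{*}$, $R_{max}$ in \eqref{ass:V_*2}–\eqref{def:R_max} indeed produce a negative power of $\ep_0$; once the moment condition $\gamma>2$ is invoked in the form above, the decay is immediate. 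I would also note for later use (it is not needed here but clarifies the role of $\mathcal{W}$) that off $\mathcal{W}$ every obstacle that can ever touch the maximal collision tube has scaled volume at most $V_{*}$, which is the property feeding into the definition of $\mathcal{G}$.
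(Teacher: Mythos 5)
Your proof is correct and follows essentially the same route as the paper: a first-moment bound on the Poisson process restricted to the $v$-dependent enlargement of $\mathcal{D}_{*}$, combined with the Chebyshev-type tail bound $\int_{V_{*}}^{\infty}G(v)v^{\alpha}\,dv\leq V_{*}^{\alpha-\gamma}M_{\gamma}$ and the exponent bookkeeping coming from \eqref{ass:V_*2} and \eqref{def:R_max}. The only cosmetic difference is that the paper first decomposes $[V_{*},\infty)$ into dyadic blocks and applies $1-e^{-x}\leq x$ block by block, whereas you keep the $v$-dependence of the enlarged volume inside a single integral; both give a bound of order $\lambda^{2}V_{*}^{\frac{2}{3}-\gamma}$ (the paper settles for the cruder $\lambda^{3}V_{*}^{1-\gamma}$), which vanishes as $\ep_0\to 0$ uniformly in $\phi<1$.
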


\begin{proof} 
We consider the collision cylinder $\mathcal{D}_{*}$ defined by \eqref{def:collcyl}. We want to show that the probability of having an intersection between the cylinder $\mathcal{D}_{*}$ and a particle with volume $v\geq V_{*}$ is small.   
To this end we partition the line $[V_{*},+\infty)$  into intervals $[2^lV_{*},2^{l+1}V_{*})$, with $l\geq 0$, 
and define the sets 
\begin{equation}\label{def:W_l}
\begin{split}
\mathcal{W}_{l}:=&\{\omega\in \Omega\; : \; \text{there exists at least one obstacle }\, (x_{j}(\omega),v_j(\omega)) \; \text{s.t.}\;\; v_j(\omega)\in [2^{l}V_{*},2^{l+1}V_{*})	\\& 
\quad  \text{and} \;\; B_{\frac{3}{4\pi}(\phi\, v_j(\omega))^{\frac{1}{3}}}(x_j(\omega))\cap \mathcal{D}_{*} \neq\emptyset \}.
\end{split}
\end{equation}
We recall that the probability of finding an obstacle with volume which varies in the interval $[2^{l}V_{*},2^{l+1}V_{*})$ is given by 
$$\int_{2^{l+}V_{*}}^{2^{l+1}V_{*}}G(v)\,dv\leq \frac{1}{(2^{l}V_{*})^{\gamma}}\int_{2^{l}V_{*}}^{2^{l+1}V_{*}}v^{\gamma}G(v)\,dv\leq \frac{M_{\gamma}}{(2^{l}V_{*})^{\gamma}}$$ where $M_\gamma:=\int_{0}^{\infty}v^{\gamma}G(v)\,dv$. Moreover we define
\begin{equation}\label{def:tildeW_l}
\begin{split}
\tilde{\mathcal{W}_{l}}:=&\{\omega\in \Omega\; : \; \text{there exists at least one obstacle }\, (x_{j}(\omega),v_j(\omega)) \; \text{s.t.}\;\;  x_{j}(\omega)\in \mathcal{C}(l,\ep_0,R_{max})\\&
\quad v_j(\omega)\in [2^{l}V_{*},2^{l+1}V_{*}) \} 
\end{split}
\end{equation}
where $$\mathcal{C}(l,\ep_0,\phi,T):={B}^{2}_{\phi^{\frac{1}{3}}\rho}(0)\times [-\phi^{\frac{1}{3}}\rho,T\,\phi^{-\frac 2 3}+\phi^{\frac{1}{3}}\rho] \quad\text{with}\quad \rho=\rho(l,\ep_0)=R_{max}+\frac{3}{4\pi}(2^{l+1}V_{*})^{\frac{1}{3}}.$$ 
Note that $\mathcal{C}$ is the enlargement of $\mathcal{D}_{*}$ by an amount of size $\frac{3}{4\pi}(2^{l+1}V_{*})^{\frac{1}{3}}$. It is straightforward to see that $\PP\big(\mathcal{W}_{l}\big)\leq \PP\big(\tilde{\mathcal{W}_{l}}\big)$. Therefore, we look at 
\begin{equation}\label{eq:probW_l} 
\begin{split}
\PP\big(\tilde{\mathcal{W}_{l}}\big)&=\sum_{n=1}^{\infty} \frac{1}{n!}\exp\left(-|\mathcal{C}|\int_{2^{l}V_{*}}^{2^{l+1}V_{*}} G(v)\,dv\right)\;\left(|\mathcal{C}|\int_{2^{l}V_{*}}^{2^{l+1}V_{*}} G(v)\,dv\right)^n\\&
=1-\exp\left(-|\mathcal{C}|\int_{2^{l}V_{*}}^{2^{l+1}V_{*}} G(v)\,dv\right)\\& 
\leq |\mathcal{C}|\int_{2^{l}V_{*}}^{2^{l+1}V_{*}} G(v)\,dv \leq  |\mathcal{C}|\frac{M_{\gamma}}{(2^{l}V_{*})^{\gamma}},
\end{split}
\end{equation}
where we used that $1-e^{-x}\leq x$ for $x\geq 0$. Moreover, using the definition of $\rho$ and \eqref{def:R_max}, we have
\begin{equation*}
\begin{split}
|\mathcal{C}|&\leq \pi \phi^{\frac{2}{3}}\rho^2 (T\phi^{-\frac{2}{3}}+2 \phi^{\frac{1}{3}} \rho))\\&
\leq 2\,\pi(\lambda^2V_{*}^{\frac{2}{3}}+\frac{9}{16\pi^2}(2^{l+1}V_{*})^{\frac{2}{3}})  (T+2\phi(\lambda V_{*}^{\frac{1}{3}}+\frac{3}{4\pi}(2^{l+1}V_{*})^{\frac{1}{3}}))\\&
\leq 
 C\, (T+1) V_{*}\lambda^3 2^{l}, 
 \end{split}
\end{equation*}
where $C>0$ is a numerical constant.
Thus, from \eqref{eq:probW_l}  we get 
$$\PP(\mathcal{W})=\PP\big(\bigcup_{l=0}^{\infty}\mathcal{W}_{l}\big)\leq \sum_{l=0}^{\infty}  \PP\big(\mathcal{W}_{l}\big)\leq C\, (T+1)M_{\gamma} \lambda^3 V_{*}^{1-\gamma}  \sum_{l=0}^{\infty} 2^{-(\gamma-1)l}\leq \frac{\delta}{10},$$
where we used that $\gamma>2$, Assumption \ref{ass:V_*2}, and we chose $\ep_0>0$ sufficiently small. This concludes the proof. 
\end{proof}
\begin{lemma}\label{claim:2} 
Let $T>0$. For any $\delta>0$ there exist $\ep_{*}=\ep_{*}(\delta,T)>0$ such that if $\ep_0<\ep_{*}$ and $\phi<\ep_0^{\beta}$, $\beta=\frac{2}{9(\gamma-1)}+4\delta_1+\frac{\delta_2}{3(\gamma-1)}$
then 
$$
\PP\big(\{\omega\; : \; n(\mathcal{D}_{*})> \lambda \pi R^{2}_{\max}(T+1) \}\big)\leq\frac{\delta}{10} .$$
\end{lemma}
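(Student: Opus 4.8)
The plan is to recognise $n(\mathcal{D}_{*})$ as a single Poisson random variable and then bound its upper tail by a Chernoff estimate, the key point being that the threshold $\lambda\pi R_{\max}^{2}(T+1)$ exceeds the mean of $n(\mathcal{D}_{*})$ by the diverging factor $\lambda=\ep_{0}^{-\delta_{1}}$.

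\emph{Step 1: the law of $n(\mathcal{D}_{*})$.} The centres $\{x_{k}\}$ form a Poisson point process of intensity one in $\R^{3}$ and, independently of them, the (rescaled) volumes $\{v_{k}\}$ are i.i.d.\ with law $G$. Writing $R_{k}=\sigma v_{k}^{1/3}$, the obstacles counted by $n(\mathcal{D}_{*})$ are exactly those marked points $(x_{k},v_{k})$ lying in the set $\{(x,v):B_{\sigma(\phi v)^{1/3}}(x)\cap\mathcal{D}_{*}\neq\emptyset\}$, so by the thinning/mapping property of Poisson processes $n(\mathcal{D}_{*})$ is a Poisson random variable with parameter
$$\mu=\int_{0}^{\infty}G(v)\,\bigl|\{x\in\R^{3}:B_{\sigma(\phi v)^{1/3}}(x)\cap\mathcal{D}_{*}\neq\emptyset\}\bigr|\,dv=\int_{0}^{\infty}G(v)\,|\mathcal{D}_{*}^{\oplus\sigma(\phi v)^{1/3}}|\,dv ,$$
where $\mathcal{D}_{*}^{\oplus s}$ is the $s$-neighbourhood of $\mathcal{D}_{*}$.

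\emph{Step 2: the bound $\mu\le C(M_{\gamma})R_{\max}^{2}(T+1)$.} Since $\mathcal{D}_{*}$ sits inside a cylinder of radius $\phi^{1/3}R_{\max}$ and length $T\phi^{-2/3}$ together with two caps of size $\sim\phi^{1/3}R_{\max}$, I will use $|\mathcal{D}_{*}^{\oplus s}|\le C(\phi^{1/3}R_{\max}+s)^{2}(T\phi^{-2/3}+\phi^{1/3}R_{\max}+s)$ and split the $v$-integral at $v=V_{*}$. For $v\le V_{*}$ one has $\sigma(\phi v)^{1/3}\le\phi^{1/3}R_{\max}$ because $R_{\max}=\lambda V_{*}^{1/3}\ge V_{*}^{1/3}$, so the bracket is $\lesssim R_{\max}^{2}T+\phi R_{\max}^{3}$; the hypothesis $\phi<\ep_{0}^{\beta}$ forces $\phi R_{\max}=\phi\lambda V_{*}^{1/3}\le\ep_{0}^{3\delta_{1}}\le1$, whence the bracket is $\lesssim R_{\max}^{2}(T+1)$ and integrating against $G$ over $[0,V_{*}]$ keeps this bound. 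For $v>V_{*}$ I expand the polynomial in $s=\sigma(\phi v)^{1/3}$; each of the resulting finitely many monomials, integrated against $G$ over $[V_{*},\infty)$, is controlled through $\int_{V_{*}}^{\infty}v^{a}G(v)\,dv\le V_{*}^{a-\gamma}M_{\gamma}$ (valid for $0\le a\le1<\gamma$) together with $V_{*}\ge1$, $R_{\max}\ge1$ and $\phi R_{\max}\le1$, and one checks that every such term is $\lesssim R_{\max}^{2}(T+1)$. This gives the claimed bound (and trivially $\mu\ge\pi R_{\max}^{2}T>0$).

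\emph{Step 3: Poisson tail.} Because $\lambda=\ep_{0}^{-\delta_{1}}\to\infty$, for $\ep_{0}$ small we have $\lambda\pi\ge2C(M_{\gamma})$, hence $t:=\lambda\pi R_{\max}^{2}(T+1)\ge2\mu$. For a Poisson variable $N$ of mean $\mu$, the Chernoff bound gives $\PP(N\ge t)\le\exp(-\mu\,h(t/\mu))$ with $h(a)=a\log a-a+1$; since $a\mapsto h(a)/a$ is increasing and $t/\mu\ge2$, this yields $\PP(N\ge t)\le e^{-c_{0}t}$ with $c_{0}=\tfrac12(2\log2-1)>0$. Finally $t\ge\pi\lambda V_{*}^{2/3}T\to\infty$ as $\ep_{0}\to0$, so $e^{-c_{0}t}\le\delta/10$ once $\ep_{0}<\ep_{*}(\delta,T)$, which is the assertion.

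The main obstacle is Step 2, i.e.\ controlling the contribution of the large obstacles ($v>V_{*}$) to $\mu$: this is where the moment assumption on $G$ and, via $\phi R_{\max}\le1$, the constraint $\phi<\ep_{0}^{\beta}$ are used. Once $\mu$ is bounded, the identification of the law and the tail estimate are standard.
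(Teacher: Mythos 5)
Your proof is correct, and the overall strategy coincides with the paper's: identify the obstacle count as a Poisson variable whose mean is $O(R_{\max}^{2}(T+1))$, and then beat the threshold $\lambda\pi R_{\max}^{2}(T+1)$ (which exceeds the mean by the diverging factor $\lambda=\ep_0^{-\delta_1}$) with a Poisson tail estimate; the paper uses its Lemma \ref{lem:Poisson} where you use the standard Chernoff bound, and these are interchangeable. The one genuine difference is your Step 2. The paper simply writes $\PP(n(\mathcal{D}_{*})>\cdot)=\sum_{n\geq N}\frac{|\mathcal{D}_{*}|^{n}}{n!}e^{-|\mathcal{D}_{*}|}$, i.e.\ it treats $n(\mathcal{D}_{*})$ as Poisson with parameter $|\mathcal{D}_{*}|$, which is the law of the number of \emph{centres} lying in $\mathcal{D}_{*}$; by its definition, however, $n(\mathcal{D}_{*})$ counts obstacles whose balls $B_{\phi^{1/3}R_{k}}(x_{k})$ merely intersect $\mathcal{D}_{*}$, so the correct parameter is the Minkowski-enlarged quantity $\mu=\int G(v)\,|\mathcal{D}_{*}^{\oplus\sigma(\phi v)^{1/3}}|\,dv$ that you compute. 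Your splitting of the $v$-integral at $V_{*}$ and the use of $M_{\gamma}$ to tame the large-$v$ contribution (together with $\phi R_{\max}\leq 1$ from $\phi<\ep_0^{\beta}$) is exactly what is needed to justify $\mu\leq C(M_{\gamma})R_{\max}^{2}(T+1)$, so your argument closes a small gap that the paper's proof leaves implicit; the price is that your constant, and hence the smallness threshold $\ep_{*}$, picks up a dependence on $M_{\gamma}$ that the paper's (incomplete) version does not display.
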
 
We remark that in the statement above we considered $(T+1)$ instead of $T$ to overcome problems connected to the smallness of $T$. 
\begin{proof} 
We recall that $\mathcal{D}_{*}:=({B}^{2}_{\phi^{\frac{1}{3}}R_{max}}(0)\times [0,T\,\phi^{-\frac 2 3}))\cup (\mathcal{S}_{C_t}\cup \mathcal{S}_{C_b})$. Using the assumption $\phi<\ep_0^{\beta}$ as well as \eqref{ass:V_*2} and \eqref{def:R_max} 
we can estimate the diameter of the regions $\mathcal{S}_{C_t}$, $\mathcal{S}_{C_b}$ by $\phi^{-\frac 2 3}$ and thus we obtain
$\mathcal{D}_{*}\subset {B}^{2}_{\phi^{\frac{1}{3}}R_{max}}(0)\times [-(T+1) \,\phi^{-\frac 2 3},2(T+1) \,\phi^{-\frac 2 3}))$. Thus, it follows that $|\mathcal{D}_{*}|\leq 3\pi R^2_{max}(T+1) $. 
We consider
\begin{equation*}
\PP\big(\{\omega\; : \; n(\mathcal{D}_{*})> \lambda \pi R^{2}_{\max}(T+1) \}\big)=\sum_{n=\ceil*{\lambda \pi R^{2}_{max}(T+1)}}^{\infty}\frac{|\mathcal{D}_{*}|^n}{n!}e^{-|\mathcal{D}_{*}|}
\end{equation*}
and apply Lemma \ref{lem:Poisson} with $N=N(\ep_0,T)=\ceil*{\lambda \pi R^{2}_{max}(T+1)}$ and $\zeta=|\mathcal{D}_{*}|$. Note that, choosing $\ep_0$ sufficiently small, $N\geq \frac 3 4 \lambda \pi R^{2}_{max}(T+1)$. Then, choosing $\xi_{*}=\frac{4}{\lambda}$ the condition $\zeta \leq \xi_{*}N$ in \eqref{eq:boundPoisson_lem} is satisfied and we get 
$$\sum_{n=\ceil*{\lambda \pi R^{2}_{max}(T+1)}}^{\infty}\frac{|\mathcal{D}_{*}|^n}{n!}e^{-|\mathcal{D}_{*}|}\leq e^{-|\log(\xi_{*})|\ceil*{\lambda \pi R^{2}_{max}(T+1)}}\leq e^{-C |\log(\xi_{*})|\ep_0^{-(\beta+2\delta_1)/3}}$$
with $\beta$ as in the statement of the Lemma and $C=\frac 3 4 \pi (T+1)$. Therefore, if $\ep_0\leq \ep_{*}(\delta,T)$, we have that $e^{-|\log(\xi_{*})|\ep_0^{-(\beta+2\delta_1)/3}}\leq \frac{\delta}{10}$ which proves Lemma \ref{claim:2}.  
\end{proof}

\begin{proofof}[Proof of Proposition \ref{prop:1}]
To complete the proof we need to prove that 
\begin{equation}\label{eq:0part_caps}
\PP(\{\omega\,:\, n(\mathcal{S}_{C_b}\cup \mathcal{S}_{C_t})\geq 1 \})\leq \frac{\delta}{10}.
\end{equation}
It would then follow from Lemma \ref{claim:1}, Lemma \ref{claim:2} and \eqref{eq:0part_caps}, since $\mathcal{G}=\mathcal{W}^{c}\cap \{\omega\; : \; n(\mathcal{D}_{*})\leq \lambda \pi^2R^{2}_{\max}(T+1) \}\cap \{\omega\,:\, n(\mathcal{S}_{C_b}\cup \mathcal{S}_{C_t})=0 \}$ where $\mathcal{W}^{c}$ is the complement of $\mathcal{W}$ introduced in Lemma \ref{claim:1}, that
$$\PP\big(\mathcal{G}\big)\geq 1-\frac{\delta}{10}-\frac{\delta}{10}-\frac{\delta}{10}=1-\frac{3\delta}{10} .$$ 
We observe that to prove \eqref{eq:0part_caps} we have

\begin{equation*}
\begin{split}
\PP(\{\omega\,:\, n(\mathcal{S}_{C_b}\cup \mathcal{S}_{C_t})\geq 1 \})&=\sum_{n=1}^{\infty}\frac{|\mathcal{S}_{C_b}\cup \mathcal{S}_{C_t}|^n}{n!}e^{-|\mathcal{S}_{C_b}\cup \mathcal{S}_{C_t}|}=(1-e^{-|\mathcal{S}_{C_b}\cup \mathcal{S}_{C_t}|})\\&
\leq |\mathcal{S}_{C_b}\cup \mathcal{S}_{C_t}|\leq \frac{4}{3}\pi R^{3}_{max}\phi =\frac{4}{3}\pi \lambda ^3 V_{*}\phi.
\end{split}
\end{equation*}
Due to the assumption $\phi<\ep_0^{\beta}$, using \eqref{ass:V_*2}, we obtain 
$$\PP(\{\omega\,:\, n(\mathcal{S}_{C_b}\cup \mathcal{S}_{C_t})\geq 1 \})\leq C \ep_0^{\delta_1} \leq \frac{\delta}{10}$$
if $\ep_0$ is sufficiently small.

\end{proofof}

\bigskip

\subsubsection{Proof of Proposition \ref{prop:goodcfg}}
Before proving Proposition \ref{prop:goodcfg} we need a preliminary result.
\begin{lemma}\label{lem:unionDj}
Suppose that $M\geq 1$ and $\delta_3, \delta_2$ as in Proposition \ref{prop:goodcfg}. There exists $\ep_{*}=\ep_{*}(M,T)>0$ sufficiently small such that for any $\omega\in \Omega^1(\ep_0,T,\phi)$ if $\phi<\ep_0^{\beta}$ and if $\ep_0 \leq \ep_{*}$ then 
\begin{equation}\label{eq:unionDj}
\bigcup_{j=0}^{k}\mathcal{D}_{j}(\omega_{|_{\bigcup_{l=0}^{j-1}\mathcal{D}_{l}}}))\subset \mathcal{D}_{*}\quad \forall \; 0\leq k\leq \left[{\frac{T}{\ep_0}}\right],
\end{equation}
for any $V_0\in [0,M]$. Here the $\{\mathcal{D}_{k}(\omega_{|_{\bigcup_{l=0}^{k-1}\mathcal{D}_{l}}}))\}$ are the domains defined by \eqref{eq:defDk}. Moreover we have
\begin{equation}\label{eq:bdrhok}
\rho_{k}\leq 
 \frac{\lambda}{2} V_{*}^{\frac 1 3}.
\end{equation}
\end{lemma}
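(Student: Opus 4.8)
The plan is to prove both assertions simultaneously by induction on $k$, tracking how the centers $X_k$ and the enlarged radii $\rho_k$ evolve. The key quantitative input comes from the fact that at each coalescence step the displacement $X_{k+1}-X_k = \frac{v_{k+1}}{V_k+v_{k+1}}(x_{k+1}-X_k)$, and since $x_{k+1}\in\mathcal{D}_k$ its distance from $X_k$ in the transverse directions is at most $\phi^{1/3}\rho_k$, while the longitudinal extent of $\mathcal{D}_k$ is $\ep_0\phi^{-2/3}$. On $\Omega^1$, when a collision occurs the obstacle satisfies $v_{k+1}\le(1+\theta(k,\omega))^{\frac{1}{\gamma}(1+\delta_3)}$, and $\theta(k,\omega)$ counts the total number of collisions up to level $k$, which is at most $[T/\ep_0]$. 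Hence $v_{k+1}\le ([T/\ep_0]+1)^{\frac{1}{\gamma}(1+\delta_3)}$, a bound that is a small power of $\ep_0^{-1}$; combined with assumption \eqref{ass:delta3} this power is strictly smaller than the powers built into $V_*$ in \eqref{ass:V_*2}, so that $V_k \le V_0 + \sum v_j \le M + [T/\ep_0]\cdot([T/\ep_0]+1)^{\frac{1}{\gamma}(1+\delta_3)} \ll V_*$ for $\ep_0$ small. This is the crux: the total volume accumulated over all $[T/\ep_0]$ levels stays far below $V_*$.

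First I would record the recursion for $\rho_k$: if no collision occurs at level $k$ then $\rho_{k+1}=\rho_k$, and if a collision occurs then $\rho_{k+1}=\frac{3}{4\pi}(V_{k+1}^{1/3}+V_*^{1/3})$. Using the volume bound from the previous paragraph, $V_{k+1}^{1/3}$ is bounded by a quantity that is $o(V_*^{1/3})$ (indeed much smaller, since $M$ and the sum of the $v_j$ are both negligible compared to $V_*$), so $\rho_{k+1}\le \frac{3}{4\pi}\cdot 2 V_*^{1/3}\le \frac{\lambda}{2}V_*^{1/3}$ once $\lambda\ge \frac{3}{\pi}$, i.e. for $\ep_0$ small since $\lambda=\ep_0^{-\delta_1}\to\infty$. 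Also $\rho_0 = R_0 + \frac{3}{4\pi}V_*^{1/3}$ and $R_0=(\frac{3V_0}{4\pi})^{1/3}\le(\frac{3M}{4\pi})^{1/3}$, which is again $\le\frac{\lambda}{2}V_*^{1/3}$ for $\ep_0$ small. That establishes \eqref{eq:bdrhok}.

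Next I would control the transverse position $X_k$. Each displacement contributes at most $\phi^{1/3}\rho_k\le\phi^{1/3}\frac{\lambda}{2}V_*^{1/3}=\frac{1}{2}\phi^{1/3}R_{max}$ in the transverse plane, but there are at most $\theta(k,\omega)\le[T/\ep_0]$ collisions, so naively the transverse drift could be $[T/\ep_0]\cdot\frac{1}{2}\phi^{1/3}R_{max}$, which is too large. The resolution is the shrinking factor $\frac{v_{k+1}}{V_k+v_{k+1}}$: since $V_k$ grows (or at least $V_k\ge V_0\vee v_{\text{first}}$) this factor is small once a few collisions have happened; more robustly, one sums the transverse displacements $\sum_j \frac{v_j}{V_{j-1}+v_j}\phi^{1/3}\rho_{j-1}$ and uses that $\rho_{j-1}$ is already $O(\phi^{1/3}V_*^{1/3})$ while $\frac{v_j}{V_{j-1}+v_j}\le \frac{v_j}{V_*}$ is genuinely tiny — here the appendix estimates on the displacement of the center of mass (referenced in the introduction) give the clean bound. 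The upshot is that the transverse displacement of $X_k$ from the origin is $o(\phi^{1/3}R_{max})$, hence $X_k$ stays well inside $B^2_{\phi^{1/3}R_{max}}(0)$ with room to spare for the ball of radius $\rho_k\phi^{1/3}$. In the longitudinal direction, $\mathcal{D}_k$ occupies $[k\ep_0\phi^{-2/3},(k+1)\ep_0\phi^{-2/3})$ and for $k\le[T/\ep_0]$ this is contained in $[0,T\phi^{-2/3})$. Combining the transverse and longitudinal containments gives $\mathcal{D}_k\subset\mathcal{D}_*$ (using the definition \eqref{def:collcyl}, and the spherical caps for the extremal levels), so $\bigcup_{j=0}^k\mathcal{D}_j\subset\mathcal{D}_*$, which is \eqref{eq:unionDj}.

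The main obstacle, as indicated above, is the transverse drift of the centers: one must show that the accumulated lateral displacement over $[T/\ep_0]\sim\ep_0^{-1}$ potential collision steps does not push the tube $\mathcal{D}_k$ outside the master tube $\mathcal{D}_*$ of transverse radius $\phi^{1/3}R_{max}=\phi^{1/3}\lambda V_*^{1/3}$. This is exactly where the center-of-mass contraction factor $\frac{v}{V+v}$ — the feature of the CTP model emphasized in the introduction — is essential, together with the relation \eqref{ass:V_*1} which guarantees $V_*^{\gamma-1}\ep_0^{2/3}/\lambda\to\infty$, i.e. that $V_*$ is large enough relative to $\lambda$ for the drift to be absorbed. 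I would isolate the needed displacement bound as a citation to the appendix lemma, plug in the volume bound $V_k\ll V_*$ from the first paragraph, and then the containment follows by elementary comparison of the radii; the $\rho_k$ bound \eqref{eq:bdrhok} is a by-product of the same volume estimate.
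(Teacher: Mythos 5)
There is a genuine gap at the step you yourself call the crux. You bound the number of coalescence events up to level $k$ by the number of time slices, $\theta(k,\omega)\le[T/\ep_0]$, and conclude $V_k\le M+[T/\ep_0]\cdot([T/\ep_0]+1)^{\frac{1}{\gamma}(1+\delta_3)}\ll V_*$. This inequality is false for every $\gamma>2$: the left-hand side is of order $\ep_0^{-1-\frac{1+\delta_3}{\gamma}}$, with exponent strictly larger than $1$, while $V_*=\ep_0^{-\frac{2}{3(\gamma-1)}-\frac{\delta_2}{\gamma-1}}$ has exponent strictly smaller than $\frac{2}{3}+\delta_2$ (and tending to $0$ as $\gamma\to\infty$). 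Even the weaker requirement $V_k^{1/3}\lesssim\lambda V_*^{1/3}$, which is what \eqref{eq:bdrhok} and the containment of $|X_k|+\rho_k\phi^{1/3}$ inside $\phi^{1/3}R_{max}$ actually need, fails with your estimate. The paper instead bounds the number of collisions by $N_{coll}=n(\mathcal{D}_*)\le\pi\lambda R_{max}^2(T+1)=\pi\lambda^3V_*^{2/3}(T+1)$, using the membership of $\omega$ in the auxiliary set $\mathcal{G}$ from \eqref{def:goodconf_G}; this gives $V_k\le V_0+(N_{coll}+2)^{\frac{1}{\gamma}(1+\delta_3)+1}$, hence $V_k^{1/3}\lesssim\lambda^{\frac{1+\delta_3}{\gamma}+1}V_*^{\frac{2}{9}(\frac{1+\delta_3}{\gamma}+1)}$, and Assumption \eqref{ass:delta3} is calibrated exactly to the exponent $\frac{2}{3}$ in $N_{coll}\lesssim V_*^{2/3}$ — note $\frac{2}{9}=\frac{1}{3}\cdot\frac{2}{3}$ — not to the exponent $1$ in $[T/\ep_0]$. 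Replacing $V_*^{2/3}$ by $\ep_0^{-1}$ destroys precisely the margin that \eqref{ass:delta3} provides. So the collision count must come from the geometric bound on $n(\mathcal{D}_*)$ in $\mathcal{G}$, not from counting levels.

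A secondary inaccuracy: in your discussion of the transverse drift you write $\frac{v_j}{V_{j-1}+v_j}\le\frac{v_j}{V_*}$, which requires $V_{j-1}+v_j\ge V_*$ and is generally false (indeed the whole point is that $V_k$ stays far below $V_*$). The correct per-step displacement is $\frac{v_j}{V_j}$ times the contact distance $\sigma\phi^{1/3}(V_{j-1}^{1/3}+v_j^{1/3})$, not times the cylinder radius $\phi^{1/3}\rho_{j-1}$, and the telescoping of these terms is exactly what Lemma \ref{app:claim} carries out, yielding $|X_k|\le\frac{9}{2\pi}\phi^{1/3}(V_k^{1/3}-V_0^{1/3})$. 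Since you ultimately defer to that appendix lemma, this part of your argument would be repaired automatically once the bound on $V_k$ is fixed; but as written, both the heuristic and the quantitative conclusion drawn from it are not justified.
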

\begin{proof}
We prove $\eqref{eq:unionDj}$ arguing by induction for $k=0,\dots, \left[{\frac{T}{\ep_0}}\right]$. 
For $k=0$ it is straightforward using the definition of $\mathcal{D}_0$ and $\mathcal{D}_{*}$, see \eqref{def:collcyl} and  \eqref{eq:defD0}, if $\ep_{0}<\ep_{*}(M,T)$. Then, we assume that \eqref{eq:unionDj} holds for $k$ and we want to show that it holds for $k+1$.
From Lemma \ref{app:claim} in Appendix \ref{appendix1} we have the following bound for the displacement of the center of mass of the tagged particle at level $k$, namely
 \begin{equation}\label{eq:Xkbd}
|X_{k}|\leq\frac{9}{2\pi} \phi^{\frac{1}{3}} (V_{k}^{\frac{1}{3}}-V_{0}^{\frac{1}{3}})
\end{equation}
while the volume satisfies
\begin{equation}\label{eq:Vkdef}
V_{k}=V_{0}+\sum_{j=0}^{k}v_{j}.
\end{equation}
Using that $\omega\in \mathcal{U}_k$ with $\mathcal{U}_k$ defined by \eqref{def:Uk+1}
\begin{equation}\label{eq:bdsumvj}
 \begin{split}
\sum_{j=0}^{k}v_{j}&\leq \sum_{\underset{\{j\,:\, \theta(j,\omega)-\theta(j-1,\omega)=1\}}{j=0}}^{k}(1+\theta(j,\omega))^{\frac{1}{\gamma}(1+\delta_3)}
\leq \sum_{j=0}^{N_{coll}}(1+j)^{\frac{1}{\gamma}(1+\delta_3)} \leq (N_{coll}+2)^{\frac{1}{\gamma}(1+\delta_3)+1} 
\end{split}
\end{equation}
where $ N_{coll}=n(\mathcal{D}_{*})$. Since $\mathcal{U}_k\subset \mathcal{G}$ and thanks to \eqref{def:goodconf_G}, using also \eqref{def:R_max}, then we have  
$$N_{coll}\leq \pi \lambda R^2_{max}(T+1)\leq \pi \lambda^3 V_{*}^{\frac{2}{3}}(T+1).$$
Therefore, from \eqref{eq:Vkdef} using \eqref{eq:bdsumvj}, we get 
 \begin{equation}\label{eq:Vkbd}
 \begin{split}
V_{k}&\leq 
V_{0}+(N_{coll}+2)^{\frac{1}{\gamma}(1+\delta_3)+1} \leq V_{0}+(\pi \lambda^3V_{*}^{\frac{2}{3}}(T+1)+2)^{\frac{1}{\gamma}(1+\delta_3)+1}.
\end{split}
\end{equation}
We observe that $(V_{0}+\sum_{j=0}^{k}v_{j})^{\frac{1}{3}}-V_0^{\frac{1}{3}}\leq (\sum_{j=0}^{k}v_{j})^{\frac{1}{3}}$. 
Thus, from \eqref{eq:Xkbd}, using \eqref{eq:Vkdef} and \eqref{eq:Vkbd}, we obtain
\begin{equation}\label{eq:Xkbd2}
\begin{split}
|X_{k}|&\leq  \frac{9}{2\pi} \phi^{\frac{1}{3}}((\pi \lambda^3V_{*}^{\frac{2}{3}}(T+1))^{\frac{1}{\gamma}(1+\delta_3)+1}+2)^{\frac{1}{3}}\\&
\leq  \frac{9}{\pi}\phi^{\frac{1}{3}}(\lambda^{\frac{1}{\gamma}(1+\delta_3)+1} V_{*}^{\frac{2}{9}(\frac{1}{ \gamma}(1+\delta_3)+1)}(T+1)^{\frac{1}{3}(\frac{1}{\gamma}(1+\delta_3)+1})\leq \frac{\lambda}{2} V_{*}^{\frac 1 3}.
\end{split}
\end{equation}
This inequality holds for $\ep_0$ sufficiently small thanks to Assumption \ref{ass:delta3} with $\delta_2$ sufficiently small. 

We recall that $\rho_{k}=\frac{3}{4\pi}(V_{k}^{\frac{1}{3}}+V_{*}^{\frac{1}{3}})$. Hence, thanks to \eqref{eq:Vkbd}
\begin{equation*}
\rho_{k}\leq \bar{C}(V_0^{\frac 1 3}+(\lambda^{\frac{1}{\gamma}(1+\delta_3)+1} V_{*}^{\frac{2}{9}(\frac{1}{ \gamma}(1+\delta_3)+1)}(T+1)^{\frac{1}{3}(\frac{1}{\gamma}(1+\delta_3)+1})+V_{*}^{\frac{1}{3}})
\end{equation*}
where $\bar{C}$ is a numerical constant. Then \eqref{eq:bdrhok} holds for $\ep_0$ (depending on $M$) sufficiently small thanks to Assumption \ref{ass:delta3} with $\delta_2$ sufficiently small. 
Therefore, we have that 
$$\mathcal{D}_{k}(\omega_{|_{\bigcup_{l=0}^{k-1}\mathcal{D}_{l}}}))\subset B_{|X_{k}|+\rho_k \phi^{\frac{1}{3}}}(0)\times [k\ep_0 \phi^{-\frac{2}{3}},(k+1) \ep_0 \phi^{-\frac{2}{3}}) \subset B_{Z_{k}\phi^{\frac{1}{3}}}(0)\times [k\ep_0 \phi^{-\frac{2}{3}},(k+1) \ep_0 \phi^{-\frac{2}{3}})$$
with 
$Z_k\leq  
\lambda V_{*}^{\frac{1}{3}}=R_{max}.$
\end{proof}

We can now conclude the proof of Proposition \ref{prop:goodcfg}.

\begin{proofof}[Proof of Proposition \ref{prop:goodcfg}]
We consider at first $\PP(\mathcal{\bar{U}}_{0})$ where $\mathcal{\bar{U}}_{0}$ is defined by \eqref{def:U0compl}. We observe that the volume of $\mathcal{D}_0$ is $|\mathcal{D}_0|=\pi (\rho_0\phi^{\frac 1 3})^2\ep_0 \phi^{-\frac{2}{3}}=\pi \rho_0^2 \ep_0$ and we obtain
\begin{equation}\label{eq:pbarU_0}
\begin{split}
\PP(\mathcal{\bar{U}}_{0})&= \PP(\{\omega\in\Omega: \; \#(\{x_k(\omega)\}\cap \mathcal{D}_0)> 1 \;\;\text{or}\;\; [\#(\{x_k(\omega)\}\cap \mathcal{D}_0)=1 \;\;\text{and}\;\; v_{k}(\omega)\in \mathcal{D}_0,\; v_{k}(\omega)> 1]  \})\\&
\leq \sum_{n=2}^{\infty} \frac{|\mathcal{D}_0|^n}{n!}e^{-|\mathcal{D}_0|}+e^{-|\mathcal{D}_0|\int_{1}^{\infty}G(v)dv}|\mathcal{D}_0|\int_{1}^{\infty}G(v)dv\\&
=e^{-|\mathcal{D}_0|}(e^{|\mathcal{D}_0|}-1-|\mathcal{D}_0|)+e^{-|\mathcal{D}_0|\int_{1}^{\infty}G(v)dv}|\mathcal{D}_0|\int_{1}^{\infty}G(v)dv\\&
 \leq \frac{|\mathcal{D}_0|^2}{2} + |\mathcal{D}_0| M_{\gamma} e^{-|\mathcal{D}_0|\int_{1}^{\infty}G(v)dv}\\&
 \leq \frac{\pi^2}{2} \rho_0^4 \ep_0^2+\pi \rho_0^2 \ep_0 M_{\gamma}.
\end{split}
\end{equation}

We now consider $\PP(\mathcal{\bar{U}}_{k})$ with $\mathcal{\bar{U}}_{k}$ defined by \eqref{def:Uk+1compl}. Since $\mathcal{\bar{U}}_{k}$ results to be the union of disjoint sets we have
\begin{equation}\label{eq:splitUk}
\begin{split}
\PP(\mathcal{\bar{U}}_{k})&=\PP(\mathcal{\bar{U}}_{k}^{(1)})+\PP(\mathcal{\bar{U}}_{k}^{(2)})+\PP(\mathcal{\bar{U}}_{k}^{(3)}).
\end{split}
\end{equation}
With the same strategy used in \eqref{eq:pbarU_0} we get 
\begin{equation}\label{eq:pbarU_k^(1)}
\PP(\mathcal{\bar{U}}_{k}^{(1)})\leq \frac{\pi^2}{2} \rho_{k}^4 \ep_0^2, 
\end{equation}
where we used that $|\mathcal{D}_{k}|= \pi \rho_{k}^2 \ep_0$.
We look at $\PP(\mathcal{\bar{U}}_{k}^{(2)})$. We set $(x_k,v_k)=\eta_k$ so that $d\eta_{k}=G(v_k)\,dx_k\, dv_k$ and use the convention $\int^{n_k}_{\mathcal{D}_k\times \R^{+}}d\eta_k =1$ if $n_k=0$ and $\int^{n_k}_{\mathcal{D}_k\times \R^{+}}d\eta_k =\int_{\mathcal{D}_k\times \R^{+}} d\eta_k$ if $n_k=1$ so that
\begin{equation}\label{eq:pbarU_k^(2)}
\begin{split}
\PP(\mathcal{\bar{U}}_{k}^{(2)})&=\sum_{n_0\in\{0,1\}}\dots\sum_{n_{k-2}\in\{0,1\}}e^{-|\mathcal{D}_{0}|}\int_{\mathcal{D}_{0}\times [0,\infty)}^{n_0}d\eta_0\dots  \int_{\mathcal{D}_{k-2}(\xi^{(k-3)})\times [0,\infty)}^{n_{k-2}}d\eta_{k-2}e^{-|\mathcal{D}_{k-2}(\xi^{(k-3)})|}\\&
\quad \int_{\mathcal{D}_{k-1}(\xi^{(k-2)})\times [0,\infty)}^{n_{k-1}=1}\hspace{-7mm}d\eta_{k-1}e^{-|\mathcal{D}_{k-1}(\xi^{(k-2)})|}\int_{\mathcal{D}_{k}(\xi^{(k-1)})\times [0,\infty)}^{n_{k}=1}\hspace{-7mm}d\eta_{k}e^{-|\mathcal{D}_{k}(\xi^{(k-1)})|}\chi(\{\omega\,:\; \omega\in\mathcal{U}_{k-1}\})\\&
\leq \sum_{n_0\in\{0,1\}}\dots\sum_{n_{k-2}\in\{0,1\}}e^{-|\mathcal{D}_{0}|}\int_{\mathcal{D}_{0}\times [0,\infty)}^{n_0}d\eta_0\dots  \int_{\mathcal{D}_{k-2}(\xi^{(k-3)})\times [0,\infty)}^{n_{k-2}}d \eta_{k-2}e^{-|\mathcal{D}_{k-2}(\xi^{(k-3)})|}\\&
\quad\;|\mathcal{D}_{k-1}(\xi^{(k-2)})||\mathcal{D}_{k}(\xi^{(k-1)})|\\&
\leq  \pi^2 \rho_{k}^2\rho_{k-1}^2 \ep_0^2\leq   \frac{\pi^2}{2}( \rho_{k}^4+\rho_{k-1}^4) \ep_0^2.
\end{split}
\end{equation}
Writing these formulas we are using standard properties of the Poisson measure. The domains $\mathcal{D}_{k}$ are defined by \eqref{eq:defDk}. 
Thus, combining \eqref{eq:pbarU_k^(1)} and \eqref{eq:pbarU_k^(2)}, we get 
\begin{equation}
\PP(\mathcal{\bar{U}}_{k}^{(1)})+\PP(\mathcal{\bar{U}}_{k}^{(2)})\leq \frac{5\pi^2}{2} \rho_{k}^4 \ep_0^2.
\end{equation}
By using \eqref{eq:bdrhok} we have that $\rho_{k}^4 \leq \lambda^4V_{*}^{\frac 4 3}$. 
Then
\begin{equation}\label{eq:Uk1Uk2}
\PP(\mathcal{\bar{U}}_{k}^{(1)})+\PP(\mathcal{\bar{U}}_{k}^{(2)})\leq \frac{5\pi^2}{2} \lambda^4V_{*}^{\frac 4 3} \ep_0^2.
\end{equation}
Therefore, using \eqref{eq:splitUk} and \eqref{eq:Uk1Uk2}, we have 
\begin{equation}\label{eq:psumUk}
\begin{split}
\PP\big(\bigcup_{k=0}^{[\frac{T}{\ep_0}] } \mathcal{\bar{U}}_{k}\big)&\leq 
\sum_{k=0}^{[\frac{T}{\ep_0}]}(\PP(\mathcal{\bar{U}}_{k}^{(1)})+\PP(\mathcal{\bar{U}}_{k}^{(2)})+\PP(\mathcal{\bar{U}}_{k}^{(3)})) \\&
\leq \sum_{k=0}^{[\frac{T}{\ep_0}]}(\PP(\mathcal{\bar{U}}_{k}^{(1)})+\PP(\mathcal{\bar{U}}_{k}^{(2)}))+ \sum_{k=0}^{[\frac{T}{\ep_0}]} \PP(\mathcal{\bar{U}}_{k}^{(3)})\\&
 \leq \frac{5}{2}\pi^2 \lambda^4 \,T\,V_{*}^{\frac 4 3} \ep_0+ \sum_{k=0}^{[\frac{T}{\ep_0}]}\PP(\mathcal{\bar{U}}_{k}^{(3)}).
\end{split}
\end{equation}
The first term on the right hand side tends to zero as $\ep_0 \to 0$
using the explicit expression for $V_{*}$ given by \eqref{ass:V_*2} if $\delta_2$ is sufficiently small. 

We now look at the second term on the right hand side of \eqref{eq:psumUk}. We have
\begin{equation}\label{eq:pbarU_k^(3)}
\begin{split}
\PP(\mathcal{\bar{U}}_{k}^{(3)})&=\PP(\{\omega\in\mathcal{U}_{k-1}\!:\#(\{x_j(\omega)\}\cap \mathcal{D}_{k}(\omega_{|_{\bigcup_{l=0}^{k-1}\mathcal{D}_{l}}}\!\!))\!=\!1 \; \text{and}\;  \#(\{x_j(\omega)\}\cap \mathcal{D}_{k-1}(\omega_{|_{\bigcup_{l=0}^{k-2}\mathcal{D}_{l}}}\!\!))\!=\!0\\&
 \hspace{1cm} \text{and} \,\, v_j(\omega)> (1+\theta(k,\omega))^{\frac{1}{\gamma}(1+\delta_3)} \})\\&
= \!\sum_{n_0\in\{0,1\}}\!\!\dots\!\!\sum_{n_{k-2}\in\{0,1\}} \hspace{-0.4cm}e^{-|\mathcal{D}_{0}|}\int_{\mathcal{D}_{0}\times [0,\infty)}^{n_0}\hspace{-0.5cm} d\eta_0\dots  \int_{\mathcal{D}_{k-2}(\xi^{(k-3)})\times [0,\infty)}^{n_{k-2}}\hspace{-0.5cm} d\eta_{k-2}\,e^{-|\mathcal{D}_{k-2}(\xi^{(k-3)})|}\\&
 \hspace{0.8cm} e^{-|\mathcal{D}_{k-1}(\xi^{(k-2)})|}\int_{\mathcal{D}_{k}(\xi^{(k-1)})\times [0,\infty)}^{n_{k}=1}d\eta_{k}\,e^{-|\mathcal{D}_{k}(\xi^{(k-1)})|}\chi(\{\omega\,:\; \omega\in\mathcal{U}_{k-1}\})\times\\&
 \hspace{0.8cm}\times \chi(\{v_j(\omega)> (1+\theta(k,\omega))^{\frac{1}{\gamma}(1+\delta_3)} \})\\&
 = \!\sum_{n_0\in\{0,1\}}\!\!\dots\!\!\sum_{n_{k-2}\in\{0,1\}}\hspace{-0.4cm}e^{-|\mathcal{D}_{0}|}\int_{\mathcal{D}_{0}\times [0,\infty)}^{n_0}\hspace{-0.5cm} d\eta_0\dots  \int_{\mathcal{D}_{k-2}(\xi^{(k-3)})\times [0,\infty)}^{n_{k-2}} \hspace{-0.9cm} d\eta_{k-2}\, e^{-|\mathcal{D}_{k-2}(\xi^{(k-3)})|-|\mathcal{D}_{k-1}(\xi^{(k-2)})|}\\&
\hspace{0.6cm}  \int_{(\mathcal{D}_{k}(\xi^{(k-1)})\cap \{v_k(\omega)> (1+\theta(k,\omega))^{\frac{1}{\gamma}(1+\delta_3)} \} )\times [0,\infty)}^{n_{k}=1}\hspace{-0.5cm}d\eta_{k}\,e^{-|\mathcal{D}_{k}(\xi^{(k-1)})|}\chi(\{\omega\,:\; \omega\in\mathcal{U}_{k-1}\}).
\end{split}
\end{equation}
Note that the characteristic function $\chi(\{\omega\,:\; \omega\in\mathcal{U}_{k-1}\})=\chi(\{\omega\,:\; \omega\in \bigcap_{j=0}^{k-1}\mathcal{U}_{j}\})$ since the sequence $\{\mathcal{U}_{j}\}$ is decreasing in $j$. 
We observe that
\begin{equation*}
\begin{split}
&\int_{(\mathcal{D}_{k}(\xi^{(k-1)})\cap\{v_k(\omega)> (1+\theta(k,\omega))^{\frac{1}{\gamma}(1+\delta_3)}\})\times [0,\infty)}^{n_{k}=1}d\eta_{k} e^{-|\mathcal{D}_{k}(\xi^{(k-1)}|}\chi(n_k=1)\\&
\leq \chi(n_k=1)\frac{M_{\gamma}}{(1+\theta(k,\xi^{(k-1)}))^{1+\delta_3}}|\mathcal{D}_{k}(\xi^{(k-1)}|\\&
= \chi(n_k=1)\frac{M_{\gamma}}{(1+\theta(k,\xi^{(k-1)}))^{1+\delta_3}}\pi\ep_0\rho_k^2 
\\& \leq \chi(n_k=1)\frac{M_{\gamma}}{(1+\theta(k,\xi^{(k-1)}))^{1+\delta_3}}\pi\ep_0 \frac{\lambda^2}{4} V_{*}^{\frac 2 3},
\end{split}
\end{equation*}
where in the last inequality we used \eqref{eq:bdrhok} and $\theta$ is defined as in \eqref{def:f-teta}. Therefore \eqref{eq:pbarU_k^(3)} becomes
\begin{equation}\label{eq:pbarUk^3bis} 
\begin{split}
\PP(\mathcal{\bar{U}}_{k}^{(3)})
&\leq \frac{\pi\ep_0\lambda^2}{4} V_{*}^{\frac 2 3} \!\!\sum_{n_0\in\{0,1\}}\!\!\dots\!\!\sum_{n_{k-2}\in\{0,1\}}\hspace{-0.4cm}e^{-|\mathcal{D}_{0}|}\int_{\mathcal{D}_{0}\times [0,\infty)}^{n_0}\hspace{-0.5cm}d\eta_0\dots  \int_{\mathcal{D}_{k-2}(\xi^{(k-3)})\times [0,\infty)}^{n_{k-2}}\hspace{-0.9cm}d\eta_{k-2}\,e^{-|\mathcal{D}_{k-2}(\xi^{(k-3)})|}\\&
\quad \int_{\mathcal{D}_{k-1}(\xi^{(k-2)})\times [0,\infty)}^{n_{k-1}=0}\hspace{-0.8cm}d\eta_{k-1}\,e^{-|\mathcal{D}_{k-1}(\xi^{(k-2)})|}\frac{M_{\gamma}\chi(n_k=1)}{(1+\theta(k,\xi^{(k-1)}))^{1+\delta_3}}\,\chi(\{\omega:\omega\in\mathcal{U}_{k-1}\})\\&
\leq \frac{\pi\ep_0\lambda^2}{4} V_{*}^{\frac 2 3} \sum_{n_0=0}^{\infty}\dots\sum_{n_{k-2}=0}^{\infty}\frac{e^{-|\mathcal{D}_{0}|}}{n_0!\dots n_{k-1}!}\int_{\mathcal{D}_{0}\times [0,\infty)}^{n_0}d\eta_0\dots  \int_{\mathcal{D}_{k-2}(\xi^{(k-3)})\times [0,\infty)}^{n_{k-2}} \hspace{-0.6cm} d\eta_{k-2}\\&
\quad e^{-|\mathcal{D}_{k-2}(\xi^{(k-3)})|}\int_{\mathcal{D}_{k-1}(\xi^{(k-2)})\times [0,\infty)}^{n_{k-1}=0}d\eta_{k-1}e^{-|\mathcal{D}_{k-1}(\xi^{(k-2)})|}\frac{M_{\gamma}\,\chi(n_k=1)}{(1+\sum_{l=0}^{k-1} n_l)^{1+\delta_3}}
\end{split}
\end{equation}
where we used that $\theta(k,\xi^{(k-1)})= \sum_{l=0}^{k-1} n_l$. 
We observe that, in order to take into account more than one collision, i.e. $n_k\geq 1$ for any $k$, in the previous formula we generalized the definition of the domains $\mathcal{D}_{k}(\xi^{(k-1)})$ given by \eqref{eq:defDk}, according to the following rules for multiple collision events: $\rho_{k+1}=\frac{3}{4\pi}(V_{k+1}^{\frac{1}{3}}+V_{*}^{\frac{1}{3}})$, $X_{k+1}=X_{k}+\frac{\sum_{j=1}^{n_{k}}v^{(k+1)}_{j}(x^{(k+1)}_{j}-X_{k})}{V_{k}+\sum_{j=1}^{n_{k}}v^{(k+1)}_{j}}$, $V_{k+1}=V_{k}+\sum_{j=1}^{n_{k}}v^{(k+1)}_{j}$. Moreover, since  
\begin{equation}\label{eq:unitsum}
1=
  \sum_{n_k=0}^{\infty}\dots\! \sum_{n_L=0}^{\infty}\frac{e^{-|\mathcal{D}_{k}(\xi^{(k-1)})|}}{n_{k}!}\int_{\mathcal{D}_{k}(\xi^{(k-1)})\times [0,\infty)}^{n_{k}}d\eta_{k}
 \dots  \frac{e^{-|\mathcal{D}_{L}(\xi^{(L-1)})|}}{n_{L}!}\int_{\mathcal{D}_{L}(\xi^{(L-1)})\times [0,\infty)}^{n_{L}}d\eta_{L} \,
\end{equation}
for any $k=0,\dots, L$ and $L\geq 0$, using \eqref{eq:unitsum} in \eqref{eq:pbarUk^3bis} we get
\begin{equation}\label{eq:pbarU_k^(3)-2}
\begin{split}
\PP(\mathcal{\bar{U}}_{k}^{(3)})\leq &
\frac{\pi\ep_0\lambda^2}{4} V_{*}^{\frac 2 3} \sum_{n_0=0}^{\infty}\dots\sum_{n_{k-1}=0}^{\infty}\frac{e^{-|\mathcal{D}_{0}|}}{n_0!\dots n_{k-1}!}\int_{\mathcal{D}_{0}\times [0,\infty)}^{n_0}d\eta_0\dots  \int_{\mathcal{D}_{k-2}(\xi^{(k-3)})\times [0,\infty)}^{n_{k-2}}d\eta_{k-2}\\&
e^{-|\mathcal{D}_{k-2}(\xi^{(k-3)})|}\int_{\mathcal{D}_{k-1}(\xi^{(k-2)})\times [0,\infty)}^{n_{k-1}}d\eta_{k-1}e^{-|\mathcal{D}_{k-1}(\xi^{(k-2)})|}\frac{M_{\gamma}\,\chi(n_k=1)}{(1+\sum_{l=0}^{k-1} n_l)^{1+\delta_3}}\\& 
\sum_{n_k=0}^{\infty}\frac{e^{-|\mathcal{D}_{k}(\xi^{(k-1)})|}}{n_{k}!} \int_{\mathcal{D}_{k}(\xi^{(k-1)})\times [0,\infty)}^{n_{k}}\hspace{-0.6cm}d\eta_{k}\dots\sum_{n_L=0}^{\infty} \frac{e^{-|\mathcal{D}_{L}(\xi^{(L-1)})|}}{n_{L}!}\int_{\mathcal{D}_{L}(\xi^{(L-1)})\times [0,\infty)}^{n_{L}}\hspace{-0.6cm}d\eta_{L}.
\end{split}
\end{equation}
We look at $\sum_{k=0}^{L}\PP(\mathcal{\bar{U}}_{k}^{(3)})$ when $L=[\frac{T}{\ep_0}]$. More precisely
\begin{equation}\label{eq:psumbarU_k^(3)}
\begin{split}
\sum_{k=0}^{L}\PP(\mathcal{\bar{U}}_{k}^{(3)})&\leq \frac{\pi\ep_0\lambda^2}{4} V_{*}^{\frac 2 3} \,M_{\gamma}\sum_{n_0=0}^{\infty}\dots \sum_{n_L=0}^{\infty} \int_{\mathcal{D}_{k}(\xi^{(k-1)})\times [0,\infty)}^{n_{k}}d\eta_{k}\dots  \int_{\mathcal{D}_{L}(\xi^{(k-1)})\times [0,\infty)}^{n_{L}}d\eta_{L} \\& e^{-|\mathcal{D}_{k}(\xi^{(k-1)})|-|\mathcal{D}_{L}(\xi^{(L-1)})|}\frac{1}{n_{k}!\dots n_{L}!}\sum_{k=1}^{L}\frac{\chi(n_k=1)}{(1+\sum_{l=0}^{k-1}n_l)^{1+\delta_3}}.
\end{split}
\end{equation}
Using that
 $$\sum_{m=1}^{L}\frac{\chi(n_k=1)}{(1+\sum_{l=0}^{m-1}n_l)^{1+\delta_3}}\leq \sum_{m=1}^{\infty}\frac{\chi(n_k=1)}{(1+\sum_{l=0}^{m-1}n_l)^{1+\delta_3}}\leq \sum_{m=1}^{\infty}\frac{1}{(1+m)^{1+\delta_3}}\leq C_{\delta_3},$$
 we obtain 
 \begin{equation}\label{eq:psumbarU_k^(3)-1}
\begin{split}
\sum_{k=0}^{[\frac{T}{\ep_0}]}\PP(\mathcal{\bar{U}}_{k}^{(3)})&\leq \pi \,\ep_0 \frac{\lambda^2}{4} V_{*}^{\frac 2 3} M_{\gamma}C_{\delta_3},
\end{split}
\end{equation}
which tends to zero as $\ep_0\to 0$ due to Assumption \ref{ass:V_*2} since $\gamma>2$. To conclude the proof we plug \eqref{eq:psumbarU_k^(3)-1} in \eqref{eq:psumUk}. 
\end{proofof}

\subsection{Adjoint equation of \eqref{eq:FKScaled} and its Duhamel representation formula}\label{sec:adjeq}
At this level we define a suitable adjoint equation of \eqref{eq:FKScaled} and we write the Duhamel representation formula for its solution which will be a useful tool for the proof of the kinetic limit in Section \ref{sec:derivation}.
We consider a function $\Psi(Y,V,t)$ on the state space and we write the evolution equation for this function. 
The resulting equation is the so called backward Kolmogorov equation which reads as
\begin{equation}\label{eq:BKScaled}
\partial_t\Psi(Y,V,t)=(\mathcal{C}[\Psi])(Y,V,t)
\end{equation}
with
\begin{equation}\label{op:CScaled}
\begin{split}
(\mathcal{C}[\Psi])(Y,V,t)=&\,U\!\int_0^{\infty}\hspace{-2.5mm}dv\int_0^{\frac{\pi}{2}}\hspace{-1.5mm}d\theta\int_0^{2\pi}\hspace{-2mm}d\varphi\, G(v) (R+r)^2 \sin\theta\cos\theta
 \\
 &\,\times \left[\Psi(Y+\frac{v}{V+v}{n}(\theta,\varphi)(R+r),V+v,t)-\Psi(Y,V,t)\right]. \\
\end{split}
\end{equation}
Here ${n}(\theta,\varphi):=(\cos\theta e_1+\sin\theta \,\nu)$.

We remark that the operator $\mathcal{C}$ defined in \eqref{op:CScaled} is the generator of the Markov process for the tagged particle where a given particle described by the coordinates $(Y,V)$, with $Y:=X-Ut e_1$, is transformed in a particle with new coordinates $(Y',V')$ according to \eqref{def:merg_dyn}.
Moreover, the operator $\mathcal{C}$ defined in \eqref{op:CScaled} transforms continuous functions in continuous functions. 
The connection between equations \eqref{eq:FKScaled} and \eqref{eq:BKScaled} is the following. If $f$ and $\Psi$ are sufficiently smooth functions we have
\begin{equation}\label{duality}
\int_{\R^3	\times [0,\infty)}f(Y,V,t)\psi(Y,V,0)dYdV=\int_{\R^3\times [0,\infty)} f(Y,V,0)\psi(Y,V,t)dYdV.
\end{equation}
Indeed,  if $f$ and $\Psi$ are sufficiently smooth to justify the computations, this follows from
\begin{equation}\label{eq:duality2}
\begin{split}
&\partial_t \left(\int f(Y,V,T-t)\Psi(Y,V,t)\,dY\,dV\right) \\&
= -\int \partial_t f(Y,V,T-t)\Psi(Y,V,t)\,dY\,dV+\int f(Y,V,T-t)\partial_t \Psi(Y,V,t)\,dY\,dV\\&
= -\langle \mathcal{Q}[f],\Psi \rangle+\langle\mathcal{C}[\Psi],f\rangle=0,
\end{split}
\end{equation}
where $\mathcal{Q}[f]$ is defined by \eqref{eq:FKScaled}.
The last identity follows using that 
\begin{equation}\label{eq:duality3}
\begin{split}
\langle \mathcal{C}[\Psi],f\rangle &=\int\mathcal{C}\Psi(Y,V,t)f(Y,V,t)\,dVdY\\
&=\,U\!\int_{\R^3}dY\int_0^{\infty}\hspace{-2.5mm}dV\int_0^{\infty}\hspace{-2.5mm}dv\int_0^{\frac{\pi}{2}}\hspace{-1.5mm}d\theta\int_0^{2\pi}\hspace{-2mm}d\varphi\, G(v) (R+r)^2 \sin\theta\cos\theta\\
&\quad\times \Big[f(Y,V,t)\Psi(Y+\frac{v}{V+v}(R+r)\,{n}(\theta,\varphi),V+v,t)-f(Y,V,t)\Psi(Y,V,t)\Big], \\\nonumber
\end{split}
\end{equation}
as well as the change of variables $Y'=Y+\frac{v}{V+v}(R+r)\,{n}(\theta,\varphi)$, $V'=V+v$, $R'=R+r=\left(\frac{3}{4\pi}\right)^{\frac{1}{3}}(V^{\frac{1}{3}}+v^{\frac{1}{3}})$ and Fubini Theorem. 
We now write the solution of \eqref{eq:BKScaled} by using Duhamel's formula. 
We set 
\begin{equation}\label{def:oplingain}
l(V,v,\theta,\varphi):=\frac{v}{V+v}(R+r)\,{n}(\theta,\varphi) 
\end{equation} 
with $R=R(V)$, $r=r(v)$, the transition kernel 
\begin{equation}\label{eq:transkernel}
k(V,v,\theta):=G(v)(R+r)^2 \sin\theta\cos\theta
\end{equation}
 and $I:=[0,+\infty]\times[0,\pi/2]\times[0,2\pi]$. We define the operator 
 \begin{equation}\label{def:opK}
  \text{K}[\Psi](Y,V,t)=\int_{I}dv\,d\theta\,d\varphi\;k(V,v,\theta)\Psi(Y+l(V,v,\theta,\varphi),V+v,t)
  \end{equation}
  and 
   \begin{equation}\label{def:oplambda}
  \displaystyle \lambda(V)=\int_{I}dv\,d\theta\,d\varphi\;k(V,v,\theta).
  \end{equation}
 Then, equation \eqref{eq:BKScaled} can be rewritten as
\begin{equation}\label{eq:scaled_eq}
\begin{split}
\partial_t\Psi(Y,V,t)
&=U \left(\text{K}[\Psi]\right)(Y,V,t)-U\lambda(V)\Psi(Y,V,t).
\end{split}
\end{equation}
Using Duhamel's formula we get
\begin{equation}\label{eq:intBKS}
\Psi(Y,V,t)=\Psi_0(Y,V)\,e^{-U\lambda(V)t}+ U \int_0^t e^{-U\lambda(V)(t-t_1)}\text{K}[\Psi](Y,V,t_1)\, dt_1.
\end{equation}
By iterating the equation above, we obtain the series expansion solution
\begin{align}\label{eq:sol}
\nonumber
\Psi(Y,V,t)=&\,\Psi_0(Y,V)\,e^{-U\lambda(V)t}+\sum_{n>0}U^n\int_{0}^{t}dt_n\int_{0}^{t_{n}}dt_{n-1}\dots\int_{0}^{t_{2}}dt_1 \\&
 \times \left[e^{-U\lambda(\cdot)(t-t_n)}\text{K}\,e^{-U\lambda(\cdot)(t_n-t_{n-1})}\text{K}
\dots \text{K}\,e^{-U\lambda(\cdot)(t_1)}\Psi_0(\cdot)\right](Y,V).\\\nonumber
\end{align}

In order to formulate the previous results we define the following space
\begin{equation}\label{def:X_M}
X_M:=C_b(\R^3\times \R^{+})\cap \{\supp \Psi_0(Y,V)\subset \R^3\times [0,M]\}
\end{equation}
for some $0<M<\infty.$ 
\begin{proposition}
Let be $G\in \mathcal{P}( [0,+\infty))$ such that
\begin{equation}\label{ass:Gsolpsi}
\int_{0}^{\infty} dv\,G(v)v^{\frac 2 3}<\infty.
\end{equation}
Then
\begin{itemize}
\item[i)] 
$\text{K}:X_M\to X_M$ defines a bounded operator
and satisfies 
\begin{equation}
\|K[\Psi]\|_{L^\infty(\R^3\times \R^{+})}\leq C(1+M^{\frac 2 3})\|\Psi\|_{L^\infty(\R^3\times \R^{+})}
\end{equation}
for any $\Psi\in X_M$ and $C>0$ is a numerical constant depending on the integral in \eqref{ass:Gsolpsi}.
\item[ii)] $\lambda\in C([0,\infty))$ and $ 0\leq \lambda(V)\leq C(1+V^{\frac 2 3})$, with $C>0$ a numerical constant depending on the integral in \eqref{ass:Gsolpsi}.
\item[iii)] Equation \eqref{eq:sol} gives the representation formula for the unique solution $\Psi$ of \eqref{eq:BKScaled} in $C^1([0,\infty); X_M)$. 
\end{itemize}
\end{proposition}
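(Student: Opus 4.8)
The plan is to establish the three assertions in the order (ii), (i), (iii), since each one feeds the next. For (ii) I would first carry out the $\theta$- and $\varphi$-integrations in \eqref{def:oplambda} explicitly: writing $R=(3V/4\pi)^{1/3}$, $r=(3v/4\pi)^{1/3}$ and using $\int_0^{\pi/2}\sin\theta\cos\theta\,d\theta=\tfrac12$ and $\int_0^{2\pi}d\varphi=2\pi$, this produces the closed form $\lambda(V)=c_0\int_0^\infty G(v)(V^{1/3}+v^{1/3})^2\,dv$ with $c_0=\pi(3/4\pi)^{2/3}$. Then the elementary bound $(V^{1/3}+v^{1/3})^2\le 2(V^{2/3}+v^{2/3})$ together with $\int_0^\infty G=1$ and assumption \eqref{ass:Gsolpsi} gives $0\le\lambda(V)\le C(1+V^{2/3})$, and continuity on $[0,\infty)$ follows from dominated convergence, the integrand being continuous in $V$ and, for $V$ in a compact set, dominated by an integrable multiple of $G(v)(1+v^{2/3})$.

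For (i) the key observation is a support truncation. If $\Psi\in X_M$ then $\supp\Psi\subset\R^3\times[0,M]$, so the integrand in \eqref{def:opK} vanishes unless $V+v\le M$; in particular $\text{K}[\Psi](Y,V,t)=0$ whenever $V>M$, so $\text{K}[\Psi]$ is again supported in $\R^3\times[0,M]$. Bounding $|\Psi|$ by $\|\Psi\|_{L^\infty}$ under the integral and using $\sin\theta\cos\theta\ge0$ gives $|\text{K}[\Psi](Y,V,t)|\le\lambda(V)\|\Psi\|_{L^\infty}$, which, being nonzero only for $V\le M$, is controlled by $C(1+M^{2/3})\|\Psi\|_{L^\infty}$ via (ii). Continuity of $\text{K}[\Psi]$ I would get from dominated convergence in \eqref{def:opK}: the maps $V\mapsto R(V)$ and $(Y,V)\mapsto Y+\tfrac{v}{V+v}(R+r)\,n(\theta,\varphi)$ are continuous, $\Psi$ is continuous, and the integrand is dominated by $G(v)(R_{\max}+r)^2\|\Psi\|_{L^\infty}\mathbf 1_{\{v\le M\}}$, integrable by \eqref{ass:Gsolpsi}. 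Hence $\text{K}:X_M\to X_M$ is bounded with the stated norm.

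For (iii) I would fix $T>0$ and work in the Banach space $C([0,T];X_M)$ with norm $\sup_{t}\|\cdot\|_{L^\infty}$. First I would record that, since $\lambda$ is continuous and bounded on $[0,M]$ by (ii) and $t\mapsto\text{K}[\Psi](\cdot,\cdot,t)$ is continuous, a function $\Psi\in C^1([0,T];X_M)$ solves \eqref{eq:BKScaled}--\eqref{op:CScaled} with datum $\Psi_0$ if and only if it solves the Duhamel equation \eqref{eq:intBKS}; this is the standard variation-of-constants reformulation. Iterating \eqref{eq:intBKS} produces the series \eqref{eq:sol}, whose $n$-th term I would estimate using $e^{-U\lambda(V)s}\le1$, the operator bound of (i) applied $n$ times, and the simplex identity $\int_0^t dt_n\cdots\int_0^{t_2}dt_1=t^n/n!$, obtaining $\big(U C(1+M^{2/3})t\big)^n/n!\cdot\|\Psi_0\|_{L^\infty}$. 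Hence the series converges absolutely and uniformly on $[0,T]$; since $\text{K}$ and multiplication by the continuous bounded function $e^{-U\lambda(V)s}$ preserve $X_M$, and the time integral of a continuous $X_M$-valued map is again in $X_M$, the sum lies in $C([0,T];X_M)$ and solves \eqref{eq:intBKS}, hence, the right-hand side then being $C^1$ in $t$, lies in $C^1([0,T];X_M)$ and solves \eqref{eq:BKScaled}. Letting $T\to\infty$ patches these local solutions into a solution in $C^1([0,\infty);X_M)$. Uniqueness I would get from Gr\"onwall: the difference $W$ of two solutions obeys $\|W(t)\|_{L^\infty}\le U C(1+M^{2/3})\int_0^t\|W(s)\|_{L^\infty}\,ds$, forcing $W\equiv0$.

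I do not expect a genuine obstacle: the whole argument is a fixed-point/Dyson-series scheme for a linear equation. The one point that needs real care — and the reason the scheme closes in spite of the unbounded coefficient $\lambda(V)\sim V^{2/3}$ appearing in \eqref{eq:scaled_eq} — is that every function involved stays supported in $V\le M$, where $\lambda$ and the operator norm of $\text{K}$ are uniformly bounded by $C(1+M^{2/3})$, turning \eqref{eq:sol} into an everywhere-convergent exponential series. Consequently the most technical verification is the one in (i): that $\text{K}$ genuinely maps $X_M$ into itself, i.e. the support and continuity claims for $\text{K}[\Psi]$.
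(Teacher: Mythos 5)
Your proposal is correct and follows essentially the same route as the paper: the support truncation $V>M\Rightarrow K[\Psi]=0$ for item (i), the direct bound on $\lambda$ from \eqref{ass:Gsolpsi} for item (ii), and the Dyson-series/contraction argument for item (iii). The only cosmetic difference is that you establish continuity of $K[\Psi]$ (in particular near $V=0$, where $l$ has the indeterminate factor $v/(V+v)$) by dominated convergence, whereas the paper splits the $v$-integral into $v\le\eta$ and $v\ge\eta$; both work.
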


\begin{proof}
To prove item i) we first notice that the operator $K$ maps $X_M$ to the set of functions supported in $\{V\leq M\}$ since for any $\Psi\in X_M$ if $V\geq M$ then $\Psi(Y+l,V+v,t)=0$ for any $v\geq 0$.  We now collect some continuity properties of the function $k$ defined by \eqref{eq:transkernel} and the function $l$ defined by  \eqref{def:oplingain} and
observe that $V\to k(V,v,\theta)$ is a continuous function for any $v$ and any $\theta$. In order to check the continuity of $V\to K[\Psi](Y,V,t)$ at $V=0$ we fix $\eta>0$ but sufficiently small and consider two cases: when $v\leq \eta$ and $v\geq \eta.$ When $v\leq \eta$ we use the fact that $|l|$, with $l$ defined by  \eqref{def:oplingain} is bounded by 
$|l(V,v,\theta,\varphi)|\leq (V^\frac 1 3+ \eta^{\frac 1 3})$ with $ \eta$ sufficiently small. In the other case 
$|l|$ is uniformly continuous and $|l(V_1,v,\theta,\varphi)-l(V_2,v,\theta,\varphi)|$ results to be small if the difference $|V_1-V_2|$ is small. Then the continuity of $K[\Psi]$ follows.

Item ii) follows from the definition of $\lambda$ in \eqref{def:oplambda} using the Assumption \eqref{ass:Gsolpsi}.
To prove iii) we notice that the 
series \eqref{eq:sol} converges for any $\Psi_0\in X_M$ thanks to item i) and satisfies \eqref{eq:intBKS}. Therefore it satisfies \eqref{eq:BKScaled}. Moreover, uniqueness follows by  
means of a standard contractivity argument for short times.
\end{proof}

\subsection{Rigorous derivation: from particle system to kinetic equation}\label{sec:derivation}
\subsubsection{Convergence to the solution of the adjoint equation \eqref{eq:BKScaled}}

We are interested in the time evolution of 
\begin{equation}\label{def:testpsi}
\Psi_\phi(Y,V,t):=S_\phi(t)\Psi_0(Y,V)=\EE_{\mu_{\phi}}[\Psi_0(\Pi\big[T^{t}(Y,V;\omega)\big])],
\end{equation}
where the initial datum $\Psi_0$ is a sufficiently smooth function and we recall that $\Pi $ is the projector operator, defined in Section \ref{ssec:PM}, such that $\Pi [T_\phi^t(Y_0,V_0;\omega_0)]=(Y(t),V(t))$. Note that $\Psi_\phi$ is well defined thanks to Proposition \ref{prop:wellpos1}. Indeed, we observe that, differently to the classical Lorentz Gas, the particle dynamics introduced in Section \ref{ssec:PM} is not time reversible. 
Due to this, instead of using the forward Kolmogorov equation for the probability density $f(Y,V,t)$ it is more convenient to consider the backward Kolmogorov equation satisfied by a test function $\Psi(Y,V,t)$.

In this section we will prove the following Theorem.
\begin{theorem}\label{th:kineticpsi}
Assume that $\Psi_0\in X_M$ with $X_M$ defined by \ref{def:X_M}. Let $T>0$ and $\Psi_\phi(Y,V,t)$ be defined in \eqref{def:testpsi}. Then
\begin{equation}
\lim_{\phi\to 0}\Psi_\phi(Y,V,t) = \Psi(Y,V,t) \quad \text{in}\quad C([0,T]; C_{b}(\R^3\times [0,\infty)))
\end{equation}
where $\Psi(Y,V,t)$ solves \eqref{eq:BKScaled}. 
\end{theorem}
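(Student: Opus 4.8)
The plan is to compare the particle-system evolution $\Psi_\phi$ with the solution $\Psi$ of the backward Kolmogorov equation \eqref{eq:BKScaled} by expanding both in series, one over the collision histories of the tagged particle in the Poisson background, and the other the Duhamel series \eqref{eq:sol}, and then to match the two term by term on the set of good configurations $\Omega^1$, whose complement has probability $\leq\delta$ by Theorem \ref{th:goodconf}. First I would write $\Psi_\phi(Y,V,t)=\EE_{\mu_\phi}[\Psi_0(\Pi[T^t_\phi(Y,V;\omega)])\chi_{\Omega^1}]+\EE_{\mu_\phi}[\Psi_0(\cdots)\chi_{(\Omega^1)^c}]$; the second term is bounded by $\|\Psi_0\|_\infty\,\PP((\Omega^1)^c)\leq \|\Psi_0\|_\infty\delta$ and hence is negligible. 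On $\Omega^1$ only binary collisions occur, the tagged-particle volume stays bounded by $V_*$-type quantities, and the displacement of the center is controlled by the estimates of Lemma \ref{app:claim} / Lemma \ref{lem:unionDj}, so the trajectory is determined by a finite ordered sequence of scattering events of the form \eqref{def:merg_dyn}.

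Next I would carry out the standard change of variables that turns the Poisson expectation over $\Omega^1$ into an iterated time integral. Partitioning $[0,t]$ by the successive collision times $0<t_1<\dots<t_n<t$ and integrating out the positions of the $n$ obstacles that hit the tagged particle (each contributing a factor proportional to the collision cross-section $(R+r)^2\sin\theta\cos\theta$ after passing to the impact angles $(\theta,\varphi)$), while the Poisson law forces the tube swept between consecutive collisions to be empty (this produces the exponential weights $e^{-U\lambda(V)(t_{j+1}-t_j)}$), one obtains exactly the $n$-th term of \eqref{eq:sol} with the operator $\mathrm{K}$ from \eqref{def:opK} — up to: (i) error terms coming from recollisions and multiple/simultaneous collisions, which are exactly what is excluded on $\Omega^1$; (ii) the restriction $V\le V_*$, $v\le$ (the history-dependent cutoffs) in the domains $\mathcal D_k$, which truncate the $v$- and $V$-integrations; and (iii) the geometric distortion coming from the finite radius $\phi^{1/3}R$ of the tagged particle versus the idealized point scatterer. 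I would estimate each of these discrepancies: the truncation errors are controlled by the tails $\int_{V_*}^\infty v^\gamma G(v)\,dv\to 0$ and the probability estimates already proved for $\bar{\mathcal U}_k$; the geometric distortion is $O(\phi^{1/3})$ uniformly on the relevant bounded range of volumes; and the number of terms that matter is $O(T/\varepsilon_0)$, so after first fixing $\delta$, then $\varepsilon_0$ small, then $\phi<\varepsilon_0^\beta$, everything is $o(1)$.

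Then I would sum the matched series: the main term of the $\Omega^1$-expansion converges, as $\phi\to0$, to $\sum_n U^n\int_{0<t_1<\dots<t_n<t}[\,e^{-U\lambda(\cdot)(t-t_n)}\mathrm{K}\cdots\mathrm{K}\,e^{-U\lambda(\cdot)t_1}\Psi_0\,](Y,V)$, which by the proposition on \eqref{eq:sol} is precisely $\Psi(Y,V,t)$. Absolute convergence of this series — uniformly in $\phi$ and in $(Y,V)$ with $V$ in the support range — follows from item i) of that proposition, namely $\|\mathrm{K}[\Psi]\|_\infty\le C(1+M^{2/3})\|\Psi\|_\infty$ together with $\lambda\geq0$, giving a bound $\sum_n (CUt(1+M^{2/3}))^n/n!<\infty$; this also legitimizes exchanging the limit with the infinite sum and gives the convergence in $C([0,T];C_b(\R^3\times[0,\infty)))$ asserted, since all estimates are uniform in $t\in[0,T]$ and the limit function is continuous.

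The main obstacle is step two: showing that the Poisson expectation restricted to $\Omega^1$ really reproduces the Duhamel series term by term with an error that vanishes uniformly. This is the heart of Gallavotti's argument for the Lorentz gas, and the extra difficulty here — as the authors emphasize — is that the tagged particle grows, so the ``cross-section'' $(R+r)^2$ and the free-flight rate $\lambda(V)$ change along the trajectory and the collision tube $\mathcal D_*$ must be taken large enough to contain the whole history; one must therefore track the volume growth (using $V_k=V_0+\sum v_j$ and the cutoffs $v_j\le(1+\theta(k,\omega))^{(1+\delta_3)/\gamma}$ from the definition of $\mathcal U_k$) and the center-of-mass displacement (Lemma \ref{lem:unionDj}, Appendix \ref{appendix1}) simultaneously, and check that the exceptional events — overlapping collision cylinders, interferences between $\mathcal D_k$ at distant times, obstacles too big — are precisely the ones whose probability was bounded in Proposition \ref{prop:1} and Proposition \ref{prop:goodcfg}. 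Once the bookkeeping of these error terms is organized cylinder-by-cylinder as in Section \ref{ssec:goodconf}, the rest is the routine Gallavotti estimate.
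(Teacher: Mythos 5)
Your proposal follows essentially the same route as the paper's proof: the split over the good set $\Omega^1$ with the bad part bounded by $\|\Psi_0\|_\infty\delta$, the integration over the external (non-colliding) obstacles to produce the tube-emptiness exponentials, the Gallavotti change of variables $x_1,\dots,x_n\to t_1,\beta_1,\dots,t_n,\beta_n$ yielding the kernels $k(V_i,v_i,\theta_i)$ and the weights $e^{-U\lambda(V_i)(t_{i+1}-t_i)}$ up to a tube-approximation error (the paper's $\mathcal{E}$, with $|\mathcal{E}|\leq Cn\,\phi((V(t))^{2/3}u^{1/3}+u)$), and finally dominated convergence to the Duhamel series \eqref{eq:sol}. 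The plan is correct and matches the paper's argument in structure and in all the key estimates.
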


 \begin{proof}

We introduce
$$\chi_{\text{good}}(\omega):= \chi(\{ \omega=\{(\mbf{y}_{k},\mbf{w}_{k})\}_{k\in\N} \in \Omega^1 \}) $$
where $\chi(\{\cdot\})$ denotes the characteristic function of the set $\{\cdot\}$ and $\Omega^1={\Omega}^1(\ep_0,\phi,T,Y,V)$ is the set of good configurations defined 
in Section \ref{ssec:goodconf}. Hence 
$$\Psi_\phi(Y,V,t)=\EE_{\mu_{\phi}}[\chi_{\text{good}}(\omega)\Psi_0(\Pi\big[T^{t}(Y,V;\omega)\big])]+\EE_{\mu_{\phi}}[\chi_{\text{bad}}(\omega)\Psi_0(\Pi\big[T^{t}(Y,V;\omega)\big])]$$
with 
\begin{equation}
\begin{split}\label{eq:error1}
\EE_{\mu_{\phi}}[\chi_{\text{bad}}(\omega)\Psi_0(\Pi\big[T^{t}(Y,V;\omega)\big])]&=\int_{\Omega}\chi_{\text{bad}}(\omega)\Psi_0(\Pi\big[T^{t}(Y,V;\omega)\big])d\mu(\omega)\\&
\leq \|\Psi_0\|_{\infty}\int_{\Omega^2}d\mu(\omega)\\&\leq\|\Psi_0\|_{\infty}\delta,
\end{split}
\end{equation}
where $\Omega^2={\Omega}^2(\ep_0,\phi,T,Y,V)$ is defined in Section \ref{ssec:goodconf} and $\delta>0$ can be chosen arbitrary small if $\ep_0$ and $\phi$ are small as in Theorem \ref{th:goodconf}.  Consequently, we can restrict to $\EE_{\mu_{\phi}}[\chi_{\text{good}}(\omega)\Psi_0(\Pi\big[T^{t}(Y,V;\omega)\big])]$.

We note that $\chi_{\text{good}}=1$ if the particle belongs to the set of good configurations. 
In particular if $\omega$ is in the set of the good configurations $\Omega^1$ it follows from Lemma \ref{lem:unionDj} that a scatterer outside any ball $\B$ containing $\mathcal{D}_{*}$, defined by \eqref{def:collcyl}, does not influence the dynamics of the tagged particle. Then we can write explicitly the expectation value in the equation above as
\begin{equation}\label{eq:psi_g1}
\begin{split}
\EE_{\mu_{\phi}}[\chi_{\text{good}}(\omega)\Psi_0(\Pi\big[T^{t}(Y,V;\omega)\big])]=&\,e^{-|\B|}\sum_{N\geq 0}\frac{1}{N!}\int_{\B^N}d\mbf{y}_{N}\,\int_{I^N} d\mbf{w}_{N}G(\mbf{w}_N) \chi_{\text{good}}(\omega)\\&
\times \chi(\{\omega\in\Omega\,:\, \#\omega_{|_{\B}}=n\})\,\Psi_0(\Pi\big[T^{t}(Y,V;\omega)\big]),
\end{split}
\end{equation}
where $\B:=\B(0,R)$ for $R>0$, 
$I:=[0,\infty)$ and 
 $\omega_{|_{\B}}=\{(\mbf{y}_{N},\mbf{w}_{N})\}$ with $\mbf{y}_{N}=y_1,\dots ,y_N$, $\mbf{w}_{N}=w_1,\dots,w_N$ is the set of configurations of obstacles in $\B$. Moreover, in \eqref{eq:psi_g1}, due to the characteristic function $\chi(\{\omega\in\Omega\,:\, \#\omega_{|_{\B}}=n\})$ which allows to consider exactly $n$ obstacles, we have
$$T^{t}(Y,V;\omega)=T^{t}(Y,V;\omega_{|_{\B}})\quad \forall \omega\in D_{n}(\B):=\{\omega\in\Omega\,:\, \#\omega_{|_{\B}}=n\}\subset\Omega$$ 
and $\chi_{\text{good}}(\omega)= \chi_{\text{good}}(\{(\mbf{y}_{N},\mbf{w}_{N})\})$, $\forall \omega\in D_{n}(\B)$ . Therefore \eqref{eq:psi_g1} reads as 
\begin{equation}\label{eq:psi_g2}
\begin{split}
&\EE_{\mu_{\phi}}[\chi_{\text{good}}(\omega)\Psi_0(\Pi\big[T^{t}(Y,V;\omega)\big])]\\&
=\,e^{-|\B|}\sum_{N\geq 0}\frac{1}{N!}\int_{\B^N}\!\!d\mbf{y}_{N}\!\int_{I^N} \!\!d\mbf{w}_{N}G(\mbf{w}_N)\chi_{\text{good}}(\omega_{|_{\B}}) \Psi_0(\Pi\big[T^{t}(Y,V;\omega_{|_{\B}})\big]) \\&
=e^{-|\B|}\sum_{N\geq 0}\frac{1}{N!}\int_{\B^N}\!\!d\mbf{y}_{N}\!\int_{I^N} \!\!d\mbf{w}_{N}G(\mbf{w}_N)(\chi_{\text{good}}(\omega_{|_{\B}})-1)\Psi_0(\Pi\big[T^{t}(Y,V;\omega_{|_{\B}})\big])\\&
\quad +e^{-|\B|}\sum_{N\geq 0}\frac{1}{N!}\int_{\B^N}\!\!d\mbf{y}_{N}\!\int_{I^N} \!\!d\mbf{w}_{N}G(\mbf{w}_N)\Psi_0(\Pi\big[T^{t}(Y,V;\omega_{|_{\B}})\big]).
\end{split}
\end{equation}
We look at the first sum in the r.h.s. of \eqref{eq:psi_g2} and we get 
\begin{equation}\label{eq:error_psi_g2}
\begin{split}
&e^{-|\B|}\sum_{N\geq 0}\frac{1}{N!}\int_{\B^N}\!\!d\mbf{y}_{N}\!\int_{I^N} \!\!d\mbf{w}_{N}G(\mbf{w}_N)(\chi_{\text{good}}(\omega_{|_{\B}})-1)\Psi_0(\Pi\big[T^{t}(Y,V;\omega_{|_{\B}})\big])\\&
\leq \|\Psi_0\|_{\infty}\,e^{-|\B|}\sum_{N\geq 0}\frac{|\B|^N}{N!}\frac{1}{|\B|^N}\int_{\B^N}\!\!d\mbf{y}_{N}\!\int_{I^N} \!\!d\mbf{w}_{N}G(\mbf{w}_N)(\chi_{\text{good}}(\omega_{|_{\B}})-1)\\&
\leq  \|\Psi_0\|_{\infty}\,\sum_{N\geq 0}\PP(\omega\in\Omega^2\,|\, \omega\in D_{n}(\B))\PP(\omega\in D_{n}(\B))\\&
\leq \|\Psi_0\|_{\infty}\PP(\omega\in\Omega^2)\leq \|\Psi_0\|_{\infty} \delta,
\end{split}
\end{equation}
where, in the third step, we used the law of total probability. It follows that for $\delta>0$ arbitrary small as in Theorem \ref{th:goodconf} the right hand side of \eqref{eq:error_psi_g2} tends to zero in the limit $\ep_0\to 0$. Therefore 
\begin{equation}
\begin{split}
\left\vert\EE_{\mu_{\phi}}[\Psi_0(\Pi\big[T^{t}(Y,V;\omega)\big])]-\bar{\Psi}_\phi(Y,V,t)\right\vert  &
\leq 2\|\Psi_0\|_{\infty}\delta 
\end{split}
\end{equation}
where 
$$\bar{\Psi}_\phi(Y,V,t):=e^{-|\B|}\sum_{N\geq 0}\frac{1}{N!}\int_{\B^N}\!\!d\mbf{y}_{N}\!\int_{I^N} \!\!d\mbf{w}_{N}G(\mbf{w}_N)\Psi_0(\Pi\big[T^{t}(Y,V;\omega_{|_{\B}})\big]).$$
Therefore we just need to show that $\bar{\Psi}_\phi$ converges, in the limit $\phi\to 0$, to the solution of \eqref{eq:BKScaled}.

We now distinguish the obstacles of the configuration $\{(\mbf{y}_{N},\mbf{v}_{N})\}$ 
which influence the motion of the tagged particle, called \textit{internal} obstacles, and the \textit{external} ones. More precisely $y_k$ is internal if
\begin{equation}
\inf_{0\leq s\leq t} \left\{ |\Pi_1\big[T^{s}(Y,V;\omega_{|_{\B}})\big]-(y_k-Us)| \leq V(s;\omega_{|_{\B}})^{\frac 1 3}+w_k ^{\frac 1 3}\right\}
\end{equation}
while $y_k$ is external if 
\begin{equation}
\inf_{0\leq s\leq t} \left\{ |\Pi_1\big[T^{s}(Y,V;\omega_{|_{\B}})\big]-(y_k-Us)| > V(s;\omega_{|_{\B}})^{\frac 1 3}+w_k ^{\frac 1 3}\right\}.
\end{equation}
Here  $\Pi\big[T^{t}(Y,V;\omega_{|_{\B}})\big]=(Y^{t}(Y_0,V_0;\omega_{|_{\B}}),V^{t}(Y_0,V_0;\omega_{|_{\B}}))$ is the projector of the flow defined in Section \ref{ssec:PM} and $\Pi_1\big[T^{t}(Y,V;\omega_{|_{\B}})\big]$ denotes the first component. 
A given configuration $\{(\mbf{y}_{N},\mbf{w}_{N})\}$ of $\B^N\times I^N$ can be decomposed as 
$\{(\mbf{y}_{N},\mbf{w}_{N})\}=\{(\mbf{x}_{n},\mbf{v}_{n})\}\cup \{(\mbf{z}_{m},\mbf{u}_{m})\} $
where $ \{(\mbf{z}_{m},\mbf{u}_{m})\} $ is the set of all external obstacles and $\{(\mbf{x}_{n},\mbf{v}_{n})\}$ is the set of all internal ones.

The integration over the external obstacles can be performed explicitly. In fact, we have
\begin{equation}
\begin{split}\label{eq:intext}
\bar{\Psi}_\phi(Y,V,t)&=e^{-|\B|}\sum_{n\geq 0}\frac{1}{n!}\int_{\B^n}d\mbf{y}_{n}\,\int_{I^n} d\mbf{v}_{n}\,G(\mbf{v}_n)\,\sum_{m\geq 0} \frac{1}{m!}\int_{\B^m}d\mbf{y}_{m}\,\int_{I^m} d\mbf{u}_{m}\,G(\mbf{u}_m)\\&
\quad \times \chi(\{\text{the}\;\{(\mbf{x}_{n},\mbf{v}_{n})\}\;\text{are}\;\text{internal}\;\text{and}\; \text{the}\;\{(\mbf{z}_{m},\mbf{u}_{m})\}\;\text{are}\;\text{external}\})\\&
\quad \times \Psi_0(\Pi\big[T^{t}(Y,V;\omega_{|_{\B}})\big]).
\end{split}
\end{equation}

We characterize the evolution of the tagged particle by means of the following sequence 
\begin{equation}\label{def:evtagpar}
(Y_0,V_0),(Y_1^{-},V_0),(Y_1^{+},V_1), (Y_2^{-},V_1), (Y_2^{+},V_2), \dots, (Y_k^{-},V_{k-1}), (Y_{k}^{+},V_{k}) ,\dots
\end{equation}
where we denote by $Y_{k}^{-}\in \R^3$ the position of the tagged particle after the free flight before the coalescence and by $Y_{k}^{+}\in \R^3$ the position of the tagged particle immediately after the coalescence. Notice that the values of  $(Y_k^{-},Y_{k}^{+},V_{k})$ are computed according to the evolution for the CTP model given in Section \ref{ssec:PM}.
We will denote the cylinder of radius $R$ and height $l$ by
 \begin{equation}\label{def:cyltube}
 C(R,l):=\{x\in\R^3: \,0\leq x_1\leq l, \; 0\leq (x_2^2+x_3^2)^{\frac{1}{2}}\leq R\}.
\end{equation}
We then define the tube 
$\T_t(\{(\mbf{x}_{n},\mbf{v}_{n})\})=\T_t(\{(\mbf{x}_{n},\mbf{v}_{n})\}; (Y;V)) \subseteq\R^3$ up to time $t$, of varying width around the trajectory of the tagged particle (in the space of positions). This is obtained moving the tagged particle with speed $U$ in the horizontal component and coalescing it with any colliding obstacle as in \eqref{def:merg_dyn} up to time $t$. More precisely
\begin{equation}\label{def:uniontubes}
\T_t(\{(\mbf{x}_{n},\mbf{v}_{n})\})=\bigcup_{j=0}^{n-1}\left[{ Y_{j}^{+}+C((\frac{3}{4\pi}V_j)^{\frac{1}{3}}, (Y_{j+1}^{-}-Y_{j}^{+})\cdot e_1)}\right]\cup \left[{ Y_{n}^{+}+C((\frac{3}{4\pi}V_n)^{\frac{1}{3}}, (Y(t)-Y_{n}^{+})\cdot e_1)}\right].
\end{equation}
Here we used the convention that $Y_0^{+}=Y_0$. We recall that $n$ is the number of collisions up to the time $t$.
 Notice that
\begin{equation}\label{def:tube}
\{(\mbf{z}_{m},\mbf{u}_{m})\} \; \text{are}\;\text{external}\;\,\text{iff} \;\, \text{dist}(\mbf{z}_{m},\T_t(\{(\mbf{x}_{n},\mbf{v}_{n})\}))> \left(\frac{3\mbf{u}_{m}}{4\pi}\right)^{\frac 1 3}.
\end{equation}
Thus, from \eqref{eq:intext} we obtain that
\begin{equation}
\begin{split}\label{eq:int1}
&e^{-|\B|}\sum_{n\geq 0}\frac{1}{n!}\int_{\B^n}d\mbf{x}_{n}\,\int_{I^n} d\mbf{v}_{n}\,G(\mbf{v}_n)\, \chi(\{\{(\mbf{x}_{n},\mbf{v}_{n})\}\;\text{int.}\})\,  \Psi_0(\Pi\big[T^{t}(Y,V;\omega_{|_{\B}})\big]) \\&
\quad\times \sum_{m\geq 0} \frac{1}{m!}\int_{I^m} d\mbf{u}_{m}\,G(\mbf{u}_m) \int_{\B^m} \!d\mbf{z}_{m} \chi(\{\mbf{z}_{m}\,\text{s.t.}\,\text{dist}(\mbf{z}_{m},\T_t(\{(\mbf{x}_{n},\mbf{v}_{n})\}))> \left(\frac{3\mbf{u}_{m}}{4\pi}\right)^{\frac 1 3}\})
\end{split}
\end{equation}
and for any $m$ 
the characteristic function factorizes as
$$\chi(\{\mbf{z}_{m}\,\text{s.t.}\,\text{dist}(\mbf{z}_{m},\T_t(\{(\mbf{x}_{n},\mbf{v}_{n})\}))> \mbf{u}_{m}^{1/3}\})=\prod_{k=1}^{m}\chi(\{z_{k}\,\text{s.t.}\,\text{dist}(z_{k},\T_t(\{(\mbf{x}_{n},\mbf{v}_{n})\}))> \left(\frac{3u_k}{4\pi}\right)^{\frac 1 3}\})$$
so that \eqref{eq:int1} becomes
\begin{equation}
\begin{split}\label{eq:int2}
&e^{-|\B|}\sum_{n\geq 0}\frac{1}{n!}\int_{\B^n}d\mbf{x}_{n}\,\int_{I^n} d\mbf{v}_{n}\,G(\mbf{v}_n)\, \chi(\{\{(\mbf{x}_{n},\mbf{v}_{n})\}\;\text{int.}\})\, \Psi_0(\Pi\big[T^{t}(Y,V;\omega_{|_{\B}})\big])\\&
\quad\times \sum_{m\geq 0} \frac{1}{m!}\int_{I^m} d\mbf{u}_{m}\,G(\mbf{u}_m) \int_{\B^m} d\mbf{z}_{m}\prod_{k=1}^{m}\chi(\{z_{k}\,\text{s.t.}\,\text{dist}(z_{k},\T_t(\{(\mbf{x}_{n},\mbf{v}_{n})\}))> \left(\frac{3u_k}{4\pi}\right)^{\frac 1 3}\})\\&
=e^{-|\B|}\sum_{n\geq 0}\frac{1}{n!}\int_{\B^n}d\mbf{x}_{n}\,\int_{I^n} d\mbf{v}_{n}\,G(\mbf{v}_n)\, \chi(\{\{(\mbf{x}_{n},\mbf{v}_{n})\}\;\text{int.}\})\, \Psi_0(\Pi\big[T^{t}(Y,V;\omega_{|_{\B}})\big])\\&
 \quad\times \sum_{m\geq 0} \frac{1}{m!}\int_{I^m} d\mbf{u}_{m}\,G(\mbf{u}_m) \prod_{k=1}^{m} | \{z_{k}\,\text{s.t.}\,\text{dist}(z_{k},\T_t(\{(\mbf{x}_{n},\mbf{v}_{n})\}))> \left(\frac{3u_k}{4\pi}\right)^{\frac 1 3}\} | \\&
 =e^{-|\B|}\sum_{n\geq 0}\frac{1}{n!}\int_{\B^n}d\mbf{x}_{n}\,\int_{I^n} d\mbf{v}_{n}\,G(\mbf{v}_n)\, \chi(\{\{(\mbf{x}_{n},\mbf{v}_{n})\}\;\text{int.}\})\, \Psi_0(\Pi\big[T^{t}(Y,V;\omega_{|_{\B}})\big])\\&
 \quad\times \sum_{m\geq 0} \frac{1}{m!} \left(\int_{0}^{\infty}\!du\,G(u)| \{z\;\text{s.t.}\,\text{dist}(z,\T_t(\{(\mbf{x}_{n},\mbf{v}_{n})\}))> \left(\frac{3u}{4\pi}\right)^{\frac 1 3}\} |\right)^m \\&
 =\sum_{n\geq 0}\frac{1}{n!}\int_{\B^n}d\mbf{x}_{n}\,\int_{I^n} d\mbf{v}_{n}\,G(\mbf{v}_n)\, e^{-\int_{0}^{\infty}\!du\,G(u)| \{z\;\text{s.t.}\,\text{dist}(z,\T_t(\{(\mbf{x}_{n},\mbf{v}_{n})\}))\leq \left(\frac{3u}{4\pi}\right)^{\frac 1 3}\} |}\\&
 \quad \times \chi(\{\{(\mbf{x}_{n},\mbf{v}_{n})\}\;\text{int.}\})\, \Psi_0(\Pi\big[T^{t}(Y,V;\omega_{|_{\B}})\big]).
\end{split}
\end{equation}
Note that in the last step we used that $\int_{0}^{\infty}du\,G(u)=1$.  We order the obstacles according to the collision sequence, i.e. $x_i$ is collided before $x_j$ if $i<j$.
Therefore we can rewrite the decomposition for the tube \eqref{def:uniontubes} as 
$$\T_t(\{(\mbf{x}_{n},\mbf{v}_{n}\})=\bigcup_{i=0}^{n} \T_{[t_i,t_{i+1}]}(\{(\mbf{x}_{n},\mbf{v}_{n})\})$$
where $ \T_{[t_i,t_{i+1}]}(\{(\mbf{x}_{n},\mbf{v}_{n})\})$ is the tube in the time interval $[t_i,t_{i+1}]$ for all $i=1,\dots,n$, namely 
\begin{equation}
 \T_{[t_i,t_{i+1}]}(\{(\mbf{x}_{n},\mbf{v}_{n})\})=\left[{ Y_{i}^{+}+C((\frac{3}{4\pi}V_i)^{\frac{1}{3}}, (Y_{i+1}^{-}-Y_{i}^{+})\cdot e_1)}\right]
\end{equation}
with the convention that $t_0=0$ and $t_{n+1}=t$.
We consider 
\begin{equation}\label{eq:appr_tube}
\mathcal{E}:= | \{z\;\text{s.t.}\,\text{dist}(z,\T_t(\{(\mbf{x}_{n},\mbf{v}_{n})\}))\leq \left(\frac{3u}{4\pi}\right)^{1/3}\}| - \sum_{i=0}^{n}\vert \T^{(u)}_{[t_i,t_{i+1}]}(\{(\mbf{x}_{n},\mbf{v}_{n})\}) \vert   
\end{equation}
where 
$$\T^{(u)}_{[t_i,t_{i+1}]}(\{(\mbf{x}_{n},\mbf{v}_{n})\})=\left[{ Y_{i}^{+}+C( \big(\frac{3}{4\pi}\big)^{1/3}(u^{\frac{1}{3}}+V_i^{\frac{1}{3}}), (Y_{i+1}^{-}-Y_{i}^{+})\cdot e_1)}\right].$$
We then have that 
\begin{equation}\label{eq:errortube}
\vert \mathcal{E} \vert \leq Cn\,\phi ((V(t))^{\frac 2 3}u^{\frac 1 3} +u).
\end{equation}
Moreover 
\begin{equation*}
 \displaystyle  e^{-\int_{0}^{\infty}\!du\,G(u)| \{z\;\text{s.t.}\,\text{dist}(z,\T(\{(\mbf{x}_{n},\mbf{v}_{n})\}))\leq\left( \frac{3u}{4\pi}\right)^{1/3}\} |} =e^{-\sum_{i=0}^{n}\int_{0}^{\infty}\!du\,G(u) | \T^{(u)}_{[t_i,t_{i+1}]}(\{(\mbf{x}_{n},\mbf{v}_{n}\})|} e^{-\int_{0}^{\infty}\!du\,G(u)\mathcal{E}(u)}.
 \end{equation*}
 Then we get
 \begin{equation}
\begin{split}\label{eq:barpsi2}
\bar\Psi_\phi(Y,V,t)&=\sum_{n\geq 0}\frac{1}{n!}\int_{\B^n}d\mbf{x}_{n}\,\int_{I^n} d\mbf{v}_{n}\,G(\mbf{v}_n)\, e^{-\sum_{i=0}^{n}\int_{0}^{\infty}\!du\,G(u) | \T^{(u)}_{[t_i,t_{i+1}]}(\{(\mbf{x}_{n},\mbf{v}_{n}\})|}e^{-\int_{0}^{\infty}\!du\,G(u)\mathcal{E}(u)}  \\&
 \quad\times\chi(\{\{(\mbf{x}_{n},\mbf{v}_{n})\}\;\text{int.}\})\, \Psi_0(\Pi\big[T^{t}(Y,V;\omega_{|_{\B}})\big]). 
\end{split}
\end{equation}

Following the original idea of Gallavotti (see \cite{G}) we perform the following change of variables
\begin{align}\label{change var}
& 0\leq t_1<t_2<\dots<t_n\leq t\\\nonumber
& x_1,\dots,x_n\longrightarrow t_1,\beta_1,\dots, t_n,\beta_n 
\end{align}
where $\beta_i=\beta_i(\theta_i,\varphi_I)$ and $t_i$ are the ``collision parameters" and the entrance time of the tagged particle in the protection disk around $x_i$. Therefore we can construct a trajectory $S$ (for the tagged particle) in each time interval $[t_{i},t_{i+1}]$ for $i=0,1,\dots,n$ as we did in Section \ref{ssec:PM} (see items 1) and 2) ). More precisely we notice that the trajectory can be obtained as follows.
For $s\in(t_{i}^{+},t_{i+1}^{-})$ we have the free motion 
\begin{equation*}
\Pi\big[S^{s}(Y_0,V_0;\{t_j,\beta_j,v_j\})\big] =\Big(Y_0,V_0\Big)
  \quad \text{if}\quad \inf_{k\in\N}\{|{x}_k-Us-Y_0|-(V^{\frac 1 3}+v_k^{\frac 1 3})\}>0
\end{equation*}
 with the ``jump condition", at time $t_{i+1}$,
 \begin{equation}
 \begin{split}\label{def:flow2}
& \Pi\big[S^{t_{i+1}^{+}}(Y_0,V_0;\{t_j,\beta_j,v_j\})\big]=\left(Y^{t_{i+1}^{-}}(Y_0,V_0)+\frac{v_i (V_i^{\frac 1 3}+v_i^{\frac 1 3})n(\theta,\varphi)}{V^{t_{i+1}^{-}}(Y_0,V_0)+v_i},V^{t_{i+1}^{-}}(Y_0,V_0)+v_i\right) \quad\\
& \text{if}\qquad \quad  \inf_{k\in\N}\{|x_k-Ut_{i+1}^{-}-Y_0|-(V^{\frac 1 3}+v_k^{\frac 1 3})\}=|x_i-Ut_{i+1}^{-} -Y_0|-(V^{\frac 1 3}+v_i^{\frac 1 3})=0\\
& \text{and}\quad \inf_{k\in\N\setminus \{i\}}\{|x_k-Ut_{i+1}^{-}-Y_0|-(V^{\frac 1 3}+v_k^{\frac 1 3})\}>0.
 \end{split}
  \end{equation}
 Note that here we used the reference frame in which the obstacles are moving and the tagged particle is fixed as we already discussed in Section \ref{ssec:PM}.

We remark that, since we restricted our analysis to the set of good configurations of obstacles, we do not need the definition of the flow with multiple collisions because we are not considering them and $S^{t}(Y_0,V_0;\{t_j,\beta_j,v_j\})\equiv T^{t}(Y_0,V_0;\omega_{|_{\B}})$ for $\omega\in\Omega^{1}$.
Moreover, due to the fact that we are outside the set of bad configurations of obstacles we observe that the map \eqref{change var} is one-to-one and 
$$\frac{dx_1\dots dx_n}{n!}=U^n \chi\big(0\leq t_1<t_2<\dots<t_n\leq t\big)dt_1 d\beta_1\, \bar k(V_1,v_1,\theta_1)\dots dt_n  d\beta_n \,\bar k(V_n,v_n,\theta_n)$$
where $\bar k(V_i,v_i,\theta_i)=(V_i^{\frac 1 3}+v_i^{\frac 1 3})^2 \sin\theta_i \cos\theta_i$ and $d\beta_i=d\theta_i d\varphi_i$.

By performing the change of variables described above, equation \eqref{eq:barpsi2} can be written as
\begin{equation}
\begin{split}\label{eq:change_vr}
\bar{\Psi}_\phi(Y,V,t)&=\sum_{n\geq 0}\,  U^n\int_{0}^{t} dt_{n} \int d\beta_n dv_n e^{-\int_{0}^{\infty}{du\,G(u)(|\T^{(u)}_{[t_n,t]}(\{(\mbf{x}_{n},\mbf{v}_{n})\})|)}} k(V_n,v_n,\theta_n)  \dots\\&
\quad \times \int_{0}^{t_2} dt_1 \int d\beta_1 dv_1 e^{-\int_{0}^{\infty}{du\,G(u)(|\T^{(u)}_{[0,t_1]}(\{(\mbf{x}_{n},\mbf{v}_{n})\})|)}}e^{-\int_{0}^{\infty}\!du\,G(u)\mathcal{E}(u)}k(V_1,v_1,\theta_1)\\&
\quad \times \Psi_0(Y_1,V_1),
\end{split}
\end{equation}
where $k(V_i,v_i,\theta_i)$ is the transition kernel defined by \eqref{eq:transkernel}, namely $k(V_i,v_i,\theta_i)=G(v_i)(V_i^{\frac 1 3}+v_i^{\frac 1 3})^2 \sin\theta_i \cos\theta_i$.
Moreover, for $i=0,1,\dots,n,$ we have 
\begin{equation*}
\begin{split}
\int_{0}^{\infty}{du\,G(u)(|\T^{(u)}_{[t_i,t_{i+1}]}(\{(\mbf{x}_{n},\mbf{v}_{n})\})|)}&=\int_{0}^{\infty}{du\,G(u)[\pi(V_i^{\frac 1 3}+u^{\frac 1 3})^2U(t_{i+1}-t_{i})]}\\&=U\lambda(V_i)(t_{i+1}-t_{i}). 
\end{split}
\end{equation*}
Then
\begin{equation}
\begin{split}\label{eq:change_vr2}
\bar{\Psi}_\phi(Y,V,t)&=\sum_{n\geq 0}\,U^n \int_{0}^{t} dt_{n} \int d\beta_n dv_n  e^{- U\lambda(V_n)(t-t_{n})}k(V_n,v_n,\theta_n) \dots\\&
\quad  \int_{0}^{t_2} dt_1 \int d\beta_1 dv_1 e^{-U\lambda (V_1) t_1}e^{-\int_{0}^{\infty}\!du\,G(u)\mathcal{E}(u)}k(V_1,v_1,\theta_1)
\Psi_0(Y_1,V_1).
\end{split}
\end{equation}
Using equation \eqref{eq:errortube} and the fact that the initial datum $\Psi_0$ is compactly supported, 
Lebesgue's convergence Theorem implies that  the r.h.s. of \eqref{eq:change_vr2}  converges in the limit $\phi\to 0$ to the r.h.s. of \eqref{eq:sol}. Therefore $\lim_{\phi\to 0}\bar{\Psi}_\phi(Y,V,t)$ solves \eqref{eq:scaled_eq} and the result follows.
\end{proof}

\subsubsection{Conclusion of the Proof of Theorem \ref{th:theorem1}}\label{sssec:conclproof}
In the previous section we showed that 
$
\Psi_\phi(Y,V,t)
=\EE_{\mu_{\phi}}[\Psi_0(\Pi\big[T^{t}(Y,V;\omega)\big])]
$
converges to $\Psi(Y,V,t)$, the unique solution of \eqref{eq:BKScaled}, as $\Phi\to 0.$ To prove Theorem \ref{th:theorem1} we first define a measure $f$ by means of
\begin{equation}\label{def:fmeasure}
\int_{A}f(Y,V,t)dYdV=\int_{\R^{3}\times \R_{+}}f_0(Y,V)\Psi(Y,V,t)dYdV
\end{equation}
where $\Psi$ solves \eqref{eq:BKScaled} with $\Psi_0=\chi_{A}$ with $A$ compact subset of $ \R^3\times [0,M]$, $0<M<\infty$.

Now we observe that 
thanks to Definition \ref{def:f_phi} 
we have that
\begin{equation}\begin{split}\label{eq:exp_chiA}
\int_{A} f_{\phi}(Y,V,t)dYdV&=
\int_{\R^3	\times [0,\infty)} f_{0}(Y,V)\EE_{\mu_{\phi}}[\chi_A(\Pi\big[T^{t}(Y,V;\omega)\big])]dYdV\\&= 
\int_{\R^3\times [0,\infty)} f_0(Y,V)\Psi_\phi(Y,V,t)dYdV,
\end{split}
\end{equation}
where we used that $\EE_{\mu_{\phi}}[\chi_A(\Pi\big[T^{t}(Y,V;\omega)\big])]=\mu_\phi(\{\omega\,:\; \Pi[T_\phi^{t}(Y_0,V_0;\omega)]\in A\})$.
Moreover, Proposition \ref{th:kineticpsi} guarantees that 
\begin{equation}\label{eq:proofkinpsi}
\int_{\R^3\times [0,\infty)} f_0(Y,V)\Psi_\phi(Y,V,t)dYdV\underset{\phi\to 0}{\longrightarrow} \int_{\R^3\times [0,\infty)} f_0(Y,V)\Psi(Y,V,t)dYdV 
\end{equation}
and, using the definition of $f$ in \eqref{def:fmeasure}, the right hand side of \eqref{eq:proofkinpsi} reads as
\begin{equation}\label{eq:duality_chiA}
\int_{\R^3\times [0,\infty)} f_0(Y,V)\Psi(Y,V,t)dYdV=\int_{\R^3\times [0,\infty)} f(Y,V,t)\Psi_0(Y,V)dYdV=\int_{A} f(Y,V,t)dYdV.
\end{equation}
Therefore, from \eqref{eq:exp_chiA} and \eqref{eq:duality_chiA} we get
$$
\int_{A} f_{\phi}(Y,V,t)dYdV\underset{\phi\to 0}{\longrightarrow}\int_{A} f(Y,V,t)dYdV.
$$
In order to conclude the proof of Theorem \ref{th:theorem1} we need to show that the measure $f$ defined by \eqref{def:fmeasure} solves equation \eqref{eq:FKScaled} in the sense of Definition \ref{def:f_soleq}.
Indeed we note that if $\tilde{\Psi}\in C([0,t^{*}];L^{\infty}(\R^3\times\R^+))$ is the solution of 
\begin{equation}\label{eq:tildepsi}
\partial_t{\tilde{\Psi}}(Y,V,t)+\mathcal{C}[\tilde{\Psi}](Y,V,t)=0 \qquad 0\leq t\leq t^*,
\end{equation}
then we have 
\begin{equation}\label{eq:dualitytildepsi}
 \int_{\R^3	\times [0,\infty)}f(Y,V,t^*){\tilde{\Psi}}(Y,V,t^*)dYdV=\int_{\R^3\times [0,\infty)} f_0(Y,V){\tilde{\Psi}}(Y,V,0)dYdV.
\end{equation}
This follows from \eqref{def:fmeasure} with $\Psi(\cdot,t)={\tilde{\Psi}}(\cdot,t^*-t)$.
We want to check that the measure $f$ defined by \eqref{def:fmeasure} satisfies \eqref{eq:f_soleq}. To do this, given a test function $\Psi$ as in Definition \ref{def:f_soleq} we define $\xi(\cdot,t)$ as 
\begin{equation}\label{eq:tildepsixi}
\partial_t {\tilde{\Psi}}(Y,V,t)+\mathcal{C}[{\tilde{\Psi}}](Y,V,t)=\xi(Y,V,t). 
\end{equation}
Note that we have the following representation formula for ${\tilde{\Psi}}$, i.e.
${\tilde{\Psi}}(\cdot,t)=-\int_t^T \phi(\cdot,t,s)ds $
where $ \phi(\cdot,t,s)$ solves
\begin{equation*}
\left\{\begin{array}{ll}
\partial_t \phi(\cdot,t,s)+\mathcal{C}[\phi](\cdot,t,s)=0, \quad 0\leq t\leq s &\\
\phi(\cdot,s,s)=\xi(\cdot,s).&
\end{array}\right.
\end{equation*}
This representation formula follows from straightforward computations as well as from the uniqueness of solutions for the problem \eqref{eq:tildepsixi} in $[0,t^{*}]$ with initial data $\tilde{\Psi}(\cdot,t^*)=0$. 
We then compute
\begin{equation*}
\begin{split}
&\int_{0}^T \int_{\R^3	\times [0,\infty)} f(Y,V,t)\{\partial_t {\tilde{\Psi}}(Y,V,t)+\mathcal{C}[{\tilde{\Psi}}](Y,V,t)\} dYdVdt\\&
=\int_{0}^T dt \int_{\R^3	\times [0,\infty)}  f(Y,V,t) \xi(Y,V,t) dYdVdt=\int_{0}^T \int_{\R^3	\times [0,\infty)} f(Y,V,t) \phi(\cdot,t,t)dYdVdt \\& 
=\int_{0}^T \int_{\R^3	\times [0,\infty)} f_0(Y,V) \phi(\cdot,0,t)dYdVdt =-\int_{\R^3\times [0,\infty)} f_0(Y,V) {\tilde{\Psi}}(Y,V,0) dYdV,
\end{split}
\end{equation*}
where we used \eqref{eq:dualitytildepsi} and the representation formula for ${\tilde{\Psi}}$. Then \eqref{eq:f_soleq} follows.

\section{Analysis of the limit kinetic equation}
\subsection{Well-posedness of the kinetic equation}\label{ssec:WPKE}
\begin{theorem}
For any $f_0\in \mathcal{P}(\R^3\times [0,\infty))$ there exists a unique solution of \eqref{eq:FKScaled} in the sense of Definition \ref{def:f_soleq}
\end{theorem}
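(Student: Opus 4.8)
The plan is to reduce the statement to the well-posedness of the backward equation~\eqref{eq:BKScaled} established in Section~\ref{sec:adjeq} and to argue by duality. By the Proposition there (items i)--ii)), $\mathcal{C}$ is a bounded operator on each space $X_M$ of~\eqref{def:X_M}, so it generates the uniformly continuous semigroup $S(t)=e^{t\mathcal{C}}$ on $X_M$, whose Duhamel expansion is exactly~\eqref{eq:sol}; the same series still converges on the larger space $B_M$ of bounded measurable functions supported in $\{V\le M\}$, and for $M\le M'$ the operators on $X_M$ and $X_{M'}$ agree on $X_M$. Every term of~\eqref{eq:sol} is a positive combination of iterated applications of the positive integral operator $\text{K}$ and of multiplications by the positive factors $e^{-U\lambda(\cdot)s}$, so $S(t)$ is positivity preserving. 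Finally, since $\text{K}[\cf_{\{V\le M\}}]\le\lambda$ pointwise one has $\mathcal{C}[\cf_{\{V\le M\}}]\le 0$, hence $S(t)\cf_{\{V\le M\}}\le\cf_{\{V\le M\}}\le\cf$; and since $\mathcal{C}[\cf]=0$, dominated convergence in~\eqref{eq:sol} gives $S(t)\cf_{\{V\le M\}}\uparrow\cf$ pointwise as $M\to\infty$.

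\emph{Existence.} I would define the candidate solution by $\langle f(t),\psi\rangle:=\langle f_0,S(t)\psi\rangle$ for $\psi\in C_c(\R^3\times[0,\infty))$ (note $\psi\in X_M$ for some $M$). Positivity and linearity of $S(t)$ make $\psi\mapsto\langle f_0,S(t)\psi\rangle$ a positive linear functional on $C_c$, so by the Riesz representation theorem it is represented by a unique $f(t)\in\mathcal{M}_+(\R^3\times[0,\infty))$; the bounds just recorded and monotone convergence give $\langle f(t),\cf\rangle=\langle f_0,\cf\rangle=1$, so in fact $f(t)\in\mathcal{P}$, and norm-continuity of $t\mapsto S(t)\psi$ in $X_M$ yields $f\in C([0,T];\mathcal{M}_+)$. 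To verify the weak formulation~\eqref{eq:f_soleq}, fix $\Psi\in C_c^1([0,T)\times\R^3\times[0,\infty))$ with $\supp\Psi(\cdot,t)\subset B_\rho(0)\times[0,M]$. The key elementary remark is that on this support the obstacle volume in the gain term of $\mathcal{C}$ satisfies $v\le M$, so the jump $l=\tfrac{v}{V+v}(R+r)\,n(\theta,\varphi)$ obeys $|l|\le 2\sigma M^{1/3}$; hence $\mathcal{C}[\Psi(\cdot,t)]\in C_c$ with $V$-support still inside $[0,M]$, and $t\mapsto\Psi(\cdot,t)$ is a $C^1$ curve, $t\mapsto\partial_t\Psi(\cdot,t)+\mathcal{C}[\Psi(\cdot,t)]$ a continuous curve, in $X_M$. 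Using $\mathcal{C}S(t)=S(t)\mathcal{C}$ one computes
\[
\tfrac{d}{dt}\bigl(S(t)\Psi(\cdot,t)\bigr)=S(t)\bigl(\partial_t\Psi(\cdot,t)+\mathcal{C}[\Psi(\cdot,t)]\bigr),
\]
and integrating over $[0,T]$ with $\Psi(\cdot,T)=0$ gives $\int_0^T S(t)(\partial_t\Psi+\mathcal{C}[\Psi])\,dt=-\Psi(\cdot,0)$ in $X_M$; pairing with $f_0$ and recalling $\langle f(t),h\rangle=\langle f_0,S(t)h\rangle$ for $h\in C_c$ produces exactly~\eqref{eq:f_soleq}.

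\emph{Uniqueness.} Let $f^{(1)},f^{(2)}$ be two solutions with the same $f_0$ and set $g:=f^{(1)}-f^{(2)}$; testing~\eqref{eq:f_soleq} against $\Psi$ supported near $t=0$ shows $f^{(1)}(0)=f^{(2)}(0)=f_0$, so $g(0)=0$, and $\int_0^T\langle g(t),\partial_t\Psi+\mathcal{C}[\Psi]\rangle\,dt=0$ for every admissible $\Psi$. Fix $t^{*}\in(0,T)$ and $\psi_0\in C_c$ with $\supp\psi_0\subset B_\rho(0)\times[0,M]$. The natural test object $\Xi(\cdot,t):=S(t^{*}-t)\psi_0$ solves $\partial_t\Xi+\mathcal{C}[\Xi]=0$ on $[0,t^{*}]$ with $\Xi(\cdot,t^{*})=\psi_0$, but it is not compactly supported in $Y$, so I would instead test against $\tilde\Psi(Y,V,t):=\eta(t)\,\chi_R(Y)\,\Xi(Y,V,t)$, with $\eta\in C^\infty$ equal to $1$ on $[0,t^{*}-\delta]$ and supported in $[0,t^{*})$, and $\chi_R$ a smooth cutoff equal to $1$ on $\{|Y|\le R\}$. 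Its residual is $\eta(t)[\mathcal{C},\chi_R]\Xi(\cdot,t)+\eta'(t)\chi_R\Xi(\cdot,t)$, where the commutator is supported in the collar $\{R-2\sigma M^{1/3}\le|Y|\le 2R+2\sigma M^{1/3}\}\times[0,M]$ (again thanks to $v\le M$) and is uniformly bounded. Plugging $\tilde\Psi$ into the identity for $g$ and letting $R\to\infty$ kills the commutator term, because the $g$-mass of the collar tends to $0$, and leaves $\int_{t^{*}-\delta}^{t^{*}}\eta'(t)\langle g(t),S(t^{*}-t)\psi_0\rangle\,dt=0$; letting finally $\delta\to 0$ and using continuity of $t\mapsto g(t)$ gives $\langle g(t^{*}),\psi_0\rangle=0$, whence $g\equiv 0$.

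The step I expect to be the main obstacle is precisely making the collar term disappear: this requires an a priori bound guaranteeing that a weak solution has finite mass on each slab $\{V\le M\}$ and that this mass is tight in $Y$, so that it escapes any fixed ball as $R\to\infty$. I would derive it from the decay of $G$ (finiteness of $\int_0^\infty G(v)(1+v^{2/3})\,dv$, already needed for $\mathcal{C}$ to be well defined) by running the supersolution computation above on $\cf_{\{V\le M\}}\chi_R$ and closing a Gronwall inequality in the truncation radius $R$; alternatively one phrases uniqueness within $C([0,T];\mathcal{P})$, the class containing both the solution constructed here and the one coming from the particle system in Definition~\ref{def:f_phi}. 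Everything else is routine manipulation of the bounded generator $\mathcal{C}$ on the scale of spaces $X_M$.
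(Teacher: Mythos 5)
Your argument is essentially the paper's: existence by defining $f(t)$ through duality with the backward semigroup $S(t)$ generated by $\mathcal{C}$ on the spaces $X_M$ and verifying the weak formulation by the Duhamel/representation computation, and uniqueness by testing against solutions of the adjoint equation, which forces the representation formula $\langle f(t),\psi_0\rangle=\langle f_0,S(t)\psi_0\rangle$. The collar issue you flag (the adjoint solution is not compactly supported in $Y$, so it is not literally an admissible test function) is passed over silently in the paper's own, very terse, proof; it is resolved exactly by your fallback, namely working with finite measures in $C([0,T];\mathcal{M}_{+}(\R^3\times[0,\infty)))$, where continuity in $t$ on the compact interval $[0,T]$ gives uniform tightness and makes the commutator/collar contribution vanish as $R\to\infty$.
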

\begin{proof}
The existence of one solution of the form \eqref{def:fmeasure} has been proved in Section \ref{sssec:conclproof}. In order to prove the uniqueness 
we consider test functions as in Definition \ref{def:f_soleq} doing the particular choice $\Psi(Y,V,t)\chi_\ep(t;T)$ with $\chi_\ep\in C^1_c([0,\infty))$ for any $\Psi \in C^1_c(\R^3\times \R^+\times [0,\infty))$ and  $\chi_\ep(t;T)\to \chi_{[0,T]}(t)$ as $\ep\to 0$. Thus 
\begin{equation}
\begin{split}
&\int_{\R^3}  \hspace{-0.1cm}\int_{ [0,\infty)} \hspace{-0.5cm}f(Y,V,T) \Psi(Y,V,T)dYdV-\int_0^T  \hspace{-0.2cm} \int_{\R^3}  \hspace{-0.1cm}\int_{ [0,\infty)}\hspace{-0.5cm}f(Y,V,t)\{\partial_t\Psi(Y,V,t)+ \mathcal{C}[\Psi](Y,V,t)\} dYdVdt \\&
=\int_{\R^3}  \hspace{-0.1cm}\int_{ [0,\infty)}  \hspace{-0.5cm} f_0(Y,V) \Psi_0(Y,V)dYdV.
\end{split}
\end{equation}
Now we solve \eqref{eq:tildepsi} with $\tilde\Psi(Y,V,t^*)=\tilde\Psi_n(Y,V) $ for $t^*=T$ and $\Psi_n\to \chi_A$ with $A$ compact subset of $\R^3\times [0,M]$ and we obtain the representation formula \eqref{def:fmeasure} which gives uniqueness for the measure $f$.
\end{proof}
\subsection{Long time behaviour for the distribution of volumes}\label{ssec:LTB}
\begin{assumption}
We assume that $G(v)$ is such that
$$\int_0^\infty G(v)v^{\frac{5}{3}+\theta} dv<\infty, \quad \theta>0.$$
\end{assumption}
We are interested in the asymptotic behaviour of \eqref{eq:FKScaled}.
More precisely, we consider the solution $f$ of \eqref{eq:FKScaled} and set $F(V,t):=\int_{\R^3}dY f(Y,V,t)$. Notice that $F$ is the average of $f$ with respect to the position variable $Y$ and satisfies
\begin{equation}\label{eq:evF}
\partial_t F(V,t)=\,
\lambda \left(\int_0^{V} \hspace{-2.5mm} dv \,G(v)((V-v)^{\frac{1}{3}}+v^{\frac{1}{3}})^2 F(V-v,t)-\int_0^{\infty}\hspace{-2.5mm} dv \,G(v)(V^{\frac{1}{3}}+v^{\frac{1}{3}})^2 F(V,t)\right)
\end{equation}
since $\int_0^{\frac{\pi}{2}}\hspace{-1.5mm}d\theta\int_0^{2\pi}d\varphi\sin\theta \cos\theta=\pi$. Here $\lambda= U\pi \left(\frac{3}{4\pi}\right)^{\frac{2}{3}}$. 
We notice that existence and uniqueness of solutions in the same sense of Definition \ref{def:f_soleq} can be obtained by similar arguments that we skip at this level.

Our goal is to prove the following Theorem.

\begin{theorem}\label{th:asymp}
Let $F\in \mathcal{P}(\R^+)$ solution of \eqref{eq:evF}. We have 
\begin{equation}
F(W t^{3},t)t^3\overset{*}{\rightharpoonup} \delta(W-a)\quad \text{as}\;\; t\to\infty \quad \text{in}\quad \mathcal{M}(\R^+)
\end{equation}
where $a=\frac{\lambda}{27}\big(\int_0^\infty v\,G(v)\,dv\big)^3$.
\end{theorem}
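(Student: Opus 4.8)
The plan is to reduce the statement to a law-of-large-numbers computation on moments. First I would pass to a moment/Laplace-transform description of the rescaled measure $F(Wt^3,t)\,t^3\,dW$. Since the equation \eqref{eq:evF} is linear and conservative ($\int_0^\infty F(V,t)\,dV$ is preserved, equal to $1$), it suffices to show that for every $p\geq 1$ the $p$-th moment $M_p(t):=\int_0^\infty W^p F(Wt^3,t)t^3\,dW = t^{-3p}\int_0^\infty V^p F(V,t)\,dV$ converges to $a^p$ as $t\to\infty$; tightness (to rule out escape of mass) follows from the $p=1$ bound together with nonnegativity. So the core task is the asymptotics of the unscaled moments $m_p(t):=\int_0^\infty V^p F(V,t)\,dV$.

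Next I would derive the evolution of $m_p(t)$ directly from \eqref{eq:evF}. Testing \eqref{eq:evF} against $V^p$ and changing variables in the gain term ($V-v\mapsto V$, so the gain contributes $\int\int (V+v)^p G(v)(V^{1/3}+v^{1/3})^2 F(V,t)\,dv\,dV$) gives
\begin{equation}\label{eq:mom-evol-plan}
\frac{d}{dt}m_p(t)=\lambda\int_0^\infty\!\!\int_0^\infty \big[(V+v)^p-V^p\big]\,G(v)\,(V^{1/3}+v^{1/3})^2\,F(V,t)\,dv\,dV.
\end{equation}
For large $t$ the mass of $F(\cdot,t)$ sits near $V\sim t^3\to\infty$, so in \eqref{eq:mom-evol-plan} one has $V\gg v$ on the bulk (here Assumption on $G$ with the $\tfrac53+\theta$ moment is exactly what controls the contribution of large $v$ and of small $V$). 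Expanding $(V+v)^p-V^p\approx pvV^{p-1}$ and $(V^{1/3}+v^{1/3})^2\approx V^{2/3}$ yields the leading behaviour $\frac{d}{dt}m_p(t)\approx \lambda p\,\bar v\,m_{p-1+2/3}(t)$ where $\bar v:=\int_0^\infty v\,G(v)\,dv$. The plan is to make this rigorous by an induction on $p$: assuming $m_q(t)\sim (a\,t^3)^q$ for $q<p$ (with $a$ to be identified), one closes the estimate because $m_{p-1+2/3}$ is controlled by interpolation between $m_{p-1}$ and $m_p$, and a Grönwall-type argument on the correction terms shows the error is lower order.

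The clean way to organize this is to guess the self-similar profile: try $m_p(t)=c_p t^{3p}(1+o(1))$. Plugging into the leading relation gives $3p\,c_p = \lambda p\,\bar v\,c_{p-2/3}$, i.e. $c_p/c_{p-2/3} = \lambda\bar v/3$; iterating along the arithmetic progression $p, p-\tfrac23, p-\tfrac43,\dots$ down to the base point and matching $c_0=1$ forces $c_p = (\lambda\bar v/3)^{3p/2}\cdot(\text{something})$ — and here one must be careful: the factor per step is $\lambda\bar v/3$ and there are $\tfrac{3p}{2}$ steps of size $\tfrac23$, giving $c_p=(\lambda\bar v/3)^{3p/2}$? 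That does not match $a^p$ with $a=\tfrac{\lambda}{27}\bar v^3$ unless one checks the exponent bookkeeping. In fact $a^p=(\lambda\bar v^3/27)^p = (\lambda^{1/3}\bar v/3)^{3p}$, so the recursion must actually advance the integer moments and the correct effective step multiplies the scale $t^3$ consistently; I would verify $c_p=a^p$ by checking the two anchor cases $p=1$ (where \eqref{eq:mom-evol-plan} gives $\frac{d}{dt}m_1 = \lambda\bar v\, m_{2/3}+\cdots$ and a bootstrap from $m_1\lesssim$ bound shows $m_1(t)\sim a t^3$) and then propagating. The main obstacle, and where the real work lies, is precisely this step: controlling the non-leading terms in \eqref{eq:mom-evol-plan} — the Taylor remainder $(V+v)^p-V^p-pvV^{p-1}$, the correction in $(V^{1/3}+v^{1/3})^2-V^{2/3}$, and the region $V=O(1)$ — uniformly enough that a Grönwall argument closes, and handling the fractional moment $m_{p-1/3}$ appearing after one step by an interpolation inequality so the induction stays within integer (or at least discrete) moments. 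Once all $M_p(t)\to a^p$, the measures $F(Wt^3,t)t^3\,dW$ are tight with all moments converging to those of $\delta_a$, hence converge weakly-$*$ to $\delta(W-a)$, which is the claim.
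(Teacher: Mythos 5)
There is a genuine gap: the method of moments cannot prove the theorem under the stated hypotheses. Your evolution equation for $m_p(t)=\int_0^\infty V^pF(V,t)\,dV$ is correct, but for $p\geq 2$ the integrand $\big[(V+v)^p-V^p\big](V^{1/3}+v^{1/3})^2G(v)$ contains the term $v^{p+2/3}G(v)$, so finiteness of $\frac{d}{dt}m_p$ requires $\int_0^\infty v^{p+2/3}G(v)\,dv<\infty$ for every $p$; the standing assumption $\int_0^\infty v^{5/3+\theta}G(v)\,dv<\infty$ covers exactly $p=1$ and nothing more. Likewise the theorem allows an arbitrary $F(\cdot,0)\in\mathcal{P}(\R^+)$, which need not have a single finite moment, and since the volume of the tagged particle only increases, a moment that is infinite initially never becomes finite. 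Your scheme therefore proves the statement only under additional, unstated hypotheses (all moments of $G$ and of $F_0$ finite). The paper avoids moments entirely by working on the dual side: it rescales the backward equation via $\xi=V/T^3$, $\tau=t/T$, shows that the rescaled generators $A_T$ converge on a core to the first-order transport operator $A_\infty=\lambda\bar v\,\xi^{2/3}\partial_\xi$ acting on bounded continuous functions, and invokes the Trotter--Kurtz theorem to get uniform convergence of the contraction semigroups; the limit $\delta(W-a)$ is then read off from the characteristics of $A_\infty$, using only bounded test functions.

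A second, smaller point, which you flagged yourself but did not resolve: the recursion for the putative coefficients $c_p$ in $m_p(t)\sim c_pt^{3p}$ steps by $\tfrac13$, not $\tfrac23$. The leading term of your moment equation is $\lambda p\bar v\,m_{p-1/3}(t)$ (since $V^{p-1}\cdot V^{2/3}=V^{p-1/3}$), so matching powers of $t$ gives $3p\,c_p=\lambda p\bar v\,c_{p-1/3}$, i.e.\ $c_p=\frac{\lambda\bar v}{3}\,c_{p-1/3}$. Iterating the $3p$ steps from $p$ down to $c_0=1$ yields $c_p=\big(\tfrac{\lambda\bar v}{3}\big)^{3p}=\big((\tfrac{\lambda\bar v}{3})^{3}\big)^{p}$, recovering the $\bar v^3/27$ structure of the stated constant and agreeing with what the characteristics of $A_\infty$ produce. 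So the bookkeeping does close, but the induction must then be run over the lattice $\tfrac13\N$ of fractional moments (or reduced to it by the interpolation you mention), on top of the Taylor-remainder, large-$v$ and small-$V$ error estimates that you correctly identify as the main technical work.
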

To prove this Theorem we will use the adjoint equation of \eqref{eq:evF} which can be obtained from \eqref{eq:BKScaled} considering test functions independent of the variable $Y$. Thus $\Psi(V,t)$  satisfies
\begin{equation}\label{eq:evPsiaverage}
\partial_t \Psi(V,t)=\,-\lambda \int_0^{\infty}\hspace{-2.5mm}dv \,G(v)(V^{\frac{1}{3}}+v^{\frac{1}{3}})^2\left[ \Psi(V+v,t)- \Psi(V,t)\right].
\end{equation}
Then the following duality formula holds:
\begin{equation}\label{duality_volumes}
\int_{ [0,\infty)} \psi(V,T) F(V,T)\,dV =\int_{ [0,\infty)} \psi(V,0) F(V,0)\, dV.
\end{equation}
We will use the following Proposition.

\begin{proposition}\label{prop:asymp}
Let be $T>0$. Let $\varphi_0\in C_c([0,\infty))$. Let $\Psi_T(V,t)$ be the solution in $0\leq t\leq T$ of \eqref{eq:evPsiaverage} in $C^1([0,\infty);L^{\infty}(\R^+))$ with final value $\Psi_T(V,T)=\varphi_0\Big(\frac{V}{T^{3}}\Big)$.
We then have that $\Psi_{T}(V,0)\to \varphi_0(a)$ as $T\to +\infty$ uniformly on the compact sets of $[0,+\infty)$ and $a$ is as in Theorem \ref{th:asymp}. Moreover $|\Psi_{T}(V,0)|\leq \|\varphi_0\|_{\infty}$.
\end{proposition}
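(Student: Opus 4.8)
The plan is to reformulate the final-value problem \eqref{eq:evPsiaverage} probabilistically and then prove a law of large numbers for the associated jump process. Reversing time, $w(V,s):=\Psi_T(V,T-s)$ solves the forward equation $\partial_s w=\mathcal L w$ with $w(\cdot,0)=\varphi_0(\cdot/T^3)$, where $\mathcal L\phi(W):=\lambda\int_0^\infty G(v)(W^{1/3}+v^{1/3})^2\big[\phi(W+v)-\phi(W)\big]\,dv$ is the generator of the non-decreasing pure-jump Markov process $(\tilde V(s))_{s\ge0}$ on $[0,\infty)$ with jump rate $q(W)=\lambda\int_0^\infty G(v)(W^{1/3}+v^{1/3})^2\,dv$ and jump-size law proportional to $G(v)(W^{1/3}+v^{1/3})^2\,dv$. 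Specializing the series representation \eqref{eq:sol} to test functions independent of $Y$ (legitimate here exactly as in the Proposition of Section \ref{sec:adjeq}, the hypothesis $\int_0^\infty G(v)v^{5/3+\theta}\,dv<\infty$ being far stronger than what is needed there, and the final-value problem being recovered by the time reversal above) identifies
\[
\Psi_T(V,0)=\EE_V\big[\varphi_0(\tilde V(T)/T^3)\big],
\]
whence $|\Psi_T(V,0)|\le\|\varphi_0\|_\infty$ at once; the assertion is thus equivalent to $\tilde V(T)/T^3\to a$ in $\PP_V$-probability as $T\to\infty$, uniformly for $V$ in compact sets.

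The engine is a pair of one-sided drift bounds. Using $\tfrac13\int_W^{W+v}t^{-2/3}\,dt\in\big[\tfrac v3(W+v)^{-2/3},\tfrac v3W^{-2/3}\big]$ together with the elementary inequality $W+v\le(W^{1/3}+v^{1/3})^3$ one obtains, for $W\ge1$,
\[
\frac{\lambda}{3}\int_0^\infty v\,G(v)\,dv\ \le\ \mathcal L[(\cdot)^{1/3}](W)\ \le\ \frac{\lambda}{3}\Big(\int_0^\infty v\,G(v)\,dv+\frac{2}{W^{1/3}}\!\int_0^\infty\! v^{4/3}G(v)\,dv+\frac{1}{W^{2/3}}\!\int_0^\infty\! v^{5/3}G(v)\,dv\Big),
\]
while a crude bound via $(W+v)^{1/3}-W^{1/3}\le v^{1/3}$ shows $\mathcal L[(\cdot)^{1/3}]$ is bounded on all of $[0,\infty)$; every moment occurring here is finite by hypothesis. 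Hence $\mathcal L[(\cdot)^{1/3}](W)=c+\varepsilon_1(W)$ with $c:=\tfrac{\lambda}{3}\int_0^\infty v\,G(v)\,dv>0$, $\varepsilon_1\ge0$ bounded and $\varepsilon_1(W)\to0$ as $W\to\infty$. An entirely analogous computation with $\phi=(\cdot)^{2/3}$ gives $\mathcal L[(\cdot)^{2/3}](W)=2c\,W^{1/3}+r(W)$ with $|r(W)|\le\varepsilon_2(W)W^{1/3}+C$, where $\varepsilon_2$ is bounded and vanishes at infinity (the lower bound here uses that $\int_{W^{1/2}}^\infty v\,G(v)\,dv\to0$). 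The crucial structural point is that the \emph{same} constant $c$ controls both; this is what will make the variance of $\tilde V^{1/3}(T)$ subleading, and it identifies $a=c^3$ with the constant in Theorem \ref{th:asymp}.

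Next I would run Dynkin's formula for these polynomially growing test functions, justified by localisation at $\{\tilde V=n\}$ plus monotone convergence; non-explosion follows from the a priori bound $\EE_V[\tilde V^{1/3}(t)]\le V^{1/3}+(c+\|\varepsilon_1\|_\infty)t$ (already a byproduct of the boundedness of $\mathcal L[(\cdot)^{1/3}]$). The lower drift bound gives $\EE_V[\tilde V^{1/3}(t)]\ge V^{1/3}+ct$, and, combined with the monotonicity of $\tilde V$, a uniform bound on the mean first-passage time: for each level $W_0$, optional stopping applied to the Dynkin martingale $\tilde V^{1/3}-\int_0^{\cdot}\mathcal L[(\cdot)^{1/3}](\tilde V)\,ds$ together with a finite overshoot moment (finite since $\int v\,G(v)\,dv<\infty$) yields $\EE_V[\tau_{W_0}]\le C(W_0)/c$, uniformly for $V$ in compacts. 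Splitting $\varepsilon_1(\tilde V(s))\le\eta+\|\varepsilon_1\|_\infty\mathbf 1_{\{\tilde V(s)<W_\eta\}}$ and using $\int_0^T\PP_V(\tilde V(s)<W_\eta)\,ds=\EE_V[\tau_{W_\eta}\wedge T]\le\EE_V[\tau_{W_\eta}]$ then forces $\tfrac1T\int_0^T\EE_V[\varepsilon_1(\tilde V(s))]\,ds\to0$, hence $\EE_V[\tilde V^{1/3}(T)]=cT+o(T)$ uniformly on compacts. Feeding this into $\tfrac{d}{ds}\EE_V[\tilde V^{2/3}(s)]=2c\,\EE_V[\tilde V^{1/3}(s)]+\EE_V[r(\tilde V(s))]$, and estimating $\EE_V[r(\tilde V(s))]$ by Cauchy--Schwarz (using $\EE_V[\varepsilon_2(\tilde V(s))^2]\to0$, again from the passage-time bound, together with $\EE_V[\tilde V^{2/3}(s)]=O(s^2)$) gives $\EE_V[\tilde V^{2/3}(T)]=c^2T^2+o(T^2)$. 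Therefore $\mathrm{Var}_V(\tilde V^{1/3}(T))=\EE_V[\tilde V^{2/3}(T)]-\big(\EE_V[\tilde V^{1/3}(T)]\big)^2=o(T^2)$, uniformly on compacts.

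Finally, Chebyshev's inequality applied to the last two displays yields $\tilde V^{1/3}(T)/T\to c$, hence $\tilde V(T)/T^3\to c^3=a$, in $\PP_V$-probability, uniformly for $V$ in any $[0,K]$; since $\varphi_0$ is bounded and continuous, $\big|\EE_V[\varphi_0(\tilde V(T)/T^3)]-\varphi_0(a)\big|\le\omega_{\varphi_0}(\delta)+2\|\varphi_0\|_\infty\PP_V\!\big(|\tilde V(T)/T^3-a|>\delta\big)$ for every $\delta>0$, and choosing $\delta$ small and then $T$ large makes the right-hand side uniformly small on $[0,K]$, which is exactly $\Psi_T(V,0)\to\varphi_0(a)$ uniformly on compacts. \textbf{The hardest part} is the uniform control of the error terms through the time integration---precisely the step $\tfrac1T\int_0^T\EE_V[\varepsilon_i(\tilde V(s))]\,ds\to0$---which rests on marrying the monotonicity of $\tilde V$ to the first-passage-time estimate; once this is secured, the cancellation $a^{1/3}=c$ that makes $\mathrm{Var}_V(\tilde V^{1/3}(T))=o(T^2)$ does the rest.
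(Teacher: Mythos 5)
Your proof is correct, but it takes a genuinely different route from the paper. The paper rescales $\xi=V/T^{3}$, $\tau=t/T$, views the rescaled problem as a family of bounded generators $A_{T}$ on the Banach space $\tilde X_M$ of Section \ref{ssec:LTB}, shows $A_{T}f\to A_{\infty}f$ on a core, where $A_{\infty}$ is the first--order transport operator $\lambda\,\xi^{2/3}\partial_{\xi}$ whose semigroup is explicit by characteristics and collapses every compact set onto the single value $a$ at time $1$, and then invokes the Trotter--Kurtz theorem to transfer this to the semigroups $U_T$. You instead exploit the probabilistic representation $\Psi_T(V,0)=\EE_V[\varphi_0(\tilde V(T)/T^{3})]$ and prove a weak law of large numbers for the monotone jump process $\tilde V$, the key structural input being that $\mathcal L[(\cdot)^{1/3}]$ and $\mathcal L[(\cdot)^{2/3}]$ have asymptotically constant, respectively linear--in--$W^{1/3}$, drifts governed by the \emph{same} constant $c=\tfrac{\lambda}{3}\int vG$, so that $\mathrm{Var}_V(\tilde V^{1/3}(T))=o(T^{2})$ and Chebyshev finishes. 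Your route is more elementary (no semigroup approximation theorem), works directly for $\varphi_0\in C_c$ without the paper's preliminary reduction to $C^2_c$, and yields quantitative fluctuation information; the paper's route is shorter once the core/generator-convergence bookkeeping is accepted and makes the limiting deterministic dynamics visible. Both arguments rest on the same cancellation, namely that the radius grows at an asymptotically constant rate. Two minor remarks: your value $a=c^{3}=\tfrac{\lambda^{3}}{27}\big(\int vG\big)^{3}$ is what the formal limit of $A_T$ actually produces (the paper's displayed $A_\infty$ drops the factor $\int vG$ and its stated $a$ has the wrong power of $\lambda$; these are typos in the paper, not errors of yours); and the ``crude bound'' $(W+v)^{1/3}-W^{1/3}\le v^{1/3}$ by itself gives only $\mathcal L[(\cdot)^{1/3}](W)=O(W^{2/3})$ --- boundedness on all of $[0,\infty)$ follows from combining it on $W\le1$ with your displayed estimate on $W\ge1$, which is clearly what you intend.
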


\begin{proofof}[Proof of Proposition \ref{prop:asymp}]
We first prove the result for $\varphi_0\in C^2_c([0,\infty))$.  
We perform the following change of variables $\xi=\frac{V}{T^3}$ and $\tau=\frac{t}{T}$ and set $\phi(\xi,\tau):=\Psi(T^3\xi,T\tau)$. Thus, $\phi(\xi,\tau)$ satisfies 
\begin{equation}\label{eq:backphiT}
\left\{\begin{array}{ll}\vspace{1mm}
\partial_{\tau}\phi(\xi,\tau)=-\lambda T\int_0^\infty dv G(v) (T\xi^{\frac{1}{3}}+v^{\frac{1}{3}})^2[\phi(\xi+\frac{v}{T^3},\tau)-\phi(\xi,\tau)] &\\
\phi(\xi,1)=\varphi_0(\xi).&.
\end{array}\right.
\end{equation}
We now set $s=1-\tau$ in order to obtain the forward equation of \eqref{eq:backphiT} which reads as 
\begin{equation}\label{eq:forwphiT}
\left\{\begin{array}{ll}\vspace{1mm}
\partial_{s}\phi(\xi,s)=\lambda T\int_0^\infty dv G(v) (T\xi^{\frac{1}{3}}+v^{\frac{1}{3}})^2[\phi(\xi+\frac{v}{T^3},s)-\phi(\xi,s)] =A_T\phi(\xi,s)&\\
\phi_T(\xi,0)=\varphi_0(\xi),&.
\end{array}\right.
\end{equation}
while the limiting problem, when $T\to\infty$, is given by 
\begin{equation}\label{eq:phiinfinity}
\left\{\begin{array}{ll}\vspace{1mm}
\partial_{s}\phi(\xi,s)=\lambda \xi^{\frac{2}{3}}\frac{\partial \phi(\xi,s)}{\partial \xi} =A_\infty \phi(\xi,s)&\\
\phi_T(\xi,0)=\varphi_0(\xi).&.
\end{array}\right.
\end{equation}
We consider $\tilde X_M:=\{f\in C_b(\R^{+})\;:\;\supp f\subset [0,M]\}$ which is the analogous of the set $X_M$ defined by \eqref{def:X_M} when we skip the position variable dependence.
We observe that $(\tilde X_M,\|\cdot\|_{\infty})$ is a Banach space and $\tilde X_M$ is a closed subset of $C([0,\infty])$ in the uniform topology. Here we denote by $[0,\infty]$ the compactification of the real line $\R$ and we assume constant boundary conditions, i.e. $\varphi\in \tilde X_M$ if $\varphi(\xi)=\varphi(M)$, $\forall \xi\geq M$. 
Then the operators $A_T,\,A_\infty:\tilde X_M\to \tilde X_M$ have the following domains
\begin{align}
& D(A_T)= \tilde X_M\\&
D(A_\infty)=\{f\in \tilde X_M\;:\; \xi^{\frac 2 3 }\partial_{\xi} f(\xi)\in \tilde X_M\}, \quad \bar{D}(A_\infty)=\tilde X_M .
\end{align} 
Moreover, we notice that the space $D=C([0,\infty])\cap \tilde X_M$ is a core for $A_\infty$.
Indeed, solving $(A_\infty-\mu \mathbb{I})\phi(\xi)=h(\xi)$ for $h\in C([0,\infty])$ and assuming without loss of generality $\mu=1$, we get $\phi(\xi)= e^{3\xi^{\frac 1 3}} \int_{\xi}^{\infty} d\eta \, e^{-3\eta^{\frac 1 3}}\, \frac{h(\eta)}{\eta^{\frac 2 3}}$. We note that the operator $A_T$ is the generator of a semigroup of contractions $U_T(t)$ in $(\tilde X_M,\|\cdot\|_{\infty})$.

Our goal is to show that,  as $T\to\infty$, $A_T f$ converges to the limiting operator $A_\infty f$ for any function $f\in D$. This guarantees, by applying Trotter-Kurtz Theorem (see Theorem 2.12 in \cite{Li}), 
the uniform convergence of the corresponding semigroups. 
\end{proofof}

\begin{proofof}[Proof of Theorem \ref{th:asymp}]
Using the duality formula \eqref{duality_volumes} we get
\begin{equation*}
\begin{split}
&\int_0^\infty \varphi_0(W)F(WT^3,T)T^3\,dW=\int_0^\infty \varphi_0\Big(\frac{V}{T^3}\Big)F(V,T)dV \\&=
\int_0^\infty F(V,0)U_T(1)\varphi_0\Big(\frac{V}{T^3}\Big) dV=\int_0^\infty F(V,0)\phi_T\Big(\frac{V}{T^3},1\Big) dV. 
\end{split}
\end{equation*}
We look at 
\begin{equation*}
\begin{split}
\left\vert\phi_T\Big(\frac{V}{T^3},1\Big) -\phi_\infty( 0,1) \right\vert&\leq  \left\vert\phi_T \Big(\frac{V}{T^3},1\Big) -\phi_\infty\Big(\frac{V}{T^3},1\Big) \right\vert+\left\vert\phi_\infty\Big(\frac{V}{T^3},1\Big) -\phi_\infty( 0,1) \right\vert\\&
\leq \sup_{\frac{V}{T^3}\in [0,\infty]}  \left\vert\phi_T\Big(\frac{V}{T^3},1\Big) -\phi_\infty\Big(\frac{V}{T^3},1\Big) \right\vert \\&
\quad + \left\vert\varphi_0\Big(\Big(\frac{V}{T^3}^3+\frac 1 3\Big)^{\frac 1 3}\Big) -\varphi_0\Big(\Big(\frac 1 3\Big)^{\frac 1 3}\Big) \right\vert.
\end{split}
\end{equation*}
In the last inequality we used the explicit form of $\phi_\infty$, solution of \eqref{eq:phiinfinity} by characteristics. Using Proposition \ref{prop:asymp} and the fact that $\varphi_0\in C_c(\R^+)\subset \tilde X_M$ we have that $\left\vert\phi_T\Big(\frac{V}{T^3},1\Big) -\phi_\infty( 0,1) \right\vert\to 0$ as $T\to\infty$.
\end{proofof}

\section{Global well posedness for the CTP model}\label{sec:WP}
The main purpose of this section is to prove well posedness for the CTP model which is the content of Theorem \ref{th:theorem2}. This proof will require two ingredients: first we will prove that the coalescence events have a finite number of steps with probability one, and second we prove that the total length of the free flights of the tagged particle is infinite with probability one. As a preliminary step we introduce a suitable notation in the next section.
\subsection{Probability of a sequence of coagulation events and free flights}\label{ssec:CharCTP}		 														


We define for every configuration $\omega\in \Omega$ and an initial condition for the tagged particle $(X_0,V_0)$ a sequence of the form 
\begin{equation}\label{def:dyn_sequence}
\begin{split}
&(\b x^{1,1}, \b v^{1,1}; \b x^{1,2}, \b v^{1,2}; \dots;\b x^{1,m_{1}}, \b v^{1,m_{1}}; l_1,\beta_1,w_1;
\b x^{2,1}, \b v^{2,1}; \b x^{2,2}, \b v^{2,2}; \dots;\b x^{2,m_{2}}, \b v^{2,m_{2}}; l_2,\beta_2,w_2;\dots),
\end{split}
\end{equation}
where $\b x^{j,l}=\{x_{1}^{j,l},\dots,x_{n_{j,l}}^{j,l}\}$ with $x_{k}^{j,l}\in \R$ 
and $\b v^{j,l}=\{v_{1}^{j,l},\dots,v_{n_{j,l}}^{j,l}\}$ with $v_{k}^{j,l}\in \R^{+}$, $l_k> 0$, $\beta_k\in S^2$, $w_k\geq 0$.
These sequences are defined according to the construction of the flow given in Section \ref{sec:PM} in \eqref{eq:dom_free_mot}, \eqref{eq:dom_coll}, \eqref{def:merging_op}. 
The pairs $(\b x^{j,l}, \b v^{j,l})$ describe the set of particles coalescing at any single step as given by \eqref{eq:dom_coll}. We remark that we call step any group in this sequence separated from the rest by semicolon. Notice that each step could be of coagulation type, i.e. $[\b x^{j,l},\b v^{j,l}]$, or of flight type, i.e. $[l_k,\beta_k,w_k]$. 
Here $l_k$ is the length of a free flight between two sequences of coagulation events, $\beta_k$ is the vector $\beta_k=\cos\theta e_1+\sin\theta \,\nu$ defined by \eqref{def:merg_dyn} and $w_k$ is the volume of the obstacle colliding with the tagged particle after the free flight. Notice that the collisions at the end of a free flight are only binary collisions with probability one. This allows to ignore multiple collisions at the end of a free flight. 
We remark however that the probability of having multiple collisions during a coalescence step is strictly positive. This is the reason why, in general, the number of elements in the set $(\b x^{j,l}, \b v^{j,l})$ is larger or equal than one. We call coalescence events a sequence of coalescence steps between free flights.  
It might happen with positive probability that a sequence of coalescence events in between two free flights $\b x^{j,1}, \b v^{j,1}; \b x^{j,2}, \b v^{j,2}; \dots;\b x^{j,m_{j}}, \b v^{j,m_{j}}$ is empty.

In some cases we do not need the full information contained in the sequence \eqref{def:dyn_sequence} and it could be enough to deal with the number of obstacles involved in the coagulation at each step and the length of the free flights. This allow to consider the sequence of the form
$$(n_{1,1}\dots n_{1,m_1},l_1,n_{2,1}\dots n_{2,m_2},l_2,\dots, n_{k,1}\dots n_{k,m_k},l_{k},\dots )$$
which can be written, in a more compact way, as
$$(M_1,l_1,M_2,l_2,\dots, M_{k},l_{k},\dots )$$
where $M_j$ denotes the $j-$th level of coalescence, i.e. $M_j=n_{j,1}\dots n_{j,m_j}$ and we allow the case $M_j=\emptyset$. We could, eventually, include informations on the vectors $\beta_k$ if needed.

We are interested in computing probabilities of the form $\PP(M_1,l_1,M_2,l_2,\dots, M_{k},l_{k})$. 
We need to keep track of the dependence of the integration domains on the previous history, in order to reduce the computations of the probabilities to products of probabilities of independent events.

For a given $\omega \in\Omega$ we define for any level $(\star)$ a family of domains $W_{(\star)}(v,\omega, \xi^{(\star-1)})$ parametrized by the volume $v$. The level $(\star)$ indicates at which step in the sequence \eqref{def:dyn_sequence} we are. As we observed before, the step $(\star)$ could be of coagulation type 
or of flight type. 
The domains $W_{(\star)}(v,\omega, \xi^{(\star-1)})$ depend on the previous history and they have the property that any obstacle with volume larger than $v$ contained in these domains would have collided with the tagged particle. Here we use the shorthand notation $\xi^{(\star-1)}$ to denote the portion of the sequence \eqref{def:dyn_sequence} which consists of all the steps which are previous to the step $(\star)$, note that $\xi^{(0)}=(X_0,V_0)$. 
We define the domain at level $(\star)=1$ by $W_{1}(v,\omega, \xi^{(0)})=B_{\frac{3}{4\pi}(v^{\frac 1 3}+V_0^{\frac 1 3})}(X_0)$.  
We then construct the positions and volumes $X_{\star},V_{\star}$ inductively as follows.
More precisely, if the step $(\star)$ is of coagulation type we define
\begin{equation}\label{eq:defW_coag}
W_{(\star)}(v,\omega, \xi^{(\star-1)}):=B_{\frac{4}{3\pi}(V_{{\star}-1}^{\frac{1}{3}}+v^{\frac{1}{3}})}(X_{{\star}-1}),
\end{equation}
and $X_{\star},V_{\star}$ by $(X_{\star},V_{\star})=\mathcal{A}(X_{{\star}-1},V_{{\star}-1}; \{\b x^{\star},\b v^{\star}\})$ where $\mathcal{A}$ is the merging operator defined by \eqref{def:merging_op}. We also define, for $(\star)\geq 1$, the following auxiliary domains which we denote as forbidden regions
\begin{equation}\label{def:forbreg_F}
F_{\star}(v,\xi^{(\star-1)})=F_{\star-1}(v,\xi^{(\star-2)})\cup W_{(\star)}(v,\omega, \xi^{(\star-1)}),
\end{equation}
where by definition $F_{0}=\emptyset$. 
If the step $(\star)$ is of flight type in order to define the domains $W_{(\star)}(v,\omega, \xi^{(\star-1)})$ we need to introduce some auxiliary regions. For any measurable $A\subset S^{2}_{+}:=\{n\in S^{2}\,:\, n\cdot e\geq 0\}$ we define the geometrical region
$$S_{A}(X_{{\star}-1},V_{{\star}-1},v):=\{x\in\R^3\, : \; x=X_{{\star}-1}+b\, e_1+(V_{{\star}-1}^{\frac{1}{3}}+v^{\frac{1}{3}})\beta, \;\; \beta\in A,\; b\in \R\}.$$ 
Notice that if $A_1\cap A_2= \emptyset$ then $S_{A_1}(X_{\star},V_{\star},v)\cap S_{A_2}(X_{\star},V_{\star},v)=\emptyset$. Moreover, for any measurable $A\subset S^{2}_{+}$ we define 
\begin{equation}\label{eq:defdomainQ}
Q(X_{{\star}-1},V_{{\star}-1};v,l,A,\xi^{(\star-1)}):=\bigcup_{s\in[0, l]}\big[B_{\frac{4}{3\pi}(V_{{\star}-1}^{\frac{1}{3}}+v^{\frac{1}{3}})}(X_{{\star}-1}+s\,e_1)\setminus F_{{\star}-1}(v,\xi^{(\star-2)})\big]\cap S_{A}(X_{{\star}-1},V_{{\star}-1},v).
\end{equation}
We now define  $W_{(\star)}(v,\omega, \xi^{(\star-1)})$  as 
\begin{equation}
W_{(\star)}(v,\omega, \xi^{(\star-1)}):=Q(X_{{\star}-1},V_{{\star}-1};v,\bar l,S^2_{+},\xi^{(\star-1)})
\end{equation}
 where $\bar l$ is
 the free flight length of the tagged particle in between the collision dynamics which is given by 
\begin{equation*}
\bar{l}(\omega):=
\sup\{ l >0\,:\, \{(x_j(\omega),v_j(\omega)),\; v_j(\omega)\geq v\} \cap Q(X_{{\star}-1},V_{{\star}-1};v,l,A,\xi^{(\star-1)})=\emptyset, \forall v\geq 0 \}.
\end{equation*}
We define the forbidden region $F_{\star}(v,\xi^{(\star-1)})$ by means of \eqref{def:forbreg_F} and the following sets for $(\star)$ of coalescence type 
\begin{equation}\label{def:setBscalar}
B_{\star}(v,\xi^{(\star-1)}):=W_{(\star)}(v,\omega, \xi^{(\star-1)})\setminus F_{\star-1}(v,\xi^{(\star-2)}) \subset \R^3,
\end{equation}
and
\begin{equation}\label{def:setBvector}
B_{\star}(\b v^{(\star)}, \xi^{(\star-1)};\omega):=B_{\star}(v_1^{(\star)},\xi^{(\star-1)})\times \dots \times B_{\star}(v_{m_{\star}}^{(\star)},\xi^{(\star-1)})
\end{equation}
where $v^{(\star)}:=\{v^{(\star)}_1,\dots, v^{(\star)}_{m_{\star}}\}$.

We now describe the probability of finding the variables appearing in the sequence \eqref{def:dyn_sequence} in suitable measurable sets. More precisely we can compute the probability of finding $\b x^{k,j}\in U_{k,j}\subset (\R^3)^{n_{k,j}}$, $\b v^{k,j}\in Z_{k,j}\subset (\R)^{n_{k,j}}$, $l_k\in I_k\subset \R_{+}$, $\beta_k\in J_k\subset S^2_{+}$ and $w_k\in L_k\subset \R_{+}$ where $U_{k,j},\,Z_{k,j},\,I_k,\,J_k,\,L_k$ are measurable sets. We will denote this probability as $\PP(\{U_{k,j}\},\{Z_{k,j}\};\{I_k\},\{J_k\},\{L_k\})$.

To have a compact writing for these probabilities we now introduce a class of suitable operators. 
We set
\begin{equation}\label{def:Akj}
\begin{split}
A_{k,j}(n_{k,j},U_{k,j},Z_{k,j})=&\frac{\phi^{n_{k,j}}}{n_{k,j}!}\left [\int_{Z_{k,j}}d\b v^{k,j}G(\b v^{k,j}) \int_{B_{k,j}( \b v^{k,j}, \xi^{((k,j)-1)};\omega) \cap U_{k,j}} d\b x^{k,j}  \right] \\& \quad e^{-\phi\int_0^{\infty}G(v)|B_{k,j}(v,\xi^{((k,j)-1)})|dv},
\end{split}
\end{equation}
where $B_{k,j}(v,\xi^{((k,j)-1)})$ and $B_{k,j}(\b v^{k,j}, \xi^{((k,j)-1)};\omega)$ are defined by \eqref{def:setBscalar} and \eqref{def:setBvector} for $(\star)=(k,j)$.
Now we define 
\begin{align}\label{def:opC}
&A_{k,j}(n_{k,j})=A_{k,j}(n_{k,j},(\R^3)^{n_{k,j}},(\R)^{n_{k,j}}),\nonumber\\&
A_{k,j}:=\sum_{n=0}^{\infty}A_{k,j}(n), \qquad C_k:=\sum_{m=0}^{\infty}A_{k,1}\dots A_{k,m}.
\end{align} 

We introduce also the operator 
\begin{equation}\label{eq:expvolQ}
\begin{split}
F_{\star}(I_{\star},J_{\star},L_{\star})=&\phi \int_{I_{\star}}dl\int_{J_{\star}}d\beta \int_{L_{\star}}dv\, G(v)\,det (J) \chi(l,\beta,\xi^{(\star-2)})\,e^{-\phi\int_{0}^{\infty} G(v)|Q(X_{\star-1},V_{\star-1};v,l,S^{2}_{+},\xi^{(\star-2)})| }  
\end{split}
\end{equation}
where $\star=k$. Here, we recall that $\beta=\beta(\theta,\varphi)$ and $ \chi(l,\beta,\xi^{(\star-2)})= \chi(v,l,\theta,\varphi;X_{{\star}-1},V_{{\star}-1},\xi^{(\star-2)}))=\mathbb{1}_{\{x \in \R^3\setminus F_{\star-1}(v,\xi^{(\star-2)})\}}$.
The relation between the position $x$ and the angular variables $\theta,\varphi$ and the length of the flights $l$ is given by means of a suitable \textit{change of variables} that has been used before in Section \ref{sec:derivation}, see \eqref{change var}.
More precisely 
\begin{equation}\label{changevarPoisson}
\begin{split}
&x\to (l,\beta)\\&
\beta=\beta(\theta,\varphi), \quad \theta\in\big[0,\frac{\pi}{2}\big], \; \varphi\in [0,2\pi].
\end{split}
\end{equation}
 Since 
$$x=X_{\star-1}+l\,e_1+(V_{\star-1}^{\frac{1}{3}}+v^{\frac{1}{3}})(\cos\theta,\sin\theta\cos\varphi,\sin\theta\sin\varphi),$$
with $e_1=(1,0,0)$, we have that the Jacobian of the change of variable is given by $dx=\det(J)\,dl\, d\theta \, d\varphi$ with
\begin{equation}
J= \frac{\partial (x_1,x_2,x_3)}{\partial{d,\theta,\varphi}}=\left( \begin{array}{ccc}
1 & -(V_{\star-1}^{\frac{1}{3}}+v^{\frac{1}{3}})\sin\theta & 0 \\
0 & (V_{\star-1}^{\frac{1}{3}}+v^{\frac{1}{3}})\cos\theta\cos\varphi & -(V_{\star-1}^{\frac{1}{3}}+v^{\frac{1}{3}})\sin\theta\sin\varphi\\
0 & (V_{\star-1}^{\frac{1}{3}}+v^{\frac{1}{3}})\cos\theta\sin\varphi & (V_{\star-1}^{\frac{1}{3}}+v^{\frac{1}{3}})\sin\theta\cos\varphi \end{array} \right)
\end{equation} 
so that $dx=\det(J)\,dl\, d\theta \, d\varphi=(V_{\star-1}^{\frac{1}{3}}+v^{\frac{1}{3}})^2\sin\theta\cos\theta\,dl\, d\theta \, d\varphi$.
We note that
\begin{equation}\label{eq:volQ}
|Q(X_{{\star}-1},V_{{\star}-1};v,l,A,\xi^{(\star-2)})|=(V_{\star-1}^{\frac{1}{3}}+v^{\frac{1}{3}})^2\int_{0}^{l}d\eta \int_{A}d\theta d\varphi \cos\theta \sin\theta\, \chi(\eta,\theta,\varphi,\xi^{(\star-2)}).
\end{equation}
Moreover, we set 
\begin{equation}\label{def:opFk}
F_k(I_k)=F_{k}(I_{k},S^2_{+},\R_+)\quad \text{and}\quad F_k:=F_k(\R_{+}). 
\end{equation}
This allows to write the probability $\PP(\{U_{k,j}\},\{Z_{k,j}\};\{I_k\},\{J_k\},\{L_k\})$ as 
\begin{equation}
\begin{split}
&A_{1,1}(n_{1,1},U_{1,1},Z_{1,1})A_{1,2}(n_{1,2},U_{1,2},Z_{1,2})\dots A_{1,m_1}(n_{1,m_1},U_{1,m_1},Z_{1,m_1})F_{1}(I_{1},J_{1},L_{1})\\&
A_{2,1}(n_{2,1},U_{2,1},Z_{2,1})A_{2,2}(n_{2,2},U_{2,2},Z_{2,2})\dots A_{2,m_2}(n_{2,m_2},U_{2,m_2},Z_{2,m_2})F_{2}(I_{2},J_{2},L_{2}) \dots  \\&
A_{k,1}(n_{k,1},U_{k,1},Z_{k,1})A_{k,2}(n_{k,2},U_{k,2},Z_{k,2})\dots A_{k,m_k}(n_{k,m_k},U_{k,m_k},Z_{k,m_k})F_{k}(I_{k},J_{k},L_{k}) \dots
\end{split}
\end{equation}
This formula follows from the fact that the particles are distributed homogeneously and the numbers of particles in disjoint domains are given by Poisson distributions which are independent in disjoint domains. 

Therefore we can compute the probabilities of sequences of coagulation events and flight as in \eqref{def:dyn_sequence} summing expressions of this form. Notice that this formula has to be thought as a sequence of integral operators acting in a non commutative way on the variables of the operators which are the subsequent ones in the formula, due to the depence on the previous variables of the variables $\xi^{(\star)}$.

We observe that if we are interested in computing the probability of events in which we are integrating over all the possible particle positions and volumes or all the possible numbers of particles or all the possible impact parameters and volumes in the case of the free flights we can the use the reduced operators $A_{k,j}(n_{k,j})$, $A_{k,j}$, $F_k(I_k)$ and $F_k$ given by \eqref{def:opC} and \eqref{def:opFk} respectively.

We will denote an arbitrary coalescence event by $C_k$, given by \eqref{def:opC}. Thus, if we consider sequences which contain at least in part of it a subsequence of arbitrary coalescence and flight events we will obtain, in the expression for the operators giving the probabilities, portions of the form
\begin{equation}
\dots C_mF_mC_{m+1}F_{m+1}\dots C_LF_L\dots  \quad L>m.
\end{equation}

\subsection{Coalescence events stop after a finite number of steps with probability one}

We now show that coalescence events of the form $[\b x^{k,1}, \b v^{k,1}; \b x^{k,2}, \b v^{k,2}; \dots;\b x^{k,m_{k}}, \b v^ {k,m_{k}}]$ stop after a finite number of steps with probability one since in principle $m_{k}$ might be infinite.
\begin{proposition}\label{prop:wellpos1}
Let $G(v)\in \mathcal{M}_{+}([0,\infty))$ be compactly supported with $\supp G(v)\in [0,v_{*}]$. Then there exists a $\phi_{*}=\phi_{*}(v_{*})>0$ such that for any $\phi\leq \phi_{*}$ and any $(Y_0,V_0)\in \R^3\times [0,\infty)$ there exists $\Omega^{*}\subset \Omega, \; \Omega^{*}\in \Sigma$ where $\Sigma$ is the $\sigma-$algebra defined in Section \ref{ssec:PM} such that $\PP(\Omega^{*})=1$ and such that for any $\omega \in \Omega^{*}$ the sequence \eqref{def:dyn_sequence} has the property that $m_j<\infty$ for any $j\in \N$.
\end{proposition}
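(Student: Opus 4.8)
The plan is to show that for $\phi$ small enough (depending on $v_*$), the probability that some coalescence event consists of infinitely many steps is zero. The key mechanism is that each coalescence step, starting from a tagged particle of volume $V$, adds at least one obstacle of volume in $(0,v_*]$ and moves the center of the tagged particle by a distance controlled by the displacement estimate of Lemma~\ref{app:claim} in Appendix~\ref{appendix1}, namely of order $\phi^{1/3}(V_{\text{new}}^{1/3}-V^{1/3})$, which is small relative to the obstacle radius scale once $V$ is large. Concretely, at step $\star$ the newly ``exposed'' region where a further colliding obstacle must sit is $B_\star(v,\xi^{(\star-1)}) = W_{(\star)}(v,\omega,\xi^{(\star-1)})\setminus F_{\star-1}(v,\xi^{(\star-2)})$, i.e. the part of the enlarged ball around $X_\star$ that was \emph{not} already swept in previous steps. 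Because the center only moves by $O(\phi^{1/3})$ times a power of the accumulated volume while the radius of $W_{(\star)}$ grows like $\phi^{1/3}(V_{\star-1}^{1/3}+v^{1/3})$, the volume $|B_\star(v,\xi^{(\star-1)})|$ of the freshly uncovered shell is much smaller than the full ball volume; one extracts a factor that is a small power of $\phi$ uniformly in the step index once enough steps have occurred.

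First I would fix notation: let $N$ be the (random) number of steps in a given coalescence event $C_k$ and estimate, using the operator representation $C_k=\sum_m A_{k,1}\cdots A_{k,m}$ from \eqref{def:opC}, the probability $\PP(m_k\ge m)$ that the event has at least $m$ steps. Using \eqref{def:Akj}, each factor $A_{k,j}$ contributes $\phi\int_0^\infty G(v)|B_{k,j}(v,\xi^{((k,j)-1)})|\,dv$ (after summing over the number $n_{k,j}\ge 1$ of obstacles at that step and bounding the Poisson exponential by $1$); since $\supp G\subset[0,v_*]$ this is at most $\phi\, v_* \sup_{v\le v_*}|B_{k,j}(v,\xi^{((k,j)-1)})|$. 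The crucial geometric lemma to prove is that $|B_{k,j}(v,\xi)|\le C\phi\,(1+V_j^{2/3})\cdot(\text{shell width})$ where the shell width is $O(\phi^{1/3})$ once $V_j$ has grown — more precisely that the displacement bound \eqref{eq:Xkbd}-type inequality forces the new enlarged ball to overlap the forbidden region $F_{\star-1}$ in all but a thin crescent, so that $|B_{k,j}(v,\xi)|\lesssim \phi^{4/3}$ times a controlled polynomial in the accumulated volume. Because $V$ after $m$ steps is at most $V_0+mv_*$, this polynomial factor grows only algebraically in $m$, and the product $\prod_{j=1}^m (C\phi^{1+1/3}(1+(V_0+jv_*)^{2/3}))$ is summable in $m$ once $\phi<\phi_*(v_*)$; hence $\sum_m \PP(m_k\ge m)<\infty$ and by Borel--Cantelli $m_k<\infty$ a.s. for each fixed $k$. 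A countable union over $k\in\N$ (and a further union over a countable exhaustion of initial data if one wants uniformity, though here $(Y_0,V_0)$ is fixed) gives a set $\Omega^*$ of full measure.

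The main obstacle I expect is the geometric estimate on $|B_\star(v,\xi^{(\star-1)})|$: one must show rigorously that after a coalescence step the enlarged collision ball $W_{(\star)}$ is almost entirely contained in the previously forbidden region $F_{\star-1}$, so that only a thin layer of volume $O(\phi^{1/3}\times\text{surface area})$ remains. This is exactly where the center-of-mass choice in the CTP model is used: the displacement $|X_\star - X_{\star-1}|$ is bounded by $\tfrac{v}{V_{\star-1}+v}$ times the sum of radii, which is $O(\phi^{1/3})$ and, crucially, \emph{decreasing} in $V_{\star-1}$, so the sequence of centers stays within an $O(\phi^{1/3})$-neighborhood while the radii only increase. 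Making this quantitative — controlling how much new volume a ball of radius $\phi^{1/3}(V_{\star-1}^{1/3}+v^{1/3})$ centered at $X_\star$ can have outside the union of all previous balls — is the technical heart, and is presumably handled by the appendix estimates that the excerpt refers to for bounding displacements of the center of mass. Once that shell-volume bound is in hand, the probabilistic part is a routine Borel--Cantelli argument with the Poisson-measure bookkeeping already set up in \eqref{def:Akj}--\eqref{def:opC}.
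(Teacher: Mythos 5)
Your overall skeleton (nested events ``the cascade survives $k$ steps'', Poisson bookkeeping via the operators \eqref{def:Akj}, control of the newly exposed region $B_\star$ using the center-of-mass displacement, Borel--Cantelli) is the same as the paper's. However, the quantitative core of your argument is not correct, in two related ways.

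First, the geometric estimate you propose, $|B_\star(v,\xi^{(\star-1)})|\lesssim \phi^{4/3}\times(\text{polynomial in the accumulated volume})$, is not what holds, and the ``thin crescent'' picture is only half the story. In the rescaled variables used in Section \ref{sec:WP} the radii of the balls $W_{(\star)}$ are $O(V^{1/3}+v^{1/3})$ with no factor $\phi^{1/3}$ (the volume fraction enters only through the Poisson intensity $\phi$ in \eqref{def:Akj}), and the correct bound is Lemma \ref{lem:geometriclemB_k}: $\int_0^\infty G(v)|B_{k+1}(v,\xi^{(k)})|\,dv\le K[(V_k-V_{k-1})+1]$. When the $k$-th step absorbs many obstacles, $V_k-V_{k-1}$ is large, the new ball is much bigger than the old one, and the newly exposed region is \emph{not} a thin shell but has volume comparable to $V_k-V_{k-1}$ (this is case (ii), $V_k\ge 2V_{k-1}$, in the proof of that lemma). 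So the per-step factor you want to extract is $\phi\,K(n_{k-1}v_*+1)$, which is small only if $n_{k-1}$ is moderate.

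Second, and consequently, your summability argument breaks down: you bound $V$ after $m$ steps by $V_0+mv_*$, which implicitly assumes each step absorbs a bounded number of obstacles, whereas $n_{k,j}$ is unbounded. The product $\prod_j \phi I_j$ does not factorize into a product of uniformly small terms, because $I_{j+1}$ depends on the realization of step $j$ and can be of order $n_j$. The actual proof must handle the feedback loop ``many obstacles absorbed at step $j$ $\Rightarrow$ large exposed region at step $j+1$ $\Rightarrow$ next collision likely'': the paper does this by splitting according to $I(\xi^{j})\lessgtr\phi^{-1/2}$ (and further thresholds $\xi_*/(2Kv_*\phi^{3/2})$, etc.), using the Poisson tail bound of Lemma \ref{lem:Poisson} to show that the bad branch forces $n_{j}\gtrsim \phi^{-1/2}$ which is exponentially unlikely, and then iterating backwards to obtain the recursion \eqref{eq:pOkep}, $\PP(\Omega_k)\le \ep^k+\sum_{m<k}\ep^{k-m}\PP(\Omega_m)$, solved by the elementary Lemma \ref{lem:boundprob}. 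This recursive structure is the technical heart of the proposition and is absent from your proposal; without it (or an equivalent device) the claim $\sum_m\PP(m_k\ge m)<\infty$ does not follow.
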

We argue by induction. We consider any coalescence event $[\b x^{k,1}, \b v^{k,1}; \b x^{k,2}, \b v^{k,2}; \dots;\b x^{k,m_{k}}, \b v^ {k,m_{k}}]$ at level $k$, assuming that we have proved that all the previous ones have a finite number of steps with probability one. Notice that $k$ might be one. 
For the sake of simplicity in this section we will drop the index $k$. We will denote as $B_1(v,\xi^{(\star-1)})$ the domain $B_\star(v,\xi^{(\star-1)})$ defined by \eqref{def:setBscalar} with $(\star)=(k,1)$ and we write $(Y_0,V_0)=(Y_{\star-1}, V_{\star-1})$. We define then $B_2(v,\xi^{(1)})=B_{\star+1}(v,\xi^{(\star)})$ and $B_k(v,\xi^{(k-1)})=B_{\star+k}(v,\xi^{(\star+k-1)})$ for any $k$. We also define inductively the evolution of the tagged particle $(Y_{k},V_{k})=\mathcal{A}(Y_{k-1},V_{k-1};\{\b x^{k-1},\b v^{k-1}\})$ where $\mathcal{A}$ is the merging operator defined by \eqref{def:merging_op}.

We introduce now the following sequence of events $\{\Omega_k\}$ defined by
\begin{equation}\label{def:A1}
 \Omega_1:=\{\omega\,:\, \exists (x_k,v_k)\in\omega\; \text{s.t.} \; x_k\in B_1(v_k,\xi^{(\star-1)})\},
\end{equation}
\begin{equation}\label{def:A2_1}
 \Omega_2:=\{\omega\in \Omega_1\,:\, \exists (x_k,v_k)\in\omega\; \text{s.t.} \; x_k\in B_2(v_k,\xi^{(1)}) \}.
\end{equation}
Moreover, by iterating, we get
\begin{equation}\label{def:Ak}
 \Omega_k:=\{\omega\in \Omega_{k-1}\,:\,  \exists (x_j,v_j)\in\omega\; \text{s.t.} \; x_j\in B_k(v_j,\xi^{(k-1)})\}.
\end{equation}
We observe that the sequence above is such that $ \Omega_{k+1}\subset  \Omega_k$ $\forall k$.  
Our strategy to prove Proposition \ref{prop:wellpos1} will be to show that $\sum_{n=1}^{\infty}\PP( \Omega_n)<\infty$ and to apply then Borel-Cantelli Lemma.
To do this we need the following result which will play a crucial role in the rest of the paper.
\begin{lemma}\label{lem:geometriclemB_k}
Let $V_0\geq0$ be a given initial volume. Then there exists a constant $K>0$ such that 
\begin{equation}\label{eq:geometriclemB_k}
I(\xi^{(k)}):=\int_{0}^{\infty}G(v)|B_{k+1}(v,{\xi}^{(k)})| dv\leq K[(V_{k}-V_{k-1})+1]\qquad k=1,2,\dots
\end{equation}
where $B_{k+1}(v,{\xi}^{(k)})$ is as in \eqref{def:setBscalar}
 and $\displaystyle K=K(M_1)$, with $M_1=\int_{0}^{\infty} v\,G(v)dv$, is independent of the initial volume $V_0$.
\end{lemma}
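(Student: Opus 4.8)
The plan is to use that $B_{k+1}(v,\xi^{(k)})$, which by \eqref{def:setBscalar} equals $W_{(k+1)}(v,\omega,\xi^{(k)})\setminus F_k(v,\xi^{(k-1)})$, is only the \emph{increment} of the region swept at the $(k+1)$-st coalescence step, and that this increment stays small because the merging operator \eqref{def:merging_op} moves the centre of the tagged particle by a controlled amount. First I would record two geometric inputs. Since the $k$-th step (here $k\ge1$) is itself a coalescence step, \eqref{def:forbreg_F} and \eqref{eq:defW_coag} give $F_k(v,\xi^{(k-1)})\supseteq W_{(k)}(v,\omega,\xi^{(k-1)})=B_{c(V_{k-1}^{1/3}+v^{1/3})}(X_{k-1})$, while $W_{(k+1)}(v,\omega,\xi^{(k)})=B_{c(V_k^{1/3}+v^{1/3})}(X_k)$, with $c$ the relevant radius constant. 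Hence $B_{k+1}(v,\xi^{(k)})$ is contained in the difference of those two balls. The second input is the displacement bound for the centre of mass: writing $\Delta_k:=V_k-V_{k-1}$, the rule \eqref{def:merging_op}, the collision condition $|x_l-X_{k-1}|\le c(V_{k-1}^{1/3}+v_l^{1/3})$ from \eqref{eq:dom_coll}, and $\sum_l v_l=\Delta_k$, $\sum_l v_l^{4/3}\le\Delta_k^{4/3}$ give $d_k:=|X_k-X_{k-1}|\le c\,\Delta_k(V_{k-1}^{1/3}+\Delta_k^{1/3})/V_k\le C_0(V_k^{1/3}-V_{k-1}^{1/3})$ for a universal $C_0$; this is the per-step version of Lemma \ref{app:claim} of Appendix \ref{appendix1}.

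Next I would convert this into a pointwise bound on $|B_{k+1}(v,\xi^{(k)})|$. If $|y-X_k|\le c(V_{k-1}^{1/3}+v^{1/3})-d_k$ then $|y-X_{k-1}|\le c(V_{k-1}^{1/3}+v^{1/3})$, so the concentric sub-ball of $W_{(k+1)}(v)$ of radius $b:=\big(c(V_{k-1}^{1/3}+v^{1/3})-d_k\big)_+$ lies inside $F_k(v)$; hence, with $a:=c(V_k^{1/3}+v^{1/3})$, $|B_{k+1}(v,\xi^{(k)})|\le\frac{4\pi}{3}(a^3-b^3)=\frac{4\pi}{3}(a-b)(a^2+ab+b^2)$. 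Now $a-b\le c(V_k^{1/3}-V_{k-1}^{1/3})+d_k\le(c+C_0)(V_k^{1/3}-V_{k-1}^{1/3})$ and $a^2+ab+b^2\le3a^2\le6c^2(V_k^{2/3}+v^{2/3})$, so $|B_{k+1}(v,\xi^{(k)})|\le C_1(V_k^{1/3}-V_{k-1}^{1/3})(V_k^{2/3}+v^{2/3})$. (If $c(V_{k-1}^{1/3}+v^{1/3})<d_k$ for some $v\in\supp G$, the displacement bound forces $V_{k-1}\le C_2\Delta_k$, whence the crude bound $|B_{k+1}(v,\xi^{(k)})|\le\frac{4\pi}{3}a^3\le C(\Delta_k+v)$ — valid for every $v$ — already suffices.)

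Finally I would integrate against $G$ using two elementary facts. The factorization of the difference of cubes, $(V_k^{1/3}-V_{k-1}^{1/3})(V_k^{2/3}+V_k^{1/3}V_{k-1}^{1/3}+V_{k-1}^{2/3})=\Delta_k$, together with positivity of the bracket gives $(V_k^{1/3}-V_{k-1}^{1/3})V_k^{2/3}\le\Delta_k$; and $V_k^{1/3}-V_{k-1}^{1/3}\le\Delta_k^{1/3}\le\Delta_k+1$. Combining these with $\int_0^\infty G(v)\,dv=1$ and $\int_0^\infty v^{2/3}G(v)\,dv\le\big(\int_0^\infty v\,G(v)\,dv\big)^{2/3}=M_1^{2/3}$ (Hölder) yields $I(\xi^{(k)})\le C_1\big(\Delta_k+M_1^{2/3}(\Delta_k+1)\big)\le K(M_1)(\Delta_k+1)$, with $K$ depending only on $M_1$ and not on $V_0$, which is \eqref{eq:geometriclemB_k}. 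The main obstacle is the second paragraph: one must bound $|B_{k+1}|$ by the \emph{small} quantity $V_k^{1/3}-V_{k-1}^{1/3}$ rather than by $V_k^{1/3}$, and this is precisely where the CTP dynamics is used — the centre-of-mass jump rule keeps $d_k$ of order $V_k^{1/3}-V_{k-1}^{1/3}$, so the freshly swept shell is genuinely thin; for a model in which the new centre may be displaced by an amount comparable to the tagged particle's radius, this step, and hence Lemma \ref{lem:geometriclemB_k}, would break down.
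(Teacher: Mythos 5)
Your proposal is correct and follows essentially the same route as the paper: contain $B_{k+1}$ in the difference of the new collision ball around $X_k$ and the previous one around $X_{k-1}$, control the centre displacement $|X_k-X_{k-1}|$ by $C(V_k^{1/3}-V_{k-1}^{1/3})$ via the merging rule, bound the resulting eccentric shell with a dichotomy between small and large relative volume growth, and integrate against $G$ using its first moment. The only differences are cosmetic — the paper nondimensionalizes with $\xi=\Delta_k/V_k$, $\theta=(v/V_{k-1})^{1/3}$ and uses $M_1$ directly, while you factor the difference of cubes and invoke H\"older to get $M_1^{2/3}$ — so the two arguments are equivalent up to constants.
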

We remark that the volume $V_0$ is finite since the number of coagulation steps before the level $(k,1)$ is finite with probability one. The meaning of this Lemma is that $\phi I(\xi^{(k-1)})$, i.e. the rate of the Poisson process at the level $k$, cannot be large unless there is a large difference of volumes in two successive steps, otherwise the average number of particles incorporated at the step $k$ will be small. 
\begin{remark}\label{rk:postagpc}
We observe that this is the only place where we use the explicit formula for the change of the position of the particle, see \eqref{def:merging_op}. Any other choice of coalescence operators for which the inequality \eqref{eq:bd-displY} is satisfied will imply Lemma \ref{lem:geometriclemB_k} and the corresponding dynamics can be defined globally in time with probability one.
\end{remark}
\begin{proof} 
We observe that 
$|B_{k+1}({\xi}^{(k)},v)|\leq |B_{V_{k}^{\frac{1}{3}}+v^{\frac{1}{3}}}(Y_k)\setminus B_{V_{k-1}^{\frac{1}{3}}+v^{\frac{1}{3}}}(Y_{k-1})|$.
Therefore, since 
$$Y_{k}=Y_{k-1}+\frac{\sum_{l=1}^{n_k}(x_l^{(k)}-Y_{k-1})v_l^{(k)}}{V_{k-1}+\sum_{l=1}^{n_k}{v_l^{(k)}}}$$
and 
$$V_k=V_{k-1}+\sum_{l=1}^{n_k}{v_l^{(k)}},$$
we look at $|Y_{k}-Y_{k-1}|$ and it follows that 
\begin{equation}\label{eq:bd-displY}
|Y_{k}-Y_{k-1}|\leq (V_{k-1}^{\frac{1}{3}}+v^{\frac{1}{3}})\,\frac{\sum_{l=1}^{n_k} v_l^{(k)}}{V_{k-1}+\sum_{l=1}^{n_k}{v_l^{(k)}}}=\frac{(V_{k-1}^{\frac{1}{3}}+v^{\frac{1}{3}})}{V_k}(V_k-V_{k-1}). 
\end{equation}
Setting $\xi:=\frac{(V_k-V_{k-1})}{V_k}$, $\theta:=\big(\frac{v}{V_{k-1}}\big)^{\frac{1}{3}}$ and $Z:=Y_{k}-Y_{k-1}$ we get $|Z|\leq V_{k-1}^{\frac{1}{3}}(1+\theta)\xi$. Moreover, since 
$V_{k}^{\frac{1}{3}}+v^{\frac{1}{3}}=V_{k-1}^{\frac{1}{3}}\left(\frac{1}{(1-\xi)^{\frac{1}{3}}}+\theta \right)$, 
we have that 
\begin{equation*}
\begin{split}
|B_{k+1}({\xi}^{(k)},v)|&\leq \vert B_{V_{k-1}^{\frac{1}{3}}\left(\frac{1}{(1-\xi)^{{1}/{3}}}+\theta \right)}(Z)\setminus B_{V_{k-1}^{\frac{1}{3}}(1+\theta)}(0)\vert \\&\leq 
 \vert B_{(1+\theta)\left[1+\frac{1}{(1+\theta)}\left(\frac{1}{(1-\xi)^{{1}/{3}}}-1 \right)\right]}(Z)\setminus B_{V_{k-1}^{\frac{1}{3}}(1+\theta)}(0)\vert.
\end{split}
\end{equation*} 
By performing the change of variables $Y=\frac{Z}{V_{k-1}^{\frac{1}{3}}(1+\theta)}$, the equation above reads as
\begin{equation*}
|B_{k+1}({\xi}^{(k)},v)|\leq V_{k-1}(1+\theta)^{3} \vert B_{\left(1+\frac{\lambda(\xi)}{(1+\theta)} \right)}(Y)\setminus B_{1}(0)\vert, \quad \lambda(\xi):=\frac{1}{(1-\xi)^{\frac{1}{3}}}-1.
\end{equation*} 
It results that $|Y|\leq \xi$, $\xi\in[0,1)$. We consider two possible cases.
\begin{itemize}
\item[i)] If $0\leq \xi\leq\frac{1}{2}$ then $\lambda(\xi)\leq C_0\xi$ where $C_0$ is a numerical constant and the change of the radius is $\left(1+\frac{\lambda(\xi)}{(1+\theta)} \right)-1\leq C_0\xi$. Thus, the volume of the forbidden region $\Delta(\xi,\theta):= B_{\left(1+\frac{\lambda(\xi)}{(1+\theta)} \right)}(Y)\setminus B_{1}(0)$ is such that $|\Delta(\xi,\theta)|\leq C_0\xi$. Moreover, since $\xi\leq\frac{1}{2}$ implies $V_k \leq 2\,V_{k-1}$, 
we have 
\begin{equation}
\begin{split}
|B_{k+1}({\xi}^{(k)},v)|&\leq C_0 V_{k-1}(1+\theta)^{3} \xi \leq 4\,C_0 (1+\theta^3)V_{k-1}\frac{V_{k}-V_{k-1}}{V_{k}}\\&
\leq  4\,C_0(V_{k-1}+v)\frac{V_{k}-V_{k-1}}{V_{k}}.
\end{split}
\end{equation}
\item[ii)] If $\xi\geq\frac{1}{2}$ 
the volume of the admissible region is bounded by $|\Delta(\xi,\theta)|\leq \bar C_1\left(1+\frac{\lambda(\xi)}{\theta+1}\right)^3 \leq C_1\left(\frac{\theta+\lambda(\xi)}{\theta+1}\right)^3$ where $C_1$, $\bar C_1$ are numerical constants. Here we used that $\lambda(\xi)\geq 2^{\frac{1}{3}}-1$ when $\xi\geq\frac{1}{2}$.
Moreover, since $\xi\geq\frac{1}{2}$ implies $V_k \geq 2\,V_{k-1}$, we have
\begin{equation}
\begin{split}
|B_{k+1}({\xi}^{(k)},v)|&\leq C_1 V_{k-1}(1+\theta)^{3} \left(\frac{\theta+\lambda(\xi)}{\theta+1}\right)^3 \leq  4 \,C_1 V_{k-1}(\theta^3+\lambda(\xi)^3)\\&
\leq  4 \,C_1 V_{k-1}\left(\frac{v}{V_{k-1}}+\frac{1}{(1-\xi)}\right)= 4\,C_1 V_{k-1}\left(\frac{v}{V_{k-1}}+\frac{V_k}{V_{k-1}}\right)=4\,C_1 (v+V_{k}).
\end{split}
\end{equation}
\end{itemize}
We can now estimate $\displaystyle I({\xi}^{(k)})= \int_{0}^{\infty}G(v)|B_{k+1}(v,{\xi}^{(k)})|dv$. In the case i) when $V_{k} \leq 2\,V_{k-1}$  we have
\begin{equation}
\begin{split}
I({\xi}^{(k)})&\leq 4\,{C}_0 V_{k-1}\frac{V_{k}-V_{k-1}}{V_{k}}\int_{0}^{\infty}G(v)dv+4\,{C}_0\frac{V_{k}-V_{k-1}}{V_{k}} \int_{0}^{\infty}v\,G(v)dv \\&
\leq 4\,C_0\left((V_{k}-V_{k-1})+M_1 \frac{V_{k}-V_{k-1}}{V_{k}}\right)\\&
\leq K ((V_{k}-V_{k-1})+1), 
\end{split}
\end{equation} 
where $M_1$ is the first order momentum. Here we used that $\int_{0}^{\infty} G(v)dv=1$ and that $V_k\geq V_0\geq 0$.

In the case ii), when $V_{k} \geq 2\,V_{k-1}$, we obtain
\begin{equation}
\begin{split}
I({\xi}^{(k)})&\leq 4\, C_1 V_{k}\int_{0}^{\infty}G(v)dv+4\, C_1\int_{0}^{\infty}v\,G(v)dv \\&
\leq 4\, C_1 (V_{k}+M_1) \\&
\leq 4\, C_1  ((V_{k}-V_{k-1})+1). 
\end{split}
\end{equation} 
Note that in the last inequality we used that $V_{k}-V_{k-1}\geq  \frac 1 2 V_{k}$.
This concludes the proof.
\end{proof}
We will also use the following calculus Lemma.
 \begin{lemma}\label{lem:boundprob}
 Suppose that $\{Z_k\}_{k=1}^{\infty}$ is a sequence of non negative numbers, satisfying $Z_k\leq 1$ for any $k$, such that 
  \begin{equation}
 \label{eq:boundprob2}
 Z_k\leq \varepsilon^{k}+\sum_{m=1}^{k-1}\varepsilon^{k-m} Z_m \qquad k\geq k_{*}+1
    \end{equation} 
    for some $\varepsilon\in [0,\frac 1 4]$ and some integer $k_{*}\geq 1$. Then 
\begin{equation}\label{eq:boundZm}
Z_k\leq (4\varepsilon)^{k-k_*} \qquad \forall k\geq k_*.
\end{equation} 
  \end{lemma}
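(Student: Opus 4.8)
The plan is to prove the bound \eqref{eq:boundZm} by induction on $k\geq k_*$, carrying along the recursive hypothesis \eqref{eq:boundprob2}. First I would verify the base case: for $k=k_*$ the claim reads $Z_{k_*}\leq (4\varepsilon)^0=1$, which holds by the standing assumption $Z_k\leq 1$. This is the reason the statement is phrased with $\forall k\geq k_*$ rather than $k\geq k_*+1$ — the single trivial anchor point makes the induction start cleanly.

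For the inductive step, suppose $Z_m\leq (4\varepsilon)^{m-k_*}$ for all $k_*\leq m\leq k-1$, with $k\geq k_*+1$, and estimate $Z_k$ using \eqref{eq:boundprob2}. The only subtlety is that the sum in \eqref{eq:boundprob2} runs from $m=1$, whereas the inductive hypothesis only controls $Z_m$ for $m\geq k_*$; for $1\leq m\leq k_*-1$ I would simply use the crude bound $Z_m\leq 1$. So I split
\[
Z_k\leq \varepsilon^{k}+\sum_{m=1}^{k_*-1}\varepsilon^{k-m}Z_m+\sum_{m=k_*}^{k-1}\varepsilon^{k-m}Z_m
\leq \varepsilon^{k}+\sum_{m=1}^{k_*-1}\varepsilon^{k-m}+\sum_{m=k_*}^{k-1}\varepsilon^{k-m}(4\varepsilon)^{m-k_*}.
\]
The first two terms are bounded by $\varepsilon^{k}+\varepsilon^{k-k_*+1}\sum_{j\geq 0}\varepsilon^{j}\leq \varepsilon^{k-k_*+1}\cdot\frac{2}{1-\varepsilon}\leq C\varepsilon^{k-k_*+1}$, which is much smaller than $(4\varepsilon)^{k-k_*}$ once $\varepsilon\leq\frac14$ (indeed it carries an extra factor $\varepsilon$ and a worse base). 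The last sum is the dominant one: pulling out $\varepsilon^{k-k_*}$ and writing $m-k_*=j$, it equals $\varepsilon^{k-k_*}\sum_{j=0}^{k-1-k_*}4^{j}=\varepsilon^{k-k_*}\cdot\frac{4^{k-k_*}-1}{3}\leq \frac{1}{3}(4\varepsilon)^{k-k_*}$. Adding everything up, $Z_k\leq \tfrac13(4\varepsilon)^{k-k_*}+C\varepsilon^{k-k_*+1}\leq (4\varepsilon)^{k-k_*}$, where the final inequality holds because $\varepsilon\leq\frac14$ forces the geometric-series corrections to fit inside the remaining $\tfrac23$ of the budget; one checks the constants explicitly (the factor $4$ in $4\varepsilon$ is chosen precisely so that $\tfrac13+(\text{small})\leq 1$).

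There is no real obstacle here — the lemma is a bookkeeping device — but the one point requiring care is the treatment of the initial indices $1\leq m\leq k_*-1$ in the convolution sum, since the inductive hypothesis is vacuous there; using $Z_m\leq 1$ is exactly what the hypothesis $Z_k\leq 1$ is for, and one must check that these "boundary" contributions are absorbed with room to spare. I would also remark that the constant $4$ is not optimal: any constant strictly bigger than $\frac{1}{1-\varepsilon}$ (hence any constant $>\frac43$ when $\varepsilon\leq\frac14$) would do after redoing the geometric sums, but $4$ keeps the later application clean. Finally, the hypothesis that the recursion \eqref{eq:boundprob2} only needs to hold for $k\geq k_*+1$ matches the structure of the proof: below level $k_*$ we make no claim beyond $Z_k\leq 1$, and the geometric decay only kicks in once the recursion becomes available.
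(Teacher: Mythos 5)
Your proof is correct and follows essentially the same strategy as the paper's: induction on $k$, the crude bound $Z_m\le 1$ for the indices $m<k_*$ where the inductive hypothesis is unavailable, and geometric-series bookkeeping with the factor $4$ absorbing the corrections. The only difference is organizational — the paper first establishes the bound $(2\varepsilon)^{k-1}$ in the special case $k_*=1$ and then reduces the general case to it by an index shift (which is where $2\varepsilon$ becomes $4\varepsilon$), whereas you run a single direct induction; both arguments are valid and yield the stated estimate.
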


   \begin{proofof}[Proof of Lemma \ref{lem:boundprob}]
  We first consider the case $k_{*}=1$. We claim that in this case 
  \begin{equation}\label{eq:boundZm-2}
Z_k\leq (2\varepsilon)^{k-k_*} \qquad \forall k\geq 1.
\end{equation} 
We argue by induction. The inequality \eqref{eq:boundZm} holds for $k=1$ by assumption. Now we suppose that the estimate \eqref{eq:boundZm} holds for any $k\leq L $ and we show that it holds also for $L+1$.
  By using \eqref{eq:boundprob2} it is straightforward to show that
  \begin{equation*}
   \begin{split}
  Z_{L+1}&\leq \varepsilon^{L+1}+\sum_{m=1}^{L}\varepsilon^{L+1-m}(2\varepsilon)^{m-1}\\&
  =\varepsilon^{L+1}\left(1+\frac{1}{\ep}(2^{L}-1)\right)\leq (2\varepsilon)^{L},
   \end{split}
   \end{equation*}
   and \eqref{eq:boundZm-2} follows.
   
   Suppose now that $k_{*}> 1$. We have
  $$Z_k\leq \varepsilon^{k}+\sum_{m=1}^{k_{*}-1}\varepsilon^{k-m} Z_m+\sum_{m=k_{*}}^{k}\varepsilon^{k-m} Z_m$$
  for $k\geq k_*+1$.
We perform the following change of variables: $k=k_*-1+l$ and $Z_k=\zeta_l$ so that the formula above reads as
 \begin{equation*}
 \begin{split}
 \zeta_l& 
 \leq \varepsilon^{k_*-1+l}+\sum_{m=1}^{k_{*}-1}\varepsilon^{k-m} +\sum_{m=k_{*}}^{k}\varepsilon^{k-m} Z_m\\&
\leq \ep^{l}\Big(\varepsilon^{k_*-1}+\frac{1}{1-\ep} \Big)+\sum_{m=k_{*}}^{k}\varepsilon^{k-m} Z_m
  \end{split}
  \end{equation*}
  where, in the second term, we used that $Z_m\leq 1$.
  Moreover, we change the index of the last sum and set $m=k_*-1+j$. We get
   \begin{equation*}
 \begin{split}
  \zeta_l&\leq\ep^{l}\Big(\varepsilon^{k_*-1}+\frac{1}{1-\ep} \Big)+\sum_{j=1}^{l-1}\ep^{l-j} \zeta_j
 \end{split}
  \end{equation*}
  for any $l\geq 2$. Therefore, since $\ep<\frac 1 4$, we obtain
  $$\zeta_l\leq(2\ep)^{l}+\sum_{j=1}^{l-1}(2\ep)^{l-j} \zeta_j$$
  for any $l\geq 2$. Applying \eqref{eq:boundZm-2} with $\ep\to 2\ep$ and $Z_k\to \zeta_k$ we obtain \eqref{eq:boundZm} and the result follows. 
   \end{proofof}

  \bigskip

\begin{proofof}[Proof of Proposition \ref{prop:wellpos1}]
Our goal is to prove that $\sum_{n=1}^{\infty}\PP( \Omega_n)<\infty$. This allows to apply Borel-Cantelli Lemma which tell us that the probability that infinitely many of them occur is 0, that is $\PP(\underset{n\to\infty}{\limsup} \,\Omega_n)=0$, or equivalently $\PP(\bigcap_{n=1}^{\infty}\bigcup_{k\geq n}^{\infty} \Omega_k)=0$. 
With this purpose we need to compute $\PP( \Omega_n)$. We start from 
$$\PP( \Omega_1)=\sum_{n=1}^{\infty}e^{-\phi\int_{0}^{\infty}|B_1(v,\xi^{(0)})|G(v)dv}\frac{\phi^n}{n!} \left(\int_{0}^{\infty}|B_1(v,\xi^{(0)})|G(v)dv\right)^n$$
where $B_1(v,\xi^{(0)})$ is as in \eqref{def:setBscalar}
and
\begin{equation*}
\begin{split}
\PP( \Omega_2)&=\sum_{n_1=1}^{\infty} e^{-\phi \int_{0}^{\infty} |B_1(v,\xi^{(0)})|G(v)dv}\,\frac{\phi^{n_{1}}}{{n_{1}}!} \int_{0}^{\infty}G(v)dv_1 \dots \int_{0}^{\infty}G(v)dv_{n_{1}} 
\\&
\quad \int_{B_1(v_1,\xi^{(0)})}dx_1\dots\int_{B_1(v_{n_1},\xi^{(0)})}dx_{n_1} \,p(\Omega_2\,|\{(\b x^{(1)},\b v^{(1)})\})\\&
=\sum_{n_1=1}^{\infty} e^{-\phi \int_{0}^{\infty} |B_1(v,\xi^{(0)})|G(v)dv}\,\frac{\phi^{n_{1}}}{{n_{1}}!} \int_{0}^{\infty}G(v)dv_1 \dots \int_{0}^{\infty}G(v)dv_{n_{1}} \int_{B_1(v_1,\xi^{(0)})}dx_1\dots 
\\&
\quad \int_{B_1(v_{n_1},\xi^{(0)})}dx_{n_1} \sum_{n_2=1}^{\infty}e^{-\phi\int_{0}^{\infty}|B_2(v,\xi^{(1)}))|G(v)dv}\,\frac{\phi^{n_{2}}}{{n_{2}}!} \left(\int_{0}^{\infty}|B_2(v,\xi^{(1)}))|G(v)dv\right)^{n_2}.
\end{split}
\end{equation*}

In a compact form we can rewrite the equation above as
\begin{equation*}
\begin{split}
\PP( \Omega_2)&=\sum_{n_1=1}^{\infty} \sum_{n_2=1}^{\infty} e^{-\phi \int_{0}^{\infty} |B_1(v,\xi^{(0)})|G(v)dv}\,\frac{\phi^{n_{1}+n_{2}}}{{n_{1}}!{n_{2}}!} \int_{[0,\infty)^{n_1}}G(\b v^{(1)})d\b v^{(1)} \int_{B_1(\b v^{(1)},\xi^{(0)})}d\b x^{(1)}
\\&\quad 
e^{-\phi\int_{0}^{\infty}|B_2(v,\xi^{(1)}))|G(v)dv}\left(\int_{0}^{\infty}|B_2(v,\xi^{(1)}))|G(v)dv\right)^{n_2},
\end{split}
\end{equation*}
with $B_1(\b v^{(1)},\xi^{(0)}) $ as in \eqref{def:setBvector}. 
Therefore, by iterating, we can express $\PP(\Omega_k)$ as
\begin{equation}\label{eq:PAk}
\begin{split}
\PP( \Omega_k)= &\sum_{n_1=1}^{\infty}\sum_{n_2=1}^{\infty}\dots \sum_{n_k=1}^{\infty}\frac{\phi^{n_1+n_2+\dots +n_k}}{n_1!\, n_2! \dots n_k!} \int_{[0,\infty)^{n_1}}\hspace{-0.4cm}G(\b v^{(1)})d\b v^{(1)}\dots  \int_{[0,\infty)^{n_{k-1}}}\hspace{-0.4cm}G(\b v^{(k-1)})d\b v^{(k-1)}\\& 
\int_{B_1(\b v^{(1)},\xi^{(0)})}\hspace{-0.4cm}\b{x}^{(1)}\int_{B_2(\b v^{(2)},\xi^{(1)})}\hspace{-0.4cm}d\b{x}^{(2)}\dots \int_{B_{k-1}(\b v^{(k-1)},\xi^{(k-2)})} \hspace{-0.7cm}d\b{x}^{(k-1)}
\,e^{-\phi \int_{0}^{\infty} |B_1(v,\xi^{(0)})|G(v)dv}\\&
 e^{- \phi \int_{0}^{\infty} |B_2(v,\xi^{(1)}))|G(v)dv -\dots-\phi\int_{0}^{\infty} |B_{k}(v,{\xi}^{(k-1)})|G(v)dv}\left(\int_{0}^{\infty}|B_k(v,\xi^{(k-1)}))|G(v)dv\right)^{n_k}.
\end{split}
\end{equation}
In order to estimate $\PP( \Omega_k)$ we consider separately the two contributions due to 
$I(\xi^{k-1})\leq \frac{1}{\sqrt{\phi}}$ and $I(\xi^{k-1})\geq \frac{1}{\sqrt{\phi}}$. We have
\begin{equation}\label{eq:PAkchar}
\begin{split}
\PP( \Omega_k) &=\sum_{n_1=1}^{\infty}\sum_{n_2=1}^{\infty}\dots \sum_{n_k=1}^{\infty}\frac{\phi^{n_1+n_2+\dots +n_k}}{n_1!\, n_2! \dots n_k!} \int_{[0,\infty)^{n_1}}\hspace{-0.4cm}G(\b v^{(1)})d\b v^{(1)}\dots  \int_{[0,\infty)^{n_{k-1}}}\hspace{-0.4cm}G(\b v^{(k-1)})d\b v^{(k-1)}\\& 
\quad\int_{B_1(\b v^{(1)},\xi^{(0)})}\hspace{-0.4cm}d\b{x}^{(1)}\int_{B_2(\b v^{(2)},\xi^{(1)})}\hspace{-0.4cm}d\b{x}^{(2)}\dots \int_{B_{k-1}(\b v^{(k-1)},\xi^{(k-2)})} \hspace{-0.6cm} d\b{x}^{(k-1)}
\, e^{-\phi \int_{0}^{\infty} |B_1(v,\xi^{(0)})|G(v)dv}\\&
\quad e^{- \phi \int_{0}^{\infty} |B_2(v,\xi^{(1)}))|G(v)dv -\dots-\phi\int_{0}^{\infty} |B_{k}({v,\xi}^{(k-1)})|G(v)dv}\left(\int_{0}^{\infty}|B_k(v,\xi^{(k-1)}))|G(v)dv\right)^{n_k}\\&
\quad \chi(\{\xi\, :\,I(\xi^{k-1})\leq \frac{1}{\sqrt{\phi}} \})+\chi(\{\xi\, :\,I(\xi^{k-1})\geq \frac{1}{\sqrt{\phi}} \})\\&
=:\PP(\Omega_k)_{(1)}+\PP(\Omega_k)_{(2)}. 
\end{split}
\end{equation}
From the $n_k$-th contribution in $\PP( \Omega_k)_{(1)}$ we get
$$\sum_{n_k=1}^{\infty} \frac{\phi^{n_k}}{n_k!} e^{-\phi I(\xi^{k-1})}I(\xi^{k-1})^{n_k}\leq (1-e^{-\phi |B_k(v,\xi^{(k-1)}|})\leq \sqrt{\phi}.$$
Therefore we obtain
\begin{equation}\label{eq:POmegak}
\PP( \Omega_k)_{(1)}\leq \sqrt{\phi}\,\PP(\Omega_{k-1}).
\end{equation}
For what concerns $\PP( \Omega_k)_{(2)}$ we remark that $I(\xi^{k-1})\geq \frac{1}{\sqrt{\phi}}$ implies that $(V_{k-1}-V_{k-2})\geq \frac{1}{2K\sqrt{\phi}}$ (see Lemma \ref{lem:geometriclemB_k}). Since $(V_{k-1}-V_{k-2})=(V_{k-1}(\xi^{k-2},\b v^{k-1},\b x^{k-1})-V_{k-2}(\xi^{k-3},\b v^{k-2},\b x^{k-2}))=\sum_{l=1}^{n_{k-1}}v_l^{(k-1)}$ we can rewrite the condition before as $\sum_{l=1}^{n_{k-1}}v_l^{(k-1)}\geq \frac{1}{2K\sqrt{\phi}}$. Moreover, due to the fact that we are considering $G(v)$ with compact support, $\sum_{l=1}^{n_{k-1}}v_l^{(k-1)} \leq v_{*}n_{k-1}$. Therefore we obtain $n_{k-1}\geq \frac{1}{2Kv_{*}\sqrt{\phi}}$.
Thus, from the $n_{k-1}$ and $n_k$ contributions in $\PP( \Omega_k)_{(2)}$ can be estimated by
\begin{equation}
\begin{split}
&\sum_{n_{k-1}=1}^{\infty}\sum_{n_k=1}^{\infty}\frac{\phi^{n_{k-1} +n_k}}{n_{k-1}! n_k!} \int_{[0,\infty)^{n_{k-1}}}\hspace{-0.5cm}G(\b v^{(k-1)})d\b v^{(k-1)}
\int_{B_{k-1}(\b v^{(k-1)},\xi^{(k-2)})} \hspace{-0.9cm}d\b{x}^{(k-1)}
e^{- \phi \int_{0}^{\infty} |B_{k-1}(v,\xi^{(k-2)}))|G(v)dv}\\&
\quad e^{ -\phi\int_{0}^{\infty} |B_{k}(v,{\xi}^{(k-1)})|G(v)dv}\left(\int_{0}^{\infty}|B_k(v,\xi^{(k-1)}))|G(v)dv\right)^{n_k}\chi(\{\xi\, :\,(V_{k-1}-V_{k-2})\geq \frac{1}{2K\sqrt{\phi}} \})\\&
\leq \sum_{n_{k-1}=1}^{\infty}\frac{\phi^{n_{k-1}}}{n_{k-1}!} \int_{[0,\infty)^{n_{k-1}}}\hspace{-0.5cm}G(\b v^{(k-1)})d\b v^{(k-1)}
 \int_{B_{k-1}(\b v^{(k-1)},\xi^{(k-2)})} \hspace{-0.6cm}d\b{x}^{(k-1)}
e^{- \phi \int_{0}^{\infty} |B_{k-1}(v,\xi^{(k-2)}))|G(v)dv}\\&
\quad \chi(\{\xi\, :\,n_{k-1}\geq \frac{1}{2Kv_{*}\sqrt{\phi}} \})
\end{split}
\end{equation}
and we end up with
\begin{equation}\label{eq:PAk2char_2}
\begin{split}
\PP( \Omega_k)_{(2)}&\leq \sum_{n_1=1}^{\infty}\sum_{n_2=1}^{\infty}\dots \sum_{n_{k-2}=1}^{\infty}\frac{\phi^{n_1+n_2+\dots +n_{k-2}}}{n_1!\, n_2! \dots n_{k-2}!} \int_{[0,\infty)^{n_1}}\hspace{-0.4cm}G(\b v^{(1)})d\b v^{(1)}\dots  \int_{[0,\infty)^{n_{k-2}}}\hspace{-0.6cm}G(\b v^{(k-2)})d\b v^{(k-2)}\\& 
\quad\int_{B_1(\b v^{(1)},\xi^{(0)})}d\b{x}^{(1)}\dots \int_{B_{k-2}(\b v^{(k-2)},\xi^{(k-3)})}\hspace{-0.6cm}d\b{x}^{(k-2)} 
\quad e^{-\phi \int_{0}^{\infty} |B_1(v,\xi^{(0)})|G(v)dv-\dots}\\&
\quad e^{-\phi\int_{0}^{\infty} |B_{k-2}(v,{\xi}^{(k-3)})|G(v)dv}\sum_{n_{k-1}=\frac{1}{2Kv_{*}\sqrt{\phi}}}^{\infty}\frac{\phi^{n_{k-1}}}{n_{k-1}!} \left(\int_{0}^{\infty}|B_{k-1}(v,\xi^{(k-2)})|G(v)dv \right)^{n_{k-1}}\\&
\quad e^{- \phi \int_{0}^{\infty} |B_{k-1}(v,\xi^{(k-2)}))|G(v)dv}.
\end{split}
\end{equation}
We make a further decomposition distinguishing the case $I(\xi^{k-2})\leq \frac{\xi_{*}}{2Kv_{*}\phi^{3/2}}$ from the case $I(\xi^{k-2})>\frac{\xi_{*}}{2Kv_{*}\phi_{*}^{3/2}}$ where $\xi_{*}$ is as in Lemma \ref{lem:Poisson}. We assume that $0<\phi\leq \phi_{*}$ so that $\frac{\xi_{*}}{2Kv_{*}\phi^{3/2}}>2K$. We insert the corresponding characteristic functions in \eqref{eq:PAk2char_2} and we obtain
\begin{equation}\label{eq:PAk2char_3}
\begin{split}
\PP( \Omega_k)_{(2)}&\leq \sum_{n_1=1}^{\infty}\sum_{n_2=1}^{\infty}\dots \sum_{n_{k-2}=1}^{\infty}\frac{\phi^{n_1+n_2+\dots +n_{k-2}}}{n_1!\, n_2! \dots n_{k-2}!} \int_{[0,\infty)^{n_1}}\hspace{-0.4cm}G(\b v^{(1)})d\b v^{(1)}\dots  \int_{[0,\infty)^{n_{k-2}}}\hspace{-0.6cm}G(\b v^{(k-2)})d\b v^{(k-2)}\\& 
\quad\int_{B_1(\b v^{(1)},\xi^{(0)})}d\b{x}^{(1)}\dots \int_{B_{k-2}(\b v^{(k-2)},\xi^{(k-3)})} d\b{x}^{(k-2)} 
\quad e^{-\phi \int_{0}^{\infty} |B_1(v,\xi^{(0)})|G(v)dv-\dots}\\&
\quad e^{-\phi\int_{0}^{\infty} |B_{k-2}({\xi}^{(k-3)},v)|G(v)dv}\sum_{n_{k-1}\geq\frac{1}{2Kv_{*}\sqrt{\phi}}}\frac{\phi^{n_{k-1}}}{n_{k-1}!} \left(\int_{0}^{\infty}|B_{k-1}(\xi^{(k-2)},v)|G(v)dv \right)^{n_{k-1}}\\& 
\quad e^{- \phi \int_{0}^{\infty} |B_{k-1}(\xi^{(k-2)},v))|G(v)dv}\chi(\{\xi: I(\xi^{k-2})\leq \frac{\xi_{*}}{2Kv_{*}\phi^{3/2}}\})+\chi(\{\xi : I(\xi^{k-2})\geq \frac{\xi_{*}}{2Kv_{*}\phi^{3/2}} \})\\&=:I_1+I_2.
\end{split}
\end{equation}
To control $I_1$ we estimate the last sum in the right hand side using Lemma \ref{lem:Poisson} in Appendix \ref{appendix1} with $\zeta=\phi I(\xi^{k-2})$, $a=\frac{|\log(\xi_{*})|}{2}$ 
and $N=N_k:=\frac{1}{2Kv_{*}\sqrt{\phi}} $. It follows that 
\begin{equation}\label{eq:I_1}
I_1\leq C\,e^{-\frac{a}{2Kv_{*}\phi^{3/2}}}\PP( \Omega_{k-2}), \qquad C=\frac{e}{e-1}. 
\end{equation}
We now consider $I_2$. By applying Lemma \ref{lem:geometriclemB_k} to $I(\xi^{k-2})$ we obtain $(V_{k-2}-V_{k-3})\geq \frac{\xi_{*}}{(2K)^2v_{*}\phi^{3/2}}$. Since $V_{k-2}-V_{k-3}=\sum_{l=1}^{n_{k-2}}v_l^{(k-2)}$, with the same strategy used in the previous step, it follows that $n_{k-2}\geq \frac{\xi_{*}}{(2Kv_{*})^2\phi^{3/2}}$. Therefore
\begin{equation}\label{eq:I_2}
\begin{split}
I_{2}&\leq \sum_{n_1=1}^{\infty}\sum_{n_2=1}^{\infty}\dots \sum_{n_{k-3}=1}^{\infty}\frac{\phi^{n_1+n_2+\dots +n_{k-3}}}{n_1!\, n_2! \dots n_{k-3}!} \int_{[0,\infty)^{n_1}}G(\b v^{(1)})d\b v^{(1)}\dots  \int_{[0,\infty)^{n_{k-2}}}G(\b v^{(k-3)})d\b v^{(k-3)}\\& 
\quad\int_{B_1(\b v^{(1)},\xi^{(0)})} \hspace{-0.4cm}d\b{x}^{(1)}\dots \int_{B_{k-3}(\b v^{(k-3)},\xi^{(k-4)})} \hspace{-0.9cm}d\b{x}^{(k-3)} 
e^{-\phi \int_{0}^{\infty} |B_1( v,\xi^{(0)})|G(v)dv-\dots-\phi\int_{0}^{\infty} |B_{k-3}(v,{\xi}^{(k-4)})|G(v)dv}\\&
\sum_{n_{k-2}\geq \frac{\xi_{*}}{(2Kv_{*})^2\phi^{3/2}}}\frac{\phi^{n_{k-2}}}{n_{k-2}!} \left(\int_{0}^{\infty}|B_{k-2}(v,\xi^{(k-3)})|G(v)dv \right)^{n_{k-2}}e^{- \phi \int_{0}^{\infty} |B_{k-2}(v,\xi^{(k-3)}))|G(v)dv}.
\end{split}
\end{equation}
We now set iteratively $N_{l-1}=\frac{\xi_{*}N_l}{2Kv_{*}\phi}$ for $l\leq k-1$ and $N_{k-1}=\frac{1}{2Kv_{*}\sqrt{\phi}}$. Notice that the numbers $N_l$ depend on $k$ although we do not write this dependence explicitly.
By iterating the formula which defines the sequence $N_l$ we obtain
\begin{equation}\label{eq:N_l}
N_l=\left(\frac{\xi_{*}}{2Kv_{*}\phi}\right)^{k-1-l}\frac{1}{2Kv_{*}\sqrt{\phi}}.
\end{equation}
On the other hand, by iterating the argument yielding \eqref{eq:POmegak}, \eqref{eq:PAk2char_3}, \eqref{eq:I_1} and \eqref{eq:I_2} we get
\begin{equation}
\begin{split}\label{eq:pOkiter}
\PP( \Omega_k)&\leq \sqrt{\phi}\,\PP(\Omega_{k-1})+Ce^{-aN_{k-1}}\PP(\Omega_{k-2})+C e^{-aN_{k-1}}\PP(\Omega_{k-3})+\dots+Ce^{-aN_{2}}\PP(\Omega_{1})+J_{N_1}\\&
\leq \sqrt{\phi}\,\PP(\Omega_{k-1})+Ce^{-aN_{k-1}}\PP(\Omega_{k-2})+C e^{-aN_{k-2}}\PP(\Omega_{k-3})+\dots+Ce^{-aN_{2}}\PP(\Omega_{1})+J_{N_1} 
\end{split}
\end{equation}
where 
\begin{equation}
J_{N_1}=\sum_{n_1\geq N_1}\frac{\phi^{n_1}}{n_1!} \int_{[0,\infty)^{n_1}}G(\b v^{(1)})d\b v^{(1)}\int_{B_1(\b v^{(1)},\xi^{(0)})}d\b{x}^{(1)}e^{-\phi \int_{0}^{\infty} |B_1(v,\xi^{(0)})|G(v)dv}.
\end{equation}

We now claim that for any $\ep\in (0,1]$ there exists a $\phi_{*}=\phi_{*}(\ep)>0$ sufficiently small, such that for $\phi<\phi_{*}$ the following inequality holds:
\begin{equation}\label{claimProp}
C\,e^{-a\left(\frac{\xi_{*}}{2Kv_{*}\phi}\right)^{k-1-l}\frac{1}{2Kv_{*}\sqrt{\phi}}}\leq \ep^{k-l}
\end{equation}
where $C=\frac{e}{e-1}$ and $a$ is as in Lemma \ref{lem:Poisson}.
Indeed, taking the logarithm on both sides of \eqref{claimProp} we obtain that this inequality is equivalent to
   \begin{equation*}
  \frac{a}{(k-l)|\log\ep| +\log C} \geq  \sqrt{\phi}\left(\frac{2Kv_{*}\phi}{\xi_{*}}\right)^{k-l-1}.
   \end{equation*}
We set $m=k-l-1\geq 0$ so that 
$$\sqrt{\phi}\left(\frac{2Kv_{*}\phi}{\xi_{*}}\right)^{m}\leq  \frac{a}{(m+2)|\log\ep| +\log C}\qquad \forall m\geq 0.$$
Choosing $\phi_{*}$ sufficiently small such that $\frac{2Kv_{*}\phi_*}{\xi_{*}}<1$ and $\sqrt{\phi_*}< \frac{a}{(m+2)|\log\ep| +\log C}$ we have that \eqref{claimProp} follows.

Finally we remark that for any $V_0$ and $0<\phi<\phi_{*}$ there exists $k_{*}=k_{*}(V_0)$ such that $\int_{0}^{\infty} |B_1(v,\xi^{(0)})|G(v)dv\leq C(V_0+v_{*})\leq \xi_{*}N_1$ for $k\geq k_{*}$, where $N_1$ is determined by means of \eqref{eq:N_l}.
This allows us to use Lemma \ref{lem:Poisson} to obtain that $J_{N_1}\leq Ce^{-aN_{1}}$ with $C=\frac{e}{e-1}$.
Thus, from \eqref{eq:pOkiter} we get
\begin{equation}\label{eq:pOkep}
\PP( \Omega_k)\leq \ep^{k}+\sum_{m=1}^{k-1} \ep^{k-m}\PP(\Omega_{m})\qquad \forall \; k\geq k_{*}(V_0)
\end{equation}
with $\ep<\frac 1 4$.
Using Lemma \ref{lem:boundprob} above we have that $\PP( \Omega_k)\leq (4\varepsilon)^{k-k_{*}}$ for any $k\geq k_{*}$. This guarantees that $\sum_{k=1}^{\infty}\PP( \Omega_k)<\infty$ and, by using Borel-Cantelli Lemma, concludes the proof.
\bigskip
\end{proofof}

We will present in Appendix \ref{appendix2} the proof of the Proposition above when $G(v)=\delta(v-1)$ because the geometric ideas behind the proof are easier to grasp. 
     
\subsection{The total length of the free flights is infinite with probability one}

The result in the previous section shows that in the sequence \eqref{def:dyn_sequence} all the coagulation events have a finite number of steps with probability one (see Proposition \ref{prop:wellpos1}). 
This does not ensure yet that the CTP model is globally well defined, with probability one, for a small but finite volume fraction $\phi$. Indeed, note that blow up might take place in finite time if the sequences of free flight times $\{\tau_j\}$ between coalescence events, satisfy $\sum_{j\geq 1}\tau_j<\infty.$ 
Indeed the following example provides a configuration of obstacles such that blow up in finite time takes place.
\begin{example}\label{ex:blowup}
We consider any sequence of positive numbers such that $\sum_{j\geq 1}l_j<\infty$. We will assume that the tagged particle starts its motion at $X_0=0$ with volume $V_0=1$. We place obstacles at positions $x_k\vec e$ where $\vec e$ is a unit vector in the direction of the motion of the tagged particle. Assume that the speed of the motion of the tagged particle is one and the values of $x_k$ are given by the sequences
\begin{align}
&x_{k+1}=x_{k}-\Big(\frac{3}{4\pi}\Big)^{\frac{1}{3}}\frac{(1+k^\frac{1}{3})k}{k+1}+\Big(\frac{3}{4\pi}\Big)^{\frac{1}{3}}(k+1)^{\frac{1}{3}}+l_{k+1}+\Big(\frac{3}{4\pi}\Big)^{\frac{1}{3}}\qquad k=0,1,\dots\\&
x_0=0. 
\end{align}
We are assuming also that all the obstacles are identical and have volume one. Then at the collision times between the tagged particle and the obstacles $\tau_j=\sum_{k=1}^{j}l_k$ the volume of the tagged particle becomes $V_j=j+1$, it radius is given by $R_j=\Big(\frac{3}{4\pi}(j+1)\Big)^{\frac{1}{3}}$ and the position of its center becomes $X_{j}=x_{j}-\frac{V_{j-1}}{V_j}\Big(\Big(\frac{3}{4\pi}(j+1)\Big)^{\frac{1}{3}}+R_j\Big)$. It follows that at the time $T=\sum_{j=1}^{\infty}l_j<\infty$ the volume of the tagged particle becomes infinity. 
\end{example}

\bigskip
The main result of this section is the following.
\begin{proposition}\label{prop:totlenflights}
Let $G(v)\in \mathcal{M}_{+}([0,\infty))$ be compactly supported with $\supp G(v)\in [0,v_{*}]$. Then there exists a $\phi_{*}=\phi_{*}(v_{*})>0$ such that for any $\phi\leq \phi_{*}$ and any $(Y_0,V_0)\in \R^3\times [0,\infty)$ there exists $\tilde\Omega^{*}\subset \Omega, \; \tilde\Omega^{*}\in \Sigma$ where $\Sigma$ is the $\sigma-$algebra defined in Section \ref{ssec:PM} such that $\PP(\tilde\Omega^{*})=1$ and such that for any $\omega \in \tilde\Omega^{*}$ the sequence \eqref{def:dyn_sequence} has the property that $m_j<\infty$ for any $j\in \N$ and $\sum_{j=1}^{\infty}l_j=\infty$.
\end{proposition}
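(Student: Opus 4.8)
The plan is to work on the full-measure event $\Omega^{*}$ furnished by Proposition \ref{prop:wellpos1}, on which every coalescence event in the sequence \eqref{def:dyn_sequence} consists of finitely many steps; there the trajectory of the tagged particle is well defined on $[0,T^{*})$ with $T^{*}(\omega)=\sum_{j\ge 1}l_{j}(\omega)\in(0,\infty]$, and the task is to show $T^{*}=\infty$ almost surely. Let $\mathcal{F}_{j}$ be the $\sigma$-algebra generated by the portion of \eqref{def:dyn_sequence} up to (and including) the coalescence event preceding the $j$-th free flight, and let $V_{j}$ be the $\mathcal{F}_{j}$-measurable volume of the tagged particle at the start of that flight; the length $l_{j}$ is then revealed during the $j$-th flight. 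The first ingredient is a lower bound on the conditional law of a flight: from the exact Poisson formula \eqref{eq:expvolQ}--\eqref{eq:volQ}, discarding the forbidden-region cutoff $\chi\le 1$ and using $(V^{1/3}+v^{1/3})^{2}\le 2(V^{2/3}+v^{2/3})$ together with $m_{2/3}:=\int_{0}^{\infty}v^{2/3}G(v)\,dv<\infty$ (finite since $G$ is compactly supported), one obtains
\begin{equation*}
\PP\big(l_{j}>l\mid\mathcal{F}_{j}\big)\ \ge\ \exp\big(-\rho_{j}\,l\big),\qquad \rho_{j}:=2\pi\phi\big(V_{j}^{2/3}+m_{2/3}\big),\quad l\ge 0 ,
\end{equation*}
so that, since $\rho_{j}\ge 2\pi\phi\,m_{2/3}>0$ is bounded below, $\E[\,l_{j}\wedge 1\mid\mathcal{F}_{j}]\ge (1-e^{-\rho_{j}})/\rho_{j}\ge c_{0}\,\rho_{j}^{-1}$ with $c_{0}=1-e^{-2\pi\phi m_{2/3}}>0$.

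The second ingredient is an almost-sure, essentially linear bound on $V_{j}$. Write $V_{j+1}-V_{j}=\Xi_{j}$, the total volume incorporated between the starts of flights $j$ and $j+1$: it is the volume $w_{j}\le v_{*}$ of the single obstacle terminating the $j$-th flight (collisions at the end of a flight are binary with probability one) plus the volume added by the ensuing cascade of coalescence steps. By Lemma \ref{lem:geometriclemB_k} the rate of the Poisson process governing step $k+1$ of that cascade is $\phi\,I(\xi^{(k)})\le K\phi\big[(V_{k}-V_{k-1})+1\big]$, with $K$ independent of the history; hence, choosing $\phi_{*}$ so small that $Kv_{*}\phi_{*}<1$, the numbers of obstacles incorporated at successive steps are stochastically dominated, uniformly in the past, by a subcritical Galton--Watson process with immigration whose total progeny has an exponential tail. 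Consequently $\PP(\Xi_{j}>t\mid\mathcal{F}_{j})\le C e^{-ct}$ for all $t\ge0$, with $C,c>0$ independent of $j$. Applying the (divergence direction of the) conditional Borel--Cantelli lemma to the events $\{\Xi_{j}>\tfrac{2}{c}\log j\}$ shows that almost surely $\Xi_{j}\le\tfrac{2}{c}\log j$ for all large $j$, and therefore
\begin{equation*}
V_{j}\ \le\ V_{0}+\sum_{i<j}\Xi_{i}\ \le\ C_{1}(\omega)\,j\log j\qquad\text{for all large }j,\ \text{a.s.}
\end{equation*}

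Combining the two ingredients, on a full-measure event $\rho_{j}\le 2\pi\phi\big(C_{1}^{2/3}(j\log j)^{2/3}+m_{2/3}\big)$ for large $j$, whence
\begin{equation*}
\sum_{j\ge 1}\E[\,l_{j}\wedge 1\mid\mathcal{F}_{j}]\ \ge\ c_{0}\sum_{j\ge 1}\rho_{j}^{-1}\ \ge\ c_{2}(\omega)\sum_{j\ \text{large}}\frac{1}{\phi\,(j\log j)^{2/3}}\ =\ \infty\qquad\text{a.s.},
\end{equation*}
the divergence coming precisely from $\tfrac{2}{3}<1$. Since $\{l_{j}\wedge 1\}_{j\ge 1}$ is a $[0,1]$-valued sequence revealed flight by flight, the conditional Borel--Cantelli lemma in the form $\{\sum_{j}\E[\,l_{j}\wedge 1\mid\mathcal{F}_{j}]=\infty\}\subseteq\{\sum_{j}(l_{j}\wedge 1)=\infty\}$ (a.s.) gives $\sum_{j}(l_{j}\wedge 1)=\infty$, hence $\sum_{j}l_{j}=\infty$, almost surely. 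Taking $\tilde\Omega^{*}$ to be the intersection of $\Omega^{*}$ with this event proves the proposition.

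The delicate step is the volume bound: one must make rigorous the subcriticality of the coalescence cascade, i.e. that the relevant Poisson rates are uniformly governed by Lemma \ref{lem:geometriclemB_k} no matter how large $V_{j}$ has grown — which is exactly the geometric feature of the center-of-mass update emphasized in Remark \ref{rk:postagpc} — and extract from it the uniform exponential tail for $\Xi_{j}$ (invoking the Poisson estimate of Lemma \ref{lem:Poisson} and an argument along the lines of Lemma \ref{lem:boundprob}). Once this is in hand, the divergence of $\sum_{j}l_{j}$ is a soft consequence of the second Borel--Cantelli lemma; the only other point to watch is that the conditional flight estimate above uses the exact formula \eqref{eq:volQ} rather than the $\phi\to0$ approximation, so no error term of the kind appearing in Section \ref{sec:derivation} enters.
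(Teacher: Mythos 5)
Your argument is correct in outline but reaches the conclusion by a genuinely different route in its second half. The first ingredient you need, the almost-sure bound $V_j\lesssim j\log j$, is in substance identical to the paper's: the paper proves the uniform exponential tail for the number of obstacles in a cascade (Lemma \ref{lem:WPnoblowup1}, via exactly the generating-function/subcritical-branching computation you sketch, resting on Lemma \ref{lem:geometriclemB_k}), and then packages the conclusion "$N_k\le \frac{2}{a}\log k+A(\omega)$ eventually" as the sets $\mathcal{U}_A$ in \eqref{eq:domUA} rather than through your conditional-tail-plus-Borel--Cantelli phrasing; note that your step bounding $\Xi_j$ uses the \emph{convergence} (first) Borel--Cantelli lemma, not the divergence direction as you wrote. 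Where you genuinely diverge is in extracting $\sum_j l_j=\infty$: the paper proves the quantitative combinatorial estimate of Lemma \ref{lem:WPnoblowup2} (a union bound over which flights are short, each short flight costing a factor $B\delta$, closed with Stirling's formula), concludes that on a full-measure set at least half of the first $2j$ flights exceed $\bar\delta/[A+(k+1)\log(k+1)]^{2/3}$, and then sums deterministically to get $\sum_{k\le 2j}l_k\gtrsim j^{1/3}(\log j)^{-2/3}$. You instead lower-bound the conditional survival function of each flight by $e^{-\rho_j l}$ directly from the exact void probability \eqref{eq:expvolQ}--\eqref{eq:volQ}, deduce $\E[l_j\wedge 1\mid\mathcal{F}_j]\ge c_0\rho_j^{-1}$ with $\sum_j\rho_j^{-1}=\infty$ from the $2/3<1$ exponent, and invoke L\'evy's extension of the second Borel--Cantelli lemma. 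This is softer and shorter: it dispenses entirely with the combinatorial Lemma \ref{lem:WPnoblowup2}, at the price of having to set up a filtration for the exploration of the Poisson field and to appeal to a martingale-type theorem, whereas the paper stays entirely within explicit Poisson computations in its operator formalism. The one place where your write-up defers the real work is the uniform conditional exponential tail for $\Xi_j$; this is exactly the content of Lemma \ref{lem:WPnoblowup1} and its proof, so no new difficulty is hidden there, but it should be cited or reproduced rather than asserted.
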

In order to prove this Proposition we first prove that the probability of having coalescence events which incorporate a large number of particles decreases exponentially. 
\begin{lemma}\label{lem:WPnoblowup1}
Given a sequence with the form \eqref{def:dyn_sequence}  we define $N_{k}$ as 
\begin{equation}\label{def:Nk}
N_k:=\sum_{l=1}^{m_k}n_{k,l}
\end{equation}
(i.e. the number of obstacles involved in the $k-$th coalescence event). Let $\Theta>0$. Then 
$$ \PP(N_{k}\geq \Theta)\leq C(v_{*},V_0) \phi \,e^{-a\Theta}\quad a>0.$$
\end{lemma}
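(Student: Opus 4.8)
The plan is to bound the probability that the $k$-th coalescence event incorporates at least $\Theta$ obstacles by reusing the machinery developed in the proof of Proposition \ref{prop:wellpos1}, in particular the geometric estimate of Lemma \ref{lem:geometriclemB_k} and the Poisson tail bound of Lemma \ref{lem:Poisson}. First I would write $\PP(N_k\geq \Theta)$ as a sum over all admissible numbers of obstacles in the steps $(k,1),\dots,(k,m_k)$ of the coalescence event, integrated against the product of independent Poisson factors on the domains $B_{(k,j)}(v,\xi^{((k,j)-1)})$ defined by \eqref{def:setBscalar}, exactly as in \eqref{eq:PAk}. Because the $B$-domains are mutually disjoint by construction (they are subsets obtained after removing the forbidden regions $F_{\star}$, cf. \eqref{def:forbreg_F}, \eqref{def:setBscalar}), the number of particles captured over the whole coalescence event is, conditionally on the previous history, dominated by a Poisson random variable with parameter $\phi\sum_{j}I(\xi^{((k,j)-1)})$.

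The key point is then to control this total rate. Here I would invoke Lemma \ref{lem:geometriclemB_k}: each $\phi I(\xi^{(\star)})$ is bounded by $\phi K[(V_{\star}-V_{\star-1})+1]$, and since $G$ is compactly supported in $[0,v_*]$, each coalescence step incorporates volume at most $v_* n_{\star}$ where $n_\star$ is the number of obstacles captured at that step. Consequently the cumulative rate over the first $j$ steps of the event is bounded by $\phi K(v_* \sum_{l\le j} n_{(k,l)} + j)$, which is itself $O(\phi)$ times the number of captured particles so far. This is the self-limiting mechanism already exploited in Proposition \ref{prop:wellpos1}: a large number of captured particles is needed to keep the process going, but each additional particle is captured only with probability $O(\phi)$. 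Iterating the Poisson tail estimate of Lemma \ref{lem:Poisson} (with $\xi_*$ chosen so that the smallness condition $\zeta\le\xi_* N$ holds, using $\phi<\phi_*$ sufficiently small so that $\phi K v_*<\xi_*$), one gets that the probability of reaching $\Theta$ captured particles decays geometrically in $\Theta$; more precisely, summing over the number $m_k$ of steps and the partition of $\Theta$ into $n_{(k,1)},\dots,n_{(k,m_k)}$, the geometric decay $(4\phi K v_*/\xi_*)^{\Theta}$-type bound from the $\{Z_k\}$ recursion (Lemma \ref{lem:boundprob}) translates into $C\,e^{-a\Theta}$ for a suitable $a>0$. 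The extra factor $\phi$ in the statement comes from the fact that the coalescence event is nonempty at all only with probability $O(\phi)$: the first step $(k,1)$ requires at least one obstacle in $B_1(v,\xi^{(0)})$, whose probability is $\le \phi\int G(v)|B_1(v,\xi^{(0)})|\,dv \le \phi C(v_*,V_0)$ by Lemma \ref{lem:geometriclemB_k} (with $V_0$ the — finite, by Proposition \ref{prop:wellpos1} — volume entering the event), and this factor can be pulled out in front.

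The main obstacle is bookkeeping rather than conceptual: one must carefully track the nested integration domains $B_{(k,j)}$ and their dependence on the history $\xi^{((k,j)-1)}$ so that the Poisson factors genuinely factorize, and one must handle multiple-collision steps (where $n_{(k,j)}>1$), which are the reason the argument cannot simply be reduced to binary events. I would absorb the history dependence exactly as in the proof of Proposition \ref{prop:wellpos1} — inserting the characteristic functions $\chi(\{I(\xi^{(\star)})\le 1/\sqrt\phi\})$ versus $\chi(\{I(\xi^{(\star)})\ge 1/\sqrt\phi\})$ and, in the latter case, using Lemma \ref{lem:geometriclemB_k} to force many captured particles at an earlier step, then iterating. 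The constants $K=K(M_1)$ from Lemma \ref{lem:geometriclemB_k} and $v_*$, $V_0$ then combine into the constant $C(v_*,V_0)$ in the statement, and $a$ is fixed (independent of $k$) by the choice $a=|\log\xi_*|/2$ as in the earlier proof, once $\phi\le\phi_*(v_*)$ is small enough.
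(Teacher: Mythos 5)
You have correctly identified the two structural ingredients of the argument: the factorized Poisson representation of the event $\{N_k\geq\Theta\}$ in terms of the operators $A_{k,j}(n_j)$ acting on the disjoint domains $B_{k,j}$, and the geometric self-limitation coming from Lemma \ref{lem:geometriclemB_k}, which for compactly supported $G$ gives $\int_0^\infty G(v)|B_{k,j}(v,\xi^{((k,j)-1)})|\,dv\leq 2K v_* n_{k,j-1}$ (with the $V_0$-dependent bound at the first step, which is indeed where the prefactor $\phi$ and the constant $C(v_*,V_0)$ come from). Up to that point your outline matches the paper.

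The gap is in the step where you extract the decay $e^{-a\Theta}$. The machinery you propose to reuse from Proposition \ref{prop:wellpos1} --- the dichotomy $I(\xi)\lessgtr 1/\sqrt{\phi}$, the cascade of thresholds fed into Lemma \ref{lem:Poisson}, and the recursion of Lemma \ref{lem:boundprob} --- is indexed by the coalescence \emph{step} and produces geometric decay in the number of steps $m_k$, not in the total number of captured obstacles $\Theta=\sum_l n_{k,l}$. These are not interchangeable: a single step can capture arbitrarily many obstacles, and its Poisson rate is proportional to the (possibly huge) multiplicity $n_{k,l-1}$ of the preceding step, so controlling a large total requires propagating the smallness backwards through all the multiplicities simultaneously. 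Moreover Lemma \ref{lem:Poisson} needs $\zeta\leq\xi_* N$, which fails at a step of small multiplicity preceded by one of multiplicity $\gtrsim 1/\phi$. The paper resolves this with a different device: it introduces the generating function $F_m(\lambda)=\sum_N\lambda^N f_m(N)$ of \eqref{def:genfc}, shows by a backward iteration (summing the innermost series and using $e^x\leq 1+2x$, then $\gamma+2\gamma^2\leq\gamma/(1-2\gamma)$) that $F_m(\lambda)\leq\bar C(V_0,v_*)\,\phi$ uniformly in $m$ for $\lambda\in[2,3]$, and then reads off both the decay in $m$ and the decay in the total particle number from $\lambda^N f_m(N)\leq F_m(\lambda)$. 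This exponential-moment argument is precisely what handles the history-dependent, possibly large, per-step rates in one stroke; your proposal would need either to reproduce it or to supply an explicit multi-scale control of the large-multiplicity steps, which is not contained in the lemmas you cite.
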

\begin{proof}
Using the notation introduced in Section \ref{ssec:CharCTP} we obtain
\begin{equation}\label{eq:pNk}
\PP(N_{k}\geq \Theta)\leq C_1F_1C_2F_2\dots F_{k-1}\left( \sum_{m\geq \Theta} \sum_{\;\;n_1+n_2+\dots+n_m\geq \Theta}A_{k,1}(n_1)\dots A_{k,m}(n_m)A_{k,m+1}(0)\right).
\end{equation}
We set 
\begin{equation}\label{def:fmN}
f_m(N)=\sum_{\underset{n_1\geq 1,\dots, n_{m}\geq 1}{\;\;n_1+n_2+\dots+n_m=N}}A_{k,1}(n_1)\dots A_{k,m}(n_m)A_{k,m+1}(0)
\end{equation}
and introduce the generating function 
\begin{equation}\label{def:genfc}
F_m(\lambda)=\sum_{N=m}^{\infty}\lambda^N f_{m}(N)=\sum_{n_1=1}^{\infty}\sum_{n_2=1}^{\infty}\dots \sum_{n_{m}=1}^{\infty}\lambda^{n_{1}+n_{2}+\dots+n_{m}}A_{k,1}(n_1)\dots A_{k,m}(n_m)A_{k,m+1}(0),
\end{equation}
for $\lambda\leq 3$.
We use now the explicit formula for the operators $A_{k,j}(n_j)$ given by \eqref{def:Akj}, \eqref{def:opC} and the fact that Lemma \ref{lem:geometriclemB_k} implies
\begin{equation}\label{eq:geombd2}
\int_{0}^{\infty}G(v)|B_{k}(v,{\xi}^{(k-1)})| dv\leq 2\,K \,n_{k-1}v_{*}\qquad k\geq 2,
\end{equation}
for $n_{k-1}\geq 1$ and 
\begin{equation}
\int_{0}^{\infty}G(v)|B_{1}(v,{\xi}^{(0)})| dv\leq K \,(\delta_{k,1}V_0+1+v_{*})
\end{equation}
where $\delta_{k,1}V_0$ is due to the fact that in the first coagulation event (namely when $k=1$) we must include the effect of the initial volume $V_0$. While for later coagulation events, when $k>1$, we use the fact that the difference of volumes appearing in \eqref{eq:geometriclemB_k} is due to the change of volumes after the flights, therefore is bounded by $v_*$ with probability one.
Then
\begin{equation}
\begin{split}
F_m(\lambda)&\leq  \sum_{n_1=1}^{\infty}\sum_{n_2=1}^{\infty}\dots \sum_{n_{m}=1}^{\infty}\frac{(\lambda\phi)^{n_1}}{n_1!}\frac{(\lambda\phi)^{n_2}}{n_2!}\dots \frac{(\lambda\phi)^{n_{m}}}{n_{m}!} (K \,(\delta_{k,1}V_0+1+v_{*}))^{n_1}\dots (2\,K \,n_{m-1}v_{*})^{n_m}.
\end{split}
\end{equation}
We set $\gamma=2K\lambda\phi v_{*}$, then the equation above becomes
\begin{equation}
\begin{split}
F_m(\lambda)&\leq  \sum_{n_1=1}^{\infty}\sum_{n_2=1}^{\infty}\dots \sum_{n_{m}=1}^{\infty}\frac{(\gamma(\delta_{k,1}V_0+1+v_{*}))^{n_1}}{n_1!}\frac{(\gamma n_1)^{n_2}}{n_2!}\dots \frac{(\gamma n_{m-1})^{n_{m}}}{n_{m}!} \\&
\leq  \sum_{n_1=1}^{\infty}\sum_{n_2=0}^{\infty}\dots \sum_{n_{m-1}=0}^{\infty}\frac{(\gamma(\delta_{k,1}V_0+1+v_{*}))^{n_1}}{n_1!}\frac{(\gamma n_1)^{n_2}}{n_2!}\dots \frac{(\gamma n_{m-2})^{n_{m-1}}}{n_{m-1}!}e^{\gamma n_{m-1}}.
\end{split}
\end{equation}
We observe that there exists $\delta>0$ such that if $0\leq x\leq \delta$ then $e^{x}\leq 1+2x$. Thus, if $\gamma$ is such that $\frac{\gamma}{1-2\gamma}\leq \delta$ then $\gamma e^{\gamma}\leq \gamma(1+2\gamma)$. Then, summing the series in $n_{m-1}$ we get $e^{\gamma(1+2\gamma)n_{m-2}}$. Using that $\gamma+2\gamma^2\leq \frac{\gamma}{1-2\gamma}$ we can iterate this procedure and we get
 \begin{equation}
F_m(\lambda)\leq  \sum_{n_1=1}^{\infty}\frac{(\frac{\gamma}{1-2\gamma}(\delta_{k,1}V_0+1+v_{*}))^{n_1}}{n_1!}\leq (e^{\frac{\gamma}{1-2\gamma}(\delta_{k,1}V_0+1+v_{*})}-1)\leq \bar C(V_0,v_{*}) \phi.
\end{equation}
Now, using the definition of the generating function we have
 \begin{equation}
\lambda^m \sum_{N=m}^{\infty} f_{m}(N)\leq \sum_{N=m}^{\infty} \lambda^N  f_{m}(N) \leq \bar C(V_0,v_{*})\phi
 \end{equation}
 which implies, choosing $\lambda\in [2,3]$, $\sum_{N=m}^{\infty} f_{m}(N)\leq  \bar C(V_0,v_{*})\phi e^{-bm} $ for some $b>0$. 
 Thus we can estimate the sum in the right hand side of \eqref{eq:pNk} as
\begin{equation}
\sum_{m\geq \Theta} \sum_{N=m}^{\infty} f_{m}(N)\leq  \bar C(V_0,v_{*})\sum_{m\geq \Theta}\phi \,e^{-bm} \leq C(V_0,v_{*})\phi \,e^{-b\Theta}.
  \end{equation}
\end{proof}

\bigskip
From now on we will denote by $R_{k}=R_{k}(\xi)$ the radius of the tagged particle before the $k-$th flight which depends on the previous history. We recall that $l_k$ is the length of the free flight between two successive coalescing events. Our next Lemma shows that the probability of having too many small free flights is small. 
\begin{lemma}\label{lem:WPnoblowup2}
Let $(Y_0,V_0)\in \R^3\times [0,\infty)$. Suppose that the dynamics of the tagged particle is described by a sequence of the form \eqref{def:dyn_sequence} with a total number of free flights (up to a given time) given by the even number $M\geq 2$. There exists $\delta>0$ and $b>0$ depending on $G$ but independent on $M$ such that 
\begin{equation}\label{eq:WPnoblowup2}
\PP\left(\left\{\omega:\# \left\{l_k\leq \frac{\delta}{(R_{k}^2+1)}\right\}\geq\frac{M}{2}\right\}\right)\leq e^{-bM}.
\end{equation}
\end{lemma}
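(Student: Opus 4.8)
The plan is to prove a uniform per-flight bound — each free flight is ``short'' with conditional probability at most $C\phi\delta$, whatever the previous history and current radius are — and then to conclude by a Chernoff-type estimate for the sum of the corresponding indicator variables.

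First I would restrict to the full-measure set $\Omega^{*}$ of Proposition \ref{prop:wellpos1}, so that all coalescence events are finite and the free flights $l_1,l_2,\dots$ are well defined. Let $\mathcal F_{k-1}$ denote the $\sigma$-algebra generated by the portion of the sequence \eqref{def:dyn_sequence} preceding the $k$-th free flight; in particular $V_{k-1}$ (hence $R_k$) and the forbidden region $F_{k-1}(v,\xi^{(k-1)})$ carved out so far are $\mathcal F_{k-1}$-measurable. Using the representation of the flight law via the operator $F_k$ of \eqref{eq:expvolQ} (together with \eqref{eq:defdomainQ} and \eqref{eq:volQ}), the conditional probability that the $k$-th flight is no longer than $\ell$ equals the probability that the Poisson process deposits at least one obstacle in the corresponding tube, namely
\begin{equation*}
\PP\big(l_k\le\ell\mid\mathcal F_{k-1}\big)=1-\exp\Big(-\phi\!\int_0^\infty\!G(v)\,\big|Q(X_{k-1},V_{k-1};v,\ell,S^2_+,\xi^{(k-1)})\big|\,dv\Big).
\end{equation*}
By \eqref{eq:volQ} one has $|Q(X_{k-1},V_{k-1};v,\ell,S^2_+,\xi^{(k-1)})|\le\pi(V_{k-1}^{1/3}+v^{1/3})^2\,\ell$ (removing $F_{k-1}$ only shrinks it), and $\int_0^\infty G(v)(1+v^{2/3})\,dv<\infty$ because $G$ is compactly supported; hence $\phi\int_0^\infty G(v)|Q|\,dv\le C\phi\,(R_k^2+1)\,\ell$ with $C=C(G)$. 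Taking $\ell=\delta/(R_k^2+1)$ and $1-e^{-x}\le x$ gives
\begin{equation*}
\PP\Big(l_k\le\tfrac{\delta}{R_k^2+1}\,\Big|\,\mathcal F_{k-1}\Big)\le C\phi\,\delta=:p,
\end{equation*}
uniformly in $k$ and in the past — this is the whole point of the scaling $\ell\sim(R_k^2+1)^{-1}$.

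With $\xi_k:=\mathbb{1}\{l_k\le\delta/(R_k^2+1)\}$ we then have $\EE[\xi_k\mid\mathcal F_{k-1}]\le p$, so by the tower property $\EE\big[e^{\lambda\sum_{k=1}^M\xi_k}\big]\le(1+p(e^\lambda-1))^M\le e^{Mp(e^\lambda-1)}$ for all $\lambda>0$, and Markov's inequality yields $\PP(\sum_{k=1}^M\xi_k\ge M/2)\le e^{-\lambda M/2+Mp(e^\lambda-1)}$. Choosing $\lambda=\log2$ and then $\delta=\delta(G)$ small enough that $p=C\phi\delta<\tfrac14\log2$ (legitimate since $\phi\le\phi_*(v_*)$), the exponent becomes $-M(\tfrac12\log2-p)=:-bM$ with $b=b(G)>0$; this is \eqref{eq:WPnoblowup2}. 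The evenness of $M$ is used only so that $M/2$ is an integer.

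The single delicate step is the uniform conditional bound above: one must check that, integrated over the flight, the rate of the flight Poisson process grows at most like $R_k^2+1$ — which is exactly matched by the threshold $(R_k^2+1)^{-1}$ — and that this bound is insensitive both to the history-dependent forbidden region and to the size $V_{k-1}$ of the tagged particle. Phrasing the conditional probability cleanly inside the operator formalism of Section \ref{ssec:CharCTP} requires a little bookkeeping, but once that is in place the rest is a textbook large-deviation argument.
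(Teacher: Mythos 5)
Your proof is correct, and its heart coincides with the paper's: the decisive estimate in both arguments is that, conditionally on the entire past, the probability that the $k$-th flight is shorter than $\delta/(R_k^2+1)$ is bounded by a constant times $\delta$, uniformly in $k$, because the rate of the flight Poisson process integrated over a tube of length $\ell$ is at most $C(R_k^2+1)\ell$ (this is exactly the paper's bound $F_k\bigl(\bigl[0,\tfrac{\delta}{R_k^2+1}\bigr]\bigr)\le B\delta$, obtained from \eqref{eq:expvolQ} and \eqref{eq:volQ}; your extra factor $\phi$ is harmless since $\phi\le\phi_*$). Where you diverge is in the concentration step. The paper takes a union bound over the possible sets of indices of short flights, obtaining $\sum_{L=M/2}^{M}\binom{M}{L}(B\delta)^L$, and then controls the binomial coefficients with Stirling's formula and an entropy estimate in $x=L/M$. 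You instead turn the uniform conditional bound into a supermartingale/Chernoff argument: bound the conditional exponential moment of each indicator, iterate via the tower property, and apply Markov's inequality with $\lambda=\log 2$. Both routes are valid and yield $e^{-bM}$ with $b$ depending only on $G$ (through the choice of $\delta$); your version avoids the Stirling bookkeeping at the cost of having to phrase the conditional-expectation step carefully within the operator formalism of Section \ref{ssec:CharCTP}, which is precisely the ``little bookkeeping'' you flag and which the paper's product-of-operators notation is designed to handle. One small point worth making explicit if you write this up: the identity $\PP(l_k\le\ell\mid\mathcal F_{k-1})=1-\exp\bigl(-\phi\int_0^\infty G(v)\,|Q(\cdot,\ell,\cdot)|\,dv\bigr)$ follows from integrating the density encoded in $F_k$ over $l\in[0,\ell]$ using \eqref{eq:volQ}, and the monotonicity of $Q$ in $\ell$; this is where the exclusion of the already-explored forbidden region enters and why the bound is insensitive to the history.
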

\begin{proof}
We notice that using the notation in Section \ref{ssec:CharCTP} we can compute the following probability as 
\begin{equation*}
\begin{split}
&\PP\left(\left\{ k_1 \leq k_2 \leq\dots \leq k_{L} ,\; \frac{M}{2}\leq L \leq M \; :\; l_j\leq \frac{\delta}{(R_{k_j}^2+1)},\;\; j=1,\dots, L \right\} \right)=\\&
=C_1\,F_1\dots C_{k_1} \Big(\int_{\{ 0 \leq l_1 \leq \delta/{(R_{k_1}^2+1)}\}}{ \hspace{-0.8cm}F_{k_1}(l_1) dl_1}\Big) C_{k_1+1}\dots \\& 
\quad\dots C_{k_2}   \Big(\int_{\{ 0 \leq l_2 \leq \delta/{(R_{k_2}^2+1)}\}}{  \hspace{-0.8cm}F_{k_2}(l_2) dl_2}\Big)C_{k_2+1} \dots \Big(\int_{\{ 0 \leq l_j \leq \delta/{(R_{k_j}^2+1)}\}}{  \hspace{-0.8cm}F_{k_j}(l_j) dl_j}\Big)\dots
\end{split}
\end{equation*}
Thus
\begin{equation}
\begin{split}\label{eq:estimateWPnoblowup2}
\PP\left(\left\{\omega:\# \left\{l_k\leq \frac{\delta}{(R_{k}^2+1)}\right\}\geq\frac{M}{2}\right\}\right)&\leq \hspace{-0.2cm}\sum_{\{k_1 \leq k_2 \leq\dots \leq k_{L}, \; \frac{M}{2}\leq L \leq M  \}}
 \hspace{-1.4cm}C_1\,F_1\dots C_{k_1} \Big(\int_{\{ 0 \leq l_1 \leq \delta/{(R_{k_1}^2+1)}\}}{  \hspace{-1cm}F_{k_1}(l_1) dl_1}\Big) \\& 
\quad \; C_{k_1+1}\dots C_{k_2}   \Big(\int_{\{ 0 \leq l_2 \leq \delta/{(R_{k_2}^2+1)}\}}{\hspace{-1cm} F_{k_2}(l_2) dl_2}\Big)C_{k_2+1} \dots \\&
\quad \; \dots \Big(\int_{\{ 0 \leq l_j \leq \delta/{(R_{k_j}^2+1)}\}}{ \hspace{-1cm} F_{k_j}(l_j) dl_j}\Big)\dots
\end{split}
\end{equation}
Thanks to \eqref{eq:volQ} it follows that $|Q(X_{\star},V_{\star};v,l,A,\xi^{(\star-1)})|\leq C\,(R_{k}^2+1)\,l$, for some $C=C(v_*)>0$. Furthermore, using \eqref{eq:expvolQ} and \eqref{def:opFk}   
we obtain
$$F_{k}\Big(\Big[0, \frac{\delta}{{(R_{k}^2+1)}}\Big]\Big)\leq 
B\,\delta, \qquad B=B(v_{*})>0.$$ 
By substituting this bound in \eqref{eq:estimateWPnoblowup2} we get
\begin{equation}
\begin{split}\label{eq:estimateWPnoblowup2_2}
\PP\left(\left\{\omega:\# \left\{l_k\leq \frac{\delta}{(R_{k}^2+1)}\right\}\geq\frac{M}{2}\right\}\right)&\leq 
 \sum_{\{k_1 \leq k_2 \leq\dots \leq k_{L}, \; \frac{M}{2}\leq L \leq M  \}} \hspace{-1.4cm}C_1\,F_1\dots C_{k_1} (B\,\delta) \dots C_{k_2+1} \dots (B\,\delta) \\& 
\leq \sum_{\{k_1 \leq k_2 \leq\dots \leq k_{L}, \; \frac{M}{2}\leq L \leq M  \}} (B\,\delta)^L=\sum_{L=\frac{M}{2}}^{M }\binom{M}{L} (B\,\delta)^L\\&
\leq C\sqrt{M}\sum_{L=\frac{M}{2}}^{M }\frac{M^{M}}{L^{L}(M-L)^{M-L}} (B\,\delta)^L,
\end{split}
\end{equation}
where we used Stirling's formula in the last inequality and $C>0$ is a numerical constant. Moreover, the last formula can be written as
\begin{equation}
\begin{split}\label{eq:sumexp}
&\sum_{L=\frac{M}{2}}^{M } e^{M\log{M}-L\log{L}-(M-L)\log{M-L}+L\log{(B\,\delta)}+\log{(C\sqrt{M})}}.\\&
\end{split}
\end{equation}
Using the change of variables $L=Mx$ where $\frac 1 2 \leq x\leq 1$ we can write the exponential factor in the equation above as 
$$M(-x\log{x}-(1-x)\log{(1-x)}+x\log{(B\,\delta)})+\log{(C\sqrt{M})}$$
and this term can be estimated for $\frac 1 2 \leq x\leq 1$ and $M\geq 2$ from above as $-bM-\log\big(\frac{M}{2}+1\big)$ choosing $\delta>0$ small (depending on $B$ and $C$). Using the fact that in \eqref{eq:sumexp} there are $\frac{M}{2}+1$ terms we can estimate the whole sum by $e^{-bM}$. This concludes the proof.
\end{proof}

\begin{proofof}[Proof of Proposition \ref{prop:totlenflights}]
We can assume without loss of generality that the initial position is $Y_0=0$. Given $ V_0\geq 0$ we now construct a family of domains as follows. 
For any $A >0$ we define $\Theta_k=\Theta_k (A)=\frac{2}{a}\log k+A $.
We then define the domains 
\begin{equation}\label{eq:domUA}
\mathcal{U}_A:=\{\omega\in \Omega^{*}\,: N_k\leq \Theta_k  (A)\quad \forall k\}
\end{equation}
where $\Omega^{*}$ is as in Proposition \ref{prop:wellpos1}.
For any $M$ even we define 
\begin{equation}\label{eq:domEM}
E_{M}:=\left\{\omega\in\Omega^{*}\,:\,k\in\{1,\dots,M\}\;\;\# \left\{l_k> \frac{\delta}{(R_{k}^2+1)}\right\}\geq\frac{M}{2}\right\} 
\end{equation}
where we compute the length in the first $M$ flights. Moreover we set 
\begin{equation}\label{eq:domEinf}
E_{\infty}:=\bigcup_{n=1}^{\infty} \bigcap_{j\geq n}E_{2j}
\end{equation}
and then we define 
\begin{equation}\label{eq:domOmegastar}
\tilde\Omega^{*}:=\bigcup_{A=1}^{\infty}\mathcal{U}_A \cap E_{\infty}.
\end{equation}
We now claim that the set $\tilde\Omega^{*}$ has all the properties stated in Proposition \ref{prop:totlenflights}.
The fact that $m_j<\infty$ follows from the fact that $\tilde\Omega^{*}\subset \Omega^{*}$ and Proposition \ref{prop:wellpos1}. We now have to prove that $\forall\omega\in\tilde\Omega^{*}$ the following two properties hold: $\sum_{j}l_j=\infty$ and $\PP(\tilde\Omega^{*})=1$. 
We begin proving the first. Notice that for any $\omega\in\tilde\Omega^{*}$ there exists an integer $A=A(\omega)>0$ such that 
$$N_k\leq \Theta_k  (A) \quad \forall\; k\geq 1.$$
On the other hand, the volume of the tagged particle at the end of the $k-$th flight is bounded by $V_{k}\leq V_{k-1}+(N_{k}+1)v_{*}$. Then $V_{k}\leq V_{k-1}+(\Theta_k(A)+1)v_{*}$. Iterating we get
\begin{equation}
\begin{split}
V_{k}&\leq V_0+v_{*}{k}+v_{*}\sum_{l=1}^{k} \Theta_l(A) \leq  V_0+v_{*}{k}+A v_{*}k+\frac{2}{a}(k+1)\log(k+1) 
\end{split}
\end{equation} 
which implies that the radius of the tagged particle after the $k-$th flight can be estimated by
\begin{equation}\label{eq:boundRk}
R_{k}\leq C(V_0,v_{*})[A+(k+1)\log (k+1)]^{\frac{1}{3}}.
\end{equation}
On the other hand, $\forall\omega\in\tilde\Omega^{*}$, since $\omega\in E_{\infty}$, there exists $n_{*}=n_{*}(\omega)$ such that $\omega\in  \bigcap_{j\geq n_{*}}E_{2j}$. Then $\forall j\geq  n_{*}$, for the first $2j$ flights $k=1,\dots,2j$, $\# \left\{l_k> \frac{\delta}{(R_{k}^2+1)}\right\}\geq\frac{2j}{2}=j$. Using \eqref{eq:boundRk} it follows that, $\forall j\geq  n_{*}$ among the first $2j$ flights, there are at least $j$ for which 
$$l_k> \frac{\bar\delta(V_0,v_{*})}{[A+(k+1)\log (k+1)]^{\frac{2}{3}}}$$
for some $ \bar\delta=\bar\delta(V_0,v_{*})>0.$
We will denote for each $j\geq n_{*}$ the set of indices $k$ for which the inequality above holds as $I_j$. Therefore, 
\begin{equation}
\begin{split}
\sum_{k=1}^{\infty}l_k&\geq \sum_{k=1}^{2j}l_k\geq \sum_{k\in I_j} l_k
\geq \bar\delta \sum_{k\in I_j}  \frac{1}{[A+(k+1)\log (k+1)]^{\frac{2}{3}}}\\&\geq 
\bar\delta\sum_{k=j+1}^{2j}  \frac{1}{[A+(k+1)\log (k+1)]^{\frac{2}{3}}}
\geq C_A \bar\delta  \frac{(2j)^\frac{1}{3}}{(\log (2j))^\frac{2}{3}}
\end{split}
\end{equation} 
for some $C_A>0$.
Taking the limit $j\to\infty$ we obtain $\sum_{k=1}^{\infty}l_k=\infty.$

We now prove that $\PP(\tilde\Omega^{*})=1$. In order to do this we notice that
\begin{equation}\label{eq:Omstarcomp}
(\tilde\Omega^{*})^c:=(\bigcap_{A=1}^{\infty}\mathcal{U}_A^c) \cup E_{\infty}^c
\end{equation}
and 
\begin{equation}\label{eq:prOmstarcomp}
\PP((\tilde\Omega^{*})^c)\leq \PP((\bigcap_{A=1}^{\infty}\mathcal{U}_A^c))+\PP(E_{\infty}^c).
\end{equation}
We first prove that $ \PP((\bigcap_{A=1}^{\infty}\mathcal{U}_A^c))=\PP(E_{\infty}^c)=0$. For any $A>0$ we have 
$\mathcal{U}_A^c=\{\omega\in\tilde\Omega^{*}\,:\,\exists \, k\;\text{s.t.}\; N_k> \Theta_k(A)\}$
then
\begin{equation*}
\begin{split}
\PP(\mathcal{U}_A^c)& \leq \sum_{k=1}^{\infty}\PP(N_{k}> \Theta_k(A))\leq C(v_{*},V_0) \phi  \sum_{k=1}^{\infty}\phi \, e^{-a\Theta_k(A)}\\&
 \leq C(v_{*},V_0) \phi \sum_{k=1}^{\infty} \frac{e^{-aA}}{k^2}\leq \bar{C}(v_{*},V_0) \,\phi \,e^{-a A}
\end{split}
\end{equation*}
where we used Lemma \ref{lem:WPnoblowup1} in the second inequality. Then 
$\PP((\bigcap_{A=1}^{\infty}\mathcal{U}_A^c)) \leq \bar{C}(v_{*},V_0) \,\phi \,e^{-a \bar A}$ for any $\bar A>0$ and taking the limit $\bar A\to\infty$ we get 
\begin{equation}\label{eq:prUAc}
\PP((\bigcap_{A=1}^{\infty}\mathcal{U}_A^c))=0.
\end{equation}

Using the definition of $E_{\infty}$ in \eqref{eq:domEinf} we have
\begin{equation}\label{eq:domEinfcomp}
E_{\infty}^c:=\bigcap_{n=1}^{\infty} \bigcup_{j\geq n}E_{2j}^c.
\end{equation}
Borel-Cantelli Lemma will imply that 
$\PP(E_{\infty}^c)=0$ if $\sum_{j=1}^{\infty}\PP(E_{2j}^c)<\infty.$ Definition \eqref{eq:domEM} implies 
\begin{equation}\label{eq:domEMcomp}
E_{M}^c:=\left\{\omega\in\Omega^{*}\,:\,k\in\{1,\dots,M\}\;\;\# \left\{l_k\leq \frac{\delta}{(R_{k}^2+1)}\right\}\geq\frac{M}{2}\right\}. 
\end{equation}
Now we use \eqref{eq:WPnoblowup2} to obtain $\sum_{j=1}^{\infty}\PP(E_{2j}^c)\leq \sum_{j=1}^{\infty} e^{-2bj}<\infty$. Then $\PP(E_{\infty}^c)=0$.
Combining this with \eqref{eq:prOmstarcomp} and \eqref{eq:prUAc} we obtain $\PP((\tilde\Omega^{*})^c)=0$. Hence $\PP(\tilde\Omega^{*})=1$ and the result follows.
\end{proofof}

\subsection{End of the Proof of Theorem \ref{th:theorem2}}
\begin{proofof}[Proof of Theorem \ref{th:theorem2}]
It follows from Proposition \ref{prop:wellpos1} and Proposition \ref{prop:totlenflights}. Taking into account that the tagged particle moves at constant velocity the divergence of the series $\sum_{j}l_j$ implies that the motion is defined for arbitrary long times.
\end{proofof} 

\appendix
\section{Technical results} \label{appendix1}
In this Appendix we collect some technical results which are used in the proof of the main results of the paper. The first one is a simple analysis result which allows us to control probability of tail events.  
 \begin{lemma}\label{lem:Poisson}
 Let be $\Psi_N(\zeta):=\sum_{n=N}^{\infty} \frac{\zeta^n}{n!}e^{-\zeta}$. For any $0<\xi_{*}<e^{-2}$ we have
 \begin{equation}\label{eq:boundPoisson_lem}
 \Psi_N(\zeta)\leq \frac{e}{e-1} \,e^{-aN} \quad \text{for} \quad \zeta\leq \xi_{*} N \quad \text{and} \quad N\geq 1
 \end{equation}
 where $a=\frac{|\log(\xi_{*})|}{2}$.
  \end{lemma}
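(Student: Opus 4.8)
The final statement to prove is Lemma~\ref{lem:Poisson}: for $\Psi_N(\zeta)=\sum_{n=N}^\infty \frac{\zeta^n}{n!}e^{-\zeta}$ and $0<\xi_*<e^{-2}$, one has $\Psi_N(\zeta)\le \frac{e}{e-1}e^{-aN}$ whenever $\zeta\le \xi_* N$ and $N\ge 1$, with $a=\tfrac12|\log\xi_*|$.

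The plan is to bound each term of the tail sum crudely and sum a geometric series. First I would note that since $\zeta\le \xi_* N \le N$, for every $n\ge N$ the ratio of consecutive terms $\frac{\zeta^{n+1}/(n+1)!}{\zeta^n/n!}=\frac{\zeta}{n+1}\le \frac{N}{N+1}<1$, so the sequence of terms is decreasing and in fact the whole tail is dominated term-by-term by a geometric series with ratio at most $\frac{\zeta}{N+1}\le \xi_*<1$. Concretely, $\Psi_N(\zeta)\le \frac{\zeta^N}{N!}e^{-\zeta}\sum_{j\ge 0}\xi_*^{\,j}=\frac{1}{1-\xi_*}\frac{\zeta^N}{N!}e^{-\zeta}\le \frac{1}{1-\xi_*}\frac{\zeta^N}{N!}$. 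Since $\xi_*<e^{-2}<1-\tfrac1e$, the prefactor $\frac{1}{1-\xi_*}$ is bounded by $\frac{e}{e-1}$, matching the constant in the statement; alternatively one can keep $\frac{1}{1-\xi_*}$ and observe it is $\le \frac{e}{e-1}$ for the allowed range of $\xi_*$.

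The remaining point is to show $\frac{\zeta^N}{N!}\le e^{-aN}$ for $\zeta\le \xi_* N$. Using the standard bound $N!\ge (N/e)^N$ (which follows from $e^N=\sum_k N^k/k!\ge N^N/N!$), we get $\frac{\zeta^N}{N!}\le \frac{(\xi_* N)^N}{(N/e)^N}=(e\,\xi_*)^N$. So it suffices that $(e\xi_*)^N\le e^{-aN}$, i.e. $e\xi_*\le e^{-a}$, i.e. $a\le |\log\xi_*|-1$. Since $\xi_*<e^{-2}$ we have $|\log\xi_*|>2$, hence $|\log\xi_*|-1>\tfrac12|\log\xi_*|=a$, so the inequality holds strictly. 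Chaining the estimates gives $\Psi_N(\zeta)\le \frac{e}{e-1}(e\xi_*)^N\le \frac{e}{e-1}e^{-aN}$, as claimed.

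There is no serious obstacle here; the only thing to be slightly careful about is the interplay of the two elementary facts ($N!\ge (N/e)^N$ and the geometric-ratio bound) and checking that the numerical condition $\xi_*<e^{-2}$ is exactly what makes both $\frac{1}{1-\xi_*}\le \frac{e}{e-1}$ and $a=\tfrac12|\log\xi_*|\le |\log\xi_*|-1$ work simultaneously. I would write the argument in the order: (i) monotonicity of terms and geometric domination, (ii) Stirling-type lower bound on $N!$, (iii) combine and verify the numerical inequality on $\xi_*$.
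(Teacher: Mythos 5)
Your proof is correct and follows essentially the same route as the paper's: both rest on the Stirling-type bound $k!\ge (k/e)^k$ together with a geometric summation of the tail (the paper bounds each term's exponent separately and sums $\sum_{m\ge 0} e^{-m}=\tfrac{e}{e-1}$, whereas you dominate the tail by its first term via the consecutive-term ratio and sum $\sum_{j\ge 0}\xi_*^{\,j}$). One harmless slip: the condition for $\tfrac{1}{1-\xi_*}\le\tfrac{e}{e-1}$ is $\xi_*\le e^{-1}$, not $\xi_*<1-e^{-1}$ as you cite, but this is immaterial since $\xi_*<e^{-2}<e^{-1}$.
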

  \begin{proof} 
  We consider the sequence $a_n:=\frac{\zeta^n}{n!}e^{-\zeta}$ which results to be increasing in $\zeta$ and, for $n\geq N$ and $\zeta\leq \frac{N}{2}$, decreasing in $n$ since
  $$\frac{a_{n+1}}{a_n}=
  \frac{\zeta}{(n+1)}\leq \frac{N}{2(N+1)}\leq \frac{1}{2}< 1.$$
  We choose $0<\xi_{*}<\frac 1 2$ and consider $\zeta\leq \xi_{*} N$. Since the function $\zeta\to \zeta^ne^{-\zeta}$ is increasing in $\zeta$ for $n\geq \zeta$, it follows that
  \begin{equation*}
 \begin{split}
 \Psi_N(\zeta)&\leq \Psi_N(\xi_{*} N)=\sum_{n=N}^{\infty} \frac{{(\xi_{*} N)}^n}{n!}e^{-\xi_{*} N}=(\xi_{*} N)^Ne^{-\xi_{*} N} \sum_{m=0}^{\infty} \frac{{(\xi_{*} N)}^m}{(m+N)!}\\&
 \leq (\xi_{*} N)^Ne^{-\xi_{*} N} \sum_{m=0}^{\infty} (\xi_{*} N)^m\frac{e^{(m+N)}}{(m+N)^{m+N}}\\& 
  = \sum_{m=0}^{\infty} e^{\Theta(M,N)},
 \end{split}
 \end{equation*}
 where in the first inequality we used that $k!\geq k^ke^{-k}$ for $k\geq 1$ (see equation 6.1.38 in \cite{AS}) and $\Theta(M,N)=\Theta_1(M,N)+\Theta_2(M,N)$ with 
 \begin{align}\label{eq:Theta1/2}
&\Theta_1(M,N)=N\log\xi_{*}+(1-\xi_{*})N\\\vspace{.8mm}&
\Theta_2(M,N)=N\log N-(m+N)\log(m+N)+m+m\log\xi_{*}+m\log N.
 \end{align}
 Moreover, 
$\Theta_1(M,N)\leq -a N$ with $a=\frac{|\log\xi_{*}|}{2}$ if $\xi_{*}<e^{-2}$. 
  This implies that from this contribution we obtain the right decay. We control the second contribution.
 \begin{equation*}
 \begin{split}
 \Theta_2(M,N)&=N\log N-(m+N)\log N-(m+N)\log(1+\frac{m}{N})+m\log\xi_{*}+m\log N+m\\&
 =-(m+N)\log(1+\frac{m}{N})+m (\log\xi_{*}+1)\\&
 \leq -m \log(1+\frac{m}{N})- m\frac{|\log\xi_{*}|}{2}
 \leq -m,
 \end{split}
 \end{equation*}
if $0<\xi_{*}<e^{-2}$. 
Therefore we get $\Psi_N(\xi_{*} N)\leq e^{-a N}\sum_{m=0}^{\infty} e^{ -m }$ 
and we have
$$\Psi_N(\zeta)\leq\Psi_N(\xi_{*} N)\leq \frac{e}{e-1}\, e^{-a N}.$$
 \end{proof}

The following Lemma allows us to control the size of the displacement of the position of the tagged particle after a coalescence event.
\begin{lemma}\label{app:claim}
Suppose that $X_0=0$ and let us consider the CTP dynamics in the reference frame in which the obstacles move towards the tagged particle. Assume that the tagged particle evolves only by means of binary collisions between the tagged particle $(X_k,V_k)$ and the obstacle $(x_{k+1},v_{k+1})$.  
Let $d_{k+1}$ be the maximal distance between particles colliding by means of binary collisions, i.e. $d_{k+1}\leq \left(\frac{3}{4\pi}\right)^{\frac{1}{3}}(V_{k}^{\frac{1}{3}}+v_{k+1}^{\frac{1}{3}})$. Then the particle position and volume are given by $X_{k+1}=X_{k}+\frac{v_{k+1}}{V_{k}+v_{k+1}}(x_{k+1}-X_{k})$, $V_{k+1}=V_{k}+v_{k+1}$ and the following estimate holds 
\begin{equation*}\label{eq:boundXk}
|X_{k}| \leq \frac{9}{2\pi} \phi^{\frac{1}{3}} (V_{k}^{\frac{1}{3}}-V_{0}^{\frac{1}{3}}).
\end{equation*}
\end{lemma}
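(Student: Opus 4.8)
The plan is to bound $|X_k|$ by summing the displacements produced at the successive binary collisions and then to telescope. First I would note that, in the reference frame in which the obstacles move towards the tagged particle, the tagged particle does not move during the free flights, so that $X$ changes only at the collision instants. The formula for the merging operator $\mathcal{A}$ in \eqref{def:merging_op}, specialised to a single colliding obstacle, reads
\[
X_{j+1}=X_j+\frac{v_{j+1}}{V_j+v_{j+1}}\,(x_{j+1}-X_j),\qquad V_{j+1}=V_j+v_{j+1},
\]
and at the $(j+1)$-th collision the two spheres touch, so $|x_{j+1}-X_j|=d_{j+1}\le \left(\frac{3}{4\pi}\right)^{1/3}\phi^{1/3}\bigl(V_j^{1/3}+v_{j+1}^{1/3}\bigr)$. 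Since $X_0=0$, the triangle inequality gives
\[
|X_k|\le\sum_{j=0}^{k-1}|X_{j+1}-X_j|\le \left(\frac{3}{4\pi}\right)^{1/3}\phi^{1/3}\sum_{j=0}^{k-1}\frac{v_{j+1}}{V_j+v_{j+1}}\bigl(V_j^{1/3}+v_{j+1}^{1/3}\bigr).
\]

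The crux is then the elementary pointwise inequality
\[
\frac{v_{j+1}}{V_j+v_{j+1}}\bigl(V_j^{1/3}+v_{j+1}^{1/3}\bigr)\le C\,\bigl(V_{j+1}^{1/3}-V_j^{1/3}\bigr),
\]
valid for all $V_j\ge 0$, $v_{j+1}\ge 0$ with a universal constant $C$; granted this, the right-hand side of the previous display telescopes to $C\bigl(V_k^{1/3}-V_0^{1/3}\bigr)$, and the assertion follows with $\frac{9}{2\pi}=C\left(\frac{3}{4\pi}\right)^{1/3}$ (or with whatever universal constant this bookkeeping produces). To prove the inequality I would homogenise by setting $u:=(V_j/V_{j+1})^{1/3}\in[0,1]$: then $v_{j+1}/V_{j+1}=1-u^3$, $V_j^{1/3}=u\,V_{j+1}^{1/3}$ and $v_{j+1}^{1/3}=(1-u^3)^{1/3}V_{j+1}^{1/3}$, so the inequality becomes $(1-u^3)\bigl(u+(1-u^3)^{1/3}\bigr)\le C(1-u)$, i.e., after dividing by $1-u$, $(1+u+u^2)\bigl(u+(1-u^3)^{1/3}\bigr)\le C$ — a continuous function of $u$ on the compact interval $[0,1]$, hence bounded. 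A quick, slightly generous, bound already follows from $V_j^{1/3}+v_{j+1}^{1/3}\le 2^{2/3}V_{j+1}^{1/3}$ combined with the concavity estimate $V_{j+1}^{1/3}-V_j^{1/3}\ge \frac13 V_{j+1}^{-2/3}v_{j+1}$, and the sharp value of $C$ is obtained by maximising the displayed function of $u$.

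The step I expect to be the main obstacle is exactly this last inequality with the explicit constant appearing in the statement. It is the one place where the particular structure of the coalescence rule is used: because the centre of the merged particle is the centre of mass, the displacement at a collision carries the weight $\frac{v_{j+1}}{V_j+v_{j+1}}$ rather than a weight of order one, and it is precisely this weight that turns $\sum_j \frac{v_{j+1}}{V_j+v_{j+1}}\bigl(V_j^{1/3}+v_{j+1}^{1/3}\bigr)$ into a telescoping sum controlled by $V_k^{1/3}-V_0^{1/3}$ (cf.\ Remark \ref{rk:postagpc}). Everything else — reducing the problem to a sum of per-collision displacements and the telescoping itself — is routine.
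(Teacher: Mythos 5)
Your proposal is correct and follows essentially the same route as the paper: both reduce $|X_k|$ via the triangle inequality to the sum $\sum_j \frac{v_{j+1}}{V_{j+1}}\bigl(V_j^{1/3}+v_{j+1}^{1/3}\bigr)$ and control it by a constant multiple of the telescoping quantity $V_k^{1/3}-V_0^{1/3}$; the paper does this by splitting each term into the two pieces $I_k$ and $J_k$ and using the self-improving bound $J_k\le (V_k^{1/3}-V_0^{1/3})+\frac{2}{3}J_k$, whereas your homogenised pointwise inequality $(1+u+u^2)\bigl(u+(1-u^3)^{1/3}\bigr)\le C$ accomplishes the same estimate more directly. The only caveat is numerical: the sharp constant in your pointwise inequality is about $4.2$, so multiplying by $(3/(4\pi))^{1/3}$ does not literally reproduce $\frac{9}{2\pi}$, but this is harmless since the lemma is only ever used with an unspecified universal constant (and the paper's own handling of the prefactor, $\frac{3}{4\pi}$ versus $\bigl(\frac{3}{4\pi}\bigr)^{1/3}$, is itself not fully consistent).
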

\begin{proof}
The iterative formula giving $(X_{k+1},V_{k+1})$ is just a reformulation of the CTP dynamics in this particular setting (see \eqref{def:merging_op}).
We note that $|X_{k+1}|\leq |X_{k}|+\frac{v_{k+1}}{V_k+v_{k+1}}\phi^{\frac{1}{3}}d_k$ and we rescale $X_{k}$ as $X_{k}=\frac{3}{4\pi}^{\frac{1}{3}}\phi^{\frac{1}{3}}\xi_{k}$. Then we have
 \begin{equation*}
  \left\{
 \begin{array}{l}\vspace{0.2cm}
|\xi_{k+1}|\leq |\xi_{k}|+\frac{v_{k+1}}{V_k+v_{k+1}}(V_k^{\frac{1}{3}}+v_{k+1}^{\frac{1}{3}})
\\
|\xi_0|=0,\vspace{0.2cm}\\
V_{k+1}=V_{k}+v_{k+1}.
\end{array} \right.
\end{equation*}
Iterating we then obtain
 \begin{equation}\label{app:eq:xik}
  \left\{
 \begin{array}{l}\vspace{0.2cm}
|\xi_{k}|\leq \sum_{j=1}^{k}\frac{v_{j}}{V_{j}}(V_{j-1}^{\frac{1}{3}}+v_{j }^{\frac{1}{3}})=\sum_{j=1}^{k}\frac{v_{j}}{V_{j}}V_{j-1}^{\frac{1}{3}}+\sum_{j=1}^{k}\frac{v_{j}^{\frac{4}{3}}}{V_{j}}
\vspace{0.2cm}\\
V_{k}=\sum_{j=1}^{k}v_j\quad \text{with}\quad v_0=V_0.
\end{array} \right.
\end{equation}
We estimate the first contribution in the right hand side of \eqref{app:eq:xik} as
\begin{equation*}
\sum_{j=1}^{k}\frac{v_{j}}{V_{j}}V_{j-1}^{\frac{1}{3}}\leq \sum_{j=1}^{k}\frac{v_{j}}{V_{j}^{\frac{2}{3}}}=: J_k
\end{equation*}
since $V_{j-1}\leq V_j$. Moreover, we note that $v_j=V_{j}-V_{j-1}$ thus
\begin{equation*}
\begin{split}
J_k&=\sum_{j=1}^{k}\frac{(V_{j}-V_{j-1})}{V_{j}^{\frac{2}{3}}}=\sum_{j=1}^{k}{(V_{j}^{\frac{1}{3}}-V_{j-1}^{\frac{1}{3}}})+\sum_{j=1}^{k}V_{j-1}^{\frac{1}{3}}\big(1-\frac{V_{j-1}^{\frac{2}{3}}}{V_{j}^{\frac{2}{3}}}\big)
\\& 
\leq (V_{k}^{\frac{1}{3}}-V_{0}^{\frac{1}{3}})+\sum_{j=1}^{k}V_{j-1}\big(\frac{V_{j}^{\frac{2}{3}}-V_{j-1}^{\frac{2}{3}}}{V_{j}^{\frac{2}{3}}}\big).
\end{split}
\end{equation*}
To control the second contribution in the sum above we use that $(1+x)^n\leq 1+nx$ for $x\geq 0$ and $0\leq n\leq 1$. Hence
\begin{equation*}
\sum_{j=1}^{k}V_{j-1}\big(\frac{V_{j}^{\frac{2}{3}}-V_{j-1}^{\frac{2}{3}}}{V_{j}^{\frac{2}{3}}}\big)\leq \sum_{j=1}^{k}V_{j-1}\big((\frac{V_{j-1}+v_j)^{\frac{2}{3}}-V_{j-1}^{\frac{2}{3}}}{V_{j}^{\frac{2}{3}}}\big)\leq \frac{2}{3} \sum_{j=1}^{k} V_{j-1} \frac{v_{j}}{V_{j}^{\frac{2}{3}}}=\frac{2}{3} J_k.
\end{equation*}
It follows that 
\begin{equation}
J_k\leq (V_{k}^{\frac{1}{3}}-V_{0}^{\frac{1}{3}})+\frac{2}{3} J_k
\end{equation}
and 
\begin{equation}\label{app:eq:boundJk}
J_k\leq 3 (V_{k}^{\frac{1}{3}}-V_{0}^{\frac{1}{3}}).
\end{equation}
We now consider the second term in the right hand side of \eqref{app:eq:xik} and we obtain 
\begin{equation}\label{app:eq:boundIk}
\begin{split}
&I_k:=\sum_{j=1}^{k}\frac{v_{j}^{\frac{4}{3}}}{V_{j}}=\sum_{j=1}^{k} V_{j}^{\frac{1}{3}}\big(\frac{v_{j}}{V_{j}}\big)^{\frac{4}{3}}
= \sum_{j=1}^{k} V_{j}^{\frac{1}{3}}\big(\frac{(V_{j}-V_{j-1})}{V_{j}}\big)^{\frac{4}{3}}
\\& \leq \quad  \sum_{j=1}^{k} V_{j}^{\frac{1}{3}}\frac{(V_{j}-V_{j-1})}{V_{j}}=\sum_{j=1}^{k} \frac{(V_{j}-V_{j-1})}{V_{j}^{\frac{2}{3}}}=J_k.
\end{split}
\end{equation}
Therefore, using \eqref{app:eq:boundJk}, we get
\begin{equation}
|\xi_{k}|\leq I_k+J_k\leq 2 J_k \leq 6 (V_{k}^{\frac{1}{3}}-V_{0}^{\frac{1}{3}})
\end{equation}
and, in terms of $X_k$, 
\begin{equation}\label{eq:boundXk}
|X_{k}|\leq \frac{3}{4\pi}\phi^{\frac{1}{3}} (I_k+J_k)\leq  \frac{9}{2\pi} \phi^{\frac{1}{3}} (V_{k}^{\frac{1}{3}}-V_{0}^{\frac{1}{3}}). 
\end{equation}
This concludes the proof.
\end{proof}

\section{Proof of Proposition \ref{prop:wellpos1} when the obstacles have the same size } 
\label{appendix2}

In order to clarify the argument used to prove Proposition \ref{prop:wellpos1} we present in this appendix a simpler case. More precisely here we assume that the volumes distribution function is given by $G(v)=\delta (v-1)$. This means that all the obstacles have the same volume and for simplicity we assumed that $v=1$.

\begin{proposition}\label{prop:wellpos2}
Let $G(v)\in \mathcal{M}_{+}([0,\infty))$ be such that $G(v)=\delta (v-1)$. Then there exists a $\phi_{*}=\phi_{*}(v_{*})>0$ such that for any $\phi\leq \phi_{*}$ and any $(Y_0,V_0)\in \R^3\times [0,\infty)$ there exists $\Omega^{*}\subset \Omega, \; \Omega^{*}\in \Sigma$ where $\Sigma$ is the $\sigma-$algebra defined in Section \ref{ssec:PM} such that $\PP(\Omega^{*})=1$ and such that for any $\omega \in \Omega^{*}$ the sequence \eqref{def:dyn_sequence} has the property that $m_j<\infty$ for any $j\in \N$.
\end{proposition}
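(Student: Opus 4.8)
The plan is to follow the scheme of the proof of Proposition \ref{prop:wellpos1}, simplified by the fact that with $G(v)=\delta(v-1)$ every obstacle has volume one: after a coalescence step the volume of the tagged particle increases by exactly the number of obstacles just absorbed, so the Poisson rate governing the next step is controlled directly by that integer, and one does not need the case splitting of the general proof (the thresholds $I(\xi^{k-1})\gtrless\phi^{-1/2}$, the Poisson tail estimate of Lemma \ref{lem:Poisson}, the auxiliary sequence $N_l$). As there, I would argue by induction over the coalescence events appearing in \eqref{def:dyn_sequence}: fixing one such event and assuming all previous ones have finitely many steps — so that the tagged particle enters the event with a finite volume $V_0\ge0$, at a position I may take to be $Y_0$ — I would introduce the decreasing family of events $\Omega_1\supseteq\Omega_2\supseteq\cdots$ exactly as in \eqref{def:A1}--\eqref{def:Ak}, where $\Omega_k$ asks that at least one obstacle be incorporated at each of the first $k$ steps of this coalescence event and $B_k(\xi^{(k-1)})$ is the fresh admissible region at step $k$ (as in \eqref{def:setBscalar}).

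Two facts drive the estimate. First, conditionally on the history $\xi^{(k-1)}$ the number $n_k$ of obstacles lying in $B_k(\xi^{(k-1)})$ is Poisson with parameter $\phi\,|B_k(\xi^{(k-1)})|$ and independent of the past, since $B_k$ is disjoint from the already-explored forbidden region $F_{k-1}$. Second — and this is where the only substantive geometric input enters, but it is already available — Lemma \ref{lem:geometriclemB_k}, applied with $M_1=\int v\,G(v)\,dv=1$ and $V_{k-1}-V_{k-2}=n_{k-1}$, gives an absolute constant $K$ with $|B_k(\xi^{(k-1)})|\le K(n_{k-1}+1)$ for $k\ge2$, while $|B_1(\xi^{(0)})|$ is a finite constant $c_0=c_0(V_0)$.

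Setting $a_k:=\EE[(n_k+1)\mathbb{1}_{\Omega_k}]$ and using that a Poisson variable $N$ of mean $\mu$ satisfies $\PP(N\ge1)\le\mu$ and $\EE[(N+1)\mathbb{1}_{N\ge1}]=\mu+1-e^{-\mu}\le2\mu$, conditioning on $\xi^{(k-1)}$ would yield $\PP(\Omega_k)\le\phi K\,a_{k-1}$ and $a_k\le2\phi K\,a_{k-1}$ for $k\ge2$, with $a_1\le2\phi c_0$; hence $\PP(\Omega_k)\le\phi c_0(2\phi K)^{k-1}$ for all $k\ge1$. Choosing $\phi_*:=(4K)^{-1}$, for every $\phi\le\phi_*$ one has $2\phi K\le\tfrac12$, so $\sum_k\PP(\Omega_k)\le\phi c_0/(1-2\phi K)<\infty$, and Borel--Cantelli gives $\PP(\bigcap_k\Omega_k)=0$; since the coalescence event lasts infinitely many steps precisely on $\bigcap_k\Omega_k$ (if it stops after $m$ steps, then $\omega\notin\Omega_{m+1}$), it does so with probability zero. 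Closing the induction over coalescence events — conditioning on the a.s. finite history up to the $j$-th event and applying this with the volume reached there — gives $\PP(m_j=\infty)=0$ for each $j\in\N$, so $\Omega^*:=\{\omega: m_j<\infty\text{ for all }j\}\in\Sigma$ has $\PP(\Omega^*)\ge1-\sum_j\PP(m_j=\infty)=1$. The only point that needs care here is the bookkeeping of the conditional Poisson structure and the induction over events; the genuine difficulty has already been packaged into Lemma \ref{lem:geometriclemB_k}, which is precisely why this special case is worth isolating.
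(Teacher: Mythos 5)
Your proposal is correct, and the overall skeleton (induction over coalescence events, the decreasing family $\Omega_1\supseteq\Omega_2\supseteq\cdots$, the geometric bound on the fresh region $B_k$ in terms of $n_{k-1}$, summability of $\PP(\Omega_k)$ plus Borel--Cantelli) coincides with the paper's. Where you genuinely diverge is in how summability is established. The paper, even in the equal-volumes appendix, keeps the machinery of the general proof: it splits according to whether $|B_k(\xi^{(k-1)})|$ exceeds $\phi^{-1/2}$ (and then $\phi^{-3/2}$, and so on down the chain), invokes the Poisson tail estimate of Lemma \ref{lem:Poisson} with the auxiliary sequence $N_l$, and closes the resulting recursive inequality $\PP(\Omega_k)\le\ep^k+\sum_{m<k}\ep^{k-m}\PP(\Omega_m)$ via the calculus Lemma \ref{lem:boundprob}. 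You replace all of this by the first-moment recursion $a_k:=\EE[(n_k+1)\mathbb{1}_{\Omega_k}]\le 2\phi K\,a_{k-1}$, which follows from the elementary identities $\PP(N\ge1)\le\mu$ and $\EE[(N+1)\mathbb{1}_{N\ge1}]=\mu+1-e^{-\mu}\le2\mu$ for a Poisson variable of mean $\mu$, together with the same geometric input $|B_k(\xi^{(k-1)})|\le K(n_{k-1}+1)$ (Lemma \ref{lem:geometriclemB_k} specialized to $G=\delta(\cdot-1)$, or the bound \eqref{app:eqclaim}). This yields the clean geometric decay $\PP(\Omega_k)\le\phi c_0(2\phi K)^{k-1}$ directly, with no tail estimates, no thresholds, and no need for the $k_*(V_0)$ bookkeeping of the general argument ($V_0$ enters only through the finite prefactor $c_0$, not through $\phi_*$). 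What your route buys is brevity and transparency in this special case; what the paper's route buys is that the appendix literally rehearses the structure of the proof of Proposition \ref{prop:wellpos1}, which is its stated pedagogical purpose. Your handling of the conditional Poisson structure (independence of the count in $B_k$ from the past because $B_k$ is disjoint from the forbidden region $F_{k-1}$) and of the induction over coalescence events matches the paper's and is sound.
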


The main simplification in the proof is due to the fact that the dependence on $v$ of the domains $W_{(\star)}$, $F_{(\star)}$ and $B_{(\star)}$ can be removed. This makes easier to understand the geometry behind the argument. Indeed in this case \eqref{eq:defW_coag}, \eqref{def:forbreg_F}, \eqref{def:setBscalar} and \eqref{def:setBvector} become
\begin{equation}\label{eq:defW_coag_2}
W_{(\star)}(\omega, \xi^{(\star-1)}):=B_{\frac{4}{3\pi}(V_{{\star}-1}^{\frac{1}{3}}+1^{\frac{1}{3}})}(X_{{\star}-1}),
\end{equation}
\begin{equation}\label{def:forbreg_F_2}
F_{\star}(\xi^{(\star-1)})=F_{\star-1}(\xi^{(\star-2)})\cup W_{(\star)}(\omega, \xi^{(\star-1)}),
\end{equation}
\begin{equation}\label{def:setBscalar_2}
B_{\star}(\xi^{(\star-1)}):=W_{(\star)}(\omega, \xi^{(\star-1)})\setminus F_{\star-1}(\xi^{(\star-2)}) \subset \R^3.
\end{equation}

We now  introduce the following sequence of events $\{\Omega_k\}$, the analogous of \eqref{def:A1},  \eqref{def:A2_1},  \eqref{def:Ak} in this setting. More precisely
\begin{equation}\label{def:A1_app}
 \Omega_1:=\{\omega\,:\, \exists\, x_k\in\omega\; \text{s.t.} \; x_k\in B_1(\xi^{(\star-1)})\},
\end{equation}
\begin{equation}\label{def:A2_app}
 \Omega_2:=\{\omega\in \Omega_1\,:\, \exists\, x_k\in\omega\; \text{s.t.} \; x_k\in B_2(\xi^{(1)}) \}.
\end{equation}
and by iterating
\begin{equation}\label{def:Ak_app}
 \Omega_k:=\{\omega\in \Omega_{k-1}\,:\,  \exists\, x_j\in\omega\; \text{s.t.} \; x_j\in B_k(\xi^{(k-1)})\}.
\end{equation}
We observe that the sequence above is such that $ \Omega_{k+1}\subset  \Omega_k$ $\forall k$.  
Our strategy to prove Proposition \ref{prop:wellpos1} will be to show that $\sum_{n=1}^{\infty}\PP( \Omega_n)<\infty$ and to apply then Borel-Cantelli Lemma.

We now compute $\PP(\Omega_n)$. We start from 
$$\PP(\Omega_1)=\sum_{n=1}^{\infty}e^{-\phi |B_1(\xi^{(0)})|}\frac{\phi^n}{n!} (|B_1(\xi^{(k-1)})|)^n.$$
By iterating, at level $k\geq 1$ we have
\begin{equation}\label{eq:prOmk-app}
\begin{split}
\PP(\Omega_k)= &\sum_{n_1=1}^{\infty}\sum_{n_2=1}^{\infty}\dots \sum_{n_k=1}^{\infty}\frac{\phi^{n_1+n_2+\dots +n_k}}{n_1!\, n_2! \dots n_k!}\int_{B_1(\xi^{(0)}))^{n_1}}d\b{x}^{(1)}\int_{B_2(\xi^{(1)}))^{n_2}}\dots \int_{B_k(\xi^{(k-1)}))^{n_k}} d\b{x}^{(k)}\\&
\quad e^{-\phi |B_1(\xi^{(0)})|-\phi|B_2({\xi}^{(1)})| -\dots-\phi |B_{k}({\xi}^{(k-1)})|}.
\end{split}
\end{equation}  
In order to estimate $\PP(\Omega_k)$ we control $ |B_{k}({\xi}^{(k-1)})|$.
We consider the displacement of the position and the volume of the tagged particle in the coalescence step.
More precisely we have 
\begin{align}\label{eq:bdYk-app}
&|Y_{l+1}-Y_l |\leq \frac{n_{l+1}(R_l+1)}{V_l+n_{l+1}},\quad R_l=\frac{3}{4\pi}V_l^{\frac 1 3},\\&\label{eq:bdVk-app}
V_{l+1}=V_l+n_{l+1}.
\end{align}
Moreover, since $B_{l+2}({\xi}^{(l+1)})\subset W_{l+2}( \omega, {\xi}^{(l+1)})\setminus W_{l+1}( \omega, {\xi}^{(l)})$ it follows that 
\begin{equation}\label{eq:boundB_l+2}
|B_{l+2}({\xi}^{(l+1)})|\leq | W_{l+2}( \omega, {\xi}^{(l+1)})\setminus W_{l+1}( \omega, {\xi}^{(l)})| .
\end{equation}
We now claim that 
\begin{equation}\label{app:eqclaim}
|B_{l+2}({\xi}^{(l+1)})|\leq  K n_{l+1}\quad \forall \; l\geq 0
\end{equation}
with $K>0$ a geometrical constant. We remark that \eqref{app:eqclaim} is the analogous of Lemma \ref{lem:geometriclemB_k}.
We set $\Delta:=W_{l+2}( \omega, {\xi}^{(l+1)})\setminus W_{l+1}( \omega, {\xi}^{(l)})$. The proof of \eqref{app:eqclaim} follows from a simple geometrical argument. See Figure \ref{fig:FbdRg}.  
 \begin{figure}[ht]
\centering
\includegraphics[scale= 0.27]{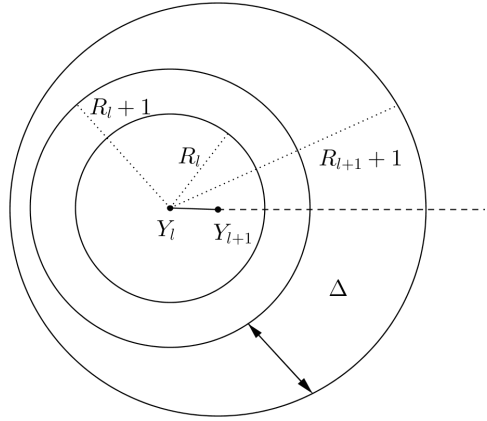}
\caption{The domain $\Delta$.}
\label{fig:FbdRg}
\end{figure}

We consider separately two different cases. 
\begin{itemize}
\item[i)] If $n_{l+1}\geq 2 V_l$ it is straightforward to see that $|\Delta|\leq K n_{l+1}$, where $K>0$ is a geometrical constant. 
\item[ii)] If $n_{l+1}\leq 2 V_l$ then \eqref{eq:bdYk-app} implies $|Y_{l+1}-Y_{l}|\leq C \, \frac{n_{l+1}}{R_l^2}$ and using Taylor approximation formula we have $(V_l+n_{l+1})^{\frac 1 3 }-V_l^{\frac 1 3 }\leq C\, \frac{n_{l+1}}{R_l^2}$. Simple geometry then shows that also in this case $|\Delta|\leq K n_{l+1}$, where $K>0$ is a geometrical constant. 
\end{itemize}
We have then proved \eqref{app:eqclaim}. We now come back to the estimate of $\PP(\Omega_k)$. We firstly consider the main contribution for $\phi$ small in \eqref{eq:prOmk-app}, which is given by the terms with $n_1=\dots=n_k=1$. We have
$$\phi|B_1(\xi^{(0)})| e^{-\phi |B_1(\xi^{(0)})|}\phi |B_2(\xi^{(1)})| e^{-\phi|B_2({\xi}^{(1)})|} \dots \phi |B_k(\xi^{(k-1)})| e^{-\phi |B_{k}({\xi}^{(k-1)})|}\leq C(V_0+1)(K\phi)^k$$
where we used \eqref{app:eqclaim} in the last inequality.
We consider now the $k$-th contribution in $\PP(\Omega_k)$. We distinguish between the cases  $|B_k(\xi^{(k-1)})|\leq \frac{1}{\sqrt{\phi}}$ and $|B_k(\xi^{(k-1)})|> \frac{1}{\sqrt{\phi}}$. We obtain
\begin{equation}\label{eq:PAkchar_app}
\begin{split}
\PP(\Omega_k)= &\sum_{n_1=1}^{\infty}\sum_{n_2=1}^{\infty}\dots \sum_{n_k=1}^{\infty}\frac{\phi^{n_1+n_2+\dots +n_k}}{n_1!\, n_2! \dots n_k!}\int_{B_1(\xi^{(0)})^{n_1}}d\b{x}^{(1)}\int_{B_2(\xi^{(1)}))^{n_2}}\dots \int_{B_k(\xi^{(k-1)}))^{n_k}} d\b{x}^{(k)}\\&
\quad e^{-\phi |B_1(\xi^{(0)})|-\phi|B_2({\xi}^{(1)})| -\dots-\phi |B_{k}({\xi}^{(k-1)})|}\\&
\quad \chi(\{\xi\, :\,|B_k(\xi^{(k-1)})|\leq \frac{1}{\sqrt{\phi}} \})+\chi(\{\xi\, :\,|B_k(\xi^{(k-1)})|>\frac{1}{\sqrt{\phi}} \})\\&
=:\PP(\Omega_k)_{(1)}+\PP(\Omega_k)_{(2)}. 
\end{split}
\end{equation}
From the $n_k$-th contribution in $\PP(\Omega_k)_{(1)}$ we get
$$\sum_{n_k=1}^{\infty} \frac{\phi^{n_k}}{n_k!} e^{-\phi |B_k(\xi^{(k-1)})|}|B_k(\xi^{(k-1)})|^{n_k}\leq (1-e^{-\phi |B_k(\xi^{(k-1)}|})\leq \sqrt{\phi}.$$
Therefore we obtain
$$\PP(\Omega_k)_{(1)}\leq \sqrt{\phi}\,\PP(\Omega_{k-1}).$$
For what concerns $\PP(\Omega_k)_{(2)}$ we note that since $\frac{1}{\sqrt{\phi}}<|B_{k}({\xi}^{(k-1)})|\leq K n_{k-1}$ it follows that $n_{k-1} \geq \frac{1}{K\sqrt{\phi}}$. Hence  
\begin{equation}
\begin{split}
\PP(\Omega_k)_{(2)}&
\leq\sum_{n_1=1}^{\infty}\sum_{n_2=1}^{\infty}\dots \sum_{n_{k-1}\geq \frac{1}{K\sqrt{\phi}}} \frac{\phi^{n_1+n_2+\dots +n_{k-1}}}{n_1!\, n_2! \dots n_{k-1}!}\int_{B_1(\xi^{(0)})^{n_1}} \hspace{-0.6cm}d\b{x}^{(1)}\int_{B_2(\xi^{(1)}))^{n_2}} \hspace{-0.6cm}\dots \int_{B_{k-2}(\xi^{(k-3)}))^{n_{k-2}}}  \hspace{-0.6cm}d\b{x}^{(k-2)} \\&
\quad\;  |B_{k-1}({\xi}^{(k-2)}|e^{-\phi |B_1(\xi^{(0)})|-\phi|B_2({\xi}^{(1)})| -\dots-\phi |B_{k-1}({\xi}^{(k-2)})|}.
\end{split}
\end{equation}
We now distinguish between the cases $|B_{k-1}({\xi}^{(k-2)}|\leq\frac{\xi_*}{2K\phi^{\frac 3 2}}$ and $|B_{k-1}({\xi}^{(k-2)}|>\frac{\xi_*}{2K\phi^{\frac 3 2}}$ where $K$ is as in \eqref{app:eqclaim} and $\xi_*$ is as in Lemma \ref{lem:Poisson} and iterate. 
We then obtain, as in the proof of Proposition \ref{prop:wellpos1}, the inequality \eqref{eq:pOkiter}
where now $J_{N_1}=\sum_{n_1\geq N_1}\frac{\phi^{n_1}|B_1(\xi^{(0)})^{n_1}|}{{n_1}!}e^{-\phi |B_1(\xi^{(0)})|}$.
Therefore as in the proof of Proposition \ref{prop:wellpos1} we get \eqref{eq:pOkep} with $\ep<\frac 1 4$.

Using Lemma \ref{lem:boundprob} we conclude that $\PP( \Omega_k)\leq (4\varepsilon)^{k-k_{*}}$ for any $k\geq k_{*}$. Then $\sum_{k=1}^{\infty}\PP( \Omega_k)<\infty$ and, using Borel-Cantelli Lemma, the proof follows.

\bigskip
\textbf{Acknowledgment.}
We thank Barbara Niethammer, who strongly motivated us to study this problem, for useful discussions and suggestions on the topic. The authors acknowledge support through the CRC 1060
\textit{The mathematics of emergent effects}
at the University of Bonn that is funded through the German Science
Foundation (DFG).

\end{document}